\newif\iffull
\newif\ifnotes
\newif\iflater
  \setlist[itemize]{leftmargin=*}
  \setlist[enumerate]{leftmargin=*}
\newcommand{\ale}[1]{\dtcolornote[Ale]{red}{#1}}
\newcommand{\eli}[1]{\dtcolornote[Eli]{orange}{#1}}
\newcommand{\doclearpage}{%
\iffull
\clearpage
\else
\fi
}
  \newcommand{\keywords}[1]{\bigskip\par\noindent{\small\textbf{Keywords\/}: #1}}
\newcommand{\sunderline}[1]{\underline{\smash{#1}}}
\newcommand{\defemph}[1]{\textbf{\emph{#1}}}
\newcommand{\FormatAuthor}[3]{
\begin{tabular}{c}
#1 \\ {\small\texttt{#2}} \\ {\small #3}
\end{tabular}
}
\theoremstyle{plain} 
\newtheorem{theorem}{Theorem}[section]
\newtheorem{lemma}[theorem]{Lemma}
\newtheorem{claim}[theorem]{Claim}
\newtheorem{corollary}[theorem]{Corollary}
\newtheorem{definition}[theorem]{Definition}
\newtheorem{construction}[theorem]{Construction}
\newtheorem*{uclaim}{Claim}
\theoremstyle{definition} 
\newtheorem{remark}[theorem]{Remark}
\newtheorem{example}[theorem]{Example}
\theoremstyle{remark} 
\newtheoremstyle{soundnessstyle} 
    {\topsep}                    
    {\topsep}                    
    {}                           
    {0em}                        
    {\scshape}                   
    {.}                          
    {.5em}                       
    {}  
\theoremstyle{soundnessstyle}
\spnewtheorem{soundnessdefinition}[theorem]{Definition}{\scshape}{\normalfont}
\spnewtheorem{soundnessclaim}[theorem]{Claim}{\scshape}{\normalfont}
\spnewtheorem{adaptivedefinition}[theorem]{Definition}{\scshape}{\normalfont}
\spnewtheorem{adaptiveclaim}[theorem]{Claim}{\scshape}{\normalfont}
\newcommand{\secref}[1]{Section~\protect\ref{#1}}
\newcommand{\appref}[1]{Appendix~\protect\ref{#1}}
\newcommand{\defref}[1]{Definition~\protect\ref{#1}}
\newcommand{\conref}[1]{Construction~\protect\ref{#1}}
\newcommand{\thmref}[1]{Theorem~\protect\ref{#1}}
\newcommand{\lemref}[1]{Lemma~\protect\ref{#1}}
\newcommand{\clmref}[1]{Claim~\protect\ref{#1}}
\newcommand{\corref}[1]{Corollary~\protect\ref{#1}}
\newcommand{\eqnref}[1]{(\protect\ref{#1})}
\newcommand{\exmpref}[1]{Example~\protect\ref{#1}}
\newcommand{\remref}[1]{Remark~\protect\ref{#1}}
\newcommand{\stepref}[1]{Step~\protect\ref{#1}}
\newcommand{\condref}[1]{Condition~\protect\ref{#1}}
\newcommand{\CB}{\allowbreak}
\newcommand{\pair}[2]{(#1 ,\CB #2)}
\DeclareMathOperator{\poly}{poly}
\DeclareMathOperator{\polylog}{polylog}
\newcommand{\Ot}[1]{\tilde{O}(#1)}
\newcommand{\FConc}{\|}
\newcommand{\yes}{{\scriptscriptstyle\mathsf{YES}}}
\newcommand{\no}{{\scriptscriptstyle\mathsf{NO}}}
\newcommand{\Bits}{\{0,1\}}
\newcommand{\Strings}{\Bits^{*}}
\newcommand{\Naturals}{\mathbb{N}}
\newcommand{\GF}[1]{\mathbb{F}_{#1}}
\newcommand{\IdxSet}{I}
\newcommand{\Complement}[1]{\overline{#1}}
\newcommand{\DefineEqual}{:=}
\newcommand{\Set}[1]{\{#1\}}
\newcommand{\SetCardinality}[1]{|#1|}
\newcommand{\ConditionalSet}[2]{\Set{#1 \mid #2}}
\newcommand{\BitSize}[1]{|#1|}
\newcommand{\Proof}{\pi\xspace}
\newcommand{\Randomness}{r}
\newcommand{\FormatComplexityClass}[1]{\mathbf{#1}}
\newcommand{\NTIME}{\FormatComplexityClass{NTIME}}
\newcommand{\BPP}{\FormatComplexityClass{BPP}}
\newcommand{\NP}{\FormatComplexityClass{NP}}
\newcommand{\sharpP}{\FormatComplexityClass{\#P}}
\newcommand{\PSPACE}{\FormatComplexityClass{PSPACE}}
\newcommand{\NEXP}{\FormatComplexityClass{NEXP}}
\newcommand{\PCPP}{\FormatComplexityClass{PCPP}}
\newcommand{\IPCP}{\FormatComplexityClass{IPCP}}
\newcommand{\PZKIPCP}{\FormatComplexityClass{PZK\mbox{-}IPCP}}
\newcommand{\IOP}{\FormatComplexityClass{IOP}}
\newcommand{\IOPP}{\FormatComplexityClass{IOPP}}
\newcommand{\PZKIOP}{\FormatComplexityClass{PZK\mbox{-}IOP}}
\newcommand{\PZKIOPP}{\FormatComplexityClass{PZK\mbox{-}IOPP}}
\newcommand{\AM}{\FormatComplexityClass{AM}}
\newcommand{\coAM}{\FormatComplexityClass{coAM}}
\newcommand{\TextNumRounds}{\mathrm{rounds}}
\newcommand{\TextAnswerAlphabet}{\mathrm{answer\;alphabet}}
\newcommand{\TextProofLength}{\mathrm{proof\;length}}
\newcommand{\TextQueryComplexity}{\mathrm{query\;complexity}}
\newcommand{\TextQueryBound}{\mathrm{query\;bound}}
\newcommand{\TextSoundnessError}{\mathrm{soundness\;error}}
\newcommand{\TextProximityParameter}{\mathrm{proximity\;parameter}}
\newcommand{\TextProverTime}{\mathrm{prover\;time}}
\newcommand{\TextProverSpace}{\mathrm{prover\;space}}
\newcommand{\TextVerifierTime}{\mathrm{verifier\;time}}
\newcommand{\TextVerifierRandomness}{\mathrm{verifier\;randomness}}
\newcommand{\Domain}{D}
\newcommand{\Range}{R}
\newcommand{\SubDomain}{\tilde{\Domain}}
\newcommand{\Alphabet}{\Sigma}
\newcommand{\Restrict}[2]{#1|_{#2}}
\newcommand{\EvaluationDomain}{\Domain}
\newcommand{\IndexSet}{I}
\newcommand{\CDomain}{\tilde{\Domain}}
\newcommand{\CCode}{\tilde{\Code}}
\newcommand{\Puncture}[2]{#1_{\subseteq #2}}
\newcommand{\Field}{\mathbb{F}}
\newcommand{\SubField}{\mathbb{K}}
\newcommand{\FieldSize}{q}
\newcommand{\VariableX}{X}
\newcommand{\Poly}{Q}
\newcommand{\OtherPoly}{T}
\NewDocumentCommand{\IndividualDegree}{m o}{\IfValueTF{#2}{\mathrm{deg}_{#2}(#1)}{\mathrm{deg}(#1)}}
\newcommand{\PolynomialRing}[3]{#1[#3_{1},\dots,#3_{#2}]}
\newcommand{\PolynomialRingIndOne}[4]{#1^{< #4}[#3_{1},\dots,#3_{#2}]}
\newcommand{\PolynomialRingIndTwo}[4]{#1^{< #4}[#3_{2},\dots,#3_{#2}]}
\newcommand{\pST}{\; \middle\vert \;}
\newcommand{\MakeDistribution}[1]{\mathcal{#1}}
\newcommand{\Distribution}{\MakeDistribution{D}}
\newcommand{\Relation}{\mathscr{R}}
\newcommand{\Language}{\mathscr{L}}
\newcommand{\Witnesses}[2]{#1\vert_{#2}}
\newcommand{\GetRelation}[1]{\mathrm{Rel}(#1)}
\newcommand{\GetLanguage}[1]{\mathrm{Lan}(#1)}
\newcommand{\Instance}{\mathbbmss{x}}
\newcommand{\Witness}{\mathbbmss{w}}
\newcommand{\InstanceSize}{n}
\newcommand{\DeciderMachine}{M}
\newcommand{\DeciderTime}{T}
\newcommand{\Prover}{P\xspace}
\newcommand{\Verifier}{V\xspace}
\newcommand{\Simulator}{S\xspace}
\newcommand{\IOPize}[1]{#1}
\newcommand{\IOPProver}{\IOPize{\Prover}}
\newcommand{\IOPVerifier}{\IOPize{\Verifier}}
\newcommand{\IOPSimulator}{\IOPize{\Simulator}}
\newcommand{\Message}{m}
\newcommand{\OMessage}{\Proof}
\newcommand{\IOPPProver}{\IOPize{\Prover}}
\newcommand{\IOPPVerifier}{\IOPize{\Verifier}}
\newcommand{\IOPPSimulator}{\IOPize{\Simulator}}
\newcommand{\CircuitSize}{s}
\newcommand{\SCSymbol}{\mathrm{SC}}
\newcommand{\SCSubset}{H}
\newcommand{\SCVars}{m}
\newcommand{\SCDegree}{d}
\newcommand{\SCPoly}{F}
\newcommand{\SCConstant}{v}
\newcommand{\SCRelation}{\Relation_{\SCSymbol}}
\newcommand{\RandPoly}{R}
\newcommand{\MaskedPoly}{Q}
\newcommand{\AnsTable}[1]{\mathsf{ans}_{#1}}
\newcommand{\ZKSCProver}{\IOPProver_{\SCSymbol}}
\newcommand{\ZKSCVerifier}{\IOPVerifier_{\SCSymbol}}
\newcommand{\ZKSCSimulator}{\IOPSimulator_{\SCSymbol}}
\newcommand{\IPSCProver}{\Prover_{\mathrm{IP}}}
\newcommand{\IPSCVerifier}{\Verifier_{\mathrm{IP}}}
\newcommand{\IPSCSimulator}{\Simulator_{\mathrm{IP}}}
\newcommand{\CodeSimAlgorithm}{\mathcal{A}}
\newcommand{\EmptyVector}{\bot}
\newcommand{\ListSize}{\ell}
\newcommand{\SlowSimulator}{\Simulator_{\mathrm{slow}}}
\newcommand{\MakeF}[1]{#1_{\mathrm{bi}}}
\newcommand{\MakeP}[1]{#1_{\mathrm{pf}}}
\newcommand{\MakeBox}[1]{#1_{{\scriptscriptstyle\Box}}}
\newcommand{\MakeProx}[1]{#1_{\mathrm{px}}}
\newcommand{\BSCode}{\mathrm{BS\text{-}RS}}
\newcommand{\RSBox}{\MakeBox{\RS}}
\newcommand{\BSBaseDim}{k}
\newcommand{\BSSpaceDimSuperLocal}{\ell}
\newcommand{\BSBalance}{\mu}
\newcommand{\SymbolRS}{\mathrm{rs}}
\newcommand{\SymbolProx}{\mathrm{px}}
\newcommand{\SymbolCol}{\mathrm{col}}
\newcommand{\SymbolRow}{\mathrm{row}}
\newcommand{\GetBSProof}[2]{\pi_{#1}(#2)}
\newcommand{\GetBSCode}[2]{#1^{#2}}
\newcommand{\DetAlgorithm}{\mathcal{D}}
\newcommand{\Interact}[2]{\langle #1,#2 \rangle}
\newcommand{\LessThan}[1]{[<#1]}
\newcommand{\InnerProduct}[2]{\langle #1,#2 \rangle}
\DeclareMathOperator{\Span}{span}
\DeclareMathOperator{\rank}{rank}
\newcommand{\Cover}{S}
\newcommand{\CoverAtLayer}[2]{#1|_{#2}}
\newcommand{\CoverAtVertex}[2]{#1_{#2}}
\newcommand{\TreeCover}{T}
\newcommand{\TreeRoot}{r}
\newcommand{\TreeLayer}[1]{\mathrm{layer}(#1)}
\newcommand{\TreeDepth}[1]{\mathrm{depth}(#1)}
\newcommand{\IdpParam}{\kappa}
\newcommand{\CDepth}{d}
\newcommand{\InterDomain}[1]{\mathrm{di}(#1)}
\newcommand{\Undefined}{\emptyset}
\newcommand{\defDomain}[1]{\mathrm{def}(#1)}
\newcommand{\SetA}{V}
\newcommand{\SetB}{W}
\newcommand{\IntParam}{c}
\newcommand{\Children}[1]{\mathrm{successors}(#1)}
\newcommand{\LocalSet}{W}
\newcommand{\Malicious}[1]{\tilde{#1}}
\newcommand{\Simulated}[1]{#1_{\mathrm{sim}}}
\newcommand{\View}{\mathrm{View}}
\newcommand{\FView}[1]{v({#1})}
\newcommand{\IOPView}[2]{\View\;\Interact{#1}{#2}}
\newcommand{\IOPPView}[2]{\View\;\Interact{#1}{#2}}
\newcommand{\IPCPView}[2]{\View\;\Interact{#1}{#2}}
\newcommand{\IndepTag}{\text{`independent'}}
\newcommand{\Distance}{\Delta}
\newcommand{\AbsoluteHammingDistance}{\Distance} 
\newcommand{\Randomizability}{t}
\newcommand{\RandomizeTime}{r}
\newcommand{\Dimension}[1]{\mathrm{dim}(#1)}
\newcommand{\Support}[1]{\mathrm{supp}(#1)}
\newcommand{\Class}[1]{\mathscr{#1}}
\newcommand{\LinearSpace}{S}
\newcommand{\Code}{C}
\newcommand{\Subcode}{C'}
\newcommand{\Dual}[1]{#1^{\perp}}
\newcommand{\CodeClass}{\Class{\Code}}
\newcommand{\OfCode}[1]{#1_{\CodeClass}}
\newcommand{\CodeRate}{\rho}
\newcommand{\CodeBlockLength}{\ell}
\newcommand{\CodeMessageLength}{k}
\newcommand{\CodeAbsoluteDistance}{d}
\newcommand{\CodeRelativeDistance}{\tau}
\newcommand{\CodeTime}{T}
\newcommand{\Codeword}{w}
\newcommand{\OtherCodeword}{z}
\newcommand{\SubcodeCodeword}{w'}
\newcommand{\CodeIdx}{\mathbbmss{n}}
\newcommand{\CodeSamplingTime}{S}
\newcommand{\RM}{\mathrm{RM}}
\newcommand{\RMCode}[4]{\RM[#1,#2,#3,#4]}
\newcommand{\RMDomain}{S}
\newcommand{\RMVars}{m}
\newcommand{\RMDegree}{d}
\newcommand{\PSRM}{\mathrm{\Sigma RM}}
\newcommand{\PSRMCode}[4]{\PSRM[#1,#2,#3,#4]}
\newcommand{\RS}{\mathrm{RS}}
\newcommand{\RSCode}[3]{\RS[#1,#2,#3]}
\newcommand{\RSDomain}{\RMDomain}
\newcommand{\RSDegree}{\RMDegree}
\newcommand{\ARS}{\mathrm{RS}^{{\scriptscriptstyle+}}}
\newcommand{\BSSpace}{L}
\newcommand{\VanishingPoly}[1]{Z_{#1}}
\newcommand{\VariantA}[1]{#1^{\prime}}
\newcommand{\VariantB}[1]{#1^{\prime\prime}}
\newcommand{\sharpSATLanguage}{\Language_{\mathrm{\#3SAT}}}
\newcommand{\Formula}{\phi}
\newcommand{\NumSats}{N}
\newcommand{\NumVars}{n}
\newcommand{\NumClauses}{c}
\newcommand{\Arithmetize}[1]{p_{#1}}
\newcommand{\Assignment}{w}
\newcommand{\OtherCode}{L}
\newcommand{\FormatComplexityFunction}[1]{\mathsf{#1}}
\newcommand{\ProverTime}{\FormatComplexityFunction{tp}}
\newcommand{\ProverSpace}{\FormatComplexityFunction{sp}}
\newcommand{\VerifierTime}{\FormatComplexityFunction{tv}}
\newcommand{\VerifierRandomness}{\FormatComplexityFunction{rv}}
\newcommand{\SoundnessError}{\FormatComplexityFunction{\varepsilon}}
\newcommand{\ProofLength}{\FormatComplexityFunction{l}}
\newcommand{\QueryComplexity}{\FormatComplexityFunction{q}}
\newcommand{\ProximityParameter}{\FormatComplexityFunction{\delta}}
\newcommand{\DistanceMeasure}{\Distance}
\newcommand{\NumRounds}{\FormatComplexityFunction{k}}
\newcommand{\QueryBound}{\FormatComplexityFunction{b}}
\newcommand{\AnyBound}{\FormatComplexityFunction{*}}
\newcommand{\PCPPProver}{\Prover}
\newcommand{\PCPPVerifier}{\Verifier}
\newcommand{\RandCodeword}{z}
\newcommand{\MaskedCodeword}{w'}
\newcommand{\RSVRS}{\mathrm{ERS}^{{\scriptscriptstyle+}}}
\newcommand{\RSVRSCode}[5]{\RSVRS[#1,#2,#3,#4,#5]}
\newcommand{\RSVRSSubdomain}{L}
\newcommand{\RSVRSVanishing}{H}
\newcommand{\LAize}[1]{#1}
\newcommand{\RLAize}[1]{#1}
\newcommand{\IOPPize}[1]{\hat{#1}}
\newcommand{\LACSPSymbol}{{\scriptscriptstyle\mathsf{LACSP}}}
\newcommand{\LACSPRelation}{\Relation_{\LACSPSymbol}}
\newcommand{\LACSPLanguage}{\Language_{\LACSPSymbol}}
\newcommand{\LACSPRelationP}{
(\LACSPRelation^{\yes}, \LACSPLanguage^{\no})[
\Field(\InstanceSize),
\MakeZero{\Code}(\InstanceSize),
\MakeOne{\Code}(\InstanceSize),
\CodeBlockLength(\InstanceSize),
\CodeRelativeDistance(\InstanceSize),
\MapLocality(\InstanceSize),
\MapEfficiency(\InstanceSize)]
}
\newcommand{\LACSPRelationQ}{
(\LACSPRelation^{\yes}, \LACSPLanguage^{\no})[
\Field,
\MakeZero{\Code},
\MakeOne{\Code},
\LAize{\CodeBlockLength},
\LAize{\CodeRelativeDistance},
\LAize{\MapLocality},
\LAize{\MapEfficiency}]
}
\newcommand{\RLACSPSymbol}{{\scriptscriptstyle\mathsf{RLACSP}}}
\newcommand{\RLACSPRelation}{\Relation_{\RLACSPSymbol}}
\newcommand{\RLACSPLanguage}{\Language_{\RLACSPSymbol}}
\newcommand{\RLACSPRelationP}{
(\RLACSPRelation^{\yes},\RLACSPLanguage^{\no})[
\Field(\InstanceSize),
\MakeZero{\Code}(\InstanceSize),
\MakeOne{\Code}(\InstanceSize),
\CodeBlockLength(\InstanceSize),
\CodeRelativeDistance(\InstanceSize),
\MapLocality(\InstanceSize),
\MapEfficiency(\InstanceSize),
\Randomizability(\InstanceSize),
\RandomizeTime(\InstanceSize)]
}
\newcommand{\RLACSPRelationQ}{
(\RLACSPRelation^{\yes},\RLACSPLanguage^{\no})[
\Field,
\MakeZero{\Code},
\MakeOne{\Code},
\RLAize{\CodeBlockLength},
\RLAize{\CodeRelativeDistance},
\RLAize{\MapLocality},
\RLAize{\MapEfficiency},
\RLAize{\Randomizability},
\RLAize{\RandomizeTime}
]
}
\newcommand{\MapEfficiency}{c}
\newcommand{\MapLocality}{q}
\newcommand{\LocalMapAlphabet}{\Sigma}
\newcommand{\LocalMap}{g}
\newcommand{\LocalMapDomainSize}{n}
\newcommand{\LocalMapRangeSize}{m}
\newcommand{\LACodeRelation}{\GetRelation{\MakeZero{\Code} \times \MakeOne{\Code}}}
\newcommand{\InstanceMap}{\mathsf{inst}}
\newcommand{\WitnessMap}{\mathsf{wit}_{1}}
\newcommand{\ExtractorMap}{\mathsf{wit}_{2}}
\renewcommand{\TextField}{\mathrm{field}}
\newcommand{\TextCodeZero}{\mathrm{first\;code}}
\newcommand{\TextCodeOne}{\mathrm{second\;code}}
\newcommand{\TextCodeBlockLength}{\mathrm{block\;length}}
\newcommand{\TextCodeRelativeDistance}{\mathrm{relative\;distance}}
\newcommand{\TextMapLocality}{\mathrm{map\;locality}}
\newcommand{\TextMapEfficiency}{\mathrm{map\;efficiency}}
\newcommand{\TextRandomizability}{\mathrm{randomizability}}
\newcommand{\TextRandomizeTime}{\mathrm{randomize\;time}}
\newcommand{\MakeZero}[1]{#1_{0}}
\newcommand{\MakeOne}[1]{#1_{1}}
\newcommand{\IOPProverAssignment}{w}
\newcommand{\IOPProverNonce}{u'}
\newcommand{\IOPProverRandAssignment}{w'}
\newcommand{\CodesProver}{\IOPPize{\Prover}}
\newcommand{\CodesVerifier}{\IOPPize{\Verifier}}
\newcommand{\CodesSimulator}{\IOPPize{\Simulator}}
\newcommand{\LAProver}{\Prover_{\LACSPSymbol}}
\newcommand{\LAVerifier}{\Verifier_{\LACSPSymbol}}
\newcommand{\LASimulator}{\Simulator_{\LACSPSymbol}}
\newcommand{\RLAProver}{\Prover_{\RLACSPSymbol}}
\newcommand{\RLAVerifier}{\Verifier_{\RLACSPSymbol}}
\newcommand{\RLASimulator}{\Simulator_{\RLACSPSymbol}}
\newcommand{\RLAWitness}{\Witness'}
\newcommand{\RLAEntries}{E}
\newcommand{\SimRLAWitness}{\hat{\IOPProverAssignment}}
\newcommand{\BsDepth}[2]{\left\lfloor \log \Dimension{#1} - \log(#2) \right \rfloor}
\newcommand{\BsOtherDepth}[2]{\log_{2} \Dimension{#1} - \log_{2}( \log_{\SetCardinality{\SubField}} #2 + \BSBalance + 2) - 1}
\newcommand{\OtherBSSpace}{\tilde{\BSSpace}}
\newcommand{\dCols}{d_{\mathsf{cols}}}
\newcommand{\dRows}{d_{\mathsf{rows}}}
\newcommand{\SCols}{S_{\mathsf{cols}}}
\newcommand{\SRows}{S_{\mathsf{rows}}}
\newcommand{\SPoints}{S_{\mathsf{pnts}}}
\newcommand{\gExists}{g}
\newcommand{\gCol}[1]{\gExists_{\mathsf{col},{#1}}}
\newcommand{\gRow}[1]{\gExists_{\mathsf{row},{#1}}}
\newcommand{\eps}{\ensuremath{\epsilon}\xspace}
\newcommand{\F}{\ensuremath{\mathbb{F}}\xspace}
\newcommand{\x}{\ensuremath{\mathbf{x}}}
\newcommand{\sumrange}{\ensuremath{{H^m}}}
\newcommand{\sumOnRange}[1]{\ensuremath{\sum_{x\in \sumrange}#1(x)}}
\newcommand{\polys}[1]{\ensuremath{\PolynomialRingIndOne{\Field}{\SCVars}{\VariableX}{#1}}\xspace}
\newcommand{\Z}{\ensuremath{Z_{H^m}}\xspace}
\newcommand{\shiftedRange}{\ensuremath{H'^m}}
\renewcommand{\deg}[1]{\ensuremath{\mathrm{deg}(#1)}}
\newcommand{\eval}[2]{\ensuremath{#1|_{#2}}}
\newcommand{\distOn}[3]{\ensuremath{\Delta_{#1}(#2,#3)}\xspace}
\begin{document}

\title{%
On Probabilistic Checking in Perfect Zero Knowledge
}

\author{
\begin{tabular}[h!]{ccc}
\FormatAuthor{Eli Ben-Sasson}{eli@cs.technion.ac.il}{Technion}
 & \FormatAuthor{Alessandro Chiesa}{alexch@berkeley.edu}{UC Berkeley}
 & \FormatAuthor{Michael A. Forbes}{miforbes@csail.mit.edu}{Stanford University} \\
 &&\\
\FormatAuthor{Ariel Gabizon}{arielga@cs.technion.ac.il}{Technion}
 &  \FormatAuthor{Michael Riabzev}{mriabzev@cs.technion.ac.il}{Technion}
 &  \FormatAuthor{Nicholas Spooner}{spooner@cs.toronto.edu}{University of Toronto}
\end{tabular}
}

\iffull
  \date{\today}
\fi

\maketitle

\begin{abstract}
We present the first constructions of \emph{single}-prover proof systems that achieve \emph{perfect} zero knowledge (PZK) for languages beyond $\NP$, under no intractability assumptions:
\begin{enumerate}

  \item The complexity class $\sharpP$ has PZK proofs in the model of Interactive PCPs (IPCPs) \cite{KalaiR08}, where the verifier first receives from the prover a PCP and then engages with the prover in an Interactive Proof (IP).

  \item The complexity class $\NEXP$ has PZK proofs in the model of Interactive Oracle Proofs (IOPs) \cite{BenSassonCS16,ReingoldRR16}, where the verifier, in every round of interaction, receives a PCP from the prover.

\end{enumerate}
Unlike PZK multi-prover proof systems \cite{BenOrGKW88}, PZK single-prover proof systems are elusive: PZK IPs are limited to $\AM \cap \coAM$ \cite{Fortnow87,AielloH91}, while known PCPs and IPCPs achieve only \emph{statistical} simulation \cite{KilianPT97,GoyalIMS10}. Recent work \cite{BenSassonCGV16} has achieved PZK for IOPs but only for languages in $\NP$, while our results go beyond it.

Our constructions rely on \emph{succinct} simulators that enable us to ``simulate beyond $\NP$'', achieving exponential savings in efficiency over \cite{BenSassonCGV16}. These simulators crucially rely on solving a problem that lies at the intersection of coding theory, linear algebra, and computational complexity, which we call the \emph{succinct constraint detection} problem, and consists of detecting dual constraints with polynomial support size for codes of exponential block length. Our two results rely on solutions to this problem for fundamental classes of linear codes:
\begin{itemize}

  \item An algorithm to detect constraints for Reed--Muller codes of exponential length.

  \item An algorithm to detect constraints for PCPs of Proximity of Reed--Solomon codes \cite{BS08} of exponential degree.

\end{itemize}
The first algorithm exploits the Raz--Shpilka \cite{RazS05} deterministic polynomial identity testing algorithm, and shows, to our knowledge, a first connection of algebraic complexity theory with zero knowledge.
Along the way, we give a perfect zero knowledge analogue of the celebrated sumcheck protocol \cite{LundFKN92}, by leveraging both succinct constraint detection and low-degree testing.
The second algorithm exploits the recursive structure of the PCPs of Proximity to show that small-support constraints are ``locally'' spanned by a small number of small-support constraints.

\bigskip
\keywords{\mbox{probabilistically checkable proofs, interactive proofs, sumcheck, zero knowledge, polynomial identity testing}}
\end{abstract}

\ifnotes
\clearpage
\ale{TODO: runtime of verifier in \cite{IshaiW14} and \cite{GoyalIMS10}? (i.e., they are are stated for $\NP$ but do they extend to $\NEXP$)}

\ale{TODO: cite \cite{IshaiSVW13} or not?}

\ale{TODO: mention polynomial blow up and contrast to our techniques that essentially have no efficiency cost (only cost is in the model)?}

\ale{TODO: mention that non-adaptivity enables applications in \cite{IshaiWY16}?}

\ale{TODO: add corollaries about PZK-IOPP for RS (and VRS which is a subcode)}

\eli{the whole literature on LTC/property testing/PCP is full of constraint checkers that are local. And all LTC testers (for linear codes) must be dual-constraint checkers.}

\ale{TODO: remark that we do rely on PZK composition in our theorems; explain that $\CodeIdx$ is at least log field size and domain size (specifies both);}\ale{TODO: handle polynomial hierarchy}

\ale{TODO: explain why machinery of views and covers does not help with Reed--Muller}

\later\ale{TODO: discuss what you get if you compile with (1) RO, (2) OWFs, or (3) CRHs.}

\ale{OPEN (for us for now):
\begin{itemize}
  \item PZK-IPCP contained in PSPACE?
  \item PZK-PCP contained in NP (or less)?
  \item unbounded query PZK-IPCP or PZK-IOP for NP? (with efficient oracle since polynomial size reduces to SZK)
  \item apply ZK-sumcheck to NP or NEXP?
  \item what is the lowest query complexity in IOP that still requires OWFs for ZK?
\end{itemize}
}

\ale{TODO: check if we get ZK-IOPP for NEXP for NTIME}

\later\ale{THINK: PZK sumcheck honest prover runtime can be improved if $\SCPoly$ promised to be in subspace}

\later\ale{MAYBE: PZK sumcheck generalization to tensor codes? (if so, may want to phrase using ZK-PCPP since it composes nicely)}

\fi

\iffull
\clearpage
\setcounter{tocdepth}{2}
{\footnotesize \tableofcontents}
\clearpage
\fi

\doclearpage
\section{Introduction}
\label{sec:introduction}

We study proof systems that unconditionally achieve soundness and zero knowledge, that is, without relying on intractability assumptions. We present two new constructions in single-prover models that combine aspects of probabilistic checking and interaction:
\begin{inparaenum}[(i)]
  \item \emph{Interactive Probabilistically-Checkable Proofs} \cite{KalaiR08}, and
  \item \emph{Interactive Oracle Proofs} (also called \emph{Probabilistically Checkable Interactive Proofs}) \cite{BenSassonCS16,ReingoldRR16}.
\end{inparaenum}

\subsection{Proof systems with unconditional zero knowledge}
\label{sec:try}

Zero knowledge is a central notion in cryptography. A fundamental question, first posed in \cite{BenOrGKW88}, is the following: \emph{what types of proof systems achieve zero knowledge unconditionally?} We study this question, and focus on the setting of \emph{single}-prover proof systems, as we now explain.

\parhead{IPs and MIPs}
An \sunderline{interactive proof} (IP) \cite{BabaiM88,GoldwasserMR89} for a language $\Language$ is a pair of interactive algorithms $(\Prover,\Verifier)$, where $\Prover$ is known as the prover and $\Verifier$ as the verifier, such that:
\begin{inparaenum}[(i)]
  \item (completeness)
  for every instance $\Instance \in \Language$, $\Prover(\Instance)$ makes $\Verifier(\Instance)$ accept with probability $1$;
  \item (soundness)
  for every instance $\Instance \not\in \Language$, every prover $\Malicious{\Prover}$ makes $\Verifier(\Instance)$ accept with at most a small probability $\SoundnessError$.
\end{inparaenum}
A \sunderline{multi-prover interactive proof} (MIP) \cite{BenOrGKW88} extends an IP to the case where the verifier interacts with \emph{multiple} non-communicating provers.

\parhead{Zero knowledge}
An IP is \emph{zero knowledge} \cite{GoldwasserMR89} if malicious verifiers cannot learn any information about an instance $\Instance$ in $\Language$, by interacting with the prover, besides the fact $\Instance$ is in $\Language$: for any efficient verifier $\Malicious{\Verifier}$ there is an efficient simulator $\Simulator$ such that $\Simulator(\Instance)$ is ``indistinguishable'' from $\Malicious{\Verifier}$'s view when interacting with $\Prover(\Instance)$. Depending on the notion of indistinguishability, one gets different types of zero knowledge: computational, statistical, or perfect zero knowledge (respectively CZK, SZK, or PZK), which require that the simulator's output and the verifier's view are computationally indistinguishable, statistically close, or equal. Zero knowledge for MIPs is similarly defined \cite{BenOrGKW88}.

\parhead{IPs vs.\ MIPs}
The complexity classes captured by the above models are (conjecturally) different: all and only languages in $\PSPACE$ have IPs \cite{Shamir92}; while all and only languages in $\NEXP$ have MIPs \cite{BabaiFL91}. Yet, when requiring zero knowledge, the difference between the two models is even greater. Namely, the complexity class of SZK (and PZK) IPs is contained in $\AM \cap \coAM$ \cite{Fortnow87,AielloH91} and, thus, does not contain $\NP$ unless the polynomial hierarchy collapses \cite{BoppanaHZ87}.\footnote{A seminal result in cryptography says that if one-way functions exist then every language having an IP also has a CZK IP \cite{GoldwasserMR89,ImpagliazzoY87,BenOrGGHKMR88}; also, if one-way functions do not exist then CZK IPs capture only ``average-case'' $\BPP$ \cite{Ostrovsky91,OstrovskyW93}. Thus, prior work gives minimal assumptions that make IPs a powerful tool for zero knowledge. However, we focus on zero knowledge achievable \emph{without} intractability assumptions.} In contrast, the complexity class of PZK MIPs equals $\NEXP$ \cite{BenOrGKW88,LapidotS95,DworkFKNS92}.

\parhead{Single-prover proof systems beyond IPs}
The limitations of SZK IPs preclude many information-theoretic and cryptographic applications; at the same time, while capturing all of $\NEXP$, PZK MIPs are difficult to leverage in applications because it is hard to exclude all communication between provers.  Research has thus focused on obtaining zero knowledge by considering alternatives to the IP model for which there continues to be a \emph{single} prover. E.g., an extension that does not involve multiple provers is to allow the verifier to probabilistically check the prover's messages, instead of reading them in full, as in the following models.
\begin{inparaenum}[(1)]
  \item
  In a \sunderline{probabilistically checkable proof} (PCP) \cite{FortnowRS88,BFLS91,FGLSS96,AroraS98,AroraLMSS98}, the verifier has oracle access to a single message sent by the prover; zero knowledge for this case was first studied by \cite{KilianPT97}.
  \item
  In an \sunderline{interactive probabilistically checkable proof} (IPCP) \cite{KalaiR08}, the verifier has oracle access to the first message sent by the prover, but must read in full all subsequent messages (i.e., it is a PCP followed by an IP); zero knowledge for this case was first studied by \cite{GoyalIMS10}.
  \item
  In an \sunderline{interactive oracle proof} (IOP) \cite{BenSassonCS16,ReingoldRR16}, the verifier has oracle access to all of the messages sent by the prover (i.e., it is a ``multi-round PCP''); zero knowledge for this case was first studied by \cite{BenSassonCGV16}.
\end{inparaenum}
The above models are in order of increasing generality; their applications include unconditional cryptography \cite{GoyalIMS10}, zero knowledge with black-box cryptography \cite{IshaiMS12,IshaiMSX15}, and zero knowledge with random oracles \cite{BenSassonCS16}.

\subsection{What about perfect zero knowledge for single-prover systems?}
\label{sec:limitations}

In the setting of multiple provers, one can achieve the best possible notion of zero knowledge: perfect zero knowledge for all $\NEXP$ languages, even with only two provers and one round of interaction \cite{BenOrGKW88,LapidotS95,DworkFKNS92}. In contrast, in the setting of a single prover, \emph{all} prior PCP and IPCP constructions achieve only \emph{statistical} zero knowledge (see \appref{sec:prior-work-summary} for a summary of prior work). This intriguing state of affairs motivates the following question:
\begin{center}
\emph{What are the powers and limitations of perfect zero knowledge for single-prover proof systems?}
\end{center}

The above question is especially interesting when considering that prior PCP and IPCP constructions all follow the \emph{same} paradigm: they achieve zero knowledge by using a ``locking scheme'' on an intermediate construction that is zero knowledge against (several independent copies of) the honest verifier. Perfect zero knowledge in this paradigm is not possible because a malicious verifier can always guess a lock's key with positive probability. Also, besides precluding perfect zero knowledge, this paradigm introduces other limitations:
\begin{inparaenum}[(a)]
  \item the prover incurs polynomial blowups to avoid collisions due to birthday paradox; and
  \item the honest verifier is adaptive (first retrieve keys for the lock, then unlock).
\end{inparaenum}

Recent work explores alternative approaches but these constructions suffer from other limitations:
\begin{inparaenum}[(i)]
  \item \cite{IshaiWY16} apply PCPs to leakage-resilient circuits, and obtain PCPs for $\NP$ with a non-adaptive verifier but they are only witness (statistically) indistinguishable;
  \item \cite{BenSassonCGV16} exploit algebraic properties of PCPs and obtain $2$-round IOPs that are perfect zero knowledge but only for $\NP$.
\end{inparaenum}

In sum, prior work is limited to statistical zero knowledge (or weaker) or to languages in $\NP$. Whether perfect zero knowledge is feasible for more languages has remained open. Beyond the fundamental nature of the question of understanding perfect zero knowledge for single-prover proof systems, recall that perfect zero knowledge, in the case of single-prover interactive proofs, is desirable because it implies general concurrent composability while statistical zero knowledge does not always imply it \cite{KushilevitzLR10}, and one may hope to prove analogous results for IPCPs and IOPs.

\subsection{Results}
\label{sec:results}

We present the first constructions of \emph{single}-prover proof systems that achieve perfect zero knowledge for languages beyond $\NP$; prior work is limited to languages in $\NP$ or statistical zero knowledge. We develop and apply linear-algebraic algorithms that enable us to ``simulate beyond $\NP$'', by exploiting new connections to algebraic complexity theory and local views of linear codes. In all of our constructions, we do not use locking schemes, the honest verifier is non-adaptive, and the simulator is straightline (i.e., does not rewind the verifier). We now summarize our results.

\subsubsection{Perfect zero knowledge beyond $\NP$}
\label{sec:pzk-beyond-np}

\parhead{(1) PZK IPCPs for $\sharpP$}
We prove that the complexity class $\sharpP$ has IPCPs that are perfect zero knowledge. We do so by constructing a suitable IPCP system for the counting problem associated to $\mathrm{3SAT}$, which is $\sharpP$-complete.

\begin{theorem}[informal statement of Thm.\ \ref{thm:zk-sharpp}]
\label{thm:intro-sharpp}
The complexity class $\sharpP$ has Interactive PCPs that are perfect zero knowledge against unbounded queries (that is, a malicious verifier may ask an arbitrary polynomial number of queries to the PCP sent by the prover).
\end{theorem}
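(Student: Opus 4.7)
The plan is to design a perfect zero knowledge IPCP for the $\sharpP$-complete problem $\mathrm{\#3SAT}$, by building a zero knowledge analogue of the LFKN sumcheck protocol \cite{LundFKN92} and invoking succinct constraint detection for Reed--Muller codes to obtain a straightline simulator. Given a 3CNF formula $\Formula$ on $n$ variables with claimed satisfying count $N$, arithmetize to a low-individual-degree polynomial $p_\Formula \in \Field[X_1,\dots,X_n]$ over a sufficiently large field $\Field$, so that $\sum_{x \in \{0,1\}^n} p_\Formula(x) = N$. The protocol proceeds in three phases: (i) the prover sends as a PCP oracle the evaluation table, on a large product domain in $\Field^n$, of a uniformly random polynomial $R$ with the same individual-degree bounds as $p_\Formula$, together with the claimed value $v := \sum_{x \in \{0,1\}^n} R(x)$; (ii) the verifier samples $\rho \in \Field$ and the parties execute the standard sumcheck on $p_\Formula + \rho R$ with total claimed sum $N + \rho v$; (iii) at the last step the verifier uses a single PCP query to recover $R(r)$ at the random challenge point $r$ and checks that $p_\Formula(r) + \rho R(r)$ matches the final sumcheck message. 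In parallel, the verifier runs a low-degree test on the PCP to ensure $R$ is close to a polynomial of the asserted degree.

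Completeness is immediate, and soundness reduces, via the low-degree test, to the standard sumcheck analysis: for any (possibly malicious) low-degree $R$, the combined polynomial $p_\Formula + \rho R$ has the correct sum only with probability $1/|\Field|$ over $\rho$ when $\Formula$ has fewer than $N$ satisfying assignments, after which the usual $nd/|\Field|$ soundness of sumcheck applies.

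For perfect zero knowledge, I will construct a straightline simulator in two conceptual pieces. First, the simulator samples the ``masked transcript''—the value $v$, the univariate sumcheck messages $g_1,\dots,g_n$, and the final value of $R$ at the challenge point $r$—directly from their correct marginal distribution. This part is easy because, $R$ being uniformly random of the right degree and $\rho$ nonzero, each sumcheck message of $p_\Formula + \rho R$ is uniformly distributed in the affine space pinned down by the previous verifier-checked linear constraint. Second, the simulator must then answer the verifier's queries to the PCP of $R$, consistently with the sampled transcript and with the distribution of a uniformly random Reed--Muller codeword, while supporting unboundedly many (possibly adaptive) queries from the low-degree test or from a malicious verifier.

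The main obstacle is precisely this second piece: explicitly storing $R$ is infeasible since its evaluation table has exponential size. Instead, the simulator treats the answers to previous queries, together with the transcript's linear constraints (the $\{0,1\}^n$-sum equaling $v$, the partial sums determined by $g_1,\dots,g_n$, and the fixed value of $R$ at $r$), as a system of affine constraints on the underlying Reed--Muller codeword, and must sample each new query from its conditional marginal. To do so efficiently it invokes the succinct constraint detection algorithm for Reed--Muller codes, which, in polynomial time, produces a basis of dual constraints of polynomial support for the relevant exponential-length RM code. That algorithm in turn leverages the Raz--Shpilka \cite{RazS05} deterministic polynomial identity testing algorithm to decide linear dependencies among evaluations of low-degree polynomials. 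Combining the masked sumcheck with the lazy simulation of $R$ via succinct constraint detection yields an IPCP that is both sound and perfect zero knowledge against unbounded-query malicious verifiers, establishing the theorem.
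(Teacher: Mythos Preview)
Your overall architecture matches the paper's: arithmetize $\mathrm{\#3SAT}$, mask with a random low-degree polynomial sent as the PCP oracle, run sumcheck on the masked polynomial, and simulate lazily via succinct constraint detection. However, there is a genuine gap in the zero-knowledge argument stemming from which side of the mask carries the verifier's random coefficient.

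You run the sumcheck on $p_\Formula + \rho R$, and your zero-knowledge analysis explicitly assumes ``$\rho$ nonzero''. But $\rho$ is supplied by a possibly malicious verifier, who is free to send $\rho = 0$. In that case the honest prover executes the sumcheck on $p_\Formula$ itself, and the round-$i$ message is the partial sum $\sum_{x_{i+1},\dots,x_n \in \{0,1\}} p_\Formula(c_1,\dots,c_{i-1},X,x_{i+1},\dots,x_n)$, a quantity that is $\sharpP$-hard to compute from $\Formula$. No polynomial-time straightline simulator can reproduce these values, so perfect zero knowledge fails. The paper avoids this by masking in the opposite order: the sumcheck runs on $Q \DefineEqual \rho\, p_\Formula + R$ (with $R$ sampled to sum to zero). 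Now $\rho = 0$ makes $Q = R$, whose partial sums are random and easy to simulate, while soundness still holds because a false claim makes $\sum_{\{0,1\}^n}(\rho\, p_\Formula + R) = \rho v$ for at most one $\rho$.

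A second, smaller point: the constraints your simulator must respect are not only point evaluations of $R$ but also its partial sums over $\{0,1\}^k$ (the total sum $v$ and the values pinned down by the $g_i$). Plain Reed--Muller constraint detection does not cover these; you need the extended family $\PSRM$ of Reed--Muller evaluations together with all partial sums over the subcube, which is exactly what the paper's constraint detector (and the underlying reduction to Raz--Shpilka identity testing) is built to handle. Your description of the constraints is right, but the tool you invoke should be the $\PSRM$ detector, not the RM one.
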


\noindent
The above theorem gives new tradeoffs relative to prior work. First, \cite{KilianPT97} construct PCPs for $\NEXP$ that are statistical zero knowledge against unbounded queries. Instead, we construct IPCPs (which generalize PCPs) for $\sharpP$, but we gain perfect zero knowledge. Second, \cite{GoyalIMS10} construct IPCPs for $\NP$ that
\begin{inparaenum}[(i)]
  \item are statistical zero knowledge against unbounded queries, and
  \item have a polynomial-time computable oracle.
\end{inparaenum}
Instead, we construct IPCPs that have an exponential-time computable oracle, but we gain perfect zero knowledge and all languages in $\sharpP$.

\parhead{(2) PZK IOPs for $\NEXP$}
We prove that $\NEXP$ has $2$-round IOPs that are perfect zero knowledge against unbounded queries. We do so by constructing a suitable IOP system for $\NTIME(\DeciderTime)$ against query bound $\QueryBound$, for each time function $\DeciderTime$ and query bound function $\QueryBound$; the result for $\NEXP$ follows by setting $\QueryBound$ to be super-polynomial.

\begin{theorem}[informal statement of Thm.\ \ref{thm:ntime}]
\label{thm:intro-ntime}
For every time bound $\DeciderTime$ and query bound $\QueryBound$, the complexity class $\NTIME(\DeciderTime)$ has $2$-round Interactive Oracle Proofs that are perfect zero knowledge against $\QueryBound$ queries, and where the proof length is $\Ot{\DeciderTime+\QueryBound}$ and the (honest verifier's) query complexity is $\polylog(\DeciderTime + \QueryBound)$.
\end{theorem}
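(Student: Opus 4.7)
The plan is to instantiate the standard arithmetization of nondeterministic computation and then layer on the perfect-zero-knowledge machinery developed earlier in the paper. First I would reduce deciding $\NTIME(\DeciderTime)$ to an algebraic satisfaction problem of the form ``there exists a low-degree $\Witness : \Field^m \to \Field$ such that $\SCPoly(\vec{x}, \Witness(\vec{x}), \Witness(\vec{x}+\alpha_1), \dots) = 0$ for all $\vec{x} \in H^m$,'' where $|H^m| \approx \DeciderTime$ and $\SCPoly$ encodes the nondeterministic transition function. The honest prover's first oracle is then a Reed--Muller encoding of $\Witness$, \emph{additively masked} by a uniformly random low-degree polynomial whose degree is chosen so that any $\QueryBound$ queries to the masked oracle are jointly uniform; since $\QueryBound$ may be super-polynomial the mask itself cannot be written down, which is precisely why we will need succinct simulation of Reed--Muller codes.

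In the second round, after receiving the verifier's randomness, the prover sends the oracle(s) required by the perfect-zero-knowledge sumcheck analogue of the paper, which proves that the masked version of $\SCPoly$ sums to the advertised value over $H^m$, together with PCP-of-Proximity proofs that the relevant functions lie in the appropriate Reed--Solomon/Reed--Muller code. The interaction is thus genuinely $2$-round: an initial oracle, a single verifier challenge, and a reply consisting of additional oracles. Soundness follows by combining the soundness of the (ZK) sumcheck, of low-degree testing, and of the PCPPs for Reed--Solomon proved earlier. Proof length $\Ot{\DeciderTime + \QueryBound}$ and query complexity $\polylog(\DeciderTime+\QueryBound)$ follow from tuning the field size $|\Field|$, the number of variables $m$, and the mask degree to accommodate the query budget $\QueryBound$ while keeping $|H^m|$ close to $\DeciderTime$.

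The zero-knowledge analysis is where succinct constraint detection does the heavy lifting. Abstractly, every oracle the prover sends is a uniformly random element of a coset of a known linear code (Reed--Muller for the masked witness, and a PCPP-of-Reed--Solomon code for the sumcheck responses). The simulator must answer each of the malicious verifier's adaptive queries with the correct marginal distribution conditioned on all previous query-answer pairs. Our succinct constraint detectors --- for Reed--Muller via the Raz--Shpilka PIT, and for PCPPs of Reed--Solomon via their recursive locality-of-constraints --- identify in time $\polylog(\DeciderTime+\QueryBound)$ the (small-support) dual constraints linking any new query to the past queries, and therefore let the simulator sample from the exact correct distribution in polynomial time, yielding \emph{perfect} indistinguishability.

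The main obstacle is precisely this ``beyond $\NP$'' regime: the ambient oracle length is $\Ot{\DeciderTime + \QueryBound}$, which for $\NEXP$ is exponential in the instance size, so the simulator cannot materialize either the masked witness encoding or the PCPP proof and cannot use lock-based or guess-and-check strategies. The hard step will be showing that the oracle from the second round --- which mixes the ZK-sumcheck messages with the PCPP-of-Reed--Solomon proof string --- still admits a succinct constraint detector with $\polylog$ per-query cost, and that the composition of the first-round and second-round simulators produces a view \emph{identically} distributed to the honest interaction, rather than merely statistically close. Verifying this composition, and pushing the parameters so the quantitative bounds $\Ot{\DeciderTime+\QueryBound}$ and $\polylog(\DeciderTime+\QueryBound)$ hold simultaneously with the PZK guarantee against an arbitrary $\QueryBound$, is where the two algorithmic results of the paper must be combined with care.
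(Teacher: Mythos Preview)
Your proposal mixes up the two pipelines that the paper deliberately keeps separate. For the $\NEXP$ result the paper does \emph{not} use sumcheck or Reed--Muller codes at all; those belong to the $\sharpP$ pipeline. The argument here is entirely \emph{univariate}: reduce $\NTIME(\DeciderTime)$ to a randomizable linear-algebraic CSP (RLACSP) whose witnesses are pairs of additive Reed--Solomon codewords (via the reduction of \cite{BenSassonCGV16}, \thmref{thm:ntime-to-lacsp}), and then apply the PZK IOP of Proximity for such codes built from the $\BSCode$ constraint detector (\thmref{thm:bsrs-succinct-constraint-detection} and \corref{cor:rsvrs-pzk-iopp}). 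The witness is hidden by the RLACSP's built-in randomizability (add a random element of a designated subcode $\Subcode\subseteq\MakeZero{\Code}$ so that any $\QueryBound\le\Randomizability/\MapLocality$ coordinates of the shifted witness are jointly uniform), and the proximity proof is hidden by the generic masking transformation of \thmref{thm:pzk-iopp-for-codes}. Only the $\BSCode$ detector is invoked; the Raz--Shpilka-based $\PSRM$ detector plays no role in this theorem.

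The concrete gap in your plan is the round count. The PZK sumcheck of \thmref{thm:sumcheck-ipcp} is an IPCP with one oracle round followed by $\SCVars$ rounds of genuine interaction; packaging those rounds into a single second-round oracle is exactly the ``flattening'' move that the paper explicitly observes destroys zero knowledge (a malicious verifier could read transcripts for two different $\rho$'s and cancel the mask). So a sumcheck-based route would yield an $O(\log\DeciderTime)$-round IOP, not a $2$-round one. The paper obtains $2$ rounds precisely because the univariate BS proximity proof is a one-shot PCPP, which \thmref{thm:pzk-iopp-for-codes} converts into a $2$-round PZK IOPP, and the randomized RLACSP witness can be sent in parallel with the first of those two rounds. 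Your ``mask the Reed--Muller witness with a random low-degree polynomial'' step also needs more care: an arbitrary additive mask breaks the constraint $\SCPoly(\vec{x},\Witness(\vec{x}),\dots)=0$, which is why the paper relies on the RLACSP structure (a subcode $\Subcode$ whose shifts preserve satisfiability) rather than naive masking.
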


\noindent
The above theorem gives new tradeoffs relative to \cite{KilianPT97}'s result, which gives PCPs for $\NEXP$ that have statistical zero knowledge, polynomial proof length, and an adaptive verifier. Instead, we construct IOPs (which generalize PCPs) for $\NEXP$, but we gain perfect zero knowledge, quasilinear proof length, and a non-adaptive verifier. Moreover, our theorem extends \cite{BenSassonCGV16}'s result from $\NP$ to $\NEXP$, positively answering an open question raised in that work. Namely, \cite{BenSassonCGV16}'s result requires $\DeciderTime,\QueryBound$ to be polynomially-bounded because their simulator runs in $\poly(\DeciderTime+\QueryBound)$ time; in contrast, our result relies on a simulator that runs in time $\poly(\Malicious{q} + \log \DeciderTime+ \log \QueryBound)$, where $\Malicious{q}$ is the actual number of queries made by the malicious verifier; this is the exponential improvement that enables us to ``go up to $\NEXP$''. Overall, we learn that ``$2$ rounds of PCPs'' are enough to obtain perfect zero knowledge (even with quasilinear proof length) for any language in $\NEXP$.


\subsubsection{Succinct constraint detection for Reed--Muller and Reed--Solomon codes}
\label{intro:succinct-constraint-detection}

Our theorems in the previous section crucially rely on solving a problem that lies at the intersection of coding theory, linear algebra, and computational complexity, which we call the \emph{constraint detection problem}. We outline the role played by this problem in \secref{techniques:masking}, while now we informally introduce it and state our results about it.

\parhead{Detecting constraints in codes}
Constraint detection is the problem of determining which linear relations hold across all codewords of a linear code $\Code \subseteq \Field^{\Domain}$, when considering only a given subdomain $\IdxSet \subseteq\Domain$ of the code rather than all of the domain $\Domain$. This problem can always be solved in time that is polynomial in $\SetCardinality{\Domain}$ (via Gaussian elimination); however, if there is an algorithm that solves this problem in time that is \emph{polynomial in the subdomain's  size} $\SetCardinality{\IdxSet}$, rather than the domain's size $\SetCardinality{\Domain}$, then we say that the code has \emph{succinct} constraint detection; in particular, the domain could have  \emph{exponential} size and the algorithm would still run in polynomial time.

\begin{definition}[informal]
\label{def:informal-constraint-detection}
We say that a linear code $\Code \subseteq \Field^{\Domain}$ has \defemph{succinct constraint detection} if there exists an algorithm that, given a subset $\IdxSet \subseteq \Domain$, runs in time $\poly(\log \SetCardinality{\Field}  +\log \SetCardinality{\Domain} + \SetCardinality{\IdxSet})$ and outputs $\OtherCodeword \in \Field^{\IdxSet}$ such that $\sum_{i \in \IdxSet} \OtherCodeword(i) \Codeword(i) = 0$ for all $\Codeword \in \Code$, or ``no'' if no such $\OtherCodeword$ exists. (In particular, $\SetCardinality{\Domain}$ may be exponential.)
\end{definition}

\noindent
We further discuss the problem of constraint detection in \secref{techniques:detecting-constraints}, and provide a formal treatment of it in \secref{sec:definition-of-constraint-detection}. Beyond this introduction, we shall use (and achieve) a stronger definition of constraint detection: the algorithm is required to output a basis for the space of dual codewords in $\Dual{\Code}$ whose support lies in the subdomain $\IdxSet$, i.e., a basis for the space $\{ \OtherCodeword \in \Domain^{\IdxSet} : \; \forall\, \Codeword \in \Code\,,\, {\sum_{i \in \IdxSet} \OtherCodeword(i) \Codeword(i) = 0}  \}$. Note that in our discussion of succinct constraint detection we do not leverage the distance property of the code $\Code$, but we do leverage it in our eventual applications.

Our zero knowledge simulators' strategy includes sampling a ``random PCP'': a random codeword $\Codeword$ in a linear code $\Code$ with exponentially large domain size $\SetCardinality{\Domain}$ (see \secref{techniques:masking} for more on this). Explicitly sampling $\Codeword$ requires time $\Omega(\SetCardinality{\Domain})$, and so is inefficient. But a verifier makes only polynomially-many queries to $\Codeword$, so the simulator has to only simulate $\Codeword$ when restricted to polynomial-size sets $\IdxSet \subseteq \Domain$, allowing the possibility of doing so in time $\poly(\SetCardinality{\IdxSet})$. Achieving such a simulation time is an instance of (efficiently and perfectly) ``implementing a huge random object'' \cite{GoldreichGN10} via a \emph{stateful} algorithm \cite{BogdanovW04}. We observe that if $\Code$ has succinct constraint detection then this sampling problem for $\Code$ has a solution: the simulator maintains the set $\{(i,a_{i})\}_{i \in \IdxSet}$ of past query-answer pairs; then, on a new verifier query $j \in \Domain$, the simulator uses constraint detection to determine if $\Codeword_{j}$ is linearly dependent on $\Codeword_{\IdxSet}$, and answers accordingly (such linear dependencies characterize the required probability distribution, see \lemref{lem:efficient-codeword-simulator}).

Overall, our paper thus provides an application (namely, obtaining PZK simulators for classes beyond $\NP$) where the problem of efficient implementation of huge random objects arises naturally, and also where \emph{perfect} implementation (no probability of error in the simulated distribution) is a requirement of the application.

We now state our results about succinct constraint detection.

\parhead{(1) Reed--Muller codes, and their partial sums}
We prove that the family of linear codes comprised of evaluations of low-degree multivariate polynomials, along with their partial sums, has succinct constraint detection. This family is closely related to the \emph{sumcheck protocol} \cite{LundFKN92}, and indeed we use this result to obtain a PZK analogue of the sumcheck protocol (see \secref{techniques:zk-sumcheck} and \secref{sec:zk-sumcheck}), which yields \thmref{thm:intro-sharpp} (see \secref{techniques:zk-sharpp} and \secref{sec:zk-sharpp}).

Recall that the family of Reed--Muller codes, denoted $\RM$, is indexed by tuples $\CodeIdx = (\Field,\SCVars,\SCDegree)$, where $\Field$ is a finite field and $\SCVars,\SCDegree$ are positive integers, and the $\CodeIdx$-th code consists of codewords $\Codeword \colon \Field^{\SCVars} \to \Field$ that are the evaluation of an $\SCVars$-variate polynomial $Q$ of individual degree less than $\SCDegree$ over $\Field$. We denote by $\PSRM$ the family that extends $\RM$ with evaluations of all partial sums over certain subcubes of a hypercube:

\begin{definition}[informal]
We denote by $\PSRM$ the linear code family that is indexed by tuples $\CodeIdx = (\Field,\SCVars,\SCDegree,\SCSubset)$, where $\SCSubset$ is a subset of $\Field$, and where the $\CodeIdx$-th code consists of codewords $(\Codeword_{0},\dots,\Codeword_{\SCVars})$ such that there exists an $\SCVars$-variate polynomial $Q$ of individual degree less than $\SCDegree$ over $\Field$ for which $\Codeword_{i} \colon \Field^{\SCVars-i} \to \Field$ is the evaluation of the $i$-th partial sum of $Q$ over $\SCSubset$, i.e, $\Codeword_{i}(\vec{\alpha}) = \sum_{\vec{\gamma} \in \SCSubset^{i}} Q(\vec{\alpha}, \vec{\gamma})$ for every $\vec{\alpha} \in \Field^{\SCVars-i}$.
\end{definition}

The domain size for codes in $\PSRM$ is $\Omega(\SetCardinality{\Field}^{\SCVars})$, but our detector's running time is exponentially smaller.

\begin{theorem}[informal statement of Thm.\ \ref{thm:psrm-succinct-constraint-detection}]
\label{thm:intro-psrm}
The family $\PSRM$ has succinct constraint detection:
\smallskip\\
\centerline{there is a detector algorithm for $\PSRM$ that runs in time $\poly(\log \SetCardinality{\Field} + \SCVars + \SCDegree + \SetCardinality{\SCSubset} + \SetCardinality{\IdxSet})$.}
\end{theorem}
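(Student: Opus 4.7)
The plan is to cast succinct constraint detection for $\PSRM$ as a structured polynomial-identity-testing (PIT) problem, and solve it via the Raz--Shpilka deterministic PIT for non-commutative arithmetic branching programs (ABPs). First, observe that each index $p=(i,\vec\alpha)\in\IdxSet$ defines a linear functional $\phi_p(\Poly)=\Poly_i(\vec\alpha)=\sum_{\vec\gamma\in\SCSubset^i}\Poly(\vec\alpha,\vec\gamma)$ on the ambient space $\PolynomialRingIndOne{\Field}{\SCVars}{\VariableX}{\SCDegree}$ of dimension $\SCDegree^{\SCVars}$. Expanding $\Poly$ in the monomial basis and writing $s_k:=\sum_{\gamma\in\SCSubset}\gamma^k$, I would compute $\phi_{(i,\vec\alpha)}(\VariableX^e)=\bigl(\prod_{j\le \SCVars-i}\alpha_j^{e_j}\bigr)\cdot\bigl(\prod_{j>\SCVars-i}s_{e_j}\bigr)$, so that the row $\phi_p$ is a rank-one tensor $\bigotimes_{j=1}^{\SCVars} u_j^{(p)}\in(\Field^{\SCDegree})^{\otimes \SCVars}$ whose $j$-th factor is either a Vandermonde vector in $\alpha_j$ (when $j\le \SCVars-i$) or the fixed ``power-sum'' vector $(s_0,\ldots,s_{\SCDegree-1})$ (when $j>\SCVars-i$). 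Succinct constraint detection then amounts to finding the left kernel of the implicit $\SetCardinality{\IdxSet}\times \SCDegree^{\SCVars}$ matrix whose rows are these tensors.

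Next, I would associate to each $p$ the univariate polynomial $f_j^{(p)}(\VariableY):=\sum_{e=0}^{\SCDegree-1}u_j^{(p)}(e)\VariableY^e$ of degree $<\SCDegree$, and observe that a vector $\OtherCodeword\in\Field^{\IdxSet}$ lies in the desired kernel iff the formal polynomial $F_{\OtherCodeword}(\VariableY_1,\dots,\VariableY_{\SCVars}):=\sum_{p\in\IdxSet}\OtherCodeword(p)\prod_{j=1}^{\SCVars}f_j^{(p)}(\VariableY_j)$ is identically zero. Because each term of $F_{\OtherCodeword}$ is a product of univariates over disjoint variables in the fixed order $\VariableY_1,\dots,\VariableY_{\SCVars}$, it is computed by a non-commutative ABP of width $\SetCardinality{\IdxSet}$ (namely $\SetCardinality{\IdxSet}$ parallel paths), length $\SCVars$, and edge-label degree $<\SCDegree$; moreover the ordered non-commutative polynomial computed has the same coefficient support as the commutative one. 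The Raz--Shpilka algorithm therefore yields an explicit hitting set $\mathcal{H}\subseteq\Field^{\SCVars}$ of size $\poly(\SetCardinality{\IdxSet},\SCVars,\SCDegree)$, constructible in $\poly(\SetCardinality{\IdxSet},\SCVars,\SCDegree,\log\SetCardinality{\Field})$ time, with the property that $F_{\OtherCodeword}\equiv 0$ iff $F_{\OtherCodeword}(h)=0$ for every $h\in\mathcal{H}$.

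Finally, for each $h\in\mathcal{H}$ the equation $F_{\OtherCodeword}(h)=0$ becomes a single linear constraint $\sum_{p\in\IdxSet}\OtherCodeword(p)\prod_j f_j^{(p)}(h_j)=0$ in the unknowns $\OtherCodeword(p)$, whose coefficient vector is computable in $\poly(\SCDegree,\SetCardinality{\SCSubset},\log\SetCardinality{\Field})$ time per entry (one exponentiation per Vandermonde factor, after precomputing the $s_k$'s once in $O(\SCDegree\cdot\SetCardinality{\SCSubset})$ field operations). Stacking these $\SetCardinality{\mathcal{H}}$ equations and performing Gaussian elimination returns a basis for $\{\OtherCodeword:F_{\OtherCodeword}\equiv 0\}$, which is precisely the space of dual codewords of $\PSRM$ supported on $\IdxSet$, within the claimed total runtime $\poly(\log\SetCardinality{\Field}+\SCVars+\SCDegree+\SetCardinality{\SCSubset}+\SetCardinality{\IdxSet})$.

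The main obstacle is avoiding the quasi-polynomial blow-up endemic to generic commutative PIT (for instance, explicit hitting sets for commutative ROABPs are only known to be quasi-polynomial in size). The reduction sidesteps this by exploiting that $F_{\OtherCodeword}$ is \emph{set-multilinear in the strongest sense}: each term is a product of univariates over disjoint variables in a single fixed order, so the ordered non-commutative polynomial and the commutative one agree, placing $F_{\OtherCodeword}$ in the precise model for which Raz--Shpilka deliver genuine polynomial-time deterministic PIT (and a polynomial-size hitting set) at width $\SetCardinality{\IdxSet}$. A secondary subtlety is the non-uniform nature of the tensor factors (Vandermonde vs.\ power-sum), but both become degree-$(<\SCDegree)$ univariates in the PIT reformulation, so no asymmetry remains.
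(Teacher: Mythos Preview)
Your reduction is exactly the paper's: each query point $p\in I$ becomes a rank-one tensor (your $\bigotimes_j u_j^{(p)}$, the paper's $\phi_{\vec\alpha}$), encoded as a product of $m$ univariates of degree $<d$ (your $\prod_j f_j^{(p)}$, the paper's $\Phi_{\vec\alpha}$), and constraint detection becomes computing the nullspace $\{z:\sum_p z(p)\,\Phi_p\equiv 0\}$ for a family of such products. The one divergence is in how you invoke Raz--Shpilka: you pass through an explicit hitting set and then Gaussian-eliminate the evaluation matrix, whereas the paper applies Raz--Shpilka directly as a \emph{white-box nullspace routine} (its Lemma immediately following the theorem, with an elementary self-contained proof in the appendix that eliminates one variable at a time via $\partial_1^j\bigl(\prod_i f_i\bigr)=c_{j}\prod_{i\ge 2}f_i$ and a small linear system at each level). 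Your phrasing here is slightly off---Raz--Shpilka is a white-box PIT algorithm, not a hitting-set construction, and whether truly polynomial-size hitting sets exist for commutative known-order ROABPs is a separate and more delicate question---so the hitting-set detour is best replaced by the direct white-box call, which is both what the paper does and what your argument actually needs.
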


\noindent
We provide intuition for the theorem's proof in \secref{techniques:detect-psrm} and provide the proof's details in \secref{sec:partial-sums}; the proof leverages tools from algebraic complexity theory. (Our proof also shows that the family $\RM$, which is a restriction of $\PSRM$, has succinct constraint detection.) We note that our theorem implies perfect and stateful implementation of a random low-degree multivariate polynomial and its partial sums over any hypercube; this extends \cite{BogdanovW04}'s result, which handles the special case of parity queries to boolean functions on subcubes of the boolean hypercube.

\parhead{(2) Reed--Solomon codes, and their PCPPs}
Second, we prove that the family of linear codes comprised of evaluations of low-degree univariate polynomials concatenated with corresponding BS proximity proofs \cite{BS08} has succinct constraint detection. This family is closely related to quasilinear-size PCPs for $\NEXP$ \cite{BS08}, and indeed we use this result to obtain PZK proximity proofs for this family (see \secref{techniques:pzk-iopp-bsrs} and \secref{sec:pzk-codes}), from which we derive \thmref{thm:intro-ntime} (see \secref{techniques:zk-nexp} and \secref{sec:pzk-ntime}).

\begin{definition}[informal]
\label{def:intro-bsrs}
We denote by $\BSCode$ the linear code family indexed by tuples $\CodeIdx = (\Field,\BSSpace,\RSDegree)$, where $\Field$ is an extension field of $\Field_{2}$, $\BSSpace$ is a linear subspace in $\Field$, and $\RSDegree$ is a positive integer; the $\CodeIdx$-th code consists of evaluations on $\BSSpace$ of univariate polynomials $Q$ of degree less than $\RSDegree$, concatenated with corresponding \cite{BS08} proximity proofs.
\end{definition}

The domain size for codes in $\BSCode$ is $\Omega(\SetCardinality{\BSSpace})$, but our detector's running time is exponentially smaller.

\begin{theorem}[informal statement of Thm.\ \ref{thm:bsrs-succinct-constraint-detection}]
\label{thm:intro-bsrs}
The family $\BSCode$ has succinct constraint detection:
\smallskip\\
\centerline{there is a detector algorithm for $\BSCode$ that runs in time $\poly(\log \SetCardinality{\Field} + \Dimension{\BSSpace} + \SetCardinality{\IdxSet})$.}
\end{theorem}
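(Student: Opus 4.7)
The plan is to prove the theorem by exploiting the recursive tree structure of the BS08 PCPP construction and establishing a \emph{locality lemma}: every dual constraint of $\BSCode$ supported on a small set $\IdxSet$ can be written as a linear combination of dual constraints each supported on a slight enlargement of $\IdxSet$ that follows the recursion tree. First, I would unpack the recursive structure: a codeword of $\BSCode$ indexed by $(\Field,\BSSpace,\RSDegree)$ consists of the evaluation of a univariate polynomial $Q$ of degree less than $\RSDegree$ on $\BSSpace$, together with a proof organized along a tree $\BSTree$ of depth $O(\log \Dimension{\BSSpace})$; each internal node $v$ carries a bivariate embedding $\hat Q(X,Y)$ on a product decomposition of its subspace, and its children recursively carry BS08 proofs that the row and column polynomials of $\hat Q$ have low degree. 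Crucially, each parent--child consistency condition involves only $O(1)$ positions.

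Second, I would prove the locality lemma. For each $i \in \IdxSet$, let $\cls{i}$ be the set of proof positions lying on the path in $\BSTree$ from the leaf containing $i$ up to the root, together with the $O(1)$ positions needed to state the parent--child consistency checks at each node along this path; let $\cls{\IdxSet} = \bigcup_{i \in \IdxSet} \cls{i}$, so that $\SetCardinality{\cls{\IdxSet}} = O(\SetCardinality{\IdxSet} \cdot \log \Dimension{\BSSpace})$. The lemma states that every $w \in \Dual{\BSCode}$ with $\Support{w} \subseteq \IdxSet$ lies in the $\Field$-span of dual codewords with support in $\cls{\IdxSet}$. I would prove this by induction on the depth of $\BSTree$: at each internal node the proof splits into the parent polynomial's evaluation and independent subproofs for each row/column, linked only by $O(1)$-local consistency constraints, so any dual constraint touching disjoint sibling subtrees must decompose across them up to these consistency constraints.

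Third, the algorithm would proceed as follows: compute $\cls{\IdxSet}$; for each active subtree (a child node touched by $\cls{\IdxSet}$), recursively invoke the constraint detector on the restriction of the code to that subtree, which is itself an instance of $\BSCode$ on a subspace of roughly half the dimension; combine the returned bases with the explicit $O(1)$-local parent--child consistency constraints to obtain a spanning set of dual codewords supported on $\cls{\IdxSet}$; and finish by Gaussian elimination, projected to $\IdxSet$, to extract a basis of constraints with support in $\IdxSet$. The recursion has depth $O(\log \Dimension{\BSSpace})$ and the number of active subtrees at each level is $O(\SetCardinality{\IdxSet})$, so the total running time is $\poly(\log \SetCardinality{\Field} + \Dimension{\BSSpace} + \SetCardinality{\IdxSet})$, as claimed.

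The main obstacle is the locality lemma: showing that the dual space of the fully composed BS08 construction really is generated by constraints that respect the recursion tree. The subtlety is that the bivariate embedding at each node could in principle create ``long-range'' dependencies between otherwise disjoint subtrees; one must verify inductively that all such dependencies are already captured by the $O(1)$-local parent--child consistency constraints and that no genuinely global dual constraint survives. This is the place where the specific algebraic structure of the BS08 construction must be exploited, and where most of the technical work would lie.
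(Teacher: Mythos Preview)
Your high-level intuition---that the recursive structure of BS08 should localize dual constraints---is sound, but the proposal rests on a factual error about that structure. You write that ``each parent--child consistency condition involves only $O(1)$ positions,'' and you use this to bound $\SetCardinality{\cls{\IdxSet}}$ and to argue that dual constraints decompose across sibling subtrees. In the BS08 construction this is false: a child view (a row $\BSSpace_\beta$ or a column $\VanishingPoly{\BSSpace_0}(\BSSpace_1)$) shares its \emph{entire} Reed--Solomon part, of size $\Theta(\sqrt{\SetCardinality{\BSSpace}})$, with the parent's bivariate domain. There is no $O(1)$-local consistency equation linking parent and child; even the elementary dual constraints for a single row (Reed--Solomon duals) have support $\Omega(\sqrt{\SetCardinality{\BSSpace}})$ at the top level. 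Consequently your set $\cls{\IdxSet}$ is not small, and the inductive decomposition you sketch (``linked only by $O(1)$-local consistency constraints, so any dual constraint touching disjoint sibling subtrees must decompose'') has no foundation.

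What \emph{is} true---and what the paper exploits---is that any two \emph{sibling} views intersect in at most one point (the cover is $1$-intersecting). The paper combines this with a notion it calls $\kappa$-\emph{independence}: at each internal node, the bivariate polynomial has enough degrees of freedom that any assignment to $\kappa$ rows/columns plus $\kappa$ arbitrary additional points can be extended to a global codeword (this is proved via a bivariate interpolation argument, with $\kappa \approx \sqrt{\SetCardinality{\BSSpace}}$ at the root). The main lemma then shows that $1$-intersection together with $\kappa$-independence at every node implies that the layer at any fixed depth $\CDepth$ is a $\kappa$-\emph{local} cover of the whole code. From $\kappa$-locality one gets, for any $\SetCardinality{J}\le\kappa$, that $\Span(\cup_{j\in J}\Dual{\CCode_j})=\Puncture{(\Dual{\Code})}{\cup_j\CDomain_j}$---this is the decomposition you were after, but proved via extendability of primal assignments rather than via any $O(1)$-local generators. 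The algorithm then picks a \emph{single} depth $\CDepth$ (chosen so that views at that depth have size $\poly(\SetCardinality{\IdxSet})$ while the cover is still $\SetCardinality{\IdxSet}$-local), finds the at most $\SetCardinality{\IdxSet}$ views covering $\IdxSet$, and computes their duals directly by brute force; there is no recursion through all levels. Your recursion-to-the-root scheme, and the $O(1)$-locality it assumes, should be replaced by this ``pick one good depth'' strategy.
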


\noindent
We provide intuition for the theorem's proof in \secref{techniques:detect-bsrs} and provide the proof's details in \secref{sec:bsrs-succinct-constraint-detection}; the proof leverages combinatorial properties of the recursive construction of BS proximity proofs.

\section{Techniques}
\label{sec:techniques}

We informally discuss intuition behind our algorithms for detecting constraints (\secref{techniques:detecting-constraints}), their connection to perfect zero knowledge (\secref{techniques:masking}), and how we ultimately derive our results about $\sharpP$ and $\NEXP$ (\secref{techniques:deriving-our-results}). Throughout, we provide pointers to the technical sections that contain further details.

\subsection{Detecting constraints for exponentially-large codes}
\label{techniques:detecting-constraints}

As informally introduced in \secref{intro:succinct-constraint-detection}, the \emph{constraint detection problem} corresponding to a linear code family $\CodeClass=\Set{\Code_{\CodeIdx}}_{\CodeIdx}$ with domain $\Domain(\cdot)$ and alphabet $\Field(\cdot)$ is the following: given an index $\CodeIdx \in \Bits^{*}$ and subset $\IdxSet \subseteq \Domain(\CodeIdx)$, output a basis for the space $\{ z \in \Domain(\CodeIdx)^{I} : \; \forall\, \Codeword \in \Code_{\CodeIdx}\,,\, {\sum_{i \in I} z(i) \Codeword(i) = 0}  \}$. In other words, for a given subdomain $\IdxSet$, we wish to determine all linear relations which hold for codewords in $\Code_{\CodeIdx}$ restricted to the subdomain $\IdxSet$.

If a generating matrix for $\Code_{\CodeIdx}$ can be found in polynomial time, this problem can be solved in $\poly(\BitSize{\CodeIdx} + \SetCardinality{\Domain(\CodeIdx)})$ time via Gaussian elimination (such an approach was implicitly taken by \cite{BenSassonCGV16} to construct a perfect zero knowledge simulator for an IOP for $\NP$). However, in our setting $\SetCardinality{\Domain(\CodeIdx)}$ is \emph{exponential} in $\BitSize{\CodeIdx}$, so the straightforward solution is inefficient. With this in mind, we say that $\CodeClass$ has \emph{succinct constraint detection} if there exists an algorithm that solves its constraint detection problem in $\poly(\BitSize{\CodeIdx} + \SetCardinality{\IdxSet})$ time, even if $\SetCardinality{\Domain(\CodeIdx)}$ is exponential in $\BitSize{\CodeIdx}$.

The formal definition of succinct constraint detection is in \secref{sec:definition-of-constraint-detection}. In the rest of this section we provide intuition for two of our theorems: succinct constraint detection for the family $\PSRM$ and for the family $\BSCode$. As will become evident, the techniques that we use to prove the two theorems differ significantly. Perhaps this is because the two codes are quite different: $\PSRM$ has a simple and well-understood algebraic structure, whereas $\BSCode$ is constructed recursively using proof composition.

\subsubsection{From algebraic complexity to detecting constraints for Reed--Muller codes and their partial sums}
\label{techniques:detect-psrm}

The purpose of this section is to provide intuition about the proof of \thmref{thm:intro-psrm}, which states that the family $\PSRM$ has succinct constraint detection. (Formal definitions, statements, and proofs are in \secref{sec:partial-sums}.) We thus outline how to construct an algorithm that detects constraints for the family of linear codes comprised of evaluations of low-degree multivariate polynomials, along with their partial sums. Our construction generalizes the proof of \cite{BogdanovW04}, which solves the special case of parity queries to boolean functions on subcubes of the boolean hypercube by reducing this problem to a probabilistic identity testing problem that is solvable via an algorithm of \cite{RazS05}.

Below, we temporarily ignore the partial sums, and focus on constructing an algorithm that detects constraints for the family of Reed--Muller codes $\RM$, and at the end of the section we indicate how we can also handle partial sums.

\parhead{Step 1: phrase as linear algebra problem}
Consider a codeword $\Codeword \colon \Field^{\SCVars} \to \Field$ that is the evaluation of an $\SCVars$-variate polynomial $Q$ of individual degree less than $\SCDegree$ over $\Field$. Note that, for every $\vec{\alpha} \in \Field^{\SCVars}$, $\Codeword(\vec{\alpha})$ equals the inner product of $Q$'s coefficients with the vector $\phi_{\vec{\alpha}}$ that consists of the evaluation of all $\SCDegree^{\SCVars}$ monomials at $\vec{\alpha}$. One can then argue that constraint detection for $\RM$ is equivalent to finding the nullspace of $\{ \phi_{\vec{\alpha}} \}_{\vec{\alpha} \in \IdxSet}$. However, ``writing out'' this $\SetCardinality{\IdxSet} \times \SCDegree^{\SCVars}$ matrix and performing Gaussian elimination is too expensive, so we must solve this linear algebra problem \emph{succinctly}.

\parhead{Step 2: encode vectors as coefficients of polynomials}
While each vector $\phi_{\vec{\alpha}}$ is long, it has a succinct description; in fact, we can construct an $\SCVars$-variate polynomial $\Phi_{\vec{\alpha}}$ whose coefficients (after expansion) are the entries of $\phi_{\vec{\alpha}}$, but has an arithmetic circuit of only size $O(\SCVars\SCDegree)$: namely, $\Phi_{\vec{\alpha}}(\vec{\VariableX})
\DefineEqual
\prod_{i=1}^{\SCVars}
(1 + \alpha_{i} \VariableX_{i} + \alpha_{i}^{2} \VariableX_{i}^{2} + \cdots + \alpha_{i}^{\SCDegree-1} \VariableX_{i}^{\SCDegree-1})$.
Computing the nullspace of $\{ \Phi_{\vec{\alpha}} \}_{\vec{\alpha} \in \IdxSet}$ is thus equivalent to computing the nullspace of $\{ \phi_{\vec{\alpha}} \}_{\vec{\alpha} \in \IdxSet}$.

\parhead{Step 3: computing the nullspace}
Computing the nullspace of a set of polynomials is a problem in algebraic complexity theory, and is essentially equivalent to the Polynomial Identity Testing (PIT) problem, and so we leverage tools from that area.\footnote{PIT is the following problem: given a polynomial $f$ expressed as an algebraic circuit, is $f$ identically zero? This problem has well-known randomized algorithms \cite{Zippel79,Schwartz80}, but deterministic algorithms for all circuits seem to be beyond current techniques \cite{KabanetsI04}. PIT is a central problem in algebraic complexity theory, and suffices for solving a number of other algebraic problems. We refer the reader to \cite{ShpilkaY10} for a survey.} While there are simple randomized algorithms to solve this problem (see for example \cite[Lemma 8]{Kayal10}), these algorithms, due to a nonzero probability of error, suffice to achieve statistical zero knowledge \emph{but do not suffice to achieve perfect zero knowledge}. To obtain perfect zero knowledge, we need a solution that has \emph{no probability of error}. Derandomizing PIT for arbitrary algebraic circuits seems to be beyond current techniques (as it implies circuit lower bounds \cite{KabanetsI04}), but derandomizations are currently known for some restricted circuit classes. The polynomials that we consider are special: they fall in the well-studied class of ``sum of products of univariates'', and for this case we can invoke the deterministic algorithm of \cite{RazS05} (see also \cite{Kayal10}). (It is interesting that derandomization techniques are ultimately used to obtain a qualitative improvement for an inherently probabilistic task, i.e., perfect sampling of verifier views.)

\medskip
\noindent
The above provides an outline for how to detect constraints for $\RM$. The extension to $\PSRM$, which also includes partial sums, is achieved by considering a more general form of vectors $\phi_{\vec{\alpha}}$ as well as corresponding polynomials $\Phi_{\vec{\alpha}}$. These polynomials also have the special form required for our derandomization. See \secref{sec:partial-sums} for details.

\subsubsection{From recursive code covers to detecting constraints for Reed--Solomon codes and their PCPPs}
\label{techniques:detect-bsrs}

The purpose of this section is to provide intuition about the proof of \thmref{thm:intro-bsrs}, which states that the family $\BSCode$ has succinct constraint detection. (Formal definitions, statements, and proofs are in \secref{sec:bsrs-succinct-constraint-detection}.) We thus outline how to construct an algorithm that detects constraints for the family of linear codes comprised of evaluations of low-degree univariate polynomials concatenated with corresponding BS proximity proofs \cite{BS08}.

Our construction leverages the recursive structure of BS proximity proofs: we identify key combinatorial properties of the recursion that enable ``local'' constraint detection. To define and argue these properties, we introduce two notions that play a central role throughout the proof:
\begin{center}
\begin{minipage}{0.9\textwidth}
A \emph{(local) view} of a linear code $\Code \subseteq \Field^{\Domain}$ is a pair $(\CDomain,\CCode)$ such that $\CDomain \subseteq \Domain$ and $\CCode=\Restrict{\Code}{\CDomain}\subseteq\Field^{\CDomain}$. \\ A \emph{cover} of $\Code$ is a set of local views $\Cover =\Set{(\CDomain_{j},\CCode_{j})}_{j}$ of $\Code$ such that $\Domain = \cup_{j} \CDomain_{j}$.
\end{minipage}
\end{center}

\parhead{Combinatorial properties of the recursive step}
Given a finite field $\Field$, domain $\Domain \subseteq \Field$, and degree $d$, let $\Code \DefineEqual \RS[\Field,\Domain,d]$ be the Reed--Solomon code consisting of evaluations on $\Domain$ of univariate polynomials of degree less than $d$ over $\Field$; for concreteness, say that the domain size is $\SetCardinality{\Domain}=2^{n}$ and the degree is $d=\SetCardinality{\Domain}/2=2^{n-1}$.

The first level of \cite{BS08}'s recursion appends to each codeword $f \in \Code$ an auxiliary function $\GetBSProof{1}{f} \colon \Domain'\to\Field$ with domain $\Domain'$ disjoint from $\Domain$. Moreover, the mapping from $f$ to $\GetBSProof{1}{f}$ is linear over $\Field$, so the set $\GetBSCode{\Code}{1} \DefineEqual \{f\FConc\GetBSProof{1}{f}\}_{f\in \Code}$, where $f\FConc\GetBSProof{1}{f} \colon \Domain \cup \Domain' \to\Field$ is the function that agrees with $f$ on $\Domain$ and with $\GetBSProof{1}{f}$ on $\Domain'$, is a linear code over $\Field$. The code $\GetBSCode{\Code}{1}$ is the ``first-level'' code of a BS proximity proof for $f$.

The code $\GetBSCode{\Code}{1}$ has a naturally defined cover $\GetBSCode{\Cover}{1} = \Set{(\CDomain_{j},\CCode_{j})}_{j}$ such that each $\CCode_{j}$ is a Reed--Solomon code $\RS[\Field,\CDomain_{j},d_{j}]$ with $2d_{j} \leq \SetCardinality{\CDomain_{j}} = O(\sqrt{d})$, that is, with rate $1/2$ and block length $O(\sqrt{d})$. We prove several combinatorial properties of this cover:
\begin{itemize}

\item \emph{$\GetBSCode{\Cover}{1}$ is $1$-intersecting.}
For all distinct $j,j'$ in $J$, $\SetCardinality{\CDomain_{j} \cap \CDomain_{j'}} \leq 1$ (namely, the subdomains are almost disjoint).

\item \emph{$\GetBSCode{\Cover}{1}$ is $O(\sqrt{d})$-local.}
Every partial assignment to $O(\sqrt{d})$ domains $\CDomain_{j}$ in the cover that is \emph{locally consistent} with the cover can be extended to a \emph{globally consistent} assignment, i.e., to a codeword of $\GetBSCode{\Code}{1}$. That is, there exists $\IdpParam = O(\sqrt{d})$ such that every partial assignment $h \colon \cup_{\ell=1}^{\IdpParam}\CDomain_{j_{\ell}} \to \Field$ with $\Restrict{h}{\CDomain_{j_\ell}} \in \CCode_{j_\ell}$ (for each $\ell$) equals the restriction to the subdomain $\cup_{\ell=1}^{\IdpParam}\CDomain_{j_{\ell}}$ of some codeword $f \FConc \GetBSProof{1}{f}$ in $\GetBSCode{\Code}{1}$.

\item{\emph{$\GetBSCode{\Cover}{1}$ is $O(\sqrt{d})$-independent}.}
The ability to extend locally-consistent assignments to ``globally-consistent'' codewords of $\GetBSCode{\Code}{1}$ holds in a stronger sense: even when the aforementioned partial assignment $h$ is extended \emph{arbitrarily} to $\IdpParam$ additional point-value pairs, this new partial assignment still equals the restriction of some codeword $f \FConc \GetBSProof{1}{f}$ in $\GetBSCode{\Code}{1}$.

\end{itemize}
The locality property alone already suffices to imply that given a subdomain $\IdxSet\subseteq \Domain\cup\Domain'$ of size $\SetCardinality{\IdxSet}<\sqrt{d}$, we can solve the constraint detection problem on $\IdxSet$ by considering only those constraints that appear in views that intersect $\IdxSet$ (see \lemref{lemma:bsrs-cover-based-implies-succinct}). But $\Code$ has exponential block length so a ``quadratic speedup'' does not yet imply succinct constraint detection. To obtain it, we also leverage the intersection and independence properties to reduce ``locality'' as follows.

\parhead{Further recursive steps}
So far we have only considered the first recursive step of a BS proximity proof; we show how to obtain covers with smaller locality (and thereby detect constraints with more efficiency) by considering additional recursive steps. Each code $\CCode_{j}$ in the cover $\GetBSCode{\Cover}{1}$ of $\GetBSCode{\Code}{1}$ is a Reed--Solomon code $\RS[\Field,\CDomain_{j},d_{j}]$ with $\SetCardinality{\CDomain_{j}},d_{j} = O(\sqrt{d})$, and the next recursive step appends to each codeword in $\CCode_{j}$ a corresponding auxiliary function, yielding a new code $\GetBSCode{\Code}{2}$. In turn, $\GetBSCode{\Code}{2}$ has a cover $\GetBSCode{\Cover}{2}$, and another recursive step yields a new code $\GetBSCode{\Code}{3}$, which has its own cover $\GetBSCode{\Cover}{3}$, and so on. The crucial technical observation (\lemref{lemma:bsrs-recursive-and-independent-implies-local}) is that the intersection and independence properties, which hold recursively, enable us to deduce that $\GetBSCode{\Code}{i}$ is $1$-intersecting, $O(\sqrt[2^{i}]{d})$-local, and $O(\sqrt[2^{i}]{d})$-independent; in particular, for $r=\log \log d+O(1)$, $\GetBSCode{\Cover}{r}$ is $1$-intersecting, $O(1)$-local, $O(1)$-independent.

Then, recalling that detecting constraints for local codes requires only the views in the cover that intersect $\IdxSet$ (\lemref{lemma:bsrs-cover-based-implies-succinct}), our constraint detector works by choosing $i\in\{1,\dots, r\}$ such that the cover $\GetBSCode{\Cover}{i}$ is $\poly(\SetCardinality{\IdxSet})$-local, finding in this cover a $\poly(\SetCardinality{\IdxSet})$-size set of $\poly(\SetCardinality{\IdxSet})$-size views that intersect $\IdxSet$, and computing in $\poly(\SetCardinality{\IdxSet})$ time a basis for the dual of each of these views --- thereby proving \thmref{thm:intro-bsrs}.

\begin{remark}
For the sake of those familiar with $\BSCode$ we remark that the domain $\Domain'$ is the carefully chosen subset of $\Field \times \Field$ designated by that construction, the code $\GetBSCode{\Code}{1}$ is the code that evaluates bivariate polynomials of degree $O(\sqrt{d})$ on $\Domain\cup\Domain'$ (along the way mapping $\Domain\subseteq\Field$ to a subset of $\Field\times \Field$), the subdomains $\CDomain_j$ are the axis-parallel ``rows'' and ``columns'' used in that recursive construction, and the codes $\CCode_{j}$ are Reed--Solomon codes of block length $O(\sqrt{d})$. The $O(\sqrt{d})$-locality and independence follow from basic properties of bivariate Reed--Muller codes; see \exmpref{example:rm-line-cover}.
\end{remark}

\begin{remark}
It is interesting to compare the above result with \emph{linear lower bounds on query complexity} for testing proximity to random low density parity check (LDPC) codes \cite{BenSassonHR05,BenSassonGKSV10}. Those results are proved by obtaining a basis for the dual code such that every small-support constraint is spanned by a small subset of that basis. The same can be observed to hold for $\BSCode$, even though this latter code is locally testable with \emph{polylogarithmic query complexity} \cite[Thm.~2.13]{BS08}. The difference between the two cases is due to the fact that, for a random LDPC code, an assignment that satisfies all but a single basis-constraint is (with high probability) far from the code, whereas the recursive and $1$-intersecting structure of $\BSCode$ implies the existence of words that satisfy all but a single basis constraint, yet are negligibly close to being a codeword.
\end{remark}

\subsection{From constraint detection to perfect zero knowledge via masking}
\label{techniques:masking}

We provide intuition about the connection between constraint detection and perfect zero knowledge (\secref{techniques:masking-explanation}), and how we leverage this connection to achieve two intermediate results:
\begin{inparaenum}[(i)]
  \item a sumcheck protocol that is perfect zero knowledge in the Interactive PCP model (\secref{techniques:zk-sumcheck}); and
  \item proximity proofs for Reed--Solomon codes that are perfect zero knowledge in the Interactive Oracle Proof model (\secref{techniques:pzk-iopp-bsrs}).
\end{inparaenum}

\subsubsection{Local simulation of random codewords}
\label{techniques:masking-explanation}

Suppose that the prover and verifier both have oracle access to a codeword $\Codeword \in \Code$, for some linear code $\Code \subseteq \Field^{\Domain}$ with exponential-size domain $\Domain$, and that they need to engage in some protocol that involves $\Codeword$. During the protocol, the prover may leak information about $\Codeword$ that is hard to compute (e.g., requires exponentially-many queries to $\Codeword$), and so would violate zero knowledge (as we see below, this is the case for protocols such as sumcheck).

Rather than directly invoking the protocol, the prover first sends to the verifier a random codeword $r \in \Code$ (as an oracle since $r$ has exponential size) and the verifier replies with a random field element $\rho \in \Field$; then the prover and verifier invoke the protocol on the new codeword $\Codeword' \DefineEqual \rho \Codeword + r \in \Code$ rather than $\Codeword$. Intuitively, running the protocol on $\Codeword'$ now does not leak information about $\Codeword$, because $\Codeword'$ is random in $\Code$ (up to resolvable technicalities). This \emph{random self-reducibility} makes sense for only some protocols, e.g., those where completeness is preserved for any choice of $\rho$ and soundness is broken for only a small fraction of $\rho$; but this will indeed be the case for the settings described below.

The aforementioned \emph{masking} technique was used by \cite{BenSassonCGV16} for codes with polynomial-size domains, but we use it for codes with exponential-size domains, which requires exponentially more efficient simulation techniques. Indeed, to prove perfect zero knowledge, a simulator must be able to reproduce, exactly, the view obtained by any malicious verifier that queries entries of $\Codeword'$, a uniformly random codeword in $\Code$; however, it is too expensive for the simulator to explicitly sample a random codeword and answer the verifier's queries according to it. Instead, the simulator must sample the ``local view'' that the verifier sees while querying $\Codeword'$ at a \emph{small} number of locations $\IdxSet \subseteq \Domain$.

But simulating local views of the form $\Restrict{\Codeword'}{\IdxSet}$ is reducible to detecting \emph{constraints}, i.e., codewords in the dual code $\Dual{\Code}$ whose support is contained in $\IdxSet$. Indeed, if no word in $\Dual{\Code}$ has support contained in $\IdxSet$ then $\Restrict{\Codeword'}{\IdxSet}$ is uniformly random; otherwise, each additional linearly independent constraint of $\Dual{\Code}$ with support contained in $\IdxSet$ further reduces the entropy of $\Restrict{\Codeword'}{\IdxSet}$ in a well-understood manner. (See \lemref{lem:efficient-codeword-simulator} for a formal statement.) In sum, succinct constraint detection enables us to ``implement'' \cite{GoldreichGN10,BogdanovW04} random codewords of $\Code$ despite $\Code$ having exponential size.

We now discuss two concrete protocols for which the aforementioned random self-reducibility applies, and for which we also have constructed suitably-efficient constraint detectors.

\subsubsection{Perfect zero knowledge sumchecks}
\label{techniques:zk-sumcheck}

The celebrated sumcheck protocol \cite{LundFKN92} is \emph{not} zero knowledge. In the sumcheck protocol, the prover and verifier have oracle access to a low-degree $\SCVars$-variate polynomial $\SCPoly$ over a field $\Field$, and the prover wants to convince the verifier that $\sum_{\vec{\alpha} \in \SCSubset^{\SCVars}} \SCPoly(\vec{\alpha}) = 0$ for a given subset $\SCSubset$ of $\Field$. During the protocol, the prover communicates partial sums of $\SCPoly$, which are $\sharpP$-hard to compute and, as such, violate zero knowledge.

We now explain how to use random self-reducibility to make the sumcheck protocol \emph{perfect zero knowledge}, at the cost of moving from the Interactive Proof model to the Interactive PCP model.

\parhead{IPCP sumcheck}
Consider the following tweak to the classical sumcheck protocol: rather than invoking sumcheck on $\SCPoly$ directly, the prover first sends to the verifier a random low-degree polynomial $\RandPoly$ that sums to zero on $\SCSubset^{\SCVars}$ (as an oracle), the verifier replies with a random field element $\rho$, and then the two invoke sumcheck on $\MaskedPoly \DefineEqual \rho\SCPoly + \RandPoly$ instead.

Completeness is clear because if $\sum_{\vec{\alpha} \in \SCSubset^{\SCVars}} \SCPoly(\vec{\alpha})=0$ and $\sum_{\vec{\alpha} \in \SCSubset^{\SCVars}} \RandPoly(\vec{\alpha})=0$ then $\sum_{\vec{\alpha} \in \SCSubset^{\SCVars}} (\rho\SCPoly + \RandPoly)(\vec{\alpha})=0$; soundness is also clear because if $\sum_{\vec{\alpha} \in \SCSubset^{\SCVars}} \SCPoly(\vec{\alpha}) \neq 0$ then $\sum_{\vec{\alpha} \in \SCSubset^{\SCVars}} (\rho\SCPoly + \RandPoly)(\vec{\alpha}) \neq 0$ with high probability over $\rho$, regardless of the choice of $\RandPoly$. (For simplicity, we ignore the fact that the verifier also needs to test that $\RandPoly$ has low degree.) We are thus left to show perfect zero knowledge, which turns out to be a much less straightforward argument.

\parhead{The simulator}
Before we explain how to argue perfect zero knowledge, we first clarify what we mean by it: since the verifier has oracle access to $\SCPoly$ we cannot hope to `hide' it; nevertheless, we can hope to argue that the verifier, by participating in the protocol, does not learn anything about $\SCPoly$ beyond what the verifier can directly learn by querying $\SCPoly$ (and the fact that $\SCPoly$ sums to zero on $\SCSubset^{\SCVars}$). What we shall achieve is the following: an algorithm that simulates the verifier's view by making as many queries to $\SCPoly$ as the \emph{total} number of verifier queries to either $\SCPoly$ or $\RandPoly$.

On the surface, perfect zero knowledge seems easy to argue, because $\rho\SCPoly + \RandPoly$ seems random among low-degree $\SCVars$-variate polynomials that sum to zero on $\SCSubset^{\SCVars}$. More precisely, consider the simulator that samples a random low-degree polynomial $\MaskedPoly$ that sums to zero on $\SCSubset^{\SCVars}$ and uses it instead of $\rho\SCPoly + \RandPoly$ and answers the verifier queries as follows:
\begin{inparaenum}[(a)]
  \item whenever the verifier queries $\SCPoly(\vec{\alpha})$, respond by querying $\SCPoly(\vec{\alpha})$ and returning the true value;
  \item whenever the verifier queries $\RandPoly(\vec{\alpha})$, respond by querying $\SCPoly(\vec{\alpha})$ and returning $\MaskedPoly(\vec{\alpha}) - \rho\SCPoly(\vec{\alpha})$.
\end{inparaenum}
Observe that the number of queries to $\SCPoly$ made by the simulator equals the number of (mutually) distinct queries to $\SCPoly$ and $\RandPoly$ made by the verifier, as desired.

However, the above reasoning, while compelling, is insufficient. First, $\rho\SCPoly + \RandPoly$ is \emph{not} random because a malicious verifier can choose $\rho$ depending on queries to $\RandPoly$. Second, even if $\rho\SCPoly + \RandPoly$ were random (e.g., the verifier does not query $\RandPoly$ before choosing $\rho$), the simulator must run in polynomial time, both producing correctly-distributed `partial sums' of $\rho\SCPoly + \RandPoly$ and answering queries to $\RandPoly$, but sampling $\MaskedPoly$ alone requires exponential time. In this high level discussion we ignore the first problem (which nonetheless has to be tackled), and focus on the second.

At this point it should be clear from the discussion in \secref{techniques:masking-explanation} that the simulator does not have to sample $\MaskedPoly$ explicitly, but only has to perfectly simulate local views of it by leveraging the fact that it can keep state across queries. And doing so requires solving the succinct constraint detection problem for a suitable code $\Code$. In this case, it suffices to consider the code $\Code=\PSRM$, and our \thmref{thm:intro-psrm} guarantees the required constraint detector.

The above discussion omits several details, so we refer the reader to \secref{sec:zk-sumcheck} for further details.

\subsubsection{Perfect zero knowledge proximity proofs for Reed--Solomon}
\label{techniques:pzk-iopp-bsrs}

Testing proximity of a codeword $\Codeword$ to a given linear code $\Code$ can be aided by a \emph{proximity proof} \cite{DinurR04,BenSassonGHSV06}, which is an auxiliary oracle $\pi$ that facilitates testing (e.g.,  $\Code$ is not locally testable). For example, testing proximity to the Reed--Solomon code, a crucial step towards achieving short PCPs, is aided via suitable proximity proofs \cite{BS08}.

From the perspective of zero knowledge, however, a proximity proof can be `dangerous': a few locations of $\pi$ can in principle leak a lot of information about the codeword $\Codeword$, and a malicious verifier could potentially learn a lot about $\Codeword$ with only a few queries to $\Codeword$ and $\pi$. The notion of zero knowledge for proximity proofs requires that this cannot happen: it requires the existence of an algorithm that simulates the verifier's view by making as many queries to $\Codeword$ as the \emph{total} number of verifier queries to either $\Codeword$ or $\pi$ \cite{IshaiW14}; intuitively, this means that any bit of the proximity proof $\pi$ reveals no more information than one bit of $\Codeword$.

We demonstrate again the use of random self-reducibility and show a general transformation that, under certain conditions, maps a PCP of proximity $(\PCPPProver,\PCPPVerifier)$ for a code $\Code$ to a corresponding $2$-round Interactive Oracle Proof of Proximity (IOPP) for $\Code$ that is \emph{perfect zero knowledge}.

\parhead{IOP of proximity for $\Code$}
Consider the following IOP of Proximity: the prover and verifier have oracle access to a codeword $\Codeword$, and the prover wants to convince the verifier that $\Codeword$ is close to $\Code$; the prover first sends to the verifier a random codeword $r$ in $\Code$, and the verifier replies with a random field element $\rho$; the prover then sends the proximity proof $\Proof' \DefineEqual \Prover(\Codeword')$ that attests that $\Codeword' \DefineEqual \rho \Codeword + r$ is close to $\Code$. Note that this is a $2$-round IOP of Proximity for $\Code$, because completeness follows from the fact that $\Code$ is linear, while soundness follows because if $\Codeword$ is far from $\Code$, then so is $\rho \Codeword + r$ for every $r$ with high probability over $\rho$. But is the perfect zero knowledge property satisfied?

\parhead{The simulator}
Without going into details, analogously to \secref{techniques:zk-sumcheck}, a simulator must be able to sample local views for random codewords from the code $\OtherCode \DefineEqual \{\, \Codeword \| \Prover(\Codeword) \,\}_{\Codeword \in \Code}$, so the simulator's efficiency reduces to the efficiency of constraint detection for $\OtherCode$. We indeed prove that if $\OtherCode$ has succinct constraint detection then the simulator works out. See \secref{sec:pzk-general-transformation} for further details.

\parhead{The case of Reed--Solomon}
The above machinery allows us to derive a perfect zero knowledge IOP of Proximity for Reed--Solomon codes, thanks to our \thmref{thm:intro-bsrs}, which states that the family of linear codes comprised of evaluations of low-degree univariate polynomials concatenated with corresponding BS proximity proofs \cite{BS08} has succinct constraint detection; see \secref{sec:rs-pzk-iopp} for details. This is one of the building blocks of our construction of perfect zero knowledge IOPs for $\NEXP$, as described below in \secref{techniques:zk-nexp}.

\subsection{Achieving perfect zero knowledge beyond $\NP$}
\label{techniques:deriving-our-results}

We outline how to derive our results about perfect zero knowledge for $\sharpP$ and $\NEXP$.

\subsubsection{Perfect zero knowledge for counting problems}
\label{techniques:zk-sharpp}

We provide intuition for the proof of \thmref{thm:intro-sharpp}, which states that the complexity class $\sharpP$ has Interactive PCPs that are perfect zero knowledge.

We first recall the classical (non zero knowledge) Interactive Proof for $\sharpP$ \cite{LundFKN92}. The language $\sharpSATLanguage$, which consists of pairs $(\Formula,\NumSats)$ where $\Formula$ is a $3$-CNF boolean formula and $\NumSats$ is the number of satisfying assignments of $\Formula$, is $\sharpP$-complete, and thus it suffices to construct an IP for it. The IP for $\sharpSATLanguage$ works as follows: the prover and verifier both \emph{arithmetize} $\Formula$ to obtain a low-degree multivariate polynomial $\Arithmetize{\Formula}$ and invoke the (non zero knowledge) sumcheck protocol on the claim ``$\sum_{\vec{\alpha} \in \Bits^{n}} \Arithmetize{\Formula}(\vec{\alpha}) = \NumSats$'', where arithmetic is over a large-enough prime field.

Returning to our goal, we obtain a perfect zero knowledge Interactive PCP by simply replacing the (non zero knowledge) IP sumcheck mentioned above with our perfect zero knowledge IPCP sumcheck, described in \secref{techniques:zk-sumcheck}. In \secref{sec:zk-sharpp} we provide further details, including proving that the zero knowledge guarantees of our sumcheck protocol suffice for this case.

\subsubsection{Perfect zero knowledge for nondeterministic time}
\label{techniques:zk-nexp}

We provide intuition for the proof of \thmref{thm:intro-ntime}, which implies that the complexity class $\NEXP$ has Interactive Oracle Proofs that are perfect zero knowledge. Very informally, the proof consists of combining two building blocks:
\begin{inparaenum}[(i)]
  \item \cite{BenSassonCGV16}'s reduction from $\NEXP$ to \emph{randomizable} linear algebraic constraint satisfaction problems, and
  \item our construction of perfect zero knowledge IOPs of Proximity for Reed--Solomon codes, described in \secref{techniques:pzk-iopp-bsrs}.
\end{inparaenum}
Besides extending \cite{BenSassonCGV16}'s result from $\NP$ to $\NEXP$, our proof provides a conceptual simplification over \cite{BenSassonCGV16} by clarifying how the above two building blocks work together towards the final result. We now discuss this.

\parhead{Starting point: \cite{BS08}}
Many PCP constructions consist of two steps:
\begin{inparaenum}[(1)]
  \item arithmetize the statement at hand (in our case, membership of an instance in some $\NEXP$-complete language) by reducing it to a ``PCP-friendly'' problem that looks like a \emph{linear-algebraic} constraint satisfaction problem (LACSP);
  \item design a tester that probabilistically checks witnesses for this LACSP.
\end{inparaenum}
In this paper, as in \cite{BenSassonCGV16}, we take \cite{BS08}'s PCPs for $\NEXP$ as a starting point, where the first step reduces $\NEXP$ to a ``univariate'' LACSP whose witnesses are codewords in a Reed--Solomon code of exponential degree that satisfy certain properties, and whose second step relies on suitable \emph{proximity proofs} \cite{DinurR04,BenSassonGHSV06} for that code. Thus, overall, the PCP consists of two oracles, one being the LACSP witness and the other being the corresponding BS proximity proof, and it is not hard to see that such a PCP is \emph{not} zero knowledge, because both the LACSP witness and its proximity proof reveal hard-to-compute information.

\parhead{Step 1: sanitize the proximity proof}
We first address the problem that the BS proximity proof ``leaks'', by simply replacing it with our own perfect zero knowledge analogue. Namely, we replace it with our perfect zero knowledge $2$-round IOP of Proximity for Reed--Solomon codes, described in \secref{techniques:pzk-iopp-bsrs}. This modification ensures that there exists an algorithm that perfectly simulates the verifier's view by making as many queries to the LACSP witness as the \emph{total} number of verifier queries to \emph{either the LACSP witness or other oracles used to facilitate proximity testing}. At this point we have obtained a perfect zero knowledge $2$-round IOP of Proximity for $\NEXP$ (analogous to the notion of a zero knowledge PCP of Proximity \cite{IshaiW14}); this part is where, previously, \cite{BenSassonCGV16} were restricted to $\NP$ because their simulator only handled Reed--Solomon codes with \emph{polynomial} degree while our simulator is efficient even for such codes with \emph{exponential} degree. But we are not done yet: to obtain our goal, we also need to address the problem that the LACSP witness itself ``leaks'' when the verifier queries it, which we discuss next.

\parhead{Step 2: sanitize the witness}
Intuitively, we need to inject randomness in the reduction from $\NEXP$ to LACSP because the prover ultimately sends an LACSP witness to the verifier as an oracle, which the verifier can query. This is precisely what \cite{BenSassonCGV16}'s reduction from $\NEXP$ to \emph{randomizable} LACSPs enables, and we thus use their reduction to complete our proof. Informally, given an a-priori query bound $\QueryBound$ on the verifier's queries, the reduction outputs a witness $w$ with the property that one can efficiently sample \emph{another} witness $w'$ whose entries are $\QueryBound$-wise independent. We can then simply use the IOP of Proximity from the previous step on this randomized witness. Moreover, since the efficiency of the verifier is polylogarithmic in $\QueryBound$, we can set $\QueryBound$ to be super-polynomial (e.g., exponential) to preserve zero knowledge against any polynomial number of verifier queries.

\medskip
\noindent
The above discussion is only a sketch and we refer the reader to \secref{sec:pzk-ntime} for further details. One aspect that we did not discuss is that an LACSP witness actually consists of two sub-witnesses, where one is a ``local'' deterministic function of the other, which makes arguing zero knowledge somewhat more delicate.

\subsection{Roadmap}
\label{sec:roadmap}

After providing formal definitions in \secref{sec:basic-notations}, the rest of the paper is organized as summarized by the table below.

\begin{center}
\small
\begin{tabu}{lll|lll}
  \S\ref{sec:partial-sums}
& \textbf{Theorem \ref{thm:intro-psrm}/\ref{thm:psrm-succinct-constraint-detection}}
& detecting constraints for $\PSRM$
& \S\ref{sec:bsrs-succinct-constraint-detection}
& \textbf{Theorem \ref{thm:intro-bsrs}/\ref{thm:bsrs-succinct-constraint-detection}}
& detecting constraints for $\BSCode$
\\[1pt]
& \quad$\big\downarrow$ &&& \quad$\big\downarrow$ & \\[1pt]
  \S\ref{sec:zk-sumcheck}
& \textbf{Theorem \ref{thm:sumcheck-ipcp}}
& PZK IPCP for sumcheck
& \S\ref{sec:pzk-codes}
& \textbf{Theorem \ref{thm:pzk-iopp-for-codes}}
& PZK IOP of Proximity for RS codes
\\[1pt]
& \quad$\big\downarrow$ &&& \quad$\big\downarrow$ & \\[1pt]
  \S\ref{sec:zk-sharpp}
& \textbf{Theorem \ref{thm:intro-sharpp}/\ref{thm:zk-sharpp}}
& PZK IPCP for $\sharpP$
& \S\ref{sec:pzk-ntime}
& \textbf{Theorem \ref{thm:intro-ntime}/\ref{thm:ntime}}
& PZK IOP for $\NEXP$
\end{tabu}
\end{center}

\doclearpage
\section{Definitions}
\label{sec:definitions}

\subsection{Basic notations}
\label{sec:basic-notations}

\parhead{Functions, distributions, fields}
We use $f \colon \Domain \to \Range$ to denote a function with domain $\Domain$ and range $\Range$; given a subset $\SubDomain$ of $\Domain$, we use $\Restrict{f}{\SubDomain}$ to denote the restriction of $f$ to $\SubDomain$. Given a distribution $\Distribution$, we write $x \gets \Distribution$ to denote that $x$ is sampled according to $\Distribution$. We denote by $\Field$ a finite field and by $\Field_{\FieldSize}$ the field of size $\FieldSize$; we say $\Field$ is a \emph{binary field} if its characteristic is $2$. Arithmetic operations over $\Field_{q}$ cost $\polylog q$ but we shall consider these to have unit cost (and inspection shows that accounting for their actual polylogarithmic cost does not change any of the stated results).

\parhead{Distances}
A distance measure is a function $\Distance \colon \Alphabet^{n} \times \Alphabet^{n} \to [0,1]$ such that for all $x,y,z \in \Alphabet^{n}$:
\begin{inparaenum}[(i)]
  \item $\Distance(x,x) = 0$,
  \item $\Distance(x,y) = \Distance(y,x)$, and
  \item $\Distance(x,y) \leq \Distance(x,z) + \Distance(z,y)$.
\end{inparaenum}
We extend $\Distance$ to distances to sets: given $x \in \Alphabet^{n}$ and $S \subseteq \Alphabet^{n}$, we define $\Distance(x,S) \DefineEqual \min_{y \in S} \Distance(x,y)$ (or $1$ if $S$ is empty). We say that a string $x$ is $\epsilon$-close to another string $y$ if $\Distance(x,y) \leq \epsilon$, and $\epsilon$-far from $y$ if $\Distance(x,y) > \epsilon$; similar terminology applies for a string $x$ and a set $S$. Unless noted otherwise, we use the \emph{relative Hamming distance} over alphabet $\Alphabet$ (typically implicit): $\Distance(x,y) \DefineEqual \SetCardinality{\Set{i \,:\, x_{i} \neq y_{i}}}/n$.

\parhead{Languages and relations}
We denote by $\Relation$ a (binary ordered) relation consisting of pairs $(\Instance,\Witness)$, where $\Instance$ is the \emph{instance} and $\Witness$ is the \emph{witness}. We denote by $\GetLanguage{\Relation}$ the language corresponding to $\Relation$, and by $\Witnesses{\Relation}{\Instance}$ the set of witnesses in $\Relation$ for $\Instance$ (if $\Instance \not\in \GetLanguage{\Relation}$ then $\Witnesses{\Relation}{\Instance} \DefineEqual \emptyset$). As always, we assume that $\BitSize{\Witness}$ is bounded by some computable function of $\InstanceSize \DefineEqual \BitSize{\Instance}$; in fact, we are mainly interested in relations arising from nondeterministic languages: $\Relation \in \NTIME(\DeciderTime)$ if there exists a $\DeciderTime(\InstanceSize)$-time machine $\DeciderMachine$ such that $\DeciderMachine(\Instance,\Witness)$ outputs $1$ if and only if $(\Instance,\Witness) \in \Relation$. Throughout, we assume that $\DeciderTime(\InstanceSize) \geq \InstanceSize$. We say that $\Relation$ has relative distance $\ProximityParameter_{\Relation} \colon \Naturals \to [0,1]$ if $\ProximityParameter_{\Relation}(\InstanceSize)$ is the minimum relative distance among witnesses in $\Witnesses{\Relation}{\Instance}$ for all $\Instance$ of size $\InstanceSize$. Throughout, we assume that $\ProximityParameter_{\Relation}$ is a constant. \later\ale{TODO: remove assumption}

\parhead{Polynomials}
We denote by $\PolynomialRing{\Field}{m}{\VariableX}$ the ring of polynomials in $m$ variables over $\Field$. Given a polynomial $P$ in $\PolynomialRing{\Field}{m}{\VariableX}$, $\IndividualDegree{P}[\VariableX_{i}]$ is the degree of $P$ in the variable $\VariableX_{i}$. We denote by $\PolynomialRingIndOne{\Field}{m}{\VariableX}{d}$ the subspace consisting of $P \in \PolynomialRing{\Field}{m}{\VariableX}$ with $\IndividualDegree{P}[\VariableX_{i}] < d$ for every $i \in \{1,\dots,m\}$.

\parhead{Random shifts}
We later use a folklore claim about distance preservation for random shifts in linear spaces.

\begin{claim}
\label{claim:distance-preservation}
Let $n$ be in $\Naturals$, $\Field$ a finite field, $\LinearSpace$ an $\Field$-linear space in $\Field^{n}$, and $x,y \in \Field^{n}$. If $x$ is $\epsilon$-far from $\LinearSpace$, then $\alpha x+y$ is $\epsilon/2$-far from $\LinearSpace$, with probability $1-\SetCardinality{\Field}^{-1}$ over a random $\alpha \in \Field$. (Distances are relative Hamming distances.)
\end{claim}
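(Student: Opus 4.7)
\medskip
\noindent\textbf{Proof plan.} The plan is to show something slightly stronger than the stated probability bound: there is \emph{at most one} $\alpha \in \Field$ for which $\alpha x + y$ fails to be $\epsilon/2$-far from $\LinearSpace$. Once that is established, a uniformly random $\alpha$ hits the single bad element with probability at most $\SetCardinality{\Field}^{-1}$, giving the claim.

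I would argue by contradiction. Suppose there exist two distinct $\alpha_{1}, \alpha_{2} \in \Field$ and two codewords $s_{1}, s_{2} \in \LinearSpace$ such that $\Distance(\alpha_{1} x + y, s_{1}) \leq \epsilon/2$ and $\Distance(\alpha_{2} x + y, s_{2}) \leq \epsilon/2$, where $\Distance$ is the relative Hamming distance. The Hamming distance is translation invariant, so the two error vectors $e_{i} \DefineEqual (\alpha_{i} x + y) - s_{i}$ each have relative Hamming weight at most $\epsilon/2$. Then $e_{1} - e_{2} = (\alpha_{1} - \alpha_{2}) x - (s_{1} - s_{2})$ has relative Hamming weight at most $\epsilon$ by the triangle inequality (applied to the all-zero vector), so $(\alpha_{1} - \alpha_{2}) x$ is at relative Hamming distance at most $\epsilon$ from $s_{1} - s_{2} \in \LinearSpace$.

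Now I use the $\Field$-linearity of $\LinearSpace$: since $\alpha_{1} \neq \alpha_{2}$, the scalar $\beta \DefineEqual (\alpha_{1} - \alpha_{2})^{-1}$ exists, and scaling by $\beta$ preserves the Hamming distance pointwise (a coordinate is zero iff its $\beta$-multiple is zero) while mapping $\LinearSpace$ to itself. Hence $x = \beta \cdot (\alpha_{1}-\alpha_{2}) x$ is within relative Hamming distance $\epsilon$ of $\beta(s_{1} - s_{2}) \in \LinearSpace$, contradicting the hypothesis that $x$ is $\epsilon$-far (i.e., at distance strictly greater than $\epsilon$) from $\LinearSpace$.

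Therefore the set of ``bad'' scalars $\{\alpha \in \Field : \Distance(\alpha x + y, \LinearSpace) \leq \epsilon/2\}$ has size at most one, and a uniform $\alpha \in \Field$ avoids it with probability at least $1 - \SetCardinality{\Field}^{-1}$. I do not anticipate any serious obstacle here: the only subtlety is the careful use of strict versus non-strict inequality in the definition of $\epsilon$-far, which is exactly what makes the contradiction go through (a total error budget of $\epsilon/2 + \epsilon/2 = \epsilon$ is inconsistent with being strictly farther than $\epsilon$).
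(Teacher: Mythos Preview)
Your argument is correct and is the standard proof of this folklore fact. The paper does not actually include a proof of this claim (it is stated as a ``folklore claim'' in the preliminaries without proof), but when the paper invokes it later (in the soundness analysis of \clmref{clm:pzk-iopp-1}) it uses precisely the stronger form you establish, namely that \emph{at most one} $\alpha$ can make $\alpha x + y$ close to $\LinearSpace$.
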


\subsection{Single-prover proof systems}
\label{sec:models-of-proof-systems}

We use two types of proof systems that combine aspects of interactive proofs \cite{Babai85,GoldwasserMR89} and probabilistically checkable proofs \cite{BFLS91,AroraS98,AroraLMSS98}: \textbf{interactive PCPs} (IPCPs) \cite{KalaiR08} and \textbf{interactive oracle proofs} (IOPs) \cite{BenSassonCS16,ReingoldRR16}. We first describe IPCPs (\secref{sec:ipcp}) and then IOPs (\secref{sec:interactive-oracle-proofs}), which generalize the former.

\subsubsection{Interactive probabilistically checkable proofs}
\label{sec:ipcp}

An \textbf{IPCP} \cite{KalaiR08} is a PCP followed by an IP. Namely, the prover $\Prover$ and verifier $\Verifier$ interact as follows: $\Prover$ sends to $\Verifier$ a probabilistically checkable proof $\Proof$; afterwards, $\Prover$ and $\Verifier^{\Proof}$ engage in an interactive proof. Thus, $\Verifier$ may read a few bits of $\Proof$ but must read subsequent messages from $\Prover$ in full. An \emph{IPCP system} for a relation $\Relation$ is thus a pair $\pair{\Prover}{\Verifier}$, where $\Prover,\Verifier$ are probabilistic interactive algorithms working as described, that satisfies naturally-defined notions of perfect completeness and soundness with a given error $\SoundnessError(\cdot)$; see \cite{KalaiR08} for details.

We say that an IPCP has $\NumRounds$ rounds if this ``PCP round'' is followed by a $(\NumRounds-1)$-round interactive proof. (That is, we count the PCP round towards round complexity, unlike \cite{KalaiR08}.) Beyond round complexity, we also measure how many bits the prover sends and how many the verifier reads: the \emph{proof length} $\ProofLength$ is the length of $\Proof$ in bits plus the number of bits in all subsequent prover messages; the \emph{query complexity} $\QueryComplexity$ is the number of bits of $\Proof$ read by the verifier plus the number of bits in all subsequent prover messages (since the verifier must read all of those bits).

In this work, we do not count the number of bits in the verifier messages, nor the number of random bits used by the verifier; both are bounded from above by the verifier's running time, which we do consider. Overall, we say that a relation $\Relation$ belongs to the complexity class $\IPCP[\NumRounds,\ProofLength,\QueryComplexity,\SoundnessError,\ProverTime,\VerifierTime]$ if there is an IOP system for $\Relation$ in which:
\begin{inparaenum}[(1)]
  \item the number of rounds is at most $\NumRounds(\InstanceSize)$;
  \item the proof length is at most $\ProofLength(\InstanceSize)$;
  \item the query complexity is at most $\QueryComplexity(\InstanceSize)$;
  \item the soundness error is $\SoundnessError(\InstanceSize)$;
  \item the prover algorithm runs in time $\ProverTime(\InstanceSize)$;
  \item the verifier algorithm runs in time $\VerifierTime(\InstanceSize)$.
\end{inparaenum}

\subsubsection{Interactive oracle proofs}
\label{sec:interactive-oracle-proofs}

An \textbf{IOP} \cite{BenSassonCS16,ReingoldRR16} is a ``multi-round PCP''. That is, an IOP generalizes an interactive proof as follows: whenever the prover sends to the verifier a message, the verifier does not have to read the message in full but may probabilistically query it. In more detail, a $\NumRounds$-round IOP comprises $\NumRounds$ rounds of interaction. In the $i$-th round of interaction: the verifier sends a message $\Message_{i}$ to the prover; then the prover replies with a message $\OMessage_{i}$ to the verifier, which the verifier can query in this and later rounds (via oracle queries). After the $\NumRounds$ rounds of interaction, the verifier either accepts or rejects.

An \emph{IOP system} for a relation $\Relation$ with soundness error $\SoundnessError$ is thus a pair $\pair{\IOPProver}{\IOPVerifier}$, where $\IOPProver,\IOPVerifier$ are probabilistic interactive algorithms working as described, that satisfies the following properties. (See \cite{BenSassonCS16} for more details.)
\begin{description}

  \item[\textnormal{\emph{Completeness:}}]
For every instance-witness pair $(\Instance,\Witness)$ in the relation $\Relation$, $\Pr[\Interact{\IOPProver(\Instance,\Witness)}{\IOPVerifier(\Instance)}=1]=1$.

  \item[\textnormal{\emph{Soundness:}}]
For every instance $\Instance$ not in $\Relation$'s language and unbounded malicious prover $\Malicious{\IOPProver}$, $\Pr[\Interact{\Malicious{\IOPProver}}{\IOPVerifier(\Instance)}=1] \leq \SoundnessError(\InstanceSize)$.

\end{description}

Beyond round complexity, we also measure how many bits the prover sends and how many the verifier reads: the \emph{proof length} $\ProofLength$ is the total number of bits in all of the prover's messages, and the \emph{query complexity} $\QueryComplexity$ is the total number of bits read by the verifier across all of the prover's messages. Considering all of these parameters, we say that a relation $\Relation$ belongs to the complexity class $\IOP[\NumRounds,\ProofLength,\QueryComplexity,\SoundnessError,\ProverTime,\VerifierTime]$ if there is an IOP system for $\Relation$ in which:
\begin{inparaenum}[(1)]
  \item the number of rounds is at most $\NumRounds(\InstanceSize)$;
  \item the proof length is at most $\ProofLength(\InstanceSize)$;
  \item the query complexity is at most $\QueryComplexity(\InstanceSize)$;
  \item the soundness error is $\SoundnessError(\InstanceSize)$;
  \item the prover algorithm runs in time $\ProverTime(\InstanceSize)$;
  \item the verifier algorithm runs in time $\VerifierTime(\InstanceSize)$.
\end{inparaenum}

\parhead{IOP vs.\ IPCP}
An IPCP (see \secref{sec:ipcp}) is a special case of an IOP because an IPCP verifier must read in full all of the prover's messages except the first one (while an IOP verifier may query any part of any prover message). The above complexity measures are consistent with those defined for IPCPs.

\subsubsection{Restrictions and extensions}
\label{sec:restrictions-and-extensions}

The definitions below are about IOPs, but IPCPs inherit all of these definitions because they are a special case of IOP.

\parhead{Adaptivity of queries}
An IOP system is \emph{non-adaptive} if the verifier queries are non-adaptive, i.e., the queried locations depend only on the verifier's inputs.

\parhead{Public coins}
An IOP system is \emph{public coin} if each verifier message $\Message_{i}$ is chosen uniformly and independently at random, and all of the verifier queries happen after receiving the last prover message.

\parhead{Proximity}
An \emph{IOP of proximity} extends the definition of an IOP in the same way that a PCP of proximity extends that of a PCP \cite{DinurR04,BenSassonGHSV06}. An \emph{IOPP system} for a relation $\Relation$ with soundness error $\SoundnessError$ and proximity parameter $\ProximityParameter$ is a pair $(\IOPPProver,\IOPPVerifier)$ that satisfies the following properties.
\begin{description}

  \item[\textnormal{\emph{Completeness:}}]
For every instance-witness pair $(\Instance,\Witness)$ in the relation $\Relation$, $\Pr[\Interact{\IOPPProver(\Instance,\Witness)}{\IOPPVerifier^{\Witness}(\Instance)}=1]=1$.

  \item[\textnormal{\emph{Soundness:}}]
For every instance-witness pair $(\Instance,\Witness)$ with $\DistanceMeasure(\Witness,\Witnesses{\Relation}{\Instance}) \geq \ProximityParameter(\InstanceSize)$ and unbounded malicious prover $\Malicious{\IOPPProver}$, $\Pr[\Interact{\Malicious{\IOPPProver}}{\IOPPVerifier^{\Witness}(\Instance)}=1] \leq \SoundnessError(\InstanceSize)$.

\end{description}
Similarly to above, a relation $\Relation$ belongs to the complexity class $\IOPP[\NumRounds,\ProofLength,\QueryComplexity,\SoundnessError,\ProximityParameter,\ProverTime,\VerifierTime]$ if there is an IOPP system for $\Relation$ with the corresponding parameters. \ale{LATER: (As in Section X, we always assume that $\ProximityParameter$ is less than $\frac{1}{2}\ProximityParameter_{\Relation}$.)}

\parhead{Promise relations}
A \emph{promise relation} is a relation-language pair $(\Relation^{\yes}, \Language^{\no})$ with $\GetLanguage{\Relation^{\yes}} \cap \Language^{\no} = \emptyset$. An IOP for a promise relation is the same as an IOP for the (standard) relation $\Relation^{\yes}$, except that soundness need only hold for $\Instance \in \Language^{\no}$. An IOPP for a promise relation is the same as an IOPP for the (standard) relation $\Relation^{\yes}$, except that soundness need only hold for $\Instance \in \GetLanguage{\Relation^{\yes}} \cup \Language^{\no}$.

\subsubsection{Prior constructions}
\label{sec:prior-constructions}

In this paper we give new IPCP and IOP constructions that achieve perfect zero knowledge for various settings. Below we summarize known constructions in these two models.

\parhead{IPCPs}
Prior work obtains IPCPs with proof length that depends on the witness size rather than computation size \cite{KalaiR08,GKR08}, and IPCPs with statistical zero knowledge \cite{GoyalIMS10} (see \secref{sec:zero-knowledge} for more details).

\parhead{IOPs}
Prior work obtains IOPs with perfect zero knowledge for $\NP$ \cite{BenSassonCGV16}, IOPs with small proof length and query complexity \cite{BenSassonCGRS16}, and an amortization theorem for ``unambiguous'' IOPs \cite{ReingoldRR16}. Also, \cite{BenSassonCS16} show how to compile public-coin IOPs into non-interactive proofs in the random oracle model.

\subsection{Zero knowledge}
\label{sec:zero-knowledge}

We define the notion of zero knowledge for IOPs and IPCPs achieved by our constructions: \emph{perfect zero knowledge via straightline simulators}. This notion is quite strong not only because it unconditionally guarantees perfect simulation of the verifier's view but also because straightline simulation implies desirable properties such as composability. We now provide some context and then give formal definitions.

At a high level, zero knowledge requires that the verifier's view can be efficiently simulated without the prover. Converting the informal statement into a mathematical one involves many choices, including choosing which verifier class to consider (e.g., the honest verifier? all polynomial-time verifiers?), the quality of the simulation (e.g., is it identically distributed to the view? statistically close to it? computationally close to it?), the simulator's dependence on the verifier (e.g., is it non-uniform? or is the simulator universal?), and others. The definitions below consider two variants: perfect simulation via universal simulators against either unbounded-query or bounded-query verifiers.

Moreover, in the case of universal simulators, one distinguishes between a non-blackbox use of the verifier, which means that the simulator takes the verifier's code as input, and a blackbox use of it, which means that the simulator only accesses the verifier via a restricted interface; we consider this latter case. Different models of proof systems call for different interfaces, which grant carefully-chosen ``extra powers'' to the simulator (in comparison to the prover) so to ensure that efficiency of the simulation does not imply the ability to efficiently decide the language. For example: in ZK IPs, the simulator may rewind the verifier; in ZK PCPs, the simulator may adaptively answer oracle queries. In ZK IPCPs and ZK IOPs (our setting), the natural definition would allow a blackbox simulator to rewind the verifier \emph{and also} to adaptively answer oracle queries. The definitions below, however, consider only simulators that are straightline \cite{FeigeS89,DworkS98}, that is they do not rewind the verifier, because our constructions achieve this stronger notion.

Recall that protocols that are perfectly secure with a straightline simulator (in the standalone model) remain secure under concurrent general composition \cite{KushilevitzLR10}. In particular, the above notion of zero knowledge is preserved under such composition and, for example, one can use parallel repetition to reduce soundness error without impacting zero knowledge. Indeed, straightline simulation facilitates many applications, e.g., verifiable secret sharing in \cite{IshaiW14}, and unconditional general universally-composable secure computation in \cite{GoyalIMS10}.

We are now ready to define the notion of perfect zero knowledge via straightline simulators. We first discuss the notion for IOPs, then for IOPs of proximity, and finally for IPCPs.

\subsubsection{ZK for IOPs}
\label{sec:zk-iop}

Below is the definition of perfect zero knowledge (via straightline simulators) for IOPs.

\begin{definition}
\label{def:iop-view}
Let $A,B$ be algorithms and $x,y$ strings. We denote by $\IOPView{B(y)}{A(x)}$ the \defemph{view} of $A(x)$ in an interactive oracle protocol with $B(y)$, i.e., the random variable $(x,r,a_{1},\dots,a_{n})$ where $x$ is $A$'s input, $r$ is $A$'s randomness, and $a_{1},\dots,a_{n}$ are the answers to $A$'s queries into $B$'s messages.


\end{definition}

Straightline simulators in the context of IPs were used in \cite{FeigeS89}, and later defined in \cite{DworkS98}. The definition below considers this notion in the context of IOPs, where the simulator also has to answer oracle queries by the verifier. Note that since we consider the notion of perfect zero knowledge, the definition of straightline simulation needs to allow the efficient simulator to work even with inefficient verifiers \cite{GoyalIMS10}.

\begin{definition}
\label{def:iop-access}
We say that an algorithm $B$ has \defemph{straightline access} to another algorithm $A$ if $B$ interacts with $A$, without rewinding, by exchanging messages with $A$ and also answering any oracle queries along the way. We denote by $B^{A}$ the concatenation of $A$'s random tape and $B$'s output. (Since $A$'s random tape could be super-polynomially large, $B$ cannot sample it for $A$ and then output it; instead, we restrict $B$ to not see it, and we prepend it to $B$'s output.)
\end{definition}

\begin{definition}
\label{def:zk-iop}
An IOP system $\pair{\IOPProver}{\IOPVerifier}$ for a relation $\Relation$ is perfect zero knowledge (via straightline simulators) \sunderline{against unbounded queries} (resp., \sunderline{against query bound $\QueryBound$}) if there exists a simulator algorithm $\IOPSimulator$ such that for every algorithm (resp., $\QueryBound$-query algorithm) $\Malicious{\IOPVerifier}$ and instance-witness pair $(\Instance,\Witness) \in \Relation$, $\IOPSimulator^{\Malicious{\IOPVerifier}} (\Instance)$ and $\IOPView{\IOPProver(\Instance,\Witness)}{\Malicious{\IOPVerifier}(\Instance)}$ are identically distributed. Moreover, $\Simulator$ must run in time $\poly(\BitSize{\Instance} + \QueryComplexity_{\Malicious{\IOPVerifier}}(\BitSize{\Instance}))$, where $\QueryComplexity_{\Malicious{\IOPVerifier}}(\cdot)$ is $\Malicious{\IOPVerifier}$'s query complexity.
\end{definition}

We say that a relation $\Relation$ belongs to the complexity class $\PZKIOP[\NumRounds,\ProofLength,\QueryComplexity,\SoundnessError,\ProverTime,\VerifierTime,\QueryBound]$ if there is an IOP system for $\Relation$, with the corresponding parameters, that is perfect zero knowledge with query bound $\QueryBound$; also, it belongs to the complexity class $\PZKIOP[\NumRounds,\ProofLength,\QueryComplexity,\SoundnessError,\ProverTime,\VerifierTime,\AnyBound]$ if the same is true with unbounded queries.

\subsubsection{ZK for IOPs of proximity}
\label{sec:zk-iopp}

Below is the definition of perfect zero knowledge (via straightline simulators) for IOPs of proximity. It is a straightforward extension of the corresponding notion for PCPs of proximity, introduced in \cite{IshaiW14}.

\begin{definition}
\label{def:zk-iopp}
An IOPP system $\pair{\IOPPProver}{\IOPPVerifier}$ for a relation $\Relation$ is perfect zero knowledge (via straightline simulators) \sunderline{against unbounded queries} (resp., \sunderline{against query bound $\QueryBound$}) if there exists a simulator algorithm $\IOPSimulator$ such that for every algorithm (resp., $\QueryBound$-query algorithm) $\Malicious{\IOPVerifier}$ and instance-witness pair $(\Instance,\Witness) \in \Relation$, the following two random variables are identically distributed:
\begin{equation*}
\Big( \IOPPSimulator^{\Malicious{\IOPPVerifier},\Witness} (\Instance) \;,\; q_{\IOPPSimulator} \Big)
\quad\text{and}\quad
\Big( \IOPView{\IOPPProver(\Instance,\Witness)}{\Malicious{\IOPPVerifier}^{\Witness}(\Instance)} \;,\; q_{\Malicious{\IOPPVerifier}} \Big)
\enspace,
\end{equation*}
where $q_{\Simulator}$ is the number of queries to $\Witness$ made by $\Simulator$, and $q_{\Malicious{\Verifier}}$ is the number of queries to $\Witness$ or to prover messages made by $\Malicious{\Verifier}$. Moreover, $\Simulator$ must run in time $\poly(\BitSize{\Instance} + \QueryComplexity_{\Malicious{\IOPVerifier}}(\BitSize{\Instance}))$, where $\QueryComplexity_{\Malicious{\IOPVerifier}}(\cdot)$ is $\Malicious{\IOPVerifier}$'s query complexity.
\end{definition}

We say that a relation $\Relation$ belongs to the complexity class $\PZKIOPP[\NumRounds,\ProofLength,\QueryComplexity,\SoundnessError,\ProximityParameter,\ProverTime,\VerifierTime,\QueryBound]$ if there is an IOPP system for $\Relation$, with the corresponding parameters, that is perfect zero knowledge with query bound $\QueryBound$; also, it belongs to the complexity class $\PZKIOPP[\NumRounds,\ProofLength,\QueryComplexity,\SoundnessError,\ProximityParameter,\ProverTime,\VerifierTime,\AnyBound]$ if the same is true with unbounded queries.

\begin{remark}
Analogously to \cite{IshaiW14}, our definition of zero knowledge for IOPs of proximity requires that the number of queries to $\Witness$ by $\Simulator$ equals the total number of queries (to $\Witness$ or prover messages) by $\Malicious{\Verifier}$. Stronger notions are possible: ``the number of queries to $\Witness$ by $\Simulator$ equals the number of queries to $\Witness$ by $\Malicious{\Verifier}$''; or, even more, ``$\Simulator$ and $\Malicious{\Verifier}$ read the same locations of $\Witness$''. The definition above is sufficient for the applications of IOPs of proximity that we consider.
\end{remark}

\subsubsection{ZK for IPCPs}
\label{sec:zk-ipcp}

The definition of perfect zero knowledge (via straightline simulators) for IPCPs follows directly from \defref{def:zk-iop} in \secref{sec:zk-iop} because IPCPs are a special case of IOPs. Ditto for IPCPs of proximity, whose perfect zero knowledge definition follows directly from \defref{def:zk-iopp} in \secref{sec:zk-iopp}. (For comparison, \cite{GoyalIMS10} define statistical zero knowledge IPCPs, also with straightline simulators.)

\subsection{Codes}
\label{sec:codes}

An error correcting code $\Code$ is a set of functions $\Codeword \colon \EvaluationDomain \to \Alphabet$, where $\EvaluationDomain,\Alphabet$ are finite sets known as the domain and alphabet; we write $\Code \subseteq \Alphabet^{\EvaluationDomain}$. The message length of $\Code$ is $\CodeMessageLength \DefineEqual \log_{\SetCardinality{\Alphabet}} \SetCardinality{\Code}$, its block length is $\CodeBlockLength \DefineEqual \SetCardinality{\EvaluationDomain}$, its rate is $\CodeRate \DefineEqual\CodeMessageLength/\CodeBlockLength$, its (minimum) distance is $\CodeAbsoluteDistance \DefineEqual \min \{ \AbsoluteHammingDistance(\Codeword,\OtherCodeword) \,:\, \Codeword,\OtherCodeword \in \Code \,,\, \Codeword \neq \OtherCodeword \}$ when $\AbsoluteHammingDistance$ is the (absolute) Hamming distance, and its (minimum) relative distance is $\CodeRelativeDistance \DefineEqual \CodeAbsoluteDistance/\CodeBlockLength$. At times we write $\CodeMessageLength(\Code),\CodeBlockLength(\Code),\CodeRate(\Code),\CodeAbsoluteDistance(\Code),\CodeRelativeDistance(\Code)$ to make the code under consideration explicit. All the codes we consider are linear codes, discussed next.

\parhead{Linearity}
A code $\Code$ is \emph{linear} if $\Alphabet$ is a finite field and $\Code$ is a $\Alphabet$-linear space in $\Alphabet^{\EvaluationDomain}$. The dual code of $\Code$ is the set $\Dual{\Code}$ of functions $\OtherCodeword \colon \EvaluationDomain \to \Alphabet$ such that, for all $\Codeword \colon \EvaluationDomain \to \Alphabet$, $\InnerProduct{\OtherCodeword}{\Codeword} \DefineEqual \sum_{i \in \EvaluationDomain} \OtherCodeword(i)\Codeword(i)=0$. We denote by $\Dimension{\Code}$
the dimension of $\Code$; it holds that $\Dimension{\Code} + \Dimension{\Dual{\Code}} = \CodeBlockLength$ and $\Dimension{\Code}=\CodeMessageLength$ (dimension equals message length).

\parhead{Code families}
A code family $\Class{\Code} = \{\Code_{\CodeIdx}\}_{\CodeIdx \in \Strings}$ has domain $\EvaluationDomain(\cdot)$ and alphabet $\Field(\cdot)$ if each code $\Code_{\CodeIdx}$ has domain $\EvaluationDomain(\CodeIdx)$ and alphabet $\Field(\CodeIdx)$. Similarly, $\Class{\Code}$ has message length $\CodeMessageLength(\cdot)$, block length $\CodeBlockLength(\cdot)$, rate $\CodeRate(\cdot)$, distance $\CodeAbsoluteDistance(\cdot)$, and relative distance $\CodeRelativeDistance(\cdot)$ if each code $\Code_{\CodeIdx}$ has message length $\CodeMessageLength(\CodeIdx)$, block length $\CodeBlockLength(\CodeIdx)$, rate $\CodeRate(\CodeIdx)$, distance $\CodeAbsoluteDistance(\CodeIdx)$, and relative distance $\CodeRelativeDistance(\CodeIdx)$. We also define $\CodeRate(\Class{\Code}) \DefineEqual \inf_{\CodeIdx \in \Naturals} \CodeRate(\CodeIdx)$ and $\CodeRelativeDistance(\Class{\Code}) \DefineEqual \inf_{\CodeIdx \in \Naturals} \CodeRelativeDistance(\CodeIdx)$.

\parhead{Reed--Solomon codes}
The Reed--Solomon (RS) code is the code consisting of evaluations of \emph{univariate} low-degree polynomials: given a field $\Field$, subset $\RSDomain$ of $\Field$, and positive integer $\RSDegree$ with $\RSDegree \leq \SetCardinality{\RSDomain}$, we denote by $\RSCode{\Field}{\RSDomain}{\RSDegree}$ the linear code consisting of evaluations $\Codeword \colon \RSDomain \to \Field$ over $\RSDomain$ of polynomials in $\Field^{<\RSDegree}[\VariableX]$. The code's message length is $\CodeMessageLength = \RSDegree$, block length is $\CodeBlockLength = \SetCardinality{\RSDomain}$, rate is $\CodeRate = \frac{\RSDegree}{\SetCardinality{\RSDomain}}$, and relative distance is $\CodeRelativeDistance = 1-\frac{\RSDegree-1}{\SetCardinality{\RSDomain}}$.

\parhead{Reed--Muller codes}
The Reed--Muller (RM) code is the code consisting of evaluations of \emph{multivariate} low-degree polynomials: given a field $\Field$, subset $\RMDomain$ of $\Field$, and positive integers $\RMVars,\RMDegree$ with $\RMDegree \leq \SetCardinality{\RMDomain}$, we denote by $\RMCode{\Field}{\RMDomain}{\RMVars}{\RMDegree}$ the linear code consisting of evaluations $\Codeword \colon \RMDomain^{\RMVars} \to \Field$ over $\RMDomain^{\RMVars}$ of polynomials in $\PolynomialRingIndOne{\Field}{\RMVars}{\VariableX}{\RMDegree}$ (i.e., we bound individual degrees rather than their sum). The code's message length is $\CodeMessageLength = \RMDegree^{\RMVars}$, block length is $\CodeBlockLength = \SetCardinality{\RMDomain}^{\RMVars}$, rate is $\CodeRate = (\frac{\RMDegree}{\SetCardinality{\RMDomain}})^{\RMVars}$, and relative distance is $\CodeRelativeDistance = (1-\frac{\RMDegree-1}{\SetCardinality{\RMDomain}})^{\RMVars}$.

\doclearpage
\section{Succinct constraint detection}
\label{sec:succinct-constraint-detection}

We introduce the notion of \emph{succinct constraint detection} for linear codes. This notion plays a crucial role in constructing perfect zero knowledge simulators for super-polynomial complexity classes (such as $\sharpP$ and $\NEXP$), but we believe that this naturally-defined notion is also of independent interest. Given a linear code $\Code\subseteq \Field^{\Domain}$ we refer to its dual code $\Dual{\Code} \subseteq \Field^{\Domain}$ as the \emph{constraint space} of $\Code$. The \emph{constraint detection problem} corresponding to a family of linear codes $\CodeClass=\Set{\Code_{\CodeIdx}}_{\CodeIdx}$ with domain $\Domain(\cdot)$ and alphabet $\Field(\cdot)$ is the following:
\begin{center}
Given an index $\CodeIdx$ and subset $\IdxSet \subseteq \Domain(\CodeIdx)$, output a basis for $\{ z \in \Domain(\CodeIdx)^{I} : \; \forall\, \Codeword \in \Code_{\CodeIdx}\,,\, {\sum_{i \in I} z(i) \Codeword(i) = 0}  \}$.\footnote{In fact, the following weaker definition suffices for the applications in our paper: \emph{given an index $\CodeIdx$ and subset $\IdxSet \subseteq \Domain(\CodeIdx)$, output $\OtherCodeword \in \Field(\CodeIdx)^{\IdxSet}$ such that $\sum_{i \in \IdxSet} \OtherCodeword(i) \Codeword(i) = 0$ for all $\Codeword \in \Code_{\CodeIdx}$, or $\IndepTag$ if no such $\OtherCodeword$ exists}. We achieve the stronger definition, which is also easier to work with.}
\end{center}
If $\SetCardinality{\Domain(\CodeIdx)}$ is polynomial in $\BitSize{\CodeIdx}$ and a generating matrix for $\Code_{\CodeIdx}$ can be found in polynomial time, this problem can be solved in $\poly(\BitSize{\CodeIdx} + \SetCardinality{\IdxSet})$ time via Gaussian elimination; such an approach was implicitly taken by \cite{BenSassonCGV16} to construct a perfect zero knowledge simulator for an IOP for $\NP$. However, in our setting, $\SetCardinality{\Domain(\CodeIdx)}$ is \emph{exponential} in $\BitSize{\CodeIdx}$ and  $\SetCardinality{\IdxSet}$, and the aforementioned generic solution requires exponential time. With this in mind, we say $\CodeClass$ has \emph{succinct constraint detection} if there exists an algorithm that solves the constraint detection problem in $\poly(\BitSize{\CodeIdx} + \SetCardinality{\IdxSet})$ time when $\SetCardinality{\Domain(\CodeIdx)}$ is \emph{exponential} in $\BitSize{\CodeIdx}$. After defining succinct constraint detection in \secref{sec:definition-of-constraint-detection}, we proceed as follows.
\begin{itemize}

  \item In \secref{sec:partial-sums}, we construct a succinct constraint detector for the family of linear codes comprised of evaluations of partial sums of low-degree polynomials. The construction of the detector exploits derandomization techniques from algebraic complexity theory. Later on (in \secref{sec:zk-sumcheck}), we leverage this result to construct a perfect zero knowledge simulator for an IPCP for $\sharpP$.

  \item In \secref{sec:bsrs-succinct-constraint-detection}, we construct a succinct constraint detector for the family of evaluations of univariate polynomials concatenated with corresponding BS proximity proofs \cite{BS08}. The construction of the detector exploits the recursive structure of these proximity proofs. Later on (in \secref{sec:pzk-ntime}), we leverage this result to construct a perfect zero knowledge simulator for an IOP for $\NEXP$; this simulator can be interpreted as an analogue of \cite{BenSassonCGV16}'s simulator that runs \emph{exponentially faster} and thus enables us to ``scale up'' from $\NP$ to $\NEXP$.

\end{itemize}
Throughout this section we assume familiarity with terminology and notation about codes, introduced in \secref{sec:codes}. We assume for simplicity that $\BitSize{\CodeIdx}$, the number of bits used to represent $\CodeIdx$, is at least $\log \EvaluationDomain(\CodeIdx) + \log \Field(\CodeIdx)$; if this does not hold, then one can replace $\BitSize{\CodeIdx}$ with $\BitSize{\CodeIdx} + \log \EvaluationDomain(\CodeIdx) + \log \Field(\CodeIdx)$ throughout the section.

\begin{remark}[sparse representation]
In this section we make statements about vectors $v$ in $\Field^{\Domain}$ where the cardinality of the domain $\Domain$ may be super-polynomial. When such statements are computational in nature, we assume that $v$ is not represented as a list of $\SetCardinality{\Domain}$ field elements (which requires $\Omega(\SetCardinality{\Domain} \log \SetCardinality{\Field})$ bits) but, instead, assume that $v$ is represented as a list of the elements in $\Support{v}$ (and each element comes with its index in $\Domain$); this \emph{sparse} representation only requires $\Omega(\SetCardinality{\Support{v}} \cdot (\log \SetCardinality{\Domain} + \log \SetCardinality{\Field}))$ bits.
\end{remark}

\subsection{Definition of succinct constraint detection}
\label{sec:definition-of-constraint-detection}

Formally define the notion of a \emph{constraint detector}, and the notion of \emph{succinct constraint detection}.

\begin{definition}
\label{def:constraint-detector}
Let $\CodeClass = \{ \Code_{\CodeIdx} \}_{\CodeIdx}$ be a linear code family with domain $\EvaluationDomain(\cdot)$ and alphabet $\Field(\cdot)$. A \defemph{constraint detector} for $\CodeClass$ is an algorithm that, on input an index $\CodeIdx$ and subset $\IdxSet \subseteq \Domain(\CodeIdx)$, outputs a basis for the space
\begin{equation*}
\Big\{ \OtherCodeword \in \Domain(\CodeIdx)^{\IdxSet} : \, \forall\, \Codeword \in \Code_{\CodeIdx}\,, \sum_{i \in \IdxSet} \OtherCodeword(i) \Codeword(i) \Big\}
\enspace.
\end{equation*}
We say that $\CodeClass$ has a \defemph{$\CodeTime(\cdot, \cdot)$-time constraint detection} if there exists a detector for $\CodeClass$ running in time $\CodeTime(\CodeIdx, \ListSize)$; we also say that $\CodeClass$ has \defemph{succinct constraint detection} if it has $\poly(\BitSize{\CodeIdx} + \ListSize)$-time constraint detection.
\end{definition}

A constraint detector induces a corresponding probabilistic algorithm for `simulating' answers to queries to a random codeword; this is captured by the following lemma, the proof of which is in \appref{sec:proof-of-efficient-codeword-simulator}. We shall use such probabilistic algorithms in the construction of perfect zero knowledge simulators (see \secref{sec:zk-sumcheck} and \secref{sec:pzk-ntime}).

\begin{lemma}
\label{lem:efficient-codeword-simulator}
Let $\CodeClass = \{ \Code_{\CodeIdx} \}_{\CodeIdx}$ be a linear code family with domain $\EvaluationDomain(\cdot)$ and alphabet $\Field(\cdot)$ that has $\CodeTime(\cdot, \cdot)$-time constraint detection. Then there exists a probabilistic algorithm $\CodeSimAlgorithm$ such that, for every index $\CodeIdx$, set of pairs $S = \{(\alpha_{1}, \beta_{1}), \dots, (\alpha_{\ListSize}, \beta_{\ListSize})\} \subseteq \EvaluationDomain(\CodeIdx) \times \Field(\CodeIdx)$, and pair $(\alpha,\beta) \in \EvaluationDomain(\CodeIdx) \times \Field(\CodeIdx)$,
\begin{equation*}
\Pr\Big[
\CodeSimAlgorithm(\CodeIdx, S, \alpha) = \beta
\Big]
=
\Pr_{\Codeword \gets \Code_{\CodeIdx}}
\left[
\Codeword(\alpha) = \beta
\pST
\begin{array}{c}
\Codeword(\alpha_{1}) = \beta_{1} \\
\vdots \\
\Codeword(\alpha_{\ListSize}) = \beta_{\ListSize}
\end{array}
\right]\enspace.
\end{equation*}
Moreover $\CodeSimAlgorithm$ runs in time $\CodeTime(\CodeIdx, \ListSize) + \poly(\log \SetCardinality{\Field(\CodeIdx)} + \ListSize)$.
\end{lemma}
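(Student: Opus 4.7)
The plan is to construct $\CodeSimAlgorithm$ by reducing the problem of sampling from the conditional distribution to a single invocation of the constraint detector. Given input $(\CodeIdx, S, \alpha)$ with $S = \{(\alpha_1,\beta_1),\ldots,(\alpha_\ListSize,\beta_\ListSize)\}$, $\CodeSimAlgorithm$ sets $\IdxSet \DefineEqual \{\alpha_1,\ldots,\alpha_\ListSize,\alpha\}$ and invokes the detector on $(\CodeIdx,\IdxSet)$ to obtain a basis $B$ for the space $V \subseteq \Field(\CodeIdx)^{\IdxSet}$ of dual vectors with support in $\IdxSet$. It then inspects $B$: if some basis vector $z \in B$ has $z(\alpha) \neq 0$, output the deterministic value $-z(\alpha)^{-1} \sum_{j=1}^{\ListSize} z(\alpha_j) \beta_j$; otherwise (including when $B$ is empty), output a uniformly random element of $\Field(\CodeIdx)$.

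For correctness, the key observation is purely algebraic. By linearity, the restriction map $\Codeword \mapsto \Restrict{\Codeword}{\IdxSet}$ pushes the uniform distribution on $\Code_{\CodeIdx}$ to the uniform distribution on $\Restrict{\Code_{\CodeIdx}}{\IdxSet}$, and conditioning on $\Codeword(\alpha_j) = \beta_j$ then yields a uniform distribution on an affine subset $U \subseteq \Restrict{\Code_{\CodeIdx}}{\IdxSet}$. The $\alpha$-coordinate projection of $U$ is an affine subspace of the one-dimensional vector space $\Field(\CodeIdx)$ over itself, and hence is either a single point or all of $\Field(\CodeIdx)$. By linear duality, $V$ is exactly the annihilator of $\Restrict{\Code_{\CodeIdx}}{\IdxSet}$ inside $\Field(\CodeIdx)^{\IdxSet}$, so the projection is a singleton precisely when some element of $V$ has nonzero $\alpha$-entry --- equivalently, when some basis vector in $B$ does. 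In that case, picking any such $z$ and using the constraint $\sum_{i \in \IdxSet} z(i) v(i) = 0$, which is satisfied by every $v \in U$, yields the forced value $v(\alpha) = -z(\alpha)^{-1}\sum_j z(\alpha_j)\beta_j$; the right-hand side is independent of the choice of $z$ because two admissible choices differ by an element of $V$ supported in $\{\alpha_1,\ldots,\alpha_\ListSize\}$, which necessarily annihilates any consistent conditioning.

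For running time, the detector call costs at most $\CodeTime(\CodeIdx, \ListSize+1) \leq \CodeTime(\CodeIdx, \ListSize) + O(1)$, and scanning $B$ together with evaluating the output takes an additional $\poly(\log \SetCardinality{\Field(\CodeIdx)} + \ListSize)$ field operations, since $\dim V \leq \ListSize + 1$. The only subtle point is the degenerate case in which the conditioning event has probability zero (i.e., the $\beta_j$ are inconsistent with every codeword of $\Code_{\CodeIdx}$): here the right-hand side of the claimed identity is conventionally undefined, so the lemma is satisfied vacuously regardless of the algorithm's output, but one can also detect inconsistency explicitly from $B$ by checking whether some $z \in V$ with $z(\alpha) = 0$ violates $\sum_j z(\alpha_j)\beta_j = 0$. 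I do not anticipate a serious obstacle; the entire argument is essentially a direct algorithmic unpacking of the duality between a restricted linear code and its annihilator.
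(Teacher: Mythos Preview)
Your proposal is correct and takes essentially the same approach as the paper: the algorithm is identical (run the detector on $\{\alpha_1,\ldots,\alpha_\ListSize,\alpha\}$, output the forced value if some basis vector has nonzero $\alpha$-entry, else output uniform), and your duality-based correctness argument is a cleaner repackaging of the paper's explicit coordinate computation via the vectors $\phi_{\alpha_i}$. The only quibble is your inequality $\CodeTime(\CodeIdx,\ListSize+1)\le\CodeTime(\CodeIdx,\ListSize)+O(1)$, which need not hold for arbitrary $\CodeTime$; the paper silently absorbs the same off-by-one, so this is a shared imprecision rather than a flaw in your argument.
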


For the purposes of \emph{constructing} a constraint detector, the sufficient condition given in \lemref{lem:bsrs-dual-char} below is sometimes easier to work with. To state it we need to introduce two ways of restricting a code, and explain how these restrictions interact with taking duals; the interplay between these is delicate (see \remref{remark:expand-vs-dual}).

\begin{definition}
\label{def:puncturing}
Given a linear code $\Code \subseteq \Field^{\Domain}$ and a subset $\IdxSet \subseteq \Domain$, we denote by
\begin{inparaenum}[(i)]
  \item $\Puncture{\Code}{\IdxSet}$ the set consisting of the codewords $\Codeword \in \Code$ for which $\Support{\Codeword} \subseteq \IdxSet$, and
  \item $\Restrict{\Code}{\IdxSet}$ the restriction to $\IdxSet$ of codewords $\Codeword \in \Code$.
\end{inparaenum}
\end{definition}

Note that $\Puncture{\Code}{\IdxSet}$ and $\Restrict{\Code}{\IdxSet}$ are \emph{different notions}. Consider for example the $1$-dimensional linear code $\Code=\Set{00,11}$ in $\Field_{2}^{\Set{1,2}}$ and the subset $\IdxSet=\Set{1}$: it holds that $\Puncture{\Code}{\IdxSet}=\Set{00}$ and $\Restrict{\Code}{\IdxSet} = \Set{0,1}$. In particular, codewords in $\Puncture{\Code}{\IdxSet}$ are defined over $\Domain$, while codewords in $\Restrict{\Code}{\IdxSet}$ are defined over $\IdxSet$. Nevertheless, throughout this section, we sometimes compare vectors defined over different domains, with the implicit understanding that the comparison is conducted over the union of the relevant domains, by filling in zeros in the vectors' undefined coordinates. For example, we may write $\Puncture{\Code}{\IdxSet} \subseteq \Restrict{\Code}{\IdxSet}$ to mean that $\Set{00} \subseteq \Set{00,10}$ (the set obtained from $\Set{0,1}$ after filling in the relevant zeros).

\begin{claim}
\label{claim:bsrs-dual-of-projection}
Let $\Code$ be a linear code with domain $\Domain$ and alphabet $\Field$. For every $\IdxSet \subseteq \Domain$,
\begin{equation*}
\Dual{(\Restrict{\Code}{\IdxSet})} = \Puncture{(\Dual{\Code})}{\IdxSet} \enspace,
\end{equation*}
that is,
\begin{equation*}
\Big\{ \OtherCodeword \in \Domain(\CodeIdx)^{\IdxSet} : \, \forall\, \Codeword \in \Code_{\CodeIdx}\,, \sum_{i \in \IdxSet} \OtherCodeword(i) \Codeword(i) \Big\}
=
\Big\{
\OtherCodeword\in\Dual{\Code_{\CodeIdx}} : \Support{\OtherCodeword} \subseteq \IdxSet
\Big\}
\enspace.
\end{equation*}
\end{claim}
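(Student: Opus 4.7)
The plan is to prove the two set inclusions separately, using the convention (already set up earlier in the section) that vectors defined on $\IdxSet$ and vectors defined on $\Domain$ are compared by zero-extending the former to all of $\Domain$.

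First, I would prove the inclusion $\Puncture{(\Dual{\Code})}{\IdxSet} \subseteq \Dual{(\Restrict{\Code}{\IdxSet})}$. Let $\OtherCodeword \in \Dual{\Code}$ with $\Support{\OtherCodeword} \subseteq \IdxSet$. For any $\Codeword \in \Code$, the dual relation gives $\sum_{i \in \Domain} \OtherCodeword(i) \Codeword(i) = 0$, and since $\OtherCodeword$ vanishes outside $\IdxSet$ this simplifies to $\sum_{i \in \IdxSet} \OtherCodeword(i) \Codeword(i) = \sum_{i \in \IdxSet} \OtherCodeword(i) \Restrict{\Codeword}{\IdxSet}(i) = 0$. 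Since every element of $\Restrict{\Code}{\IdxSet}$ arises as $\Restrict{\Codeword}{\IdxSet}$ for some $\Codeword \in \Code$, this shows $\OtherCodeword$ (viewed as an element of $\Field^{\IdxSet}$) lies in $\Dual{(\Restrict{\Code}{\IdxSet})}$.

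Next, I would prove the reverse inclusion $\Dual{(\Restrict{\Code}{\IdxSet})} \subseteq \Puncture{(\Dual{\Code})}{\IdxSet}$. Given $\OtherCodeword \in \Dual{(\Restrict{\Code}{\IdxSet})} \subseteq \Field^{\IdxSet}$, extend it to $\tilde{\OtherCodeword} \in \Field^{\Domain}$ by setting $\tilde{\OtherCodeword}(i) = 0$ for $i \in \Domain \setminus \IdxSet$; by construction $\Support{\tilde{\OtherCodeword}} \subseteq \IdxSet$. Then for any $\Codeword \in \Code$,
\[
\sum_{i \in \Domain} \tilde{\OtherCodeword}(i) \Codeword(i) = \sum_{i \in \IdxSet} \OtherCodeword(i) \Restrict{\Codeword}{\IdxSet}(i) = 0,
\]
where the final equality uses that $\Restrict{\Codeword}{\IdxSet} \in \Restrict{\Code}{\IdxSet}$ and that $\OtherCodeword$ is dual to $\Restrict{\Code}{\IdxSet}$. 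Hence $\tilde{\OtherCodeword} \in \Dual{\Code}$ with support inside $\IdxSet$, i.e., $\tilde{\OtherCodeword} \in \Puncture{(\Dual{\Code})}{\IdxSet}$, which under our comparison convention is the same as $\OtherCodeword$.

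This proof is essentially an unwinding of definitions and there is no real obstacle; the only subtle point is the implicit identification between a vector in $\Field^{\IdxSet}$ and its zero-extension in $\Field^{\Domain}$, which is already explicitly licensed in the paragraph preceding the claim. The second displayed equation in the claim's statement is simply a restatement of the first using the definition of the dual, so proving the first suffices.
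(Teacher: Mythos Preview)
Your proof is correct and follows essentially the same approach as the paper's: both prove the two inclusions separately by unwinding the definitions, using the support condition to collapse the inner product from $\Domain$ to $\IdxSet$. Your version is slightly more explicit about the zero-extension (introducing $\tilde{\OtherCodeword}$), whereas the paper handles this implicitly via the stated convention, but the content is identical.
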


\begin{proof}
For the containment $\Puncture{(\Dual{\Code})}{\IdxSet} \subseteq \Dual{(\Restrict{\Code}{\IdxSet})}$: if $\OtherCodeword \in \Dual{\Code}$ and $\Support{\OtherCodeword} \subseteq \IdxSet$ then $\OtherCodeword$ lies in the dual of $\Restrict{\Code}{\IdxSet}$ because it suffices to consider the subdomain $\IdxSet$ for determining duality. For the reverse containment $\Puncture{(\Dual{\Code})}{\IdxSet} \supseteq \Dual{(\Restrict{\Code}{\IdxSet})}$: if $\OtherCodeword \in \Dual{(\Restrict{\Code}{\IdxSet})}$ then $\Support{\OtherCodeword} \subseteq \IdxSet$ (by definition) so that $\InnerProduct{\OtherCodeword}{\Codeword} = \InnerProduct{\OtherCodeword}{\Restrict{\Codeword}{\IdxSet}}$ for every $\Codeword \in \Code$, and the latter inner product equals $0$ because $\OtherCodeword$ is in the dual of $\Restrict{\Code}{\IdxSet}$; in sum $\OtherCodeword$ is dual to (all codewords in) $\Code$ and its support is contained in $\IdxSet$, so $\OtherCodeword$ belongs to $\Puncture{(\Dual{\Code})}{\IdxSet}$, as claimed.
\end{proof}

Observe that \clmref{claim:bsrs-dual-of-projection} tells us the constraint detection is equivalent to determining a basis of $\Dual{(\Restrict{\Code_{\CodeIdx}}{\IdxSet})} = \Puncture{(\Dual{\Code_{\CodeIdx}})}{\IdxSet}$. The following lemma asserts that if, given a subset $\IdxSet \subseteq \Domain$, we can find a set of constraints $\LocalSet$ in $\Dual{\Code}$ that spans $\Puncture{(\Dual{\Code})}{\IdxSet}$ then we can solve the constraint detection problem for $\Code$; we defer the proof of the lemma to \appref{sec:proof-of-bsrs-dual-char}.

\begin{lemma}
\label{lem:bsrs-dual-char}
Let $\CodeClass = \Set{\Code_{\CodeIdx}}_{\CodeIdx}$ be a linear code family with domain $\Domain(\cdot)$ and alphabet $\Field(\cdot)$. If there exists an algorithm that, on input an index $\CodeIdx$ and subset $\IdxSet \subseteq \Domain(\CodeIdx)$, outputs in $\poly(\BitSize{\CodeIdx} + \SetCardinality{\IdxSet})$ time a subset $\LocalSet \subseteq \Field(\CodeIdx)^{\Domain(\CodeIdx)}$ (in sparse representation) with $\Puncture{(\Dual{\Code_{\CodeIdx}})}{\IdxSet} \subseteq \Span(\LocalSet) \subseteq \Dual{\Code_{\CodeIdx}}$, then $\CodeClass$ has succinct constraint detection.
\end{lemma}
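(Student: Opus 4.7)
The plan is to reduce succinct constraint detection for $\CodeClass$ to a linear algebra computation of polynomial size, using the hypothesized spanning set $\LocalSet$ as a workspace. First I would invoke the assumed algorithm on input $(\CodeIdx,\IdxSet)$ to produce, in time $\poly(\BitSize{\CodeIdx} + \SetCardinality{\IdxSet})$, a sparse-representation set $\LocalSet \subseteq \Field(\CodeIdx)^{\Domain(\CodeIdx)}$ satisfying $\Puncture{(\Dual{\Code_{\CodeIdx}})}{\IdxSet} \subseteq \Span(\LocalSet) \subseteq \Dual{\Code_{\CodeIdx}}$. By \clmref{claim:bsrs-dual-of-projection}, the target of the detector is exactly $\Puncture{(\Dual{\Code_{\CodeIdx}})}{\IdxSet}$ (identified with $\Field(\CodeIdx)^{\IdxSet}$ by restriction); and by the containments above, this target equals the subspace of $\Span(\LocalSet)$ consisting of those linear combinations whose support is contained in $\IdxSet$. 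So it suffices to enumerate such combinations.

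To do so, define $J \DefineEqual \IdxSet \cup \bigcup_{w \in \LocalSet} \Support{w}$. Since $\SetCardinality{\LocalSet}$ and each $\SetCardinality{\Support{w}}$ are at most $\poly(\BitSize{\CodeIdx} + \SetCardinality{\IdxSet})$ (outputting $\LocalSet$ in sparse form takes at least as long as this bound on its total size), the set $J$ has size polynomial in $\BitSize{\CodeIdx} + \SetCardinality{\IdxSet}$, and all elements of $\Span(\LocalSet)$ are supported in $J$. Now I would set up the homogeneous linear system over $\Field(\CodeIdx)$ in unknowns $(c_{w})_{w \in \LocalSet}$ given by
\begin{equation*}
\sum_{w \in \LocalSet} c_{w} \, w(j) \; = \; 0 \quad \text{for every } j \in J \setminus \IdxSet.
\end{equation*}
This system has $\SetCardinality{\LocalSet}$ variables and $\SetCardinality{J \setminus \IdxSet}$ equations, both polynomial in $\BitSize{\CodeIdx} + \SetCardinality{\IdxSet}$, so its solution space can be computed by Gaussian elimination in $\poly(\BitSize{\CodeIdx} + \SetCardinality{\IdxSet})$ time; let $C = \{c^{(1)},\dots,c^{(t)}\}$ be a basis for it.

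For each $c^{(k)}$, I would form $z^{(k)} \DefineEqual \sum_{w \in \LocalSet} c^{(k)}_{w} \, w \in \Field(\CodeIdx)^{\Domain(\CodeIdx)}$ (computable in polynomial time since $\LocalSet$ is sparse and polynomially sized), then output the list $\Restrict{z^{(1)}}{\IdxSet},\dots,\Restrict{z^{(t)}}{\IdxSet}$. Correctness splits cleanly into two directions. For \emph{soundness}, each $z^{(k)}$ lies in $\Span(\LocalSet) \subseteq \Dual{\Code_{\CodeIdx}}$ and is supported on $\IdxSet$ by construction, so by \clmref{claim:bsrs-dual-of-projection} its restriction lies in $\Dual{(\Restrict{\Code_{\CodeIdx}}{\IdxSet})}$. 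For \emph{completeness}, any $z \in \Puncture{(\Dual{\Code_{\CodeIdx}})}{\IdxSet}$ lies in $\Span(\LocalSet)$ by hypothesis and thus equals $\sum_{w} c_{w} w$ for some coefficient vector $c$; since $z$ vanishes off $\IdxSet$, this $c$ satisfies the system, and hence lies in $\Span(C)$, yielding $z \in \Span\{z^{(1)},\dots,z^{(t)}\}$. Linear independence of the $\Restrict{z^{(k)}}{\IdxSet}$ follows from linear independence of the $z^{(k)}$ together with the fact that they all vanish outside $\IdxSet$, so restriction to $\IdxSet$ is injective on $\Span\{z^{(1)},\dots,z^{(t)}\}$.

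The only conceptual step with any subtlety is keeping the distinction between $\Restrict{}{}$ and $\Puncture{}{}$ straight and exploiting the two inclusions provided by $\LocalSet$ symmetrically --- the inclusion $\Puncture{(\Dual{\Code_{\CodeIdx}})}{\IdxSet} \subseteq \Span(\LocalSet)$ is what delivers completeness (no dual codeword supported on $\IdxSet$ is missed), while $\Span(\LocalSet) \subseteq \Dual{\Code_{\CodeIdx}}$ is what delivers soundness (every combination we output is genuinely a constraint). Everything else is routine polynomial-time linear algebra over $\Field(\CodeIdx)$ on an instance whose size is polynomial in $\BitSize{\CodeIdx} + \SetCardinality{\IdxSet}$, which gives the claimed $\poly(\BitSize{\CodeIdx} + \SetCardinality{\IdxSet})$ running time and establishes succinct constraint detection for $\CodeClass$.
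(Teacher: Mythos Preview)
Your approach is essentially the same as the paper's: invoke the assumed algorithm to obtain $\LocalSet$, then use Gaussian elimination over the (polynomial-size) support of $\LocalSet$ to extract exactly those combinations supported on $\IdxSet$, and appeal to \clmref{claim:bsrs-dual-of-projection} for both inclusions. The paper organizes the linear algebra slightly differently (it first prunes $\LocalSet$ to be linearly independent, then computes a basis for $\Puncture{\Span(\LocalSet)}{\IdxSet}$), but the substance is identical.

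One small gap: your claim that the $z^{(k)}$ are linearly independent does not follow from the $c^{(k)}$ being linearly independent, because you never assumed $\LocalSet$ itself is linearly independent --- if it is not, the map $c \mapsto \sum_{w} c_w w$ has nontrivial kernel, and distinct basis vectors $c^{(k)}$ of the solution space can collapse to dependent (even zero) $z^{(k)}$. The paper handles this by first replacing $\LocalSet$ with a linearly independent subset via Gaussian elimination; alternatively you could run a final Gaussian elimination on the $\Restrict{z^{(k)}}{\IdxSet}$. Either fix is immediate and stays within the time bound, so the argument goes through.
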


\begin{remark}
\label{remark:expand-vs-dual}
The following operations do \emph{not} commute:
\begin{inparaenum}[(i)]
  \item expanding the domain via zero padding (for the purpose of comparing vectors over different domains), and
  \item taking the dual of the code.
\end{inparaenum}
Consider for example the code $\Code = \Set{0} \subseteq \Field_2^{\Set{1}}$: its dual code is $\Dual{\Code} = \Set{0,1}$ and, when expanded to $\Field_{2}^{\Set{1,2}}$, the dual code is expanded to $\Set{(0,0),(1,0)}$; yet, when $\Code$ is expanded to $\Field_{2}^{\Set{1,2}}$ it produces the code $\Set{(0,0)}$ and its dual code is $\Set{(0,0),(1,0),(0,1),(1,1)}$. To resolve ambiguities (when asserting an equality as in \clmref{claim:bsrs-dual-of-projection}), we adopt the convention that expansion is done \emph{always last} (namely, as late as possible without having to compare vectors over different domains).
\end{remark}

\subsection{Partial sums of low-degree polynomials}
\label{sec:partial-sums}

We show that evaluations of partial sums of low-degree polynomials have succinct constraint detection (see \defref{def:constraint-detector}). In the following, $\Field$ is a finite field, $\SCVars,\SCDegree$ are positive integers, and $\SCSubset$ is a subset of $\Field$; also, $\PolynomialRingIndOne{\Field}{\SCVars}{\VariableX}{\SCDegree}$ denotes the subspace of $\PolynomialRing{\Field}{m}{\VariableX}$ consisting of those polynomials with individual degrees less than $\SCDegree$. Moreover, given $Q \in \PolynomialRingIndOne{\Field}{\SCVars}{\VariableX}{\SCDegree}$ and $\vec{\alpha} \in \Field^{\leq\SCVars}$ (vectors over $\Field$ of length at most $\SCVars$), we define $Q(\vec{\alpha}) \DefineEqual \sum_{\vec{\gamma} \in \SCSubset^{\SCVars - |\vec{\alpha}|}} Q(\vec{\alpha}, \vec{\gamma})$, i.e., the answer to a query that specifies only a suffix of the variables is the sum of the values obtained by letting the remaining variables range over $\SCSubset$. We begin by defining the code that we study, which extends the Reed--Muller code (see \secref{sec:codes}) with partial sums.

\begin{definition}
\label{def:psrm}
We denote by $\PSRMCode{\Field}{\SCVars}{\SCDegree}{\SCSubset}$ the linear code that comprises evaluations of partial sums of polynomials in $\PolynomialRingIndOne{\Field}{\SCVars}{\VariableX}{\SCDegree}$; more precisely, $\PSRMCode{\Field}{\SCVars}{\SCDegree}{\SCSubset} \DefineEqual \{ \Codeword_{Q} \}_{Q \in \PolynomialRingIndOne{\Field}{\SCVars}{\VariableX}{\SCDegree}}$ where $\Codeword_{Q} \colon \Field^{\leq \SCVars} \to \Field$ is the function defined by $\Codeword_{Q}(\vec{\alpha}) \DefineEqual \sum_{\vec{\gamma} \in \SCSubset^{\SCVars - |\vec{\alpha}|}} Q(\vec{\alpha}, \vec{\gamma})$ for each $\vec{\alpha} \in \Field^{\leq \SCVars}$.\footnote{Note that $\PSRMCode{\Field}{\SCVars}{\SCDegree}{\SCSubset}$ is indeed linear: for every $\Codeword_{Q_{1}}, \Codeword_{Q_2} \in \PSRMCode{\Field}{\SCVars}{\SCDegree}{\SCSubset}$, $a_{1}, a_{2} \in \Field$, and $\vec{\alpha} \in \Field^{\leq \SCVars}$, it holds that
$
a_{1} \Codeword_{Q_{1}}(\vec{\alpha}) + a_{2} \Codeword_{Q_2}(\vec{\alpha})
= a_{1} \sum_{\vec{\gamma} \in \SCSubset^{\SCVars - |\vec{\alpha}|}} Q_{1}(\vec{\alpha}, \vec{\gamma}) + a_{2}\sum_{\vec{\gamma} \in \SCSubset^{\SCVars - |\vec{\alpha}|}} Q_{2}(\vec{\alpha}, \vec{\gamma})
= \sum_{\vec{\gamma} \in \SCSubset^{\SCVars - |\vec{\alpha}|}} (a_{1} Q_{1} + a_{2} Q_{2})(\vec{\alpha}, \vec{\gamma})
= \Codeword_{a_{1} Q_{1} + a_{2} Q_{2}}(\vec{\alpha})
$.
But $\Codeword_{a_{1} Q_{1} + a_{2} Q_{2}} \in \PSRMCode{\Field}{\SCVars}{\SCDegree}{\SCSubset}$, since $\PolynomialRingIndOne{\Field}{\SCVars}{\VariableX}{\SCDegree}$ is a linear space.}
We denote by $\PSRM$ the linear code family indexed by tuples $\CodeIdx = (\Field,\SCVars,\SCDegree,\SCSubset)$ and where the $\CodeIdx$-th code equals $\PSRMCode{\Field}{\SCVars}{\SCDegree}{\SCSubset}$. (We represent indices $\CodeIdx$ so to ensure that $\BitSize{\CodeIdx} = \Theta(\log \SetCardinality{\Field} + \SCVars + \SCDegree + \SetCardinality{\SCSubset})$.)
\end{definition}

We prove that the linear code family $\PSRM$ has succinct constraint detection:

\begin{theorem}[formal statement of \ref{thm:intro-psrm}]
\label{thm:psrm-succinct-constraint-detection}
$\PSRM$ has $\poly(\log \SetCardinality{\Field} + \SCVars + \SCDegree + \SetCardinality{\SCSubset} + \ListSize)$-time constraint detection.
\end{theorem}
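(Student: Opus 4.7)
The plan is to reduce constraint detection for $\PSRM$ to a deterministic polynomial identity testing problem for a restricted circuit class, and then to invoke known derandomizations. I will follow the three-step outline sketched in \secref{techniques:detect-psrm}.

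\textbf{Step 1: Reformulate as a left-nullspace problem.} By \clmref{claim:bsrs-dual-of-projection}, a basis of the target space equals a basis of $\Dual{(\Restrict{\PSRMCode{\Field}{\SCVars}{\SCDegree}{\SCSubset}}{\IdxSet})}$. Since the map $Q \mapsto \Codeword_{Q}$ is linear in the coefficient vector $\mathrm{coef}(Q) \in \Field^{\SCDegree^{\SCVars}}$, there is a matrix $M \in \Field^{\IdxSet \times [\SCDegree]^{\SCVars}}$ such that $\Restrict{\Codeword_{Q}}{\IdxSet} = M \cdot \mathrm{coef}(Q)$ for every $Q \in \PolynomialRingIndOne{\Field}{\SCVars}{\VariableX}{\SCDegree}$. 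The row indexed by $\vec{\alpha} \in \IdxSet$ is the vector $\phi_{\vec{\alpha}}$ whose $\vec{e}$-th entry is $\prod_{i \leq |\vec\alpha|}\alpha_{i}^{e_{i}} \cdot \prod_{i > |\vec\alpha|}\bigl(\sum_{\gamma \in \SCSubset}\gamma^{e_{i}}\bigr)$. Thus a basis for the constraint space equals a basis for the left nullspace of $M$, i.e.\ for the $\Field$-linear relations among the rows $\{\phi_{\vec\alpha}\}_{\vec\alpha \in \IdxSet}$.

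\textbf{Step 2: Encode each row by a small arithmetic circuit.} Although each $\phi_{\vec\alpha}$ has length $\SCDegree^{\SCVars}$, its entries are exactly the coefficients of the $\SCVars$-variate polynomial
\begin{equation*}
\Phi_{\vec\alpha}(\VariableX_{1},\dots,\VariableX_{\SCVars}) \DefineEqual \prod_{i \leq |\vec\alpha|}\Bigl(\sum_{j=0}^{\SCDegree-1}\alpha_{i}^{j}\VariableX_{i}^{j}\Bigr)\cdot\prod_{i > |\vec\alpha|}\Bigl(\sum_{j=0}^{\SCDegree-1}\Bigl(\sum_{\gamma \in \SCSubset}\gamma^{j}\Bigr)\VariableX_{i}^{j}\Bigr),
\end{equation*}
which is a product of $\SCVars$ univariates, each of degree less than $\SCDegree$, computable by an arithmetic circuit of size $\poly(\SCVars + \SCDegree + |\SCSubset|+\log|\Field|)$ (the inner sums over $\SCSubset$ in the second factor are computed once). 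Linear relations $z \in \Field^{\IdxSet}$ among the $\phi_{\vec\alpha}$ are in bijection with linear relations among the $\Phi_{\vec\alpha}$: namely, $\sum_{\vec\alpha \in \IdxSet} z_{\vec\alpha}\phi_{\vec\alpha}=0$ if and only if the polynomial $\sum_{\vec\alpha \in \IdxSet} z_{\vec\alpha}\Phi_{\vec\alpha}$ is identically zero.

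\textbf{Step 3: Deterministic identity testing for sums of products of univariates.} The polynomial $\sum_{\vec\alpha\in\IdxSet} z_{\vec\alpha}\Phi_{\vec\alpha}$ is a $\Sigma\Pi\Sigma$ expression of the restricted ``sum of products of univariates'' form for which the Raz--Shpilka \cite{RazS05} algorithm (see also \cite{Kayal10}) gives a \emph{deterministic} algorithm that, given arithmetic circuits for $\Phi_{1},\dots,\Phi_{\SetCardinality{\IdxSet}}$ of this shape, outputs in $\poly(\log|\Field| + \SCVars + \SCDegree + |\SCSubset| + \SetCardinality{\IdxSet})$ time a basis for $\{z : \sum_{k} z_{k}\Phi_{k} \equiv 0\}$. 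Returning this basis solves the original constraint detection problem; determinism is essential because the simulator requires \emph{perfect} (zero-error) sampling downstream, and we therefore cannot afford a Schwartz--Zippel-style randomized PIT.

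\textbf{Main obstacle.} The non-routine ingredient is ensuring that the encoding in Step~2 really lands in a circuit class for which deterministic linear-dependence testing is known. The product structure of $\Phi_{\vec\alpha}$ is crucial: it is preserved uniformly across all $\vec\alpha \in \IdxSet$ (both the prefix factors and the suffix ``Hypercube-sum'' factors are univariate in distinct variables), so the linear combination lies in a $\Sigma\Pi\Sigma$ class handled by \cite{RazS05}. The mild additional care is in writing indices $\CodeIdx$ so that $\BitSize{\CodeIdx} = \Theta(\log|\Field|+\SCVars+\SCDegree+|\SCSubset|)$ as the definition requires, and in handling the variable-length prefixes $\vec\alpha \in \Field^{\leq\SCVars}$ uniformly within the product (achieved above by letting the index $|\vec\alpha|$ determine which of the two univariate factor types is used in each coordinate). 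All remaining bookkeeping --- writing down the circuits for $\Phi_{\vec\alpha}$ and invoking the Raz--Shpilka routine --- is standard.
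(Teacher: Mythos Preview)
Your proposal is correct and follows essentially the same route as the paper: both define the ``evaluation vectors'' $\phi_{\vec\alpha}$ (with the partial-sum factors $\sum_{\gamma\in H}\gamma^{e_i}$ on the unspecified coordinates), encode them as products of $m$ univariates $\Phi_{\vec\alpha}$, and then invoke the Raz--Shpilka deterministic nullspace algorithm for sums of products of univariates. The only minor point the paper makes explicit that you leave implicit is the hypothesis $\SCDegree\le|\Field|$ needed for the Raz--Shpilka step (so that coefficient-wise zero equals identically zero over $\Field$), but this is harmless here.
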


Combined with \lemref{lem:efficient-codeword-simulator}, the theorem above implies that there exists a probabilistic polynomial-time algorithm for answering queries to a codeword sampled at random from $\PSRM$, as captured by the following corollary.

\begin{corollary}
\label{cor:efficient-poly-simulator}
There exists a probabilistic algorithm $\CodeSimAlgorithm$ such that, for every finite field $\Field$, positive integers $\SCVars,\SCDegree$, subset $\SCSubset$ of $\Field$, subset $S = \{(\alpha_{1},\beta_{1}), \dots, (\alpha_{\ListSize}, \beta_{\ListSize})\} \subseteq \Field^{\leq\SCVars} \times \Field$, and $(\alpha,\beta) \in \Field^{\leq\SCVars} \times \Field$,
\begin{equation*}
\Pr\Big[
\CodeSimAlgorithm(\Field,\SCVars,\SCDegree,\SCSubset,S,\alpha) = \beta
\Big]
=
\Pr_{\RandPoly \gets \PolynomialRingIndOne{\Field}{\SCVars}{\VariableX}{\SCDegree}}
\left[
\RandPoly(\alpha) = \beta
\pST
\begin{array}{c}
\RandPoly(\alpha_{1}) = \beta_{1} \\
\vdots \\
\RandPoly(\alpha_{\ListSize}) = \beta_{\ListSize}
\end{array}
\right]\enspace.
\end{equation*}
Moreover $\CodeSimAlgorithm$ runs in time $\poly(\log \SetCardinality{\Field} + \SCVars + \SCDegree + \SetCardinality{\SCSubset} + \ListSize)$.
\end{corollary}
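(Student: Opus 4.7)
The plan is to derive the corollary as an essentially immediate consequence of combining \thmref{thm:psrm-succinct-constraint-detection} with \lemref{lem:efficient-codeword-simulator}. First I would instantiate \lemref{lem:efficient-codeword-simulator} on the code family $\CodeClass = \PSRM$, using the index $\CodeIdx = (\Field, \SCVars, \SCDegree, \SCSubset)$, for which $\Code_{\CodeIdx} = \PSRMCode{\Field}{\SCVars}{\SCDegree}{\SCSubset}$ has domain $\Field^{\leq \SCVars}$ and alphabet $\Field$. By \thmref{thm:psrm-succinct-constraint-detection}, $\PSRM$ admits $\CodeTime(\CodeIdx, \ListSize)$-time constraint detection with $\CodeTime(\CodeIdx, \ListSize) = \poly(\log \SetCardinality{\Field} + \SCVars + \SCDegree + \SetCardinality{\SCSubset} + \ListSize)$, which dominates the additive $\poly(\log \SetCardinality{\Field} + \ListSize)$ overhead in \lemref{lem:efficient-codeword-simulator}. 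Hence \lemref{lem:efficient-codeword-simulator} yields a probabilistic algorithm $\CodeSimAlgorithm'$ running in the claimed time and satisfying, for every $S \subseteq \Field^{\leq \SCVars} \times \Field$ and $(\alpha,\beta)$, the conditional-probability identity of the lemma where $\Codeword$ is drawn uniformly from $\Code_{\CodeIdx}$.

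The remaining step is to connect ``uniformly random codeword in $\PSRMCode{\Field}{\SCVars}{\SCDegree}{\SCSubset}$'' to ``uniformly random polynomial $\RandPoly \in \PolynomialRingIndOne{\Field}{\SCVars}{\VariableX}{\SCDegree}$'' as required by the corollary. The code is defined (\defref{def:psrm}) as the image of the $\Field$-linear map $\RandPoly \mapsto \Codeword_{\RandPoly}$, so if $\RandPoly$ is drawn uniformly from $\PolynomialRingIndOne{\Field}{\SCVars}{\VariableX}{\SCDegree}$ then $\Codeword_{\RandPoly}$ is uniformly distributed on $\Code_{\CodeIdx}$: each fiber of a linear surjection has equal cardinality. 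Moreover, by definition a query to $\Codeword_{\RandPoly}$ at any point $\alpha \in \Field^{\leq\SCVars}$ returns precisely the partial sum $\sum_{\vec{\gamma} \in \SCSubset^{\SCVars-|\alpha|}} \RandPoly(\alpha,\vec{\gamma})$, which matches the notation $\RandPoly(\alpha)$ used in the statement of the corollary. The conditional distribution over $\RandPoly$ given the pairs in $S$ therefore coincides with the conditional distribution over $\Codeword$ given the same pairs.

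Putting the pieces together, I would define
\[
\CodeSimAlgorithm(\Field,\SCVars,\SCDegree,\SCSubset,S,\alpha) \DefineEqual \CodeSimAlgorithm'\big((\Field,\SCVars,\SCDegree,\SCSubset),\,S,\,\alpha\big),
\]
and read off both the probabilistic identity and the running-time bound directly from \lemref{lem:efficient-codeword-simulator} and \thmref{thm:psrm-succinct-constraint-detection}. There is essentially no obstacle: the only conceptual point worth stating explicitly (so a reader does not have to pause) is the identification between a uniformly random polynomial of prescribed individual degree and a uniformly random codeword of $\PSRMCode{\Field}{\SCVars}{\SCDegree}{\SCSubset}$, which holds because the evaluation map is linear and the code is defined as its image; this is what makes the translation between the polynomial-sampling language of the corollary and the codeword-sampling language of the general lemma completely transparent.
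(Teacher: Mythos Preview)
Your proposal is correct and matches the paper's approach: the paper states the corollary immediately after \thmref{thm:psrm-succinct-constraint-detection} with the single-sentence justification that it follows by combining that theorem with \lemref{lem:efficient-codeword-simulator}, which is precisely what you do. Your explicit remark that the linear surjection $\RandPoly \mapsto \Codeword_{\RandPoly}$ pushes the uniform distribution on $\PolynomialRingIndOne{\Field}{\SCVars}{\VariableX}{\SCDegree}$ forward to the uniform distribution on $\PSRMCode{\Field}{\SCVars}{\SCDegree}{\SCSubset}$ is a useful clarification that the paper leaves implicit.
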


We sketch the proof of \thmref{thm:psrm-succinct-constraint-detection}, for the simpler case where the code is $\RMCode{\Field}{\SCVars}{\SCDegree}{\SCSubset}$ (i.e., without partial sums). We can view a polynomial $Q \in \PolynomialRingIndOne{\Field}{\SCVars}{\VariableX}{\SCDegree}$ as a vector over the monomial basis, with  an entry for each possible monomial $\VariableX_{1}^{i_{1}} \dots \VariableX_{\SCVars}^{i_{\SCVars}}$ (with $0 \leq i_{1}, \dots, i_{\SCVars} < \SCDegree$) containing the corresponding coefficient. The evaluation of $Q$ at a point $\vec{\alpha} \in \Field^{\SCVars}$ then equals the inner product of this vector with the vector $\phi_{\vec{\alpha}}$, in the same basis, whose entry for $\VariableX_{1}^{i_{1}} \dots \VariableX_{\SCVars}^{i_{\SCVars}}$ is equal to $\alpha_{1}^{i_{1}} \dots \alpha_{\SCVars}^{i_{\SCVars}}$. Given $\vec{\alpha}_{1}, \dots, \vec{\alpha}_{\ListSize}$, we could use Gaussian elimination on $\phi_{\vec{\alpha}_{1}}, \dots, \phi_{\vec{\alpha}_{\ListSize}}$ to check for linear dependencies, which would be equivalent to constraint detection for $\RMCode{\Field}{\SCVars}{\SCDegree}{\SCSubset}$.

However, we cannot afford to explicitly write down $\phi_{\vec{\alpha}}$, because it has $\SCDegree^{\SCVars}$ entries. Nevertheless, we can still implicitly check for linear dependencies, and we do so by reducing the problem, by building on and extending ideas of \cite{BogdanovW04}, to computing the nullspace of a certain set of polynomials, which can be solved via an algorithm of \cite{RazS05} (see also \cite{Kayal10}). The idea is to encode the entries of these vectors via a succinct description: a polynomial $\Phi_{\vec{\alpha}}$ whose coefficients (after expansion) are the entries of $\phi_{\vec{\alpha}}$. In our setting this polynomial has the particularly natural form:
\begin{equation*}
\Phi_{\vec{\alpha}}(\vec{\VariableX})
\DefineEqual
\prod_{i=1}^{\SCVars}
(1 + \alpha_{i} \VariableX_{i} + \alpha_{i}^{2} \VariableX_{i}^{2} + \cdots + \alpha_{i}^{\SCDegree-1} \VariableX_{i}^{\SCDegree-1})
\enspace;
\end{equation*}
note that the coefficient of each monomial equals its corresponding entry in $\phi_{\vec{\alpha}}$. Given this representation we can use standard polynomial identity testing techniques to find linear dependencies between these polynomials, which corresponds to linear dependencies between the original vectors. Crucially, we cannot afford any mistake, even with exponentially small probability, when looking for linear dependencies for otherwise we would not achieve perfect simulation; this is why the techniques we leverage rely on derandomization. We now proceed with the full proof.

\begin{proof}[Proof of \thmref{thm:psrm-succinct-constraint-detection}]
We first introduce some notation. Define $\LessThan{\SCDegree} \DefineEqual \{0,\dots,\SCDegree-1\}$. For vectors $\vec{\alpha} \in \Field^{\SCVars}$ and $\vec{a} \in \LessThan{\SCDegree}^{\SCVars}$, we define $\vec{\alpha}^{\vec{a}} \DefineEqual \prod_{i=1}^{\SCVars} \alpha_{i}^{a_{i}}$; similarly, for variables $\vec{\VariableX} = (\VariableX_{1}, \dots, \VariableX_{\SCVars})$, we define $\vec{\VariableX}^{\vec{a}} \DefineEqual \prod_{i=1}^{\SCVars} \VariableX_{i}^{a_{i}}$.

We identify $\PSRMCode{\Field}{\SCVars}{\SCDegree}{\SCSubset}$ with $\Field^{\LessThan{\SCDegree}^{\SCVars}}$; a codeword $\Codeword_{Q}$ then corresponds to a vector $\vec{Q}$ whose $\vec{a}$-th entry is the coefficient of the monomial $\vec{\VariableX}^{\vec{a}}$ in $Q$. For $\vec{\alpha} \in \Field^{\leq \SCVars}$, let
\begin{equation*}
\phi_{\vec{\alpha}}
\DefineEqual
\left(
  \vec{\alpha}^{\vec{a}}
    \sum_{\vec{\gamma} \in \SCSubset^{\SCVars - |\vec{\alpha}|}}
      \vec{\gamma}^{\vec{b}}
\right)_{\vec{a} \in \LessThan{\SCDegree}^{|\vec{\alpha}|} \,,\, \vec{b} \in \LessThan{\SCDegree}^{\SCVars - |\vec{\alpha}|}}
\enspace.
\end{equation*}
We can also view $\phi_{\vec{\alpha}}$ as a vector in $\Field^{\LessThan{\SCDegree}^{\SCVars}}$ by merging the indices, so that, for all $\vec{\alpha} \in \Field^{\leq \SCVars}$ and $\Codeword_{Q} \in \PSRMCode{\Field}{\SCVars}{\SCDegree}{\SCSubset}$,
\begin{align*}
   \Codeword_{Q}(\vec{\alpha})
&= \sum_{\vec{\gamma} \in \SCSubset^{\SCVars - |\vec{\alpha}|}} Q(\vec{\alpha}, \vec{\gamma})
= \sum_{\vec{\gamma} \in \SCSubset^{\SCVars - |\vec{\alpha}|}}
    \sum_{\vec{a} \in \LessThan{\SCDegree}^{|\vec{\alpha}|}}
      \sum_{\vec{b} \in \LessThan{\SCDegree}^{\SCVars - |\vec{\alpha}|}}
        \vec{Q}_{\vec{a},\vec{b}} \cdot \vec{\alpha}^{\vec{a}} \vec{\gamma}^{\vec{b}} \\
&= \sum_{\vec{a} \in \LessThan{\SCDegree}^{|\vec{\alpha}|}}
   \sum_{\vec{b} \in \LessThan{\SCDegree}^{\SCVars - |\vec{\alpha}|}}
     \vec{Q}_{\vec{a},\vec{b}} \cdot \vec{\alpha}^{\vec{a}}
       \sum_{\vec{\gamma} \in \SCSubset^{\SCVars - |\vec{\alpha}|}}
         \vec{\gamma}^{\vec{b}}
= \InnerProduct{\vec{Q}}{\phi_{\vec{\alpha}}}
\enspace.
\end{align*}
Hence for every $\vec{\alpha}_{1}, \dots, \vec{\alpha}_{\ListSize}, \vec{\alpha} \in \Field^{\leq \SCVars}$ and $a_{1}, \dots, a_{\ListSize} \in \Field$, the following statements are equivalent
\begin{inparaenum}[(i)]
  \item $\Codeword(\vec{\alpha}) = \sum_{i=1}^{\ListSize} a_{i} \Codeword(\vec{\alpha}_{i})$ for all $\Codeword \in \PSRMCode{\Field}{\SCVars}{\SCDegree}{\SCSubset}$;
  \item $\InnerProduct{\vec{f}}{\phi_{\vec{\alpha}}} = \sum_{i=1}^{\ListSize} a_{i} \InnerProduct{\vec{f}}{\phi_{\vec{\alpha}_{i}}}$ for all $\vec{f} \in \Field^{\LessThan{\SCDegree}^{\SCVars}}$
  \item $\phi_{\vec{\alpha}} = \sum_{i=1}^{\ListSize} a_{i} \phi_{\vec{\alpha}_{i}}$.
\end{inparaenum}
We deduce that constraint detection for $\PSRMCode{\Field}{\SCVars}{\SCDegree}{\SCSubset}$ is equivalent to the problem of finding $a_{1}, \dots, a_{\ListSize} \in \Field$ such that $\phi_{\vec{\alpha}} = \sum_{i=1}^{\ListSize} a_{i} \phi_{\vec{\alpha}_{i}}$, or returning $\IndepTag$ if no such $a_{1}, \dots, a_{\ListSize}$ exist.

However, the dimension of the latter vectors is $\SCDegree^{\SCVars}$, which may be much larger than $\poly(\log \SetCardinality{\Field} + \SCVars + \SCDegree + \SetCardinality{\SCSubset} + \ListSize)$, and so we cannot afford to ``explicitly'' solve the $\ListSize \times \SCDegree^{\SCVars}$ linear system. Instead, we ``succinctly'' solve it, by taking advantage of the special structure of the vectors, as we now describe. For $\vec{\alpha} \in \Field^{\SCVars}$, define the polynomial
\begin{equation*}
\Phi_{\vec{\alpha}}(\vec{\VariableX}) \DefineEqual \prod_{i=1}^{\SCVars} (1 + \alpha_{i} \VariableX_{i} + \alpha_{i}^{2} \VariableX_{i}^{2} + \cdots + \alpha_{i}^{\SCDegree-1} \VariableX_{i}^{\SCDegree-1}) \enspace.
\end{equation*}
Note that, while the above polynomial is computable via a small arithmetic circuit, its coefficients (once expanded over the monomial basis) correspond to the entries of the vector $\phi_{\vec{\alpha}}$. More generally, for $\vec{\alpha} \in \Field^{\leq \SCVars}$, we define the polynomial
\begin{equation*}
\Phi_{\vec{\alpha}}(\vec{\VariableX})
  \DefineEqual
\left(
  \prod_{i=1}^{|\vec{\alpha}|}
    (1 + \alpha_{i} \VariableX_{i} + \dots + \alpha_{i}^{\SCDegree-1} \VariableX_{i}^{\SCDegree-1})
\right)
\left(
  \prod_{i=1}^{\SCVars - |\vec{\alpha}|}
    \sum_{\gamma \in \SCSubset}
      (1 + \gamma \VariableX_{i+|\vec{\alpha}|} + \dots + \gamma^{\SCDegree-1} \VariableX_{i+|\vec{\alpha}|}^{\SCDegree-1})
\right)
\enspace.
\end{equation*}
Note that $\Phi_{\vec{\alpha}}$ is a product of univariate polynomials. To see that the above does indeed represent $\phi_{\vec{\alpha}}$, we rearrange the expression as follows:
\begin{align*}
\Phi_{\vec{\alpha}}(\vec{\VariableX})
&=
\left(
  \prod_{i=1}^{|\vec{\alpha}|}
     (1 + \alpha_{i} \VariableX_{i} + \dots + \alpha_{i}^{\SCDegree-1} \VariableX_{i}^{\SCDegree-1})
\right)
\left(
  \sum_{\vec{\gamma} \in \SCSubset^{\SCVars - |\vec{\alpha}|}}
    \prod_{i=1}^{\SCVars - |\vec{\alpha}|}
      (1 + \gamma_{i} \VariableX_{i+|\vec{\alpha}|} + \dots + \gamma_{i}^{\SCDegree-1} \VariableX_{i+|\vec{\alpha}|}^{\SCDegree-1})
\right)
\\
&=
\Phi_{\vec{\alpha}}(\VariableX_{1}, \dots, \VariableX_{|\vec{\alpha}|})
\left(
  \sum_{\vec{\gamma} \in \SCSubset^{\SCVars - |\vec{\alpha}|}}
    \Phi_{\vec{\gamma}}(\VariableX_{|\vec{\alpha}|+1}, \dots, \VariableX_{\SCVars})
\right)
\enspace;
\end{align*}
indeed, the coefficient of $\vec{\VariableX}^{\vec{a}, \vec{b}}$ for $\vec{a} \in \LessThan{\SCDegree}^{|\vec{\alpha}|}$ and $\vec{b} \in \LessThan{\SCDegree}^{\SCVars - |\vec{\alpha}|}$ is $\vec{\alpha}^{\vec{a}} \sum_{\vec{\gamma} \in \SCSubset^{\SCVars - |\vec{\alpha}|}}  \vec{\gamma}^{\vec{b}}$, as required.

Thus, to determine whether $\phi_{\alpha} \in \Span(\phi_{\alpha_{1}}, \dots, \phi_{\alpha_{\ListSize}})$, it suffices to determine whether $\Phi_{\alpha} \in \Span(\Phi_{\alpha_{1}}, \dots, \Phi_{\alpha_{\ListSize}})$. In fact, the linear dependencies are in correspondence: for $a_{1}, \dots, a_{\ListSize} \in \Field$, $\phi_{\alpha} = \sum_{i=1}^{\ListSize} a_{i} \phi_{\alpha_{i}}$ if and only if $\Phi_{\alpha} = \sum_{i=1}^{\ListSize} a_{i} \Phi_{\alpha_{i}}$. Crucially, each $\Phi_{\alpha_{i}}$ is not only in $\PolynomialRingIndOne{\Field}{\SCVars}{\VariableX}{\SCDegree}$ but is a product of $\SCVars$ univariate polynomials each represented via an $\Field$-arithmetic circuit of size $\poly(\SetCardinality{\SCSubset} + \SCDegree)$. We leverage this special structure and solve the above problem by relying on an algorithm of \cite{RazS05} that computes the nullspace for such polynomials (see also \cite{Kayal10}), as captured by the lemma below;%
\footnote{One could use polynomial identity testing to solve the above problem in probabilistic polynomial time; see \cite[Lemma 8]{Kayal10}. However, due to a nonzero probability of error, this suffices only to achieve statistical zero knowledge, but \emph{does not suffice to achieve perfect zero knowledge}.}
for completeness, we provide an elementary proof of the lemma in \appref{sec:proof-of-deterministic-algorithm}.

\begin{lemma}
\label{lem:polydep-det}
There exists a deterministic algorithm $\DetAlgorithm$ such that, on input a vector of $\SCVars$-variate polynomials $\vec{\Poly} = (\Poly_{1}, \dots, \Poly_{\ListSize})$ over $\Field$ where each polynomial has the form $\Poly_{k}(\vec{\VariableX}) = \prod_{i=1}^{\SCVars} \Poly_{k,i}(\VariableX_{i})$ and each $\Poly_{k,i}$ is univariate of degree less than $\SCDegree$ with $\SCDegree \leq \SetCardinality{\Field}$ and represented via an $\Field$-arithmetic circuit of size $\CircuitSize$, outputs a basis for the linear space $\Dual{\vec{\Poly}}  \DefineEqual \{ (a_{1}, \dots, a_{\ListSize}) \in \Field^{\ListSize} : \sum_{k=1}^{\ListSize} a_{k} \Poly_{k} \equiv 0 \}$. Moreover, $\DetAlgorithm$ runs in $\poly(\log \SetCardinality{\Field} + \SCVars + \SCDegree + \CircuitSize + \ListSize)$ time.
\end{lemma}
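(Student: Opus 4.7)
The plan is to reduce the problem to finding linear dependencies among a sequence of structured tensors, and then to iteratively \emph{compress} those tensors while preserving all such dependencies. View each $\Poly_k \in \PolynomialRingIndOne{\Field}{\SCVars}{\VariableX}{\SCDegree}$ as a vector in $\Field^{\LessThan{\SCDegree}^{\SCVars}}$ via the monomial basis. Since $\Poly_k(\vec{\VariableX}) = \prod_{i=1}^{\SCVars} \Poly_{k,i}(\VariableX_{i})$, this vector is exactly the tensor product $\vec{p}_{k,1} \otimes \cdots \otimes \vec{p}_{k,\SCVars}$ of the coefficient vectors $\vec{p}_{k,i} \in \Field^{\SCDegree}$ of the univariates $\Poly_{k,i}$. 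As a first step, I would compute each $\vec{p}_{k,i}$ from its circuit: since $\SCDegree \leq \SetCardinality{\Field}$, this can be done by evaluating at $\SCDegree$ distinct points of $\Field$ and applying univariate interpolation, in $\poly(\log \SetCardinality{\Field} + \SCDegree + \CircuitSize)$ time per polynomial. With this representation, $\Dual{\vec{\Poly}}$ is exactly the left kernel of the $\ListSize \times \SCDegree^{\SCVars}$ matrix whose $k$-th row is $\vec{p}_{k,1} \otimes \cdots \otimes \vec{p}_{k,\SCVars}$; materializing this matrix is too expensive, so I would work with it implicitly.

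The core idea is to maintain, inductively in $i = 0, 1, \dots, \SCVars$, a matrix $M^{(i)} \in \Field^{\ListSize \times r_i}$ with $r_i \leq \ListSize$ whose \emph{left kernel coincides} with that of the $\ListSize \times \SCDegree^{i}$ matrix $B^{(i)}$ of partial tensors $\vec{p}_{k,1} \otimes \cdots \otimes \vec{p}_{k,i}$. Set $M^{(0)}$ to the all-ones column (the empty tensor equals $1$). For the inductive step, form the $\ListSize \times r_i\SCDegree$ matrix $N$ whose $k$-th row is $M^{(i)}_k \otimes \vec{p}_{k,i+1}$, and obtain $M^{(i+1)}$ from $N$ by keeping a maximal linearly independent subset of columns via Gaussian elimination; this forces $r_{i+1} \leq \ListSize$. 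Two correctness facts drive the step. First, taking a maximal linearly independent subset of columns preserves the left kernel (a short column-span argument). Second, and this is the substantive point, the left kernel of $N$ equals that of $B^{(i+1)}$: the equation $a^{\top}N = 0$ is equivalent to saying that for every $j \in [\SCDegree]$ the vector $a \odot v_{j}$, with $v_{j} = (\vec{p}_{k,i+1}[j])_k$, lies in the left kernel of $M^{(i)}$; by the inductive hypothesis this means it lies in the left kernel of $B^{(i)}$, and unpacking the tensor indices recovers exactly $a^{\top} B^{(i+1)} = 0$. After the last step, I would read off $\Dual{\vec{\Poly}}$ as the left kernel of $M^{(\SCVars)}$ via one final Gaussian elimination.

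Running time bookkeeping is then routine: each of the $\SCVars$ compression steps operates on a matrix of size at most $\ListSize \times \ListSize\SCDegree$ and costs $\poly(\ListSize + \SCDegree)$ field operations, and the initial circuit-to-coefficient conversions cost $\poly(\SCDegree + \CircuitSize)$ per univariate for the $\SCVars \cdot \ListSize$ total univariates. Each field operation takes $\polylog \SetCardinality{\Field}$ bit operations, yielding the claimed total of $\poly(\log \SetCardinality{\Field} + \SCVars + \SCDegree + \CircuitSize + \ListSize)$. The step I expect to require the most care in a formal write-up is verifying the compression invariant --- specifically, the equivalence between the left kernels of $N$ and $B^{(i+1)}$ --- since everything else (column reduction, univariate interpolation, Gaussian elimination) is standard.
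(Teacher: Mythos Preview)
Your proposal is correct and is essentially the same Raz--Shpilka-style compression argument as the paper's proof: both exploit that each $\Poly_k$ is a rank-one tensor to peel off one variable at a time while never tracking more than $\ListSize$ columns (or, dually, a kernel basis of dimension at most $\ListSize$). The only difference is organizational---you iterate bottom-up maintaining a column basis $M^{(i)}$ for the row span, whereas the paper recurses top-down maintaining a basis $T^{\perp}$ for the kernel and then solves a small linear system in $(\vec{a},\vec{v}_0,\dots,\vec{v}_{\SCDegree-1})$ at each level; these are equivalent via rank--nullity and yield the same running time.
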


The above lemma immediately provides a way to construct a constraint detector for $\PSRM$: given as input an index $\CodeIdx = (\Field,\SCVars,\SCDegree,\SCSubset)$ and a subset $\IdxSet \subseteq \Domain(\CodeIdx)$, we construct the arithmetic circuit $\Phi_{\alpha}$ for each $\alpha \in \IdxSet$, and then run the algorithm $\DetAlgorithm$ on vector of circuits $(\Phi_{\alpha})_{\alpha \in \IdxSet}$, and directly output $\DetAlgorithm$'s result. The lemma follows.
\end{proof}

\subsection{Univariate polynomials with BS proximity proofs}
\label{sec:bsrs-succinct-constraint-detection}

We show that evaluations of univariate polynomials concatenated with corresponding BS proximity proofs \cite{BS08} have succinct constraint detection (see \defref{def:constraint-detector}). Recall that the Reed--Solomon code (see \secref{sec:codes}) is not locally testable, but one can test proximity to it with the aid of the quasilinear-size proximity proofs of Ben-Sasson and Sudan \cite{BS08}. These latter apply when low-degree univariate polynomials are evaluated over \emph{linear spaces}, so from now on we restrict our attention to Reed--Solomon codes of this form. More precisely, we consider Reed--Solomon codes $\RSCode{\Field}{\BSSpace}{\RSDegree}$ where $\Field$ is an extension field of a base field $\SubField$, $\BSSpace$ is a $\SubField$-linear subspace in $\Field$, and $\RSDegree = \SetCardinality{\BSSpace} \cdot \SetCardinality{\SubField}^{-\BSBalance}$ for some $\BSBalance \in \Naturals^+$. We then denote by $\BSCode[\SubField,\Field,\BSSpace,\BSBalance,\BSBaseDim]$ the code obtained by concatenating codewords in $\RSCode{\Field}{\BSSpace}{\SetCardinality{\BSSpace} \cdot \SetCardinality{\SubField}^{-\BSBalance}}$ with corresponding BS proximity proofs whose recursion terminates at ``base dimension'' $\BSBaseDim \in \{1,\dots,\Dimension{\BSSpace}\}$ (for completeness we include a formal definition of these in \appref{sec:bsrs-formal-definitions}); typically $\SubField,\BSBalance,\BSBaseDim$ are fixed to certain constants (e.g., \cite{BS08} fixes them to $\Field_{2},3,1$, respectively) but below we state the cost of constraint detection in full generality. The linear code family $\BSCode$ is indexed by tuples $\CodeIdx = (\SubField,\Field,\BSSpace,\BSBalance,\BSBaseDim)$ and the $\CodeIdx$-th code is $\BSCode[\SubField,\Field,\BSSpace,\BSBalance,\BSBaseDim]$, and our result about $\BSCode$ is the following:

\begin{theorem}[formal statement of \ref{thm:intro-bsrs}]
\label{thm:bsrs-succinct-constraint-detection}
$\BSCode$ has $\poly(\log \SetCardinality{\Field} + \Dimension{\BSSpace} + \SetCardinality{\SubField}^{\BSBalance} + \ListSize)$-time constraint detection.
\end{theorem}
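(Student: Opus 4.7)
\medskip
\noindent
\textbf{Proof proposal.} The plan is to invoke \lemref{lem:bsrs-dual-char}: on input an index $\CodeIdx = (\SubField,\Field,\BSSpace,\BSBalance,\BSBaseDim)$ and subdomain $\IdxSet$, we will construct in time $\poly(\log \SetCardinality{\Field} + \Dimension{\BSSpace} + \SetCardinality{\SubField}^{\BSBalance} + \SetCardinality{\IdxSet})$ a set $\LocalSet$ with $\Puncture{(\Dual{\BSCode[\SubField,\Field,\BSSpace,\BSBalance,\BSBaseDim]})}{\IdxSet} \subseteq \Span(\LocalSet) \subseteq \Dual{\BSCode[\SubField,\Field,\BSSpace,\BSBalance,\BSBaseDim]}$. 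The whole argument exploits the recursive bivariate construction of BS proximity proofs.

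The first step is to formalize, for each recursion level $i \in \{1,\dots,r\}$ with $r = \log\log \RSDegree + O(1)$, a cover $\GetBSCode{\Cover}{i} = \{(\CDomain_{j},\CCode_{j})\}_{j}$ of $\GetBSCode{\Code}{i}$, where each $\CCode_{j}$ is a Reed--Solomon code of block length $O(\RSDegree^{1/2^{i}})$ evaluated along an axis-parallel ``row'' or ``column'' of the bivariate grid used in the $i$-th recursive step, together with the auxiliary points of its nested proximity proof. At level $i=1$ the three combinatorial properties we need --- $1$-intersecting, $O(\sqrt{\RSDegree})$-local, and $O(\sqrt{\RSDegree})$-independent --- follow from elementary bivariate Reed--Muller facts (two axis-parallel lines meet in at most one point; any collection of $O(\sqrt{\RSDegree})$ row/column partial assignments extends to a bivariate polynomial of the right degrees). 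An inductive step, formalized as \lemref{lemma:bsrs-recursive-and-independent-implies-local}, then transfers these three properties from level $i$ to level $i+1$: intersection is preserved because nested auxiliary domains inherit almost-disjointness; and the independence property at level $i$ is exactly what guarantees that locally-consistent partial assignments to level-$(i+1)$ views --- augmented by the $O(\RSDegree^{1/2^{i+1}})$ ``slack'' entries that independence provides at that next level --- can be simultaneously extended to a codeword of $\GetBSCode{\Code}{i+1}$. Unrolling the recursion yields that $\GetBSCode{\Cover}{r}$ is $1$-intersecting, $O(1)$-local, and $O(1)$-independent.

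The second step is the reduction lemma (\lemref{lemma:bsrs-cover-based-implies-succinct}): if $\GetBSCode{\Code}{i}$ admits a cover that is both $1$-intersecting and $t$-local for $t \geq \SetCardinality{\IdxSet}$, then every constraint $\OtherCodeword \in \Puncture{(\Dual{\GetBSCode{\Code}{i}})}{\IdxSet}$ is a linear combination of constraints each of whose support lies in a single subdomain $\CDomain_{j}$ that meets $\IdxSet$. The intuition is standard: locality implies that the restriction map $\GetBSCode{\Code}{i} \to \Restrict{\GetBSCode{\Code}{i}}{\cup_{j:\CDomain_{j}\cap\IdxSet\neq\emptyset}\CDomain_{j}}$ is surjective onto the product of the local codes $\CCode_{j}$ (up to the sparse cross-view compatibility enforced by the $1$-intersections), so by dualizing and using \clmref{claim:bsrs-dual-of-projection} the dual with support in $\IdxSet$ decomposes along the cover. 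Since $1$-intersection implies that at most $\SetCardinality{\IdxSet}$ views meet $\IdxSet$, and each has block length polynomial in $\SetCardinality{\IdxSet}$ (after choosing $i$ minimal with $O(\RSDegree^{1/2^{i}}) \geq \SetCardinality{\IdxSet}^{c}$ for a suitable $c$), a basis for the dual of each intersecting $\CCode_{j}$ restricted to $\CDomain_{j}\cap\IdxSet$ is computable in $\poly(\SetCardinality{\IdxSet} + \log \SetCardinality{\Field})$ time by Gaussian elimination. The detector simply outputs the union of these local duals; the $1$-intersection property, combined with a handful of additional cross-view compatibility constraints derived from the univariate evaluations at the $O(1)$ shared points, yields the spanning set $\LocalSet$ required by \lemref{lem:bsrs-dual-char}.

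I expect the main obstacle to be the inductive preservation of the $O(\RSDegree^{1/2^{i}})$-independence property across recursive levels. Locality alone propagates easily because one can paste together codewords view-by-view; however, independence requires that extensions remain feasible after arbitrary additional point-value pairs are appended, and at level $i+1$ these ``extra'' points may lie in auxiliary proximity-proof domains that were themselves only introduced by the $(i+1)$-th recursive step, so the argument has to carefully relate the independence slack at level $i$ (which operates on the bivariate evaluation domain) to the slack needed at level $i+1$ (which includes new auxiliary coordinates). Handling this correctly --- and in particular verifying that the constants hidden in the $O(\cdot)$ do not blow up multiplicatively over the $\log\log\RSDegree$ levels of recursion --- is the technical heart of the proof; once in place, the passage to constraint detection via \lemref{lemma:bsrs-cover-based-implies-succinct} and \lemref{lem:bsrs-dual-char} is essentially mechanical.
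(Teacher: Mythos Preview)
Your overall plan is correct and essentially matches the paper's: build a recursive cover of $\BSCode$, establish intersection/independence/locality, pick the right depth so that views have block length $\poly(\SetCardinality{\IdxSet})$, and then derive the spanning set $\LocalSet$ via \clmref{claim:bsrs-independent-dual} and \lemref{lem:bsrs-dual-char}.

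Where you diverge from the paper --- and where your stated ``main obstacle'' dissolves --- is in how independence is obtained. You propose to propagate independence inductively from level $i$ to level $i+1$, and you worry about extra points landing in auxiliary proximity-proof domains and about constants compounding over the $\log\log\RSDegree$ levels. The paper never needs such an induction. The key structural fact is that every vertex $v$ of the recursive cover has $\CCode_{v} = \BSCode[\SubField,\Field,\BSSpace_{v},\BSBalance,\BSBaseDim]$ for some subspace $\BSSpace_{v}$ with $\Dimension{\BSSpace_{v}}$ roughly $\Dimension{\BSSpace}\cdot 2^{-\TreeDepth{\TreeCover,v}}$, and the cover $\CoverAtVertex{\TreeCover}{v}$ is just the \emph{native} row/column cover of this smaller $\BSCode$ instance. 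Independence of $\CoverAtVertex{\TreeCover}{v}$ therefore follows \emph{directly}, at each $v$ separately, from the same bivariate Reed--Muller interpolation argument you already invoke at level $1$ --- applied to the bivariate grid at that level --- with the independence parameter determined by $\Dimension{\BSSpace_{v}}$. There is no transfer of independence across levels and hence nothing to compound; only locality is obtained inductively, via \lemref{lemma:bsrs-recursive-and-independent-implies-local}, which \emph{assumes} independence at every internal node and concludes locality of the depth-$\CDepth$ restriction.

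Two smaller points. First, ``$1$-intersection implies that at most $\SetCardinality{\IdxSet}$ views meet $\IdxSet$'' is the wrong justification; many more views could touch $\IdxSet$. What you actually need (and what the paper does) is simply to \emph{choose}, for each $\alpha\in\IdxSet$, one view containing it, yielding a set $U$ with $\SetCardinality{U}\le\SetCardinality{\IdxSet}$. Second, no ``additional cross-view compatibility constraints'' are needed in $\LocalSet$: once the depth-$\CDepth$ cover is $\SetCardinality{\IdxSet}$-local, \clmref{claim:bsrs-independent-dual} gives $\Span(\cup_{u\in U}\Dual{\CCode_{u}}) = \Puncture{(\Dual{\Code})}{(\cup_{u\in U}\CDomain_{u})}$ outright, so the union of local dual bases already spans everything required.
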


The proof of the above theorem is technically involved, and we present it via several steps, as follows.
\begin{inparaenum}[(1)]
  \item In \secref{sec:code-covers} we introduce the notion of a \emph{code cover} and two key combinatorial properties of these: \emph{$\IdpParam$-locality} and \emph{$\IdpParam$-independence}.
  \item In \secref{sec:recursive-code-covers} we introduce the notion of a \emph{recursive} code cover and relate its combinatorial properties to those of (standard) code covers.
  \item In \secref{sec:efficient detectors} we show how to construct succinct constraint detectors starting from algorithms that detect constraints only `locally' for code covers and recursive code covers.
  \item In \secref{sec:proof-of-theorem-bsrs-detector} we show that $\BSCode$ has a recursive code cover with the requisite properties and thus implies, via the results of prior steps, a succinct constraint detector, as claimed.
\end{inparaenum}
Several sub-proofs are deferred to the appendices, and we provide pointers to these along the way.

\parhead{The role of code covers}
We are interested in succinct constraint detection: solving the constraint detection problem for certain code families with exponentially-large domains (such as $\BSCode$). We now build some intuition about how code covers can, in some cases, facilitate this.

Consider the simple case where the code $\Code \subseteq \Field^{\Domain}$ is a direct sum of many small codes: there exists $\Cover =\Set{(\CDomain_{j},\CCode_{j})}_{j}$ such that $\Domain = \cup_{j} \CDomain_{j}$ and $\Code = \oplus_{j} \CCode_{j}$ where, for each $j$, $\CCode_{j}$ is a linear code in $\Field^{\CDomain_{j}}$ and the subdomain $\CDomain_{j}$ is small and disjoint from other subdomains. The detection problem for this case can be solved efficiently: use the generic approach of Gaussian elimination independently on each subdomain $\CDomain_{j}$.

Next consider a more general case where the subdomains are not necessarily disjoint: there exists $\Cover =\Set{(\CDomain_{j},\CCode_{j})}_{j}$ as above but we do not require that the $\CDomain_{j}$ form a partition of $\Domain$; we say that each $(\CDomain_{j},\CCode_{j})$ is a \emph{local view} of $\Code$ because $\CDomain_{j} \subseteq \Domain$ and $\CCode_{j} = \Restrict{\Code}{\CDomain_{j}}$, and we say that $\Cover$ is a \emph{code cover} of $\Code$. Now suppose that for each $j$ there exists an efficient constraint detector for $\CCode_{j}$ (which is defined on $\CDomain_{j}$); in this case, the detection problem can be solved efficiently at least for those subsets $\IndexSet$ that are contained in $\CDomain_{j}$ for some $j$. Generalizing further, we see that we can efficiently solve constraint detection for a code $\Code$ if there is a cover $\Cover =\Set{(\CDomain_{j},\CCode_{j})}_{j}$ such that, given a subset $\IndexSet \subseteq \Domain$,
\begin{inparaenum}[(i)]
\item $\IndexSet$ is contained in some subdomain $\CDomain_{j}$, and
\item constraint detection for $\CCode_{j}$ can be solved efficiently.
\end{inparaenum}

We build on the above ideas to derive analogous statements for recursive code covers, which arise naturally in the case of $\BSCode$. But note that recursive constructions are common in the PCP literature, and we believe that our cover-based techniques are of independent interest as, e.g., they are applicable to \emph{other} PCPs, including \cite{BFLS91,AroraS98}.

\subsubsection{Covering codes with local views}
\label{sec:code-covers}

The purpose of this section is to formally define the notion of cover and certain combinatorial properties of these.

\begin{definition}
\label{def:bsrs-code-cover}
Let $\Code$ be a linear code with domain $\Domain$ and alphabet $\Field$. A \defemph{(local) view} of $\Code$ is a pair $(\CDomain,\CCode)$ such that $\CDomain \subseteq \Domain$ and $\Restrict{\Code}{\CDomain} = \CCode$. A \defemph{cover} of $\Code$ is a set of local views $\Cover =\Set{(\CDomain_{j},\CCode_{j})}_{j}$ of $\Code$ such that $\Domain = \cup_{j} \CDomain_{j}$. Also, we define a cover's domain intersection as $\InterDomain{\Cover} \DefineEqual \cup_{i \neq j} ( \CDomain_{i} \cap \CDomain_{j} )$ and, given a set $J$, we define $\CDomain_{J} \DefineEqual \cup_{j \in J} \CDomain_{j}$.
\end{definition}

\begin{example}[line cover of $\RM$]
\label{example:rm-line-cover}
Suppose for instance that $\Code$ is the Reed--Muller code $\RMCode{\Field}{\Field}{\RMVars}{\RMDegree}$: $\Code$ consists of evaluations over $\Domain = \Field^{\RMVars}$ of polynomials in $\PolynomialRingIndOne{\Field}{\RMVars}{\VariableX}{\RMDegree}$ (see \defref{sec:codes}). A cover of $\Code$ that is extensively studied in the PCP and property-testing literature is the one given by (axis-parallel) \emph{lines}. A line (in the $i$-th direction) is a set of $\SetCardinality{\Field}$ points that agree on all but one coordinate (the $i$-th one); and the \emph{line cover} of $\Code$ is thus $\Cover = \Set{(\CDomain_{\ell}, \CCode_{\ell})}$ where $\ell$ ranges over all (axis-parallel) lines and $\CCode_{\ell}$ is the Reed--Solomon code $\RSCode{\Field}{\Field}{\RSDegree}$ (see \defref{sec:codes}).

Observe that the domain intersection of the line cover equals $\Field^{\RMVars}$, which is also the domain $\Domain$ of the base code $\Code$. However, for $\BSCode$, we consider a cover whose domain intersection is a strict subset of $\Domain$ (see \appref{sec:bsrs-full-proof}).
\end{example}

Next, we specify a notion of \emph{locality} for covers. A partial assignment $\Codeword' \in \Field^{\Domain'}$ is \emph{locally consistent} with a cover  $\Cover =\Set{(\CDomain_{j},\CCode_{j})}_{j}$ if for every local view $(\CDomain_{j},\CCode_{j})$ with $\CDomain_{j} \subseteq \Domain'$ the restriction $\Restrict{\Codeword'}{\CDomain_{j}}$ is a codeword of $\CCode_{j}$. Then we say that a cover is \emph{$\IdpParam$-local} if any locally consistent assignment $\Codeword' \in \Field^{\Domain'}$, where $\Domain'$ is a union of at most $\IdpParam$ domains in the cover, can be extended to a ``globally consistent'' codeword $\Codeword$ of $\Code$.

\begin{definition}
\label{def:bsrs-local-cover}
Let $\Code$ be a linear code with domain $\Domain$ and alphabet $\Field$. Given $\IdpParam \in \Naturals$, a \defemph{$\IdpParam$-local cover} of $\Code$ is a cover $\Cover = \Set{ (\CDomain_{j},\CCode_{j}) }_{j}$ of $\Code$ such that: for every subset of view-indices $J$ of size at most $\IdpParam$ and every word $\Codeword' \in \Field^{\CDomain_{J}}$ with $\Restrict{\Codeword'}{\CDomain_{j}} \in \CCode_{j}$ (for every $j \in J$), the word $\Codeword'$ can be extended to some word $\Codeword$ in $\Code$, i.e., $\Codeword$ satisfies $\Restrict{\Codeword}{\CDomain_{J}} = \Codeword'$). (The trivial cover $\Cover = \Set{(\Domain,\Code)}$ of $\Code$ is $\IdpParam$-local for every $\IdpParam$.)
\end{definition}

The following definition significantly strengthens the previous one. Informally, a cover is \emph{$\IdpParam$-independent} if every partial assignment over a subdomain $\Domain'$ that is the union of $\IdpParam$ subdomains from the cover \emph{and} $\IdpParam$ auxiliary locations can be extended to a ``globally consistent'' codeword. We use this stronger notion in our main \lemref{lemma:bsrs-recursive-and-independent-implies-local}.

\begin{definition}
\label{def:bsrs-independent-cover}
Let $\Code$ be a linear code with domain $\Domain$ and alphabet $\Field$.
Given $\IdpParam \in \Naturals$, a \defemph{$\IdpParam$-independent cover} of $\Code$ is a $\IdpParam$-local cover $\Cover = \Set{ (\CDomain_{j},\CCode_{j}) }_{j}$ such that for every set $J$ of size at most $\IdpParam$, subdomain $\Domain' \subseteq \InterDomain{\Cover}$ of size at most $\IdpParam$, and $\Codeword' \in \Field^{\Domain' \cup \CDomain_{J}}$ with $\Restrict{\Codeword'}{\CDomain_{j}} \in \CCode_{j}$ for every $j \in J$, there exists $\Codeword \in \Code$ such that $\Restrict{\Codeword}{\Domain' \cup \CDomain_{J}} = \Codeword'$. (The trivial cover $\Cover=\Set{(\Domain,\Code)}$ of $\Code$ is $\IdpParam$-independent for every $\IdpParam\in\Naturals$ because it is $\IdpParam$-local and has $\InterDomain{\Cover}=\emptyset$.)
\end{definition}

\begin{example}
The line cover from \exmpref{example:rm-line-cover} is $\RMDegree$-local because any evaluation on at most $\RMDegree$ (axis-parallel) lines $\ell_{1},\dots,\ell_{\RMDegree}$ that is a polynomial of degree $(\RMDegree-1)$ along each of $\ell_{1},\dots,\ell_{\RMDegree}$ can be extended to a codeword of $\RMCode{\Field}{\Field}{\RMVars}{\RMDegree}$. Furthermore, it is $\RMDegree/2$-independent because any locally consistent assignment to $\RMDegree/2$ such lines and $\RMDegree/2$ points $p_{1},\dots,p_{\RMDegree/2}$ can be extended to a valid Reed--Muller codeword. To see this, observe that there exists an interpolating set $H_{1} \times \cdots \times H_{\RMVars}$ (with $\SetCardinality{H_i}=\RMDegree$) that contains $p_{1},\dots, p_{\RMDegree/2}$ and intersects each line in $\RMDegree$ points; we shall later use this observation for the bivariate case ($\RMVars=2$), a proof for that case is provided at \clmref{claim:bsrs-rm-folklore}.
\end{example}

\subsubsection{Recursive covers and locality}
\label{sec:recursive-code-covers}

Proximity proofs for codes such as the Reed--Solomon code and the Reed--Muller code are typically obtained via techniques of proof composition \cite{AroraS98}. Informally, a problem is reduced to a set of smaller sub-problems of the same kind (which are usually interconnected), and a sub-proof is constructed for each sub-problem. This process leads to a proof for the original problem that is ``covered'' by the sub-proofs for the sub-problems, and naturally imply a cover of the proof by these sub-proofs. This process is then repeated recursively until the sub-problems are small enough for the verifier to check directly --- and in our case leads to the notion of \emph{recursive covers}, which we define below.

To support the definition of a recursive cover, we first introduce notation for rooted trees. Edges in a rooted tree $\TreeCover=(V,E)$ are directed from the root $\TreeRoot$ towards the leaves; the edge directed from $v$ to $u$ is denoted $(v,u)$; $v$ is the \emph{predecessor} of $u$ and $u$ the \emph{successor} of $v$; if there is a path from $v$ to $v'$ we say that $v$ is an \emph{ancestor} of $v'$; if there is no directed path between $v$ and $v'$ (in either direction) we say that the two vertices are \emph{disconnected}. The \emph{set of successors} of $v$ is denoted $\Children{\TreeCover,v}$. The \emph{depth} of a vertex $v$ in $\TreeCover$ is denoted $\TreeDepth{\TreeCover,v}$ and equals the number of edges on the path from $r$ to $v$. The depth of $\TreeCover$ is denoted $\TreeDepth{\TreeCover}$ and equals the maximum of $\TreeDepth{\TreeCover,v}$ as $v$ ranges in $V$. The $i$-th \emph{layer} of $\TreeCover$ is denoted $\TreeLayer{\TreeCover,i}$ and equals the set of $v \in V$ such that $\TreeDepth{\TreeCover,v}=i$. (Note that $\TreeDepth{\TreeCover,\TreeRoot}=0$ and $\TreeLayer{\TreeCover,0} = \Set{\TreeRoot}$.) An \emph{equidepth} tree is a tree in which all leaves have equal depth.

\begin{definition}
\label{def:bsrs-recursive-code-cover}
Let $\Code$ be a linear code with domain $\Domain$ and alphabet $\Field$. A \defemph{recursive cover} of $\Code$ is a directed rooted equidepth tree $\TreeCover$ of non-zero depth where each vertex $v$ is labeled by a view $(\CCode_{v},\CDomain_{v})$ such that:
\begin{inparaenum}[(i)]
\item $\CCode_{v}$ is a linear code with domain $\CDomain_{v}$ and alphabet $\Field$;
\item if $v$ is the root, then $(\CCode_{v},\CDomain_{v})=(\Code,\Domain)$; and
\item for every non-leaf $v$ the set $\CoverAtVertex{\TreeCover}{v} \DefineEqual \Set{(\CCode_{u},\CDomain_{u})}_{u \in \Children{\TreeCover,v}}$ is a cover of $\CCode_{v}$.
\end{inparaenum}
Furthermore we define the following notions:
\begin{itemize}

  \item Given $\CDepth \in \{0,\dots,\TreeDepth{\TreeCover}\}$, the \defemph{$\CDepth$-depth restriction} of $\TreeCover$ is $\CoverAtLayer{\TreeCover}{\CDepth} \DefineEqual \bigcup_{v\in \TreeLayer{\TreeCover,\CDepth}} \Set{(\CCode_{v}, \CDomain_{v})}$. (Note that $\CoverAtLayer{\TreeCover}{0}=\Set{(\Code, \Domain)}$.)

  \item Given $\IntParam \in \Naturals$, we say that $\TreeCover$ is \defemph{$\IntParam$-intersecting} if $\SetCardinality{\CDomain_{u} \cap \CDomain_{u'}} \leq \IntParam$ for every two disconnected vertices $u,v$.

  \item Given $\IdpParam \in \Naturals$, we say that $\TreeCover$ is \defemph{$\IdpParam$-independent} if $\CoverAtVertex{\TreeCover}{v}$ is a $\IdpParam$-independent cover of $\CCode_{v}$ for every non-leaf vertex $v$ in $\TreeCover$.
\end{itemize}
\end{definition}

\begin{remark}
The above definition is restricted to equidepth trees, but can be extended to general trees as follows. Iteratively append to each leaf $v$ of non-maximal depth a single successor $u$ labeled by $(\CCode_{u},\CDomain_{u}) \DefineEqual (\CCode_{v},\CDomain_{v})$; this leads to a cover of $\CoverAtVertex{\TreeCover}{v}$ that is $0$-intersecting and $\IdpParam$-doubly independent for $\IdpParam$ that equals $\CCode_{v}$'s dual distance.
\end{remark}

%
%

Below we state the main lemma of this section. This lemma says that (given certain restrictions) if a recursive cover has the \emph{local} property of \emph{independence} (of some degree) at each internal vertex, then each of its layers has the \emph{global} property of \emph{locality} (of some degree) as a cover of the root. Later on (in \secref{sec:efficient detectors}) we show how cover locality is used to construct constraint detectors.

\begin{lemma}[main]
\label{lemma:bsrs-recursive-and-independent-implies-local}
Let $\Code$ be a linear code with domain $\Domain$ and alphabet $\Field$, and let $\TreeCover$ be a recursive cover of $\Code$ such that
\begin{inparaenum}[(i)]
  \item $\TreeCover$ is $\IntParam$-intersecting for $\IntParam>0$, and
  \item for every non-leaf vertex $v$ in $\TreeCover$ it holds that $\CoverAtVertex{\TreeCover}{v}$ is a $\IdpParam$-independent cover of $\CCode_{v}$.
\end{inparaenum}
Then, for every $\CDepth \in \{0,\dots,\TreeDepth{\TreeCover}\}$, $\CoverAtLayer{\TreeCover}{\CDepth}$ is a $\frac{\IdpParam}{\IntParam}$-local cover of $\Code$.
\end{lemma}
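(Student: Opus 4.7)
The natural strategy is induction on the layer depth $\CDepth$. The base case $\CDepth=0$ is immediate: $\CoverAtLayer{\TreeCover}{0} = \Set{(\Code,\Domain)}$, so any word locally consistent with this single-view cover is by definition a codeword of $\Code$, which extends to itself.

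For the inductive step, suppose $\CoverAtLayer{\TreeCover}{\CDepth-1}$ is already known to be $\IdpParam/\IntParam$-local. Given $J \subseteq \TreeLayer{\TreeCover,\CDepth}$ with $\SetCardinality{J} \leq \IdpParam/\IntParam$ and a locally consistent word $\Codeword' \in \Field^{\CDomain_J}$, the plan is to first lift $\Codeword'$ to a locally consistent word $\Codeword''$ on $\CDomain_{J'}$, where $J' \subseteq \TreeLayer{\TreeCover,\CDepth-1}$ is the set of parents (in $\TreeCover$) of vertices in $J$; then the inductive hypothesis, applicable since $\SetCardinality{J'} \leq \SetCardinality{J} \leq \IdpParam/\IntParam$, extends $\Codeword''$ further to a codeword of $\Code$.

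To construct the lift, I would fix an ordering $v'_1,\ldots,v'_s$ of $J'$ and process parents one at a time, extending $\Codeword''$ to $\CDomain_{v'_i}$ at step $i$ by applying $\IdpParam$-independence of $\CoverAtVertex{\TreeCover}{v'_i}$: the ``views'' input is $K_{v'_i} = \Set{v \in J : v \text{ is a child of } v'_i}$ (carrying values inherited from $\Codeword'$), while the ``extra points'' input is $\Domain'_i = \bigcup_{j<i}(\CDomain_{v'_i} \cap \CDomain_{v'_j})$ (carrying values already fixed by previous steps). The counting works out: $\SetCardinality{K_{v'_i}} \leq \SetCardinality{J} \leq \IdpParam$, and since distinct same-depth vertices $v'_i,v'_j$ are disconnected in $\TreeCover$, the $\IntParam$-intersecting hypothesis yields $\SetCardinality{\CDomain_{v'_i} \cap \CDomain_{v'_j}} \leq \IntParam$, whence $\SetCardinality{\Domain'_i} \leq (i-1)\IntParam < \IdpParam$, comfortably within the budgets allowed by $\IdpParam$-independence.

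The main obstacle I anticipate is verifying the containment $\Domain'_i \subseteq \InterDomain{\CoverAtVertex{\TreeCover}{v'_i}}$, which is the prerequisite for invoking the definition of $\IdpParam$-independence. A point $p \in \CDomain_{v'_i} \cap \CDomain_{v'_j}$ certainly sits in some child of $v'_i$, but one must argue it sits in \emph{two} distinct children to place it in $\InterDomain{\CoverAtVertex{\TreeCover}{v'_i}}$. I expect this to follow from a combinatorial property of recursive covers that forces cross-parent overlap points into each parent's children-intersection region, possibly requiring a strengthened inductive hypothesis that also tracks extra points in the intersection regions of each layer. Once this containment is secured, the remaining bookkeeping is routine, and the incremental construction produces the desired lift $\Codeword''$.
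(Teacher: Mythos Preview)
Your plan is the paper's own proof almost verbatim: induction on depth, and in the inductive step process the parents $v'_1,\ldots,v'_s$ one at a time, at each step invoking $\IdpParam$-independence of $\CoverAtVertex{\TreeCover}{v'_i}$ with the children $K_{v'_i}$ as the view-input and the already-constrained leftover points as the extra-points input.

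One correction to your bookkeeping: the set $\Domain'_i=\bigcup_{j<i}(\CDomain_{v'_i}\cap\CDomain_{v'_j})$ is too small. Besides values fixed in earlier iterations, the initialization on $\CDomain_J$ may already fix points of $\CDomain_{v'_i}$ lying in $\CDomain_v$ for some $v\in J$ that is a child of a \emph{later} parent $v'_j$ (so $j>i$); such points sit in $\CDomain_{v'_i}\cap\CDomain_{v'_j}$ but are missed by your union and are not covered by the view-input $K_{v'_i}$ either. The paper instead takes the extra-points set to be $W=(\CDomain_{v'_i}\cap\defDomain{\Codeword''})\setminus\bigcup_{v\in K_{v'_i}}\CDomain_v$ and shows $W\subseteq\bigcup_{j\neq i}(\CDomain_{v'_i}\cap\CDomain_{v'_j})$; the size bound $\SetCardinality{W}\leq\IntParam\cdot\SetCardinality{J'}\leq\IdpParam$ is unaffected.

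As to the obstacle you flag --- the containment of the extra points in $\InterDomain{\CoverAtVertex{\TreeCover}{v'_i}}$ --- the paper disposes of it in a single clause, asserting the containment directly from $W\subseteq\bigcup_{j\neq i}(\CDomain_{v'_i}\cap\CDomain_{v'_j})$ with no further argument, no strengthened induction, and no auxiliary combinatorial lemma. So you have not missed a trick; the paper simply asserts it.
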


\begin{proof}
We prove the statement by induction on the non-negative integer $\CDepth$. The base case is when $\CDepth=0$, and holds because $\CoverAtLayer{\TreeCover}{0} = \Set{(\Domain,\Code)}$ is the trivial cover, thus it is a $\IdpParam'$-local cover of $\Code$ for any $\IdpParam' \ge 0$ and, in particular, a $\frac{\IdpParam}{\IntParam}$-local cover. We now assume the statement for $\CDepth<\TreeDepth{\TreeCover}$ and prove it for depth $\CDepth+1$.

Let $\CoverAtLayer{\TreeCover}{\CDepth} = \Set{(\CDomain_{j},\CCode_{j})}_{j}$ be the $\CDepth$-depth cover of $\Code$, and let $\CoverAtLayer{\TreeCover}{\CDepth+1} = \Set{(\CDomain_{i,j},\CCode_{i,j})}_{i,j}$ be the $(\CDepth+1)$-depth cover of $\Code$, where, for every $i$, $\CoverAtLayer{\TreeCover}{\CDepth+1}^{(i)} = \Set{ (\CDomain_{i,j},\CCode_{i,j})}_{j}$ is the cover of $\CCode_{i}$ (this can be ensured via suitable indexing). Let $J$ be a set of pairs $(i,j)$ of size at most $\frac{\IdpParam}{\IntParam}$, and let $\Codeword' \in \Field^{\CDomain_{J}}$ be such that $\Restrict{\Codeword'}{\CDomain_{i,j}} \in \CCode_{i,j}$ for every $(i,j) \in J$. We show that there exists $\Codeword \in \Code$ such that $\Restrict{\Codeword}{\CDomain_{J}} = \Codeword'$. Define $I \DefineEqual \Set{ i : \exists\, j \text{ s.t. } (i,j) \in J}$ and note that $\SetCardinality{I} \leq \SetCardinality{J} \leq \frac{\IdpParam}{\IntParam}$. By the inductive assumption, it suffices to show that there exists $\Codeword \in \Field^{\CDomain_{I}}$ such that
\begin{inparaenum}[(a)]
  \item $\Restrict{\Codeword}{\CDomain_{i,j}} = \Restrict{\Codeword'}{\CDomain_{i,j}}$ for every $(i,j) \in J$, and
  \item $\Restrict{\Codeword}{\CDomain_{i}} \in \CCode_{i}$ for every $i \in I$.
\end{inparaenum}

For simplicity assume $I=\Set{1,\dots,\SetCardinality{I}}$. We construct $\Codeword$ incrementally and view $\Codeword$ as belonging to $(\Field \cup \Set{\Undefined})^{\CDomain_{I}}$, i.e., it is a partial mapping from $\CDomain_{I}$ to $\Field$. Let $\defDomain{\Codeword} \DefineEqual \Set{\alpha \in \CDomain_{I} : \Codeword(\alpha) \neq \Undefined}$ denote the set of locations where $\Codeword$ is defined. Initialize $\defDomain{\Codeword} = \CDomain_{J}$ and $\Restrict{\Codeword}{\CDomain_{J}} = \Codeword'$; then, for increasing $i=1,\dots,\SetCardinality{I}$, iteratively extend $\Codeword$ to be defined (also) over $\CDomain_{i}$, eventually obtaining $\Codeword \in \Field^{\CDomain_{I}}$. In the $i$-th iteration (that handles $\CDomain_{i}$), it is sufficient to prove the existence of a codeword $\Codeword_{i} \in \CCode_{i}$ such that $\Restrict{\Codeword_{i}}{\CDomain_{i} \cap \defDomain{\Codeword}} = \Restrict{\Codeword}{\CDomain_{i} \cap \defDomain{\Codeword}}$. If such a codeword exists then we shall define $\Codeword$ on $\CDomain_{i}$ by $\Restrict{\Codeword}{\CDomain_{i}} = \Codeword_{i}$, thus eventually reaching $\Codeword$ that satisfies the stated requirements.

To show that during the $i$-th iteration the desired $\Codeword_{i}$ exists, partition the elements of $\CDomain_{i}\cap\defDomain{\Codeword}$ into two sets: $\SetA \DefineEqual \cup_{j \text{ s.t. } (i,j) \in J} \CDomain_{i,j}$ and $\SetB \DefineEqual (\CDomain_{i}\cap\defDomain{\Codeword}) \setminus \SetA$. Note that $\SetB \subseteq \cup_{i' \neq i} ( \CDomain_{i} \cap \CDomain_{i'} )$, because defining $\Codeword(\alpha)$ for any $\alpha \in \SetB$ can be done only:
\begin{inparaenum}[(i)]
  \item in the initialization phase, so that $\alpha \in \CDomain_{i',j'} \subseteq \CDomain_{i'}$ for some $(i',j') \in J$ with $i' \neq i$ (as otherwise $\alpha \in \SetA$); or
  \item in a previous iteration, so that $\alpha \in \CDomain_{i'}$ for some $i' <i$ (as $\CDomain_{i'}$ was already handled and $\Codeword$ is already defined on all of $\CDomain_{i'}$).
\end{inparaenum}
The above implies that $\SetB \subseteq \InterDomain{\CoverAtLayer{\TreeCover}{\CDepth+1}^{(i)}}$ and the assumption that $\TreeCover$ is $\IntParam$-intersecting implies
\begin{equation*}
\SetCardinality{\SetB}
\leq \sum_{i' \ne i} \SetCardinality{\CDomain_{i} \cap \CDomain_{i'}}
\leq \IntParam \cdot  \SetCardinality{I}
\leq \IntParam\cdot \SetCardinality{J}
\leq \IdpParam
\enspace.
\end{equation*}
Similarly, note that for every fixed $i \in I$ the number of pairs $(i,j) \in J$ is at most $\SetCardinality{J} \leq \IdpParam$. By assumption $\CCode_{i}$ has a $\IdpParam$-independent cover and thus we conclude (via \defref{def:bsrs-independent-cover}) that the desired $\Codeword_{i}$ exists, as required.
\end{proof}

\subsubsection{From recursive covers to succinct constraint detection}
\label{sec:efficient detectors}

The purpose of this section is to establish sufficient conditions for succinct constraint detection by leveraging covers with small-enough views and large-enough locality. First, in \defref{def:bsrs-cover-based-detector} and \lemref{lemma:bsrs-cover-based-implies-succinct}, we define \emph{cover-based constraint detection} and prove that it implies succinct constraint detection; informally, we consider the case when a code has a sequence of covers where view size and locality reduce together, and prove that we can locally detect constraints in a number of views that is proportional to the constraint's weight and each view's size is proportional to the constraint's weight, by choosing the right cover from the sequence. Then, in \defref{def:bsrs-recursive-implies-cover-based} and \lemref{lem:bsrs-recursive-implies-local}, we extend our discussion to recursive code covers by defining \emph{recursive-cover-based constraint detection} and establishing that it implies the previous notion. We conclude (in \corref{cor:bsrs-recursive-implies-succinct}) that recursive-cover-based constraint detection implies succinct constraint detection.

\begin{definition}
\label{def:bsrs-cover-based-detector}
Let $\CodeClass = \Set{\Code_{\CodeIdx}}_{\CodeIdx}$ be a linear code family with domain $\Domain(\cdot)$ and alphabet $\Field(\cdot)$. We say that $\CodeClass$ has \defemph{cover-based constraint detection} if there exists an algorithm that, given an index $\CodeIdx$ and subset $\IndexSet \subseteq \Domain(\CodeIdx)$, outputs in $\poly(\BitSize{\CodeIdx} + \SetCardinality{\IndexSet})$ time a subset $\LocalSet \subseteq \Field(\CodeIdx)^{\Domain(\CodeIdx)}$ for which there exists a subset $\Cover'$ of some $\SetCardinality{\IndexSet}$-local cover $\Cover$ of $\Code_{\CodeIdx}$, and the following holds:
\begin{inparaenum}[(i)]
  \item $\SetCardinality{\Cover'} \leq \SetCardinality{\IndexSet}$;
  \item $\IndexSet \subseteq (\cup_{(\CDomain,\CCode) \in \Cover'}\, \CDomain)$;
  \item \label{step:ambiguous-equality} $\Span(\LocalSet) = \Span(\cup_{(\CDomain,\CCode) \in \Cover'}\, \Dual{\CCode})$.
\end{inparaenum}
\end{definition}

\begin{lemma}
\label{lemma:bsrs-cover-based-implies-succinct}
Let $\CodeClass = \Set{\Code_{\CodeIdx}}_{\CodeIdx}$ be a linear code family with domain $\Domain(\cdot)$ and alphabet $\Field(\cdot)$. If $\CodeClass$ has cover-based constraint detection then $\CodeClass$ has succinct constraint detection.
\end{lemma}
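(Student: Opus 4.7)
The plan is to reduce directly to \lemref{lem:bsrs-dual-char}, which says that an algorithm outputting a ``spanning sandwich'' $\Puncture{(\Dual{\Code_{\CodeIdx}})}{\IndexSet} \subseteq \Span(\LocalSet) \subseteq \Dual{\Code_{\CodeIdx}}$ in time $\poly(\BitSize{\CodeIdx} + \SetCardinality{\IndexSet})$ already yields succinct constraint detection. So given the output $\LocalSet$ of the hypothesized cover-based detector, together with the (implicit) witnessing subset $\Cover' \subseteq \Cover$ of a $\SetCardinality{\IndexSet}$-local cover satisfying $\Span(\LocalSet) = \Span(\cup_{(\CDomain,\CCode) \in \Cover'} \Dual{\CCode})$, I only need to verify both containments. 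Write $\CDomain_{\Cover'} \DefineEqual \cup_{(\CDomain,\CCode) \in \Cover'} \CDomain$; by the detector's guarantee, $\IndexSet \subseteq \CDomain_{\Cover'}$ and $\SetCardinality{\Cover'} \leq \SetCardinality{\IndexSet}$.

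First I will handle the upper containment $\Span(\LocalSet) \subseteq \Dual{\Code_{\CodeIdx}}$, which is the easier direction and uses nothing beyond the fact that $(\CDomain,\CCode)$ is a local view. For each $(\CDomain,\CCode) \in \Cover'$ we have $\CCode = \Restrict{\Code_{\CodeIdx}}{\CDomain}$, so any $\OtherCodeword \in \Dual{\CCode}$ (zero-padded to $\Field(\CodeIdx)^{\Domain(\CodeIdx)}$ per the convention before \clmref{claim:bsrs-dual-of-projection}) has $\Support{\OtherCodeword} \subseteq \CDomain$ and satisfies $\InnerProduct{\OtherCodeword}{\Codeword} = \InnerProduct{\OtherCodeword}{\Restrict{\Codeword}{\CDomain}} = 0$ for every $\Codeword \in \Code_{\CodeIdx}$; hence $\OtherCodeword \in \Dual{\Code_{\CodeIdx}}$, and passing to the union and span gives the containment.

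The main step is the lower containment $\Puncture{(\Dual{\Code_{\CodeIdx}})}{\IndexSet} \subseteq \Span(\LocalSet)$, which is exactly where the $\SetCardinality{\IndexSet}$-locality is used. I will first show the stronger equality $\Span(\cup_{(\CDomain,\CCode) \in \Cover'} \Dual{\CCode}) = \Puncture{(\Dual{\Code_{\CodeIdx}})}{\CDomain_{\Cover'}}$ by taking duals and invoking \clmref{claim:bsrs-dual-of-projection}: a word $w \in \Field(\CodeIdx)^{\CDomain_{\Cover'}}$ is orthogonal to every element of $\cup_{(\CDomain,\CCode) \in \Cover'} \Dual{\CCode}$ iff $\Restrict{w}{\CDomain} \in \CCode$ for every $(\CDomain,\CCode) \in \Cover'$; the locality definition, applied with $J \DefineEqual \Cover'$ (valid since $\SetCardinality{J} \leq \SetCardinality{\IndexSet}$ is the locality parameter), then guarantees that such a $w$ extends to some codeword of $\Code_{\CodeIdx}$, so $w \in \Restrict{\Code_{\CodeIdx}}{\CDomain_{\Cover'}}$. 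Combining with \clmref{claim:bsrs-dual-of-projection} gives $\Span(\LocalSet) = \Puncture{(\Dual{\Code_{\CodeIdx}})}{\CDomain_{\Cover'}}$, and since $\IndexSet \subseteq \CDomain_{\Cover'}$ the desired inclusion $\Puncture{(\Dual{\Code_{\CodeIdx}})}{\IndexSet} \subseteq \Puncture{(\Dual{\Code_{\CodeIdx}})}{\CDomain_{\Cover'}}$ is immediate.

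Once both inclusions are established, \lemref{lem:bsrs-dual-char} directly yields succinct constraint detection, with the overall running time inherited from the cover-based detector. The only subtlety I anticipate is the bookkeeping around the zero-padding convention (\remref{remark:expand-vs-dual}) when moving vectors in $\Field(\CodeIdx)^{\CDomain}$, $\Field(\CodeIdx)^{\CDomain_{\Cover'}}$, and $\Field(\CodeIdx)^{\Domain(\CodeIdx)}$ back and forth; doing this carelessly risks claiming an equality between codes living on different domains. Handling this correctly amounts to consistently applying the ``expand last'' convention, after which the argument is essentially a direct unfolding of the definitions.
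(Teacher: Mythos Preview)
Your proposal is correct and follows essentially the same approach as the paper: reduce to \lemref{lem:bsrs-dual-char} and show $\Span(\LocalSet) = \Puncture{(\Dual{\Code_{\CodeIdx}})}{\CDomain_{\Cover'}}$, which sandwiches between $\Puncture{(\Dual{\Code_{\CodeIdx}})}{\IndexSet}$ and $\Dual{\Code_{\CodeIdx}}$. The only difference is that the paper factors out the equality $\Span(\cup_{j \in J} \Dual{\CCode_{j}}) = \Puncture{(\Dual{\Code})}{\cup_{j \in J}\CDomain_{j}}$ for $\IdpParam$-local covers as a standalone claim (\clmref{claim:bsrs-independent-dual}), whereas you prove both inclusions inline---your ``upper containment'' paragraph gives one direction and your locality-based extension argument (via duals and \clmref{claim:bsrs-dual-of-projection}) gives the other.
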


To prove this lemma we require a technical claim, the proof
of which is deferred to
\appref{sec:proof-of-independent-dual}.

\begin{claim}
\label{claim:bsrs-independent-dual}
Let $\Code$ be a linear code with domain $\Domain$ and alphabet $\Field$, let $\Cover = \Set{(\CDomain_{j},\CCode_{j})}_{j}$ be a $\IdpParam$-local cover of $\Code$. For any set $J$ of size at most $\IdpParam$ it holds $\Span( \cup_{j \in J} \Dual{\CCode_{j}}) = \Puncture{(\Dual{\Code})}{\left(\cup_{j \in J}\CDomain_{j}\right)}$.
\end{claim}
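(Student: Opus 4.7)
The plan is to reduce the desired identity to the earlier \clmref{claim:bsrs-dual-of-projection}, which already states that $\Dual{(\Restrict{\Code}{\IdxSet})} = \Puncture{(\Dual{\Code})}{\IdxSet}$. Setting $\IdxSet \DefineEqual \cup_{j \in J}\CDomain_{j}$, the right-hand side of the claim becomes $\Dual{(\Restrict{\Code}{\IdxSet})}$, so it suffices to show that
$\Dual{(\Restrict{\Code}{\IdxSet})} = \Span(\cup_{j \in J}\Dual{\CCode_{j}})$, where, following the convention of the section, each $\Dual{\CCode_{j}}$ is viewed inside $\Field^{\IdxSet}$ by zero-padding outside $\CDomain_{j}$.

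The next step is to express $\Restrict{\Code}{\IdxSet}$ as an intersection of ``extension codes''. Define $\Code_{j}' \DefineEqual \{ c' \in \Field^{\IdxSet} : \Restrict{c'}{\CDomain_{j}} \in \CCode_{j} \}$ for each $j \in J$. The inclusion $\Restrict{\Code}{\IdxSet} \subseteq \bigcap_{j \in J} \Code_{j}'$ is immediate from the definition of a view ($\CCode_{j} = \Restrict{\Code}{\CDomain_{j}}$). The reverse inclusion is precisely where $\IdpParam$-locality is invoked: since $\SetCardinality{J} \leq \IdpParam$, any $c' \in \bigcap_{j \in J}\Code_{j}'$ is locally consistent with the cover on $\IdxSet$, and \defref{def:bsrs-local-cover} guarantees such a $c'$ extends to a codeword of $\Code$. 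Hence $\Restrict{\Code}{\IdxSet} = \bigcap_{j \in J}\Code_{j}'$. Taking duals and using the standard linear-algebra identity $\Dual{(\bigcap_{j} \Code_{j}')} = \sum_{j} \Dual{\Code_{j}'}$, the problem reduces to the ``local'' identification $\Dual{\Code_{j}'} = \Dual{\CCode_{j}}$ (with the zero-padding convention).

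I would prove the local identity by a direct check in two directions. The inclusion $\Dual{\CCode_{j}} \subseteq \Dual{\Code_{j}'}$ follows because a $z \in \Dual{\CCode_{j}}$ zero-padded to $\IdxSet$ satisfies $\InnerProduct{z}{c'} = \InnerProduct{z}{\Restrict{c'}{\CDomain_{j}}} = 0$ for every $c' \in \Code_{j}'$. For the reverse inclusion, given $z' \in \Dual{\Code_{j}'}$, I would first show $z'$ vanishes outside $\CDomain_{j}$ by pairing it with the unit vector at any point $i \in \IdxSet \setminus \CDomain_{j}$, which lies in $\Code_{j}'$ because $\CCode_{j}$ is linear and thus contains $0$; then I would show $\Restrict{z'}{\CDomain_{j}} \in \Dual{\CCode_{j}}$ by pairing $z'$ with the zero-padding of an arbitrary $c \in \CCode_{j}$, which again lies in $\Code_{j}'$.

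The main (and essentially only) obstacle is bookkeeping around the domain-extension convention highlighted in \remref{remark:expand-vs-dual}: zero-padding and dualization do not commute in general, so I must be careful to always take duals \emph{inside} the ambient space $\Field^{\IdxSet}$ before comparing spans, exactly as prescribed by the convention. Once this is handled consistently, the proof is a routine combination of \clmref{claim:bsrs-dual-of-projection}, the $\IdpParam$-local extension property, and elementary duality for intersections of linear subspaces.
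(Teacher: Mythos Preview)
Your proof is correct. The paper argues the two containments separately: the inclusion $\Span(\cup_{j}\Dual{\CCode_{j}})\subseteq\Puncture{(\Dual{\Code})}{\IdxSet}$ by a direct inner-product computation, and the reverse inclusion by first invoking \clmref{claim:bsrs-dual-of-projection}, then passing to the double dual to reduce to showing $\Dual{\Span(\cup_{j}\Dual{\CCode_{j}})}\subseteq\Restrict{\Code}{\IdxSet}$, and finally using $\IdpParam$-locality to extend any element of the left-hand side to a codeword of $\Code$.

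Your route is organized differently: you apply \clmref{claim:bsrs-dual-of-projection} up front, characterize $\Restrict{\Code}{\IdxSet}$ as the intersection $\bigcap_{j\in J}\Code_{j}'$ of the ``extension codes'' (this is precisely where $\IdpParam$-locality enters), and then invoke the standard identity $\Dual{(\bigcap_{j}\Code_{j}')}=\sum_{j}\Dual{\Code_{j}'}$ together with the local identification $\Dual{\Code_{j}'}=\Dual{\CCode_{j}}$. The ingredients are the same, but your decomposition via the $\Code_{j}'$ and the intersection-dual identity is a bit more modular, while the paper's double-dual step avoids introducing auxiliary codes. Both are equally short once the bookkeeping about zero-padding (your caveat from \remref{remark:expand-vs-dual}) is handled.
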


\begin{proof}[Proof of \lemref{lemma:bsrs-cover-based-implies-succinct}]
By \lemref{lem:bsrs-dual-char}, it suffices to show an algorithm that, on input an index $\CodeIdx$ and subset $\IdxSet \subseteq \Domain(\CodeIdx)$, outputs a subset $\LocalSet \subseteq \Field(\CodeIdx)^{\Domain(\CodeIdx)}$ with $\Puncture{(\Dual{\Code_{\CodeIdx}})}{\IdxSet} \subseteq \Span(\LocalSet) \subseteq \Dual{\Code_{\CodeIdx}}$ in $\poly(\BitSize{\CodeIdx} + \SetCardinality{\IdxSet})$ time. We take this algorithm to be the one guaranteed by \defref{def:bsrs-cover-based-detector}. To see correctness, let $\CDomain_{\Cover'} \DefineEqual \cup_{(\CDomain,\CCode) \in \Cover'}\, \CDomain$, and note that \defref{def:bsrs-cover-based-detector} and \clmref{claim:bsrs-independent-dual} imply that $\Span(\LocalSet) = \Puncture{(\Dual{\Code_{\CodeIdx}})}{\CDomain_{\Cover'}}$ and $\Puncture{(\Dual{\Code_{\CodeIdx}})}{\IdxSet} \subseteq \Puncture{(\Dual{\Code_{\CodeIdx}})}{\CDomain_{\Cover'}} \subseteq \Dual{\Code_{\CodeIdx}}$, as required.
\end{proof}

Next we show that, under certain conditions, code families with recursive covers imply a sequence of covers that we can use to construct cover-based constraint detectors. Combined with \lemref{lemma:bsrs-cover-based-implies-succinct}, this result is key for establishing a connection from certain proximity proof constructions to succinct constraint detectors.

\begin{definition}
\label{def:bsrs-recursive-implies-cover-based}
Let $\CodeClass = \Set{\Code_{\CodeIdx}}_{\CodeIdx}$ be a linear code family with domain $\Domain(\cdot)$ and alphabet $\Field(\cdot)$. We say that $\CodeClass$ has \defemph{recursive-cover-based constraint detection} if:
\begin{itemize}

  \item there exists $\IntParam \in \Naturals$ such that, for every index $\CodeIdx$, $\Code_{\CodeIdx}$ has a $\IntParam$-intersecting recursive cover $\TreeCover_{\CodeIdx}$;

  \item there exists an algorithm that, given an index $\CodeIdx$ and subset $\IndexSet \subseteq \Domain(\CodeIdx)$, outputs in $\poly(\BitSize{\CodeIdx} + \SetCardinality{\IndexSet})$ time a subset $\LocalSet \subseteq \Field(\CodeIdx)^{\Domain(\CodeIdx)}$ for which there exist $\CDepth \in \{0,\dots,\TreeDepth{\TreeCover_{\CodeIdx}}\}$ and $U \subseteq \TreeLayer{\TreeCover_{\CodeIdx},\CDepth}$ such that:
  \begin{inparaenum}[(i)]
    \item for every vertex $v$ in $\TreeCover_{\CodeIdx}$ with $\TreeDepth{\TreeCover_{\CodeIdx},v} < \CDepth$, the cover $\CoverAtVertex{\TreeCover}{\CodeIdx,v}$ is $\IntParam \SetCardinality{\IndexSet}$-independent;
    \item $\SetCardinality{U} \leq \SetCardinality{\IndexSet}$;
    \item $\IndexSet \subseteq (\cup_{u \in U} \CDomain_{u})$;
    \item \mbox{$\Span(\LocalSet) = \Span(\cup_{u \in U} \Dual{\CCode_{u}})$.}
  \end{inparaenum}

\end{itemize}
\end{definition}

\begin{lemma}
\label{lem:bsrs-recursive-implies-local}
Let $\CodeClass = \Set{\Code_{\CodeIdx}}_{\CodeIdx}$ be a linear code family with domain $\Domain(\cdot)$ and alphabet $\Field(\cdot)$. If $\CodeClass$ has recursive-cover-based constraint detection, then $\CodeClass$ has cover-based constraint detection.
\end{lemma}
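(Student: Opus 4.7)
The plan is to define the cover-based detector to be exactly the recursive-cover-based detector guaranteed by \defref{def:bsrs-recursive-implies-cover-based}, and then to extract the required $\SetCardinality{\IndexSet}$-local cover of $\Code_{\CodeIdx}$ by truncating the recursive cover at depth $\CDepth$ and invoking the main \lemref{lemma:bsrs-recursive-and-independent-implies-local}.

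More concretely, let $\IntParam$ be the universal constant from \defref{def:bsrs-recursive-implies-cover-based} and, on input $(\CodeIdx,\IndexSet)$, run the recursive-cover-based detector to obtain $\LocalSet \subseteq \Field(\CodeIdx)^{\Domain(\CodeIdx)}$ together with (implicitly) a depth $\CDepth \in \{0,\dots,\TreeDepth{\TreeCover_{\CodeIdx}}\}$ and a set $U \subseteq \TreeLayer{\TreeCover_{\CodeIdx},\CDepth}$ satisfying the four properties of that definition. Define $\Cover' \DefineEqual \Set{(\CDomain_{u},\CCode_{u})}_{u \in U}$; conditions (i), (ii), (iii) of \defref{def:bsrs-cover-based-detector} then follow immediately from $\SetCardinality{U} \leq \SetCardinality{\IndexSet}$, $\IndexSet \subseteq \cup_{u \in U}\CDomain_{u}$, and $\Span(\LocalSet) = \Span(\cup_{u \in U}\Dual{\CCode_{u}})$. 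The only nontrivial task is to exhibit a $\SetCardinality{\IndexSet}$-local cover $\Cover$ of $\Code_{\CodeIdx}$ with $\Cover' \subseteq \Cover$; the natural candidate is $\Cover \DefineEqual \CoverAtLayer{\TreeCover_{\CodeIdx}}{\CDepth}$.

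To show that $\CoverAtLayer{\TreeCover_{\CodeIdx}}{\CDepth}$ is $\SetCardinality{\IndexSet}$-local, I would form the truncation $\tilde{\TreeCover}_{\CodeIdx}$ of $\TreeCover_{\CodeIdx}$ obtained by deleting all vertices of depth greater than $\CDepth$ (so vertices at depth $\CDepth$ become leaves). This truncated tree is still a recursive cover of $\Code_{\CodeIdx}$ in the sense of \defref{def:bsrs-recursive-code-cover}, and it inherits the $\IntParam$-intersecting property since intersections of domains at vertices of $\tilde{\TreeCover}_{\CodeIdx}$ are intersections of domains of the original tree. Moreover, its non-leaf vertices are precisely the vertices of $\TreeCover_{\CodeIdx}$ of depth $<\CDepth$, and \defref{def:bsrs-recursive-implies-cover-based}(i) guarantees that at every such vertex $v$ the child-cover $\CoverAtVertex{\tilde{\TreeCover}}{\CodeIdx,v} = \CoverAtVertex{\TreeCover}{\CodeIdx,v}$ is $\IntParam\SetCardinality{\IndexSet}$-independent. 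Hence $\tilde{\TreeCover}_{\CodeIdx}$ satisfies the hypotheses of \lemref{lemma:bsrs-recursive-and-independent-implies-local} with $\IdpParam = \IntParam\SetCardinality{\IndexSet}$, yielding that $\CoverAtLayer{\tilde{\TreeCover}_{\CodeIdx}}{\CDepth} = \CoverAtLayer{\TreeCover_{\CodeIdx}}{\CDepth}$ is a $\SetCardinality{\IndexSet}$-local cover of $\Code_{\CodeIdx}$, as desired.

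Finally, I would observe that the running time of the cover-based detector is identical to that of the recursive-cover-based detector, namely $\poly(\BitSize{\CodeIdx}+\SetCardinality{\IndexSet})$, since the two detectors are literally the same algorithm. The main (minor) obstacle is the bookkeeping around truncation: one must check that \lemref{lemma:bsrs-recursive-and-independent-implies-local} applies to a truncation of the given recursive cover even though the independence hypothesis was only stated for vertices of depth $<\CDepth$ in the original tree; this is handled by the observation that the proof of that lemma is inductive on depth and uses only the independence of child-covers at vertices of depth strictly less than the current layer under consideration.
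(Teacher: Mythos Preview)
Your proposal is correct and follows essentially the same approach as the paper: use the recursive-cover-based detector itself, set $\Cover' \DefineEqual \Set{(\CDomain_{u},\CCode_{u})}_{u \in U}$ and $\Cover \DefineEqual \CoverAtLayer{\TreeCover_{\CodeIdx}}{\CDepth}$, and invoke \lemref{lemma:bsrs-recursive-and-independent-implies-local} with $\IdpParam = \IntParam\SetCardinality{\IndexSet}$ to conclude that $\Cover$ is $\SetCardinality{\IndexSet}$-local. Your explicit truncation argument is a clean way to make \lemref{lemma:bsrs-recursive-and-independent-implies-local} literally applicable (its hypothesis is stated for \emph{all} non-leaf vertices), whereas the paper simply appeals to the lemma directly, relying implicitly on the same observation you note at the end about the inductive structure of its proof.
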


\begin{proof}
The definition of recursive-cover-based detection says that there exist
\begin{inparaenum}[(a)]
  \item $\IntParam \in \Naturals$ such that, for every index $\CodeIdx$, $\Code_{\CodeIdx}$ has a $\IntParam$-intersecting recursive cover $\TreeCover_{\CodeIdx}$, and
  \item an algorithm satisfying certain properties.
\end{inparaenum}
We show that this algorithm meets the requirements for being a cover-based constraint detector (see \defref{def:bsrs-cover-based-detector}). Consider any index $\CodeIdx$ and subset $\IndexSet \subseteq \Domain(\CodeIdx)$, and let $\LocalSet$ be the output of the algorithm. Let $\CDepth \in \{0,\dots,\TreeDepth{\TreeCover_{\CodeIdx}}\}$ and $U \subseteq \TreeLayer{\TreeCover_{\CodeIdx},\CDepth}$ be the objects associated to $\LocalSet$ (guaranteed by the definition of recursive-cover-based constraint detection). Let $\Cover \DefineEqual \CoverAtLayer{\TreeCover_{\CodeIdx}}{\CDepth}$ (i.e., $\Cover$ is the $\CDepth$-depth restriction of $\TreeCover_{\CodeIdx}$) and $\Cover' \DefineEqual \Set{ (\CDomain_{u},\CCode_{u})}_{u \in U}$; it suffices to show that $\Cover$ is $\SetCardinality{\IndexSet}$-local. The claim follows directly by the assumption on $\CDepth$ and \lemref{lemma:bsrs-recursive-and-independent-implies-local}, because $\CoverAtVertex{\TreeCover}{\CodeIdx,v}$ is $\IntParam \SetCardinality{\IndexSet}$-independent for every vertex $v$ in $\TreeCover_{\CodeIdx}$ with $\TreeDepth{\TreeCover_{\CodeIdx},v} < \CDepth$,  and thus $\Cover = \CoverAtLayer{\TreeCover_{\CodeIdx}}{\CDepth}$ is indeed a $\SetCardinality{\IndexSet}$-local cover of $\Code$.
\end{proof}

\begin{corollary}
\label{cor:bsrs-recursive-implies-succinct}
Let $\CodeClass = \Set{\Code_{\CodeIdx}}_{\CodeIdx}$ be a linear code family with domain $\Domain(\cdot)$ and alphabet $\Field(\cdot)$. If $\CodeClass$ has recursive-cover-based constraint detection, then $\CodeClass$ has succinct constraint detection.
\end{corollary}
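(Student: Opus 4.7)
The plan is to simply chain together the two preceding implications: Lemma \ref{lem:bsrs-recursive-implies-local} states that recursive-cover-based constraint detection implies cover-based constraint detection, and Lemma \ref{lemma:bsrs-cover-based-implies-succinct} states that cover-based constraint detection implies succinct constraint detection. Composing these yields exactly the corollary's conclusion, so no new argument is needed.

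Concretely, I would: (i) assume $\CodeClass$ has recursive-cover-based constraint detection, in the sense of Definition \ref{def:bsrs-recursive-implies-cover-based}; (ii) apply Lemma \ref{lem:bsrs-recursive-implies-local} to conclude that $\CodeClass$ has cover-based constraint detection, in the sense of Definition \ref{def:bsrs-cover-based-detector}; (iii) apply Lemma \ref{lemma:bsrs-cover-based-implies-succinct} to conclude that $\CodeClass$ has succinct constraint detection, in the sense of Definition \ref{def:constraint-detector}. Each invocation preserves the same underlying algorithm, which outputs a spanning set $\LocalSet$ of dual constraints in time $\poly(\BitSize{\CodeIdx} + \SetCardinality{\IndexSet})$; the lemmas repackage the algorithm's guarantees without additional computational overhead, so the overall running time bound of $\poly(\BitSize{\CodeIdx} + \ListSize)$ required by succinctness is automatically preserved.

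There is no genuine obstacle here — the corollary is a transitivity statement whose only content is to bundle Lemmas \ref{lem:bsrs-recursive-implies-local} and \ref{lemma:bsrs-cover-based-implies-succinct} into a single reusable implication that will be invoked later (in Section \ref{sec:proof-of-theorem-bsrs-detector}) when verifying that $\BSCode$ meets the recursive-cover-based detection hypothesis. The proof is therefore a one-line chain of references.
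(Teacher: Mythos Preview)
Your proposal is correct and matches the paper's proof exactly: the paper's own proof is a single sentence invoking \lemref{lem:bsrs-recursive-implies-local} followed by \lemref{lemma:bsrs-cover-based-implies-succinct}, precisely the chain you describe.
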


\begin{proof}
Follows directly from \lemref{lem:bsrs-recursive-implies-local} (recursive-cover-based constraint detection implies cover-based constraint detection) and \lemref{lemma:bsrs-cover-based-implies-succinct} (cover-based constraint detection implies succinct constraint detection).
\end{proof}

\subsubsection{Proof of \thmref{thm:bsrs-succinct-constraint-detection}}
\label{sec:proof-of-theorem-bsrs-detector}

The purpose of this section is to prove \thmref{thm:bsrs-succinct-constraint-detection}. By \corref{cor:bsrs-recursive-implies-succinct}, it suffices to argue that the linear code family $\BSCode$ has recursive-cover-based constraint detection (see \defref{def:bsrs-recursive-implies-cover-based}).

Recall that we consider Reed--Solomon codes $\RSCode{\Field}{\BSSpace}{\RSDegree}$ where $\Field$ is an extension field of a base field $\SubField$, $\BSSpace$ is a $\SubField$-linear subspace in $\Field$, and $\RSDegree = \SetCardinality{\BSSpace} \cdot \SetCardinality{\SubField}^{-\BSBalance}$ for some $\BSBalance \in \Naturals$; and we denote by $\BSCode[\SubField,\Field,\BSSpace,\BSBalance,\BSBaseDim]$ the code obtained by concatenating codewords in $\RSCode{\Field}{\BSSpace}{\SetCardinality{\BSSpace} \cdot \SetCardinality{\SubField}^{-\BSBalance}}$ with corresponding \cite{BS08} proximity proofs with ``base dimension'' $\BSBaseDim \in \{1,\dots,\Dimension{\BSSpace}\}$ (see \appref{sec:bsrs-formal-definitions} for details). The linear code family $\BSCode$ is indexed by tuples $\CodeIdx = (\SubField,\Field,\BSSpace,\BSBalance,\BSBaseDim)$ and the $\CodeIdx$-th code is $\BSCode[\SubField,\Field,\BSSpace,\BSBalance,\BSBaseDim]$.

We represent indices $\CodeIdx$ so that $\log \SetCardinality{\Field} + \Dimension{\BSSpace} + \SetCardinality{\SubField}^{\BSBalance} \leq \poly(\BitSize{\CodeIdx})$. The base field $\SubField$ and extension field $\Field$ require $O(\log \SetCardinality{\SubField})$ and $O(\log \SetCardinality{\Field})$ bits to represent; the subspace $\BSSpace$ requires $O(\Dimension{\BSSpace})$ elements in $\Field$ to represent; and the two integers $\BSBalance$ and $\BSBaseDim$ require $O(\log \BSBalance)$ and $O(\log \BSBaseDim)$ bits to represent. In addition, we add $\SetCardinality{\SubField}^{\BSBalance}$ arbitrary bits of padding. Overall, we obtain that $\BitSize{\CodeIdx} = \Theta(\log \SetCardinality{\SubField} + \log \SetCardinality{\Field} + \log \SetCardinality{\Field} \cdot \Dimension{\BSSpace} + \log \BSBalance + \log \BSBaseDim + \SetCardinality{\SubField}^{\BSBalance}) = \Theta(\log \SetCardinality{\Field} \cdot \Dimension{\BSSpace} + \SetCardinality{\SubField}^{\BSBalance})$.

The main claim in this section is the following (and does not rely on fixing $\SubField,\BSBalance$).

\begin{lemma}
\label{lem:bsrs-has-recursive-code-cover}
Define the depth function $\CDepth(\SubField,\BSSpace,\BSBalance,a) \DefineEqual \BsOtherDepth{\BSSpace}{a}$. The linear code family $\BSCode$ satisfies the following properties.
\begin{itemize}

  \item For every index $\CodeIdx = (\SubField,\Field,\BSSpace,\BSBalance,\BSBaseDim)$, $\BSCode[\SubField,\Field,\BSSpace,\BSBalance,\BSBaseDim]$ has a $1$-intersecting recursive cover $\TreeCover_{\CodeIdx}$. Also, for every positive integer $m$ and non-leaf vertex $v$ in $\TreeCover_{\CodeIdx}$ with $\TreeDepth{\TreeCover_{\CodeIdx},v} < \CDepth(\SubField,\BSSpace,\BSBalance,m)$, the cover $\CoverAtVertex{\TreeCover}{\CodeIdx,v}$ is $m$-independent.

  \item There exists an algorithm that, given an index $\CodeIdx = (\SubField,\Field,\BSSpace,\BSBalance,\BSBaseDim)$ and subset $\IndexSet \subseteq \Domain(\CodeIdx)$, outputs in time $\poly(\log \SetCardinality{\Field} + \Dimension{\BSSpace} + \SetCardinality{\SubField}^{\BSBalance} + \SetCardinality{\IndexSet})$ a subset $\LocalSet \subseteq \Field^{\Domain(\CodeIdx)}$ for which there exist $U \subseteq \TreeLayer{\TreeCover_{\CodeIdx},\CDepth(\SubField,\BSSpace,\BSBalance,\SetCardinality{\IndexSet})}$ such that:
\begin{inparaenum}[(i)]
  \item $\SetCardinality{U} \leq \SetCardinality{\IndexSet}$;
  \item $\IndexSet \subseteq (\cup_{u \in U} \CDomain_{u})$;
  \item $\Span(\LocalSet) = \Span(\cup_{u \in U} \Dual{\CCode_{u}})$.
\end{inparaenum}

\end{itemize}
\end{lemma}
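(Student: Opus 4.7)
The plan is to instantiate the recursive cover $\TreeCover_{\CodeIdx}$ as a direct mirror of the Ben-Sasson--Sudan recursive construction of the proximity proof, and then bootstrap the combinatorial properties of bivariate Reed--Muller covers (as in Example \ref{example:rm-line-cover}) to obtain the two stated items. More precisely, I would define $\TreeCover_{\CodeIdx}$ level by level: the root is labeled by $\BSCode[\SubField,\Field,\BSSpace,\BSBalance,\BSBaseDim]$ itself; each non-leaf vertex $v$ labeled by a ``Reed--Solomon together with its proof'' code of degree $d_v$ on a space $\BSSpace_v$ is assigned successors corresponding to the axis-parallel rows and columns of the bivariate embedding used by the next recursive step of \cite{BS08}, where each successor is labeled by a smaller such code of degree $\Theta(\sqrt{d_v})$ on a space of dimension $\Theta(\Dimension{\BSSpace_v}/2)$. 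Leaves correspond to the base-case Reed--Solomon code of dimension $\BSBaseDim$.

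For the first item, the $1$-intersecting property follows immediately because any two distinct rows, two distinct columns, or one row and one column of a bivariate domain intersect in at most one point; disconnected vertices in $\TreeCover_{\CodeIdx}$ live in such axis-parallel sub-domains (possibly after embedding into a shared ambient bivariate space), so this local $1$-intersection carries over to the recursive tree. The $m$-independence at each non-leaf $v$ with $\TreeDepth{\TreeCover_{\CodeIdx},v} < \CDepth(\SubField,\BSSpace,\BSBalance,m)$ is exactly the bivariate Reed--Muller interpolation fact recalled at the end of Section~\ref{sec:code-covers}: any assignment on $\leq m$ rows/columns (each locally a low-degree univariate polynomial) together with $\leq m$ extra points can be extended to a bivariate polynomial of the right individual degree, provided the degree at vertex $v$ is at least, say, $2m$. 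Since each recursion step roughly halves the logarithm of the degree, after $i$ steps the degree is about $\SetCardinality{\BSSpace}^{2^{-i}} / \SetCardinality{\SubField}^{\BSBalance}$, which remains $\geq 2m$ precisely when $i \leq \CDepth(\SubField,\BSSpace,\BSBalance,m) = \BsOtherDepth{\BSSpace}{m}$; this is the source of the stated depth function. I will verify the inequality by unrolling the recursion and comparing with $\log_{\SetCardinality{\SubField}} m + \BSBalance + 2$. The proof of the bivariate interpolation claim itself reduces to choosing an interpolating product set $H_1 \times H_2$ that contains both the extra points and intersects each selected row/column in the required number of points (see Claim~\ref{claim:bsrs-rm-folklore}).

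For the second item, the algorithm is: on input $(\CodeIdx,\IndexSet)$, set $\CDepth \DefineEqual \CDepth(\SubField,\BSSpace,\BSBalance,\SetCardinality{\IndexSet})$ (computable from $\CodeIdx$ and $\SetCardinality{\IndexSet}$ in polynomial time); walk down $\TreeCover_{\CodeIdx}$ to that layer using the explicit description of the BS recursion, maintaining for each point $\alpha \in \IndexSet$ a vertex $u_\alpha \in \TreeLayer{\TreeCover_{\CodeIdx},\CDepth}$ with $\alpha \in \CDomain_{u_\alpha}$; set $U \DefineEqual \{u_\alpha : \alpha \in \IndexSet\}$, so that $\SetCardinality{U} \leq \SetCardinality{\IndexSet}$ and $\IndexSet \subseteq \cup_{u \in U} \CDomain_u$ by construction; then for each $u \in U$ compute via Gaussian elimination a basis for $\Dual{\CCode_u}$, viewed as vectors in $\Field^{\Domain(\CodeIdx)}$ supported on $\CDomain_u$, and output $\LocalSet$ as the union of these bases. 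Correctness of $\Span(\LocalSet) = \Span(\cup_{u \in U} \Dual{\CCode_u})$ is immediate. The efficiency follows because the block length of each $\CCode_u$ at depth $\CDepth$ is at most $\poly(\SetCardinality{\IndexSet} + \SetCardinality{\SubField}^{\BSBalance})$ by the choice of $\CDepth$, so each local Gaussian elimination runs in time $\poly(\log \SetCardinality{\Field} + \Dimension{\BSSpace} + \SetCardinality{\SubField}^{\BSBalance} + \SetCardinality{\IndexSet})$, and there are at most $\SetCardinality{\IndexSet}$ of them.

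The main obstacle will be the second item's combinatorial bookkeeping: one must precisely identify, for a point $\alpha$ in the exponential-sized domain of $\BSCode[\SubField,\Field,\BSSpace,\BSBalance,\BSBaseDim]$, which vertex at depth $\CDepth$ it lies under, and show that the block lengths at that depth really are $\poly(\SetCardinality{\IndexSet} + \SetCardinality{\SubField}^{\BSBalance})$ so that the per-view Gaussian elimination is affordable. Establishing $m$-independence cleanly for the non-trivial intermediate levels --- where the code $\CCode_v$ is not just an RS code but an RS-with-proof code whose ``proof part'' lives on a disjoint auxiliary domain --- also requires care: one must argue that extending a locally-consistent partial assignment on rows/columns to a bivariate polynomial automatically extends to the recursive proof, using the fact that the proof's lower-level pieces are themselves linear functions of the extended polynomial, so the independence property lifts from the bivariate Reed--Muller base case to the composite code. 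Once these two technical points are settled, the two bulleted properties follow directly as described.
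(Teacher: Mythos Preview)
Your proposal is correct and follows essentially the same approach as the paper: the recursive cover is the tree of native row/column covers from the BS recursion, $1$-intersection comes from the injectivity of the bivariate embedding $\phi$, $m$-independence at shallow vertices follows from the bivariate interpolation fact (\clmref{claim:bsrs-rm-folklore}) together with a dimension bound at depth $\CDepth$ (\clmref{claim:bsrs-bounded-basis}), and the algorithm selects for each $\alpha\in\IndexSet$ a vertex $u_\alpha$ at layer $\CDepth(\SubField,\BSSpace,\BSBalance,\SetCardinality{\IndexSet})$ and outputs the union of local dual spanning sets. The only cosmetic difference is that the paper computes each $\Dual{\CCode_u}$ via a recursive spanning-set computation (\clmref{claim:bsrs-efficient-basis-aux}) rather than direct Gaussian elimination, but since $\SetCardinality{\CDomain_u}\le\poly(\SetCardinality{\IndexSet}+\SetCardinality{\SubField}^{\BSBalance})$ at that depth, either works; your identification of the two technical obstacles (locating $u_\alpha$ efficiently, and lifting independence from bivariate RM to the composite RS-with-proof code via the uniqueness of the proof part, \clmref{claim:bsrs-code-cover-proof-defined-by-RS}) matches exactly what the paper handles.
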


Given the above lemma, we can complete the proof of \thmref{thm:bsrs-succinct-constraint-detection}, as explained below. We defer the (long and technical) proof of the lemma to \appref{sec:bsrs-full-proof}, and instead end this section with an overview of that proof.

\begin{proof}[Proof of \thmref{thm:bsrs-succinct-constraint-detection}]
The proof follows from \lemref{lem:bsrs-has-recursive-code-cover} above and from \corref{cor:bsrs-recursive-implies-succinct}, as we now explain.

\corref{cor:bsrs-recursive-implies-succinct} states that if a linear code family $\CodeClass$ has recursive-cover-based constraint detection (see \defref{def:bsrs-recursive-implies-cover-based}), then $\CodeClass$ has succinct constraint detection (see \defref{def:constraint-detector}). Also recall that the definition of recursive-cover-based detection requires having a $\IntParam$-intersecting recursive cover for each code in the class, and an algorithm satisfying certain properties.

Observe that \lemref{lem:bsrs-has-recursive-code-cover} guarantees that every code in $\BSCode$ has a $1$-intersecting recursive code and, moreover, guarantees the existence of an algorithm whose output satisfies the required properties. We are left to argue that the algorithm runs in time $\poly(\BitSize{\CodeIdx} + \SetCardinality{\IndexSet})$. But this immediately follows from the running time stated in \lemref{lem:bsrs-has-recursive-code-cover} and the fact that $\log \SetCardinality{\Field} + \Dimension{\BSSpace} + \SetCardinality{\SubField}^{\BSBalance} \leq \poly(\BitSize{\CodeIdx})$.
\end{proof}

\parhead{Overview of \lemref{lem:bsrs-has-recursive-code-cover}'s proof.}
We assume familiarity with the linear code family $\BSCode$ from \cite{BS08}; for completeness, we provide formal definitions and notations in \appref{sec:bsrs-formal-definitions}. Recall that the Reed--Solomon code is not locally testable, but one can test proximity to it with the aid of BS proximity proofs \cite{BS08}; the linear code family $\BSCode$ consists of the concatenation of Reed--Solomon codes with BS corresponding proximity proofs.

The construction of the aforementioned proximity proofs is \emph{recursive}, with each step in the recursion reducing both the evaluation domain size $\SetCardinality{\BSSpace}$ and the degree $\RSDegree$ to (approximately) their square roots. Namely, testing proximity of a codeword $\Codeword$ to $\RSCode{\Field}{\BSSpace}{\RSDegree}$ is reduced to testing proximity of $\Theta(\sqrt{\SetCardinality{\BSSpace}})$ codewords $\Set{\Codeword_{i}}_{i}$ to $\Set{\RSCode{\Field}{\BSSpace_{i}}{\RSDegree_{i}}}_{i}$, where $\SetCardinality{\BSSpace_{i}},\RSDegree_{i} = \Theta(\sqrt{\SetCardinality{\BSSpace}})$ for each $i$. This step is then recursively applied (by way of proof composition \cite{AroraS98}) to each codeword $\Codeword_{i}$, until the domain size is ``small enough''.

The first part of the proof of \lemref{lem:bsrs-has-recursive-code-cover} consists of various combinatorial claims (see \appref{sec:bsrs-combinatorial}). First, we observe that the union of the domains of the codewords $\Codeword_{i}$ covers (and, actually, slightly expands) the domain of the original codeword $\Codeword$; this holds recursively, and induces a recursive cover $\TreeCover$ (see \defref{def:bsrs-tree-cover}). We prove that $\TreeCover$ is $1$-intersecting (see \clmref{claim:bsrs-1-intersecting}) and that, for every vertex $v$ in $\TreeCover$ of depth at most $\CDepth$, the cover $\TreeCover_{v}$ is $(\SetCardinality{\BSSpace}^{2^{-\CDepth-1}} \cdot \SetCardinality{\SubField}^{-\BSBalance -2})$-independent, which implies the stated independence property about $\TreeCover_{v}$ (see \clmref{cor:bsrs-low-depth-big-independence}). The core of the argument for this second claim is to show that the code $\CCode_{v}$ equals $\BSCode[\SubField,\Field,\BSSpace_{v},\BSBalance,\BSBaseDim]$ for some subspace $\BSSpace_{v}$ such that $\Dimension{\BSSpace} \cdot 2^{-\CDepth} \leq \Dimension{\OtherBSSpace} \leq \Dimension{\BSSpace} \cdot 2^{-\CDepth} + 2\BSBalance$ (see \clmref{claim:bsrs-bounded-basis}).

The second part of the proof of \lemref{lem:bsrs-has-recursive-code-cover} consists of establishing the computational efficiency of certain tasks related to the recursive cover (see \appref{sec:bsrs-complexity}). Specifically, we bound the time required to compute a spanning set for covers in $\TreeCover$ (see \clmref{claim:bsrs-efficient-basis}). After a few more observations, we are able to conclude the proof.

\doclearpage
\section{Sumcheck with perfect zero knowledge}
\label{sec:zk-sumcheck}

We show how to obtain an IPCP for sumcheck that is \emph{perfect zero knowledge against unbounded queries}.

\parhead{Sumcheck}
The sumcheck protocol \cite{LundFKN92,Shamir92} is an IP for the claim ``$\sum_{\vec{\alpha} \in \SCSubset^{\SCVars}} \SCPoly(\vec{\alpha})=0$'', where $\SCPoly$ is a polynomial in $\PolynomialRingIndOne{\Field}{\SCVars}{\VariableX}{\SCDegree}$ and $\SCSubset$ is a subset of $\Field$. The prover and verifier have input $(\Field,\SCVars,\SCDegree,\SCSubset)$ and oracle access to (the evaluation table on $\Field^{\SCVars}$ of) $\SCPoly$. The sumcheck protocol has soundness error $1-(1-\frac{\SCDegree}{\SetCardinality{\Field}})^{\SCVars}$; the prover runs in space $\poly(\log \SetCardinality{\Field} + \SCVars + \SCDegree + \SetCardinality{\SCSubset})$ and the verifier in time $\poly(\log \SetCardinality{\Field} + \SCVars + \SCDegree + \SetCardinality{\SCSubset})$; the number of rounds is $\SCVars$; finally, the protocol is public coin and the verifier queries $\SCPoly$ only at one random point.

\parhead{Leakage}
The sumcheck protocol is \emph{not} zero knowledge: a verifier, by interacting with the honest prover, learns partial sums of $\SCPoly$, in addition to the fact that ``$\sum_{\vec{\alpha} \in \SCSubset^{\SCVars}} \SCPoly(\vec{\alpha})=0$'' is true. Assuming one way functions, one \emph{can} make any interactive proof, including the sumcheck protocol, to be (computational) zero knowledge \cite{GoldwasserMR89,ImpagliazzoY87,BenOrGGHKMR88}; moreover, one-way functions are necessary for obtaining zero knowledge IPs for non-trivial languages \cite{OstrovskyW93}. As we do not wish to make intractability assumptions, we now turn to a different proof system model.

\parhead{Perfect zero knowledge via IPCPs}
We obtain an IPCP for sumcheck that is perfect zero knowledge against unbounded queries. Namely, a malicious verifier has oracle access to a proof string $\Proof$ and also interacts with the prover, but learns no information about $\SCPoly$ beyond the fact that the statement about $\SCPoly$ is true, in the following sense. There exists an algorithm that perfectly simulates the verifier's view by making as many queries to $\SCPoly$ as the \emph{total} number of verifier queries to either $\SCPoly$ or the oracle $\Proof$. (Analogously to zero knowledge for proximity testers, a verifier may query $\SCPoly$ at any time, so any such information comes ``for free'' and, also, any query to $\Proof$ `counts' as a query to $\SCPoly$; see \secref{sec:zero-knowledge}.)

Our construction proceeds in two steps:
\begin{itemize}

  \item \emph{Step 1.}
  We modify the sumcheck protocol to make it perfect zero knowledge, but in a hybrid model where the prover and verifier have access to a random polynomial $\RandPoly \in \PolynomialRingIndOne{\Field}{\SCVars}{\VariableX}{\SCDegree}$ such that $\sum_{\vec{\alpha} \in \SCSubset^{\SCVars}} \RandPoly(\vec{\alpha})=0$. Crucially, soundness relies only on the fact that $\RandPoly$ is low-degree, but not the fact that it is random or sums to $0$. Also, the modified protocol does \emph{not} depend on a bound on the malicious verifier's queries, and thus maintains zero knowledge even against unbounded queries.

  \item \emph{Step 2.}
  We observe that in the IPCP model the prover can send an oracle proof string $\Proof$ that represents the evaluation table of $\RandPoly$, and the verifier can test that $\Proof$ is close to low-degree, and then use self correction to query it. This extension preserves the zero knowledge properties of the previous step.

\end{itemize}
The more interesting of the two steps is the first one, so we briefly discuss the intuition behind it. Our idea is that, rather than executing the sumcheck protocol on $\SCPoly$ directly, the prover and verifier execute it on $\rho\SCPoly + \RandPoly$, where $\rho$ is chosen at random by the verifier (after $\RandPoly$ is sampled). Completeness is clear because if $\sum_{\vec{\alpha} \in \SCSubset^{\SCVars}} \SCPoly(\vec{\alpha})=0$ and $\sum_{\vec{\alpha} \in \SCSubset^{\SCVars}} \RandPoly(\vec{\alpha})=0$ then $\sum_{\vec{\alpha} \in \SCSubset^{\SCVars}} (\rho\SCPoly + \RandPoly)(\vec{\alpha})=0$; soundness is also clear because if $\sum_{\vec{\alpha} \in \SCSubset^{\SCVars}} \SCPoly(\vec{\alpha}) \neq 0$ then $\sum_{\vec{\alpha} \in \SCSubset^{\SCVars}} (\rho\SCPoly + \RandPoly)(\vec{\alpha}) \neq 0$ with high probability over $\rho$ (regardless of $\sum_{\vec{\alpha} \in \SCSubset^{\SCVars}} \RandPoly(\vec{\alpha})$). We are thus left to show perfect zero knowledge, which turns out to be a much less straightforward argument.

On the surface, perfect zero knowledge appears easy to argue: simply note that $\rho\SCPoly + \RandPoly$ is random among all polynomials in $\PolynomialRingIndOne{\Field}{\SCVars}{\VariableX}{\SCDegree}$ that sum to zero on $\SCSubset^{\SCVars}$. However, this argument, while compelling, is not enough. First, $\rho\SCPoly + \RandPoly$ is \emph{not} random because a malicious verifier can choose $\rho$ depending on queries to $\RandPoly$; we discuss this issue further down below. Second, even if $\rho\SCPoly + \RandPoly$ were random (e.g., the verifier does not query $\RandPoly$ before choosing $\rho$), the simulator must run in polynomial time but it is not clear how that is possible, as we now explain.

Consider the following simulator:
\begin{inparaenum}[(1)]
  \item sample a random polynomial $\Simulated{\MaskedPoly} \in \PolynomialRingIndOne{\Field}{\SCVars}{\VariableX}{\SCDegree}$ subject to $\sum_{\vec{\alpha} \in \SCSubset^{\SCVars}} \Simulated{\MaskedPoly}(\vec{\alpha})=0$ and use it to simulate $\rho\SCPoly + \RandPoly$;
  \item whenever the verifier queries $\SCPoly(\vec{\alpha})$, respond by querying $\SCPoly(\vec{\alpha})$ and returning the true value;
  \item whenever the verifier queries $\RandPoly(\vec{\alpha})$, respond by querying $\SCPoly(\vec{\alpha})$ and returning $\Simulated{\MaskedPoly}(\vec{\alpha}) - \rho\SCPoly(\vec{\alpha})$.
\end{inparaenum}
One can argue that the simulator produces the correct distribution; moreover, the number of queries to $\SCPoly$ made by the simulator equals the number of (mutually) distinct queries to $\SCPoly$ and $\RandPoly$ made by the verifier, as desired.

But how does the simulator sample a random polynomial in $\PolynomialRingIndOne{\Field}{\SCVars}{\VariableX}{\SCDegree}$ in polynomial time? The size of the representation of such a polynomial is $\Omega(\SCDegree^{\SCVars})$, which is exponential. We get around this problem by exploiting the fact that the number of queries the verifier can make is polynomially bounded, and the simulator can keep state about the answers to past queries and `make up' on the fly the answer to a new query by resolving dependencies between queries. More precisely, we leverage our construction of a succinct constraint detector for evaluations of low-degree polynomials (see \secref{sec:partial-sums}), which itself relies on tools borrowed from algebraic complexity theory. The same detector also allows to simulate \emph{partial sums}, which the prover sends in the course of the sumcheck protocol itself.

Finally, we explain how we address the issue that the verifier may choose to query $\RandPoly$ before sending $\rho$. We handle this by first (implicitly) sampling a random polynomial $\Simulated{\RandPoly}$, and responding to each verifier query to $\RandPoly(\vec{\alpha})$ with $\Simulated{\RandPoly}(\vec{\alpha})$. Then, when the verifier sends $\rho$, we draw $\Simulated{\MaskedPoly}$ conditioned on the already-queried values for $\RandPoly$ being `correct'; i.e., for each point $\vec{\alpha}$ queried before $\rho$ is sent, we add the condition that $\Simulated{\MaskedPoly}(\vec{\alpha}) = \rho\SCPoly(\vec{\alpha}) + \Simulated{\RandPoly}(\vec{\alpha})$. We then continue as described above, and it is not too difficult to argue that this strategy yields the correct distribution.

\medskip
\noindent
We are now ready to turn the above discussions into formal definitions and proofs. First, we give the definition of the sumcheck relation and of a PZK IPCP system for sumcheck; then we state and prove the PZK Sumcheck Theorem.

\begin{definition}
\label{def:sumcheck-relation}
The sumcheck relation and its promise variant are defined as follows.
\begin{itemize}
  \item The \emph{sumcheck relation} is the relation $\SCRelation$ of instance-witness pairs $\big( (\Field,\SCVars,\SCDegree,\SCSubset,\SCConstant) , \SCPoly \big)$ such that
\begin{inparaenum}[(i)]
  \item $\Field$ is a finite field, $\SCSubset$ is a subset of $\Field$, $\SCConstant$ is an element of $\Field$, and $\SCVars,\SCDegree$ are positive integers with $\frac{\SCVars\SCDegree}{\SetCardinality{\Field}} < \frac{1}{2}$;
  \item $\SCPoly$ is in $\PolynomialRingIndOne{\Field}{\SCVars}{\VariableX}{\SCDegree}$ and sums to $\SCConstant$ on $\SCSubset^{\SCVars}$.
\end{inparaenum}

  \item The \emph{sumcheck promise relation} is the pair of relations $(\SCRelation^{\yes},\SCRelation^{\no})$ where $\SCRelation^{\yes} \DefineEqual \SCRelation$ and $\SCRelation^{\no}$ are the pairs $\big( (\Field,\SCVars,\SCDegree,\SCSubset,\SCConstant) , \SCPoly \big)$ such that $(\Field,\SCVars,\SCDegree,\SCSubset,\SCConstant)$ is as above and $\SCPoly$ is in $\PolynomialRingIndOne{\Field}{\SCVars}{\VariableX}{\SCDegree}$ but does \sunderline{not} \mbox{sum to $\SCConstant$ on $\SCSubset^{\SCVars}$.}

\end{itemize}
\end{definition}

\begin{definition}
\label{def:sumcheck-ipcp}
A \emph{PZK IPCP system for sumcheck} with soundness error $\SoundnessError$ is a pair of interactive algorithms $(\Prover,\Verifier)$ that satisfies the following properties.
\begin{itemize}

  \item \textsc{Completeness.}
  For every $\big( (\Field,\SCVars,\SCDegree,\SCSubset,\SCConstant) , \SCPoly \big) \in \SCRelation^{\yes}$,
  $
  \Pr\Big[
  	\Interact{\Prover^{\SCPoly}(\Field,\SCVars,\SCDegree,\SCSubset,\SCConstant)}
	         {\Verifier^{\SCPoly}(\Field,\SCVars,\SCDegree,\SCSubset,\SCConstant)}
	=1
  \Big]=1
  $.

  \item \textsc{Soundness.}
  For every $\big( (\Field,\SCVars,\SCDegree,\SCSubset,\SCConstant) , \SCPoly \big) \in \SCRelation^{\no}$ and malicious prover $\Malicious{\Prover}$, \mbox{
  $
  \Pr\Big[
    \Interact{\Malicious{\Prover}}
             {\Verifier^{\SCPoly}(\Field,\SCVars,\SCDegree,\SCSubset,\SCConstant)}
   =1
  \Big] \leq \SoundnessError
  $.}

  \item \textsc{Perfect zero knowledge.}
  There exists a straightline simulator $\Simulator$ such that, for every \mbox{$\big( (\Field,\SCVars,\SCDegree,\SCSubset,\SCConstant) , \SCPoly \big) \in \SCRelation^{\yes}$} and malicious verifier $\Malicious{\Verifier}$, the following two random variables are identically distributed
\begin{equation*}
\Big(\Simulator^{\Malicious{\Verifier},\SCPoly}(\Field,\SCVars,\SCDegree,\SCSubset,\SCConstant) \;,\; q_{\Simulator} \Big)
\quad\text{and}\quad
\Big(\IPCPView{\Prover^{\SCPoly}(\Field,\SCVars,\SCDegree,\SCSubset,\SCConstant)}{\Malicious{\Verifier}^{\SCPoly}} \;,\; q_{\Malicious{\Verifier}} \Big)
\enspace,
\end{equation*}
where $q_{\Simulator}$ is the number of queries to $\SCPoly$ made by $\Simulator$ and $q_{\Malicious{\Verifier}}$ is the number of queries to $\SCPoly$ or PCP oracle made by $\Malicious{\Verifier}$. Moreover, $\Simulator$ runs in time $\poly(\log \SetCardinality{\Field} + \SetCardinality{\SCSubset} + \SCVars + \QueryComplexity_{\Malicious{\Verifier}})$, where $\QueryComplexity_{\Malicious{\Verifier}}$ is $\Malicious{\Verifier}$'s query complexity.
\end{itemize}
\end{definition}

\begin{theorem}[PZK Sumcheck]
\label{thm:sumcheck-ipcp}
There exists a PZK public-coin IPCP system $(\Prover,\Verifier)$ for sumcheck with soundness error $\SoundnessError = O(\frac{\SCVars\SCDegree}{\SetCardinality{\Field}})$ and the following efficiency parameters.
\begin{itemize}[nolistsep]

  \item \emph{Oracle round:}
  $\Prover$ sends an oracle proof string $\Proof \colon {\Field}^{\SCVars} \to \Field$.

  \item \emph{Interactive proof:}
  after the oracle round, $\Prover$ and $\Verifier$ engage in an $\SCVars$-round interactive proof; across the interaction, the verifier sends to the prover $O(\SCVars)$ field elements, while the prover sends to the verifier $O(\SCVars \SCDegree)$ field elements.

  \item \emph{Queries:}
  after the interactive proof, $\Verifier$ non-adaptively queries $\Proof$ at $\poly(\log \SetCardinality{\Field} + \SCVars + \SCDegree)$ locations.

  \item \emph{Space and time:}
  $\Prover$ runs in space $\poly(\log \SetCardinality{\Field} + \SCVars + \SCDegree + \SetCardinality{\SCSubset})$, while $\Verifier$ in time $\poly(\log \SetCardinality{\Field} + \SCVars + \SCDegree + \SetCardinality{\SCSubset})$. (The prover's space complexity assumes that the randomness tape is two-way rather than one-way; see \remref{rem:prover-space} below.)

\end{itemize}
\end{theorem}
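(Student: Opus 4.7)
The plan is to realize the two-step construction sketched before the theorem statement. In the first round the prover samples a uniformly random $\RandPoly \in \PolynomialRingIndOne{\Field}{\SCVars}{\VariableX}{\SCDegree}$ subject to $\sum_{\vec{\alpha}\in\SCSubset^{\SCVars}} \RandPoly(\vec{\alpha}) = 0$ and sends its evaluation table on $\Field^{\SCVars}$ as the oracle $\Proof$; the verifier replies with a uniformly random $\rho\in\Field$; then the parties execute the classical $\SCVars$-round sumcheck protocol on the claim $\sum_{\vec{\alpha}\in\SCSubset^{\SCVars}} \MaskedPoly(\vec{\alpha}) = \rho\SCConstant$, where $\MaskedPoly \DefineEqual \rho\SCPoly + \RandPoly$. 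Whenever sumcheck requires the verifier to evaluate $\MaskedPoly$ at a point, the verifier queries $\SCPoly$ at that point directly, queries $\RandPoly$ via a standard low-degree self-correction routine on $\Proof$, and combines the two results; before self-correcting, the verifier also executes a standard low-individual-degree test on $\Proof$. The efficiency claims (round, query, and proof-length) follow immediately from the analogous parameters of sumcheck, the self-corrector, and the low-degree tester; the prover's space bound follows by noting that $\RandPoly$ can be sampled and then re-evaluated on demand from a two-way random tape (see Remark~\ref{rem:prover-space}).

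Completeness is immediate: if $\SCPoly$ sums to $\SCConstant$ on $\SCSubset^{\SCVars}$ and $\RandPoly$ sums to $0$, then $\MaskedPoly$ sums to $\rho\SCConstant$, and classical sumcheck completeness together with perfect self-correction of a genuine low-degree codeword finishes the argument. For soundness, let $\Malicious{\Prover}$ send any $\Proof$. First, the low-degree test forces $\Proof$ to be close to a unique polynomial $\Malicious{\RandPoly} \in \PolynomialRingIndOne{\Field}{\SCVars}{\VariableX}{\SCDegree}$ except with small error, so self-correction recovers $\Malicious{\RandPoly}$ at every queried point. When $\SCPoly$ does not sum to $\SCConstant$, the polynomial $g(Y) \DefineEqual Y\cdot(\sum_{\vec{\alpha}}\SCPoly(\vec{\alpha}) - \SCConstant) + (\sum_{\vec{\alpha}}\Malicious{\RandPoly}(\vec{\alpha}))$ is a nonzero degree-one polynomial in $Y$, so $g(\rho)\neq 0$ with probability $1-1/\SetCardinality{\Field}$; conditioned on this event the true sum of $\rho\SCPoly + \Malicious{\RandPoly}$ on $\SCSubset^{\SCVars}$ differs from the claimed value $\rho\SCConstant$, and the classical sumcheck soundness bound $1-(1-\SCDegree/\SetCardinality{\Field})^{\SCVars} = O(\SCVars\SCDegree/\SetCardinality{\Field})$ applies. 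Summing the error contributions gives the claimed soundness.

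The main work is perfect zero knowledge, which I plan to obtain via the following straightline simulator $\Simulator$ that has oracle access to $\SCPoly$ and blackbox access to $\Malicious{\Verifier}$. The simulator maintains two stateful ``local'' samplers, both instantiated from Corollary~\ref{cor:efficient-poly-simulator} via the $\poly$-time constraint detector for $\PSRM$ from Theorem~\ref{thm:psrm-succinct-constraint-detection}: one, $\CodeSimAlgorithm_{\RandPoly}$, that answers queries consistently with a uniformly random $\RandPoly \in \PolynomialRingIndOne{\Field}{\SCVars}{\VariableX}{\SCDegree}$ conditioned on $\sum_{\vec{\alpha}\in\SCSubset^{\SCVars}}\RandPoly(\vec{\alpha}) = 0$, and a second, $\CodeSimAlgorithm_{\MaskedPoly}$, that is only activated after $\Malicious{\Verifier}$ sends $\rho$ and answers queries consistently with a uniformly random $\MaskedPoly$ of the same individual-degree bound conditioned on $\sum_{\vec{\alpha}\in\SCSubset^{\SCVars}}\MaskedPoly(\vec{\alpha}) = \rho\SCConstant$ \emph{and} on the constraints $\MaskedPoly(\vec{\alpha}) = \rho\SCPoly(\vec{\alpha}) + \Simulated{\RandPoly}(\vec{\alpha})$ for every point $\vec{\alpha}$ already answered by $\CodeSimAlgorithm_{\RandPoly}$ (for each such $\vec{\alpha}$ the simulator performs one query to $\SCPoly$). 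Crucially, both full/partial-sum views lie in the code $\PSRMCode{\Field}{\SCVars}{\SCDegree}{\SCSubset}$, so Theorem~\ref{thm:psrm-succinct-constraint-detection} guarantees the samplers run in $\poly(\log\SetCardinality{\Field}+\SCVars+\SCDegree+\SetCardinality{\SCSubset}+\QueryComplexity_{\Malicious{\Verifier}})$ time. During the sumcheck subprotocol, every prover message consists of partial sums of $\MaskedPoly$, which the simulator produces by further queries to $\CodeSimAlgorithm_{\MaskedPoly}$; every oracle query $\vec{\alpha}$ to $\Proof$ by $\Malicious{\Verifier}$ is answered by calling $\CodeSimAlgorithm_{\RandPoly}(\vec{\alpha})$ before $\rho$ arrives, and by calling $\CodeSimAlgorithm_{\MaskedPoly}(\vec{\alpha})$ and returning $\CodeSimAlgorithm_{\MaskedPoly}(\vec{\alpha}) - \rho\SCPoly(\vec{\alpha})$ after $\rho$ arrives (costing one query to $\SCPoly$). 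By construction $q_{\Simulator}\le q_{\Malicious{\Verifier}}$.

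To argue that $\Simulator$'s output is identically distributed to $\Malicious{\Verifier}$'s view, the key step is the following equivalence: in the real interaction, for any fixed transcript of $\Malicious{\Verifier}$'s queries, the joint distribution of $(\RandPoly\text{ answers},\MaskedPoly\text{ answers})$ is exactly the product distribution given by first sampling $\RandPoly$ uniformly subject to summing to zero, then setting $\MaskedPoly \DefineEqual \rho\SCPoly+\RandPoly$. The deterministic relation $\MaskedPoly - \rho\SCPoly = \RandPoly$ means this is equivalent to sampling $\MaskedPoly$ uniformly subject to summing to $\rho\SCConstant$ \emph{and} to $\MaskedPoly(\vec{\alpha}) = \rho\SCPoly(\vec{\alpha})+\RandPoly(\vec{\alpha})$ on the points already queried --- which is precisely the conditional distribution reproduced by $\CodeSimAlgorithm_{\MaskedPoly}$ via Corollary~\ref{cor:efficient-poly-simulator}. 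The main obstacle in the proof will be showing this conditional-distribution identity carefully, handling the adaptivity of $\Malicious{\Verifier}$ (who can interleave queries to $\Proof$ with the sumcheck interaction and can choose $\rho$ as a function of queries already made), and verifying that partial-sum values communicated by the honest prover in each sumcheck round are themselves coordinates of $\PSRM$ codewords so that the same detector handles them. I plan to carry this out by an induction on the number of oracle/interactive interactions processed, with the invariant that at every step the joint distribution of answers produced so far by $\Simulator$ equals the conditional distribution of the corresponding real-world quantities given the transcript so far.
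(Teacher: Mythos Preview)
Your proposal is correct and follows essentially the same approach as the paper. The paper organizes the argument slightly more modularly---first proving the result in an ``$\RandPoly$-hybrid'' model (Lemma~\ref{lem:zk-sumcheck}) via an inefficient simulator $\SlowSimulator$ that explicitly samples $\Simulated{\RandPoly}$ and $\Simulated{\MaskedPoly}$, and only afterwards replacing these explicit samples by calls to the constraint detector, and then separately compiling to an IPCP via low-degree testing and self-correction---whereas you fold these steps together; but the simulator, the distributional identity you identify as the key step, and the use of Corollary~\ref{cor:efficient-poly-simulator} are all the same. One small point: Definition~\ref{def:sumcheck-ipcp} requires $q_{\Simulator}$ and $q_{\Malicious{\Verifier}}$ to be \emph{identically distributed}, not merely $q_{\Simulator}\le q_{\Malicious{\Verifier}}$, so be sure to argue exact equality (the paper handles this by noting that each verifier query to either $\SCPoly$ or $\RandPoly$ costs exactly one simulator query to $\SCPoly$, assuming no duplicate queries).
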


\subsection{Step 1}
\label{sec:zk-sumcheck-step1}

We construct a public-coin IP for sumcheck that is perfect zero knowledge, in the ``$\RandPoly$-hybrid'' model, where the prover and verifier have access to a uniformly random $\RandPoly \in\PolynomialRingIndOne{\Field}{\SCVars}{\VariableX}{\SCDegree}$ conditioned on $\sum_{\vec{\alpha} \in \SCSubset^{\SCVars}} \RandPoly(\vec{\alpha})=0$.

\begin{construction}
\label{con:sumcheck-ip}
The IP system $\pair{\IPSCProver}{\IPSCVerifier}$ is defined as follows. Both $\IPSCProver$ and $\IPSCVerifier$ receive a tuple $(\Field,\SCVars,\SCDegree,\SCSubset,\SCConstant)$ as common input, and two polynomials $\SCPoly,\RandPoly \in \PolynomialRingIndOne{\Field}{\SCVars}{\VariableX}{\SCDegree}$ as oracles. The interaction proceeds as follows:
\begin{enumerate}[nolistsep]
  \item $\IPSCVerifier$ draws a random element $\rho$ in $\Field$, and sends $\rho$ to $\IPSCProver$; then
  \item $\IPSCProver$ and $\IPSCVerifier$ run the sumcheck IP \cite{LundFKN92,Shamir92} on the statement ``$\sum_{\vec{\alpha} \in \SCSubset^{\SCVars}} \MaskedPoly(\vec{\alpha})=\rho\SCConstant$'' where $\MaskedPoly \DefineEqual \rho \SCPoly + \RandPoly$ (with $\IPSCProver$ playing the role of the prover and $\IPSCVerifier$ that of the verifier).
\end{enumerate}
\end{construction}
\noindent
Note that $\pair{\IPSCProver}{\IPSCVerifier}$ is public-coin, and satisfies the following efficiency properties.
\begin{itemize}[label=--]

  \item \sunderline{Communication:}
  The number of rounds is $\SCVars$, because the first message is from $\IPSCVerifier$ to $\IPSCProver$; then the standard sumcheck protocol starts with a message from the prover to the verifier, and is followed by $\SCVars-1$ full rounds. Across the interaction, $\IPSCProver$ sends $O(\SCVars)$ field elements to $\IPSCVerifier$, while $\IPSCProver$ sends $O(\SCVars \SCDegree)$ field elements to $\IPSCVerifier$.

  \item \sunderline{Queries:}
  $\IPSCVerifier$ queries $\SCPoly$ and $\RandPoly$ each at a single random point because, at the end of the sumcheck protocol, the verifier queries $\MaskedPoly$ at a random point $\vec{\gamma}$, and such a query can be ``simulated'' by querying $\SCPoly$ and $\RandPoly$ at $\vec{\gamma}$ and then using these answers, along with $\rho$, to compute the necessary value for $\MaskedPoly$.

  \item \sunderline{Space and time:}
  $\IPSCProver$ runs in space $\poly(\log \SetCardinality{\Field} + \SCVars + \SCDegree + \SetCardinality{\SCSubset})$, while $\IPSCVerifier$ in time $\poly(\log \SetCardinality{\Field} + \SCVars + \SCDegree + \SetCardinality{\SCSubset})$. (The prover's space complexity assumes that the randomness tape is two-way; see \remref{rem:prover-space} below.)

\end{itemize}
We now state and prove the completeness, soundness, and perfect zero knowledge properties.

\begin{lemma}
\label{lem:zk-sumcheck}
The IP system $\pair{\IPSCProver}{\IPSCVerifier}$ satisfies the following properties.
\begin{itemize}

  \item \textsc{Completeness.}
  For every $\big( (\Field,\SCVars,\SCDegree,\SCSubset,\SCConstant) , \SCPoly \big) \in \SCRelation^{\yes}$
  and $\RandPoly \in \PolynomialRingIndOne{\Field}{\SCVars}{\VariableX}{\SCDegree}$ with $\sum_{\vec{\alpha} \in \SCSubset^{\SCVars}} \RandPoly(\vec{\alpha})=0$,
  \begin{equation*}
  \Pr\Big[
  	\Interact{\IPSCProver^{\SCPoly,\RandPoly}(\Field,\SCVars,\SCDegree,\SCSubset,\SCConstant)}
	         {\IPSCVerifier^{\SCPoly,\RandPoly}(\Field,\SCVars,\SCDegree,\SCSubset,\SCConstant)}
	=1
  \Big]=1
  \enspace.
  \end{equation*}

  \item \textsc{Soundness.}
  For every $\big( (\Field,\SCVars,\SCDegree,\SCSubset,\SCConstant) , \SCPoly \big) \in \SCRelation^{\no}$, $\RandPoly \in \PolynomialRingIndOne{\Field}{\SCVars}{\VariableX}{\SCDegree}$, and malicious prover $\Malicious{\IOPProver}$,
  \begin{equation*}
  \Pr\Big[
    \Interact{\Malicious{\IOPProver}}
             {\IPSCVerifier^{\SCPoly,\RandPoly}(\Field,\SCVars,\SCDegree,\SCSubset,\SCConstant)}
   =1
  \Big] \leq \frac{\SCVars\SCDegree+1}{\SetCardinality{\Field}}
  \enspace.
  \end{equation*}

  \item \textsc{Perfect zero knowledge.}
  There exists a straightline simulator $\IPSCSimulator$ such that, for every $\big( (\Field,\SCVars,\SCDegree,\SCSubset,\SCConstant) , \SCPoly \big) \in \SCRelation^{\yes}$ and malicious verifier $\Malicious{\Verifier}$, the following two random variables are identically distributed
\begin{equation*}
\Big(\IPSCSimulator^{\Malicious{\Verifier},\SCPoly}(\Field,\SCVars,\SCDegree,\SCSubset,\SCConstant) \;,\; q_{\IPSCSimulator} \Big)
\quad\text{and}\quad
\Big(\IPCPView{\IPSCProver^{\SCPoly,\RandPoly}(\Field,\SCVars,\SCDegree,\SCSubset,\SCConstant)}{\Malicious{\Verifier}^{\SCPoly,\RandPoly}} \;,\; q_{\Malicious{\Verifier}} \Big)
\enspace,
\end{equation*}
where $\RandPoly$ is uniformly random in $\PolynomialRingIndOne{\Field}{\SCVars}{\VariableX}{\SCDegree}$ conditioned on $\sum_{\vec{\alpha} \in \SCSubset^{\SCVars}} \RandPoly(\vec{\alpha})=0$, $q_{\IPSCSimulator}$ is the number of queries to $\SCPoly$ made by $\IPSCSimulator$, and $q_{\Malicious{\Verifier}}$ is the number of queries to $\SCPoly$ or $\RandPoly$ made by $\Malicious{\Verifier}$. Moreover, $\IPSCSimulator$ runs in time $\poly(\log \SetCardinality{\Field} + \SCVars + \SCDegree + \SetCardinality{\SCSubset} + \QueryComplexity_{\Malicious{\Verifier}})$\, where $\QueryComplexity_{\Malicious{\Verifier}}$ is $\Malicious{\Verifier}$'s query complexity.
\end{itemize}
\end{lemma}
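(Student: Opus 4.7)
The plan is to verify completeness and soundness via standard arguments, and then focus the main effort on constructing the zero-knowledge simulator.

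For completeness: if $\SCPoly$ sums to $\SCConstant$ on $\SCSubset^{\SCVars}$ and $\RandPoly$ sums to zero, then $\MaskedPoly \DefineEqual \rho\SCPoly + \RandPoly$ lies in $\PolynomialRingIndOne{\Field}{\SCVars}{\VariableX}{\SCDegree}$ and sums to $\rho\SCConstant$, so the standard sumcheck IP on $\MaskedPoly$ accepts with probability $1$. For soundness, if $\SCPoly$ does not sum to $\SCConstant$, the equation $\rho \sum_{\vec{\alpha} \in \SCSubset^{\SCVars}} \SCPoly(\vec{\alpha}) + \sum_{\vec{\alpha} \in \SCSubset^{\SCVars}} \RandPoly(\vec{\alpha}) = \rho\SCConstant$ has at most one solution $\rho \in \Field$, so with probability at least $1 - 1/\SetCardinality{\Field}$ over the verifier's choice of $\rho$ the polynomial $\MaskedPoly$ does not sum to $\rho\SCConstant$; conditioned on this event the classical $\SCVars\SCDegree/\SetCardinality{\Field}$ soundness bound for sumcheck applies to $\MaskedPoly$, and a union bound yields the claimed error $(\SCVars\SCDegree + 1)/\SetCardinality{\Field}$.

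The hard part is perfect zero knowledge. The naive simulator would sample a uniform $\RandPoly$ in the subspace $V$ of polynomials in $\PolynomialRingIndOne{\Field}{\SCVars}{\VariableX}{\SCDegree}$ summing to zero on $\SCSubset^{\SCVars}$ and run the protocol honestly; the obstruction is twofold, since representing $\RandPoly$ explicitly requires exponential space and honestly computing the prover's sumcheck messages would demand partial sums of $\MaskedPoly$, hence of $\SCPoly$, costing exponentially many $\SCPoly$ queries. My plan is to lazily sample $\RandPoly$ before $\rho$ is fixed, and $\MaskedPoly$ afterwards, using the algorithm $\CodeSimAlgorithm$ from \corref{cor:efficient-poly-simulator} to sample coordinates of a uniform codeword of $\PSRM$ consistent with any set of prior observations. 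Crucially, $\PSRM$ encodes not only point evaluations but also all relevant partial sums, so $\CodeSimAlgorithm$ can also simulate sumcheck prover messages: each value $g_i(\beta)$ equals the partial-sum entry $\Codeword_{\MaskedPoly}(r_1, \ldots, r_{i-1}, \beta)$ in the notation of \defref{def:psrm}.

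Concretely, $\IPSCSimulator$ maintains a state $S$ of observations in $\Field^{\leq \SCVars} \times \Field$, initialized with the single pair encoding $\sum_{\vec{\alpha} \in \SCSubset^{\SCVars}} \RandPoly(\vec{\alpha}) = 0$. It forwards every $\SCPoly$ query to the oracle, and answers each $\RandPoly(\vec{\alpha})$ query that arrives before $\rho$ is sent by calling $\CodeSimAlgorithm$ on $(S, \vec{\alpha})$ and appending the result to $S$. When $\rho$ arrives, it reinterprets $S$ as observations of $\MaskedPoly$ by querying $\SCPoly(\vec{\alpha})$ at each previously recorded prefix $\vec{\alpha}$ and replacing $(\vec{\alpha}, \beta) \in S$ by $(\vec{\alpha}, \beta + \rho\SCPoly(\vec{\alpha}))$ (for the empty prefix it replaces the seed value $0$ by $\rho\SCConstant$). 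Thereafter it assembles each sumcheck message $g_i$ by calling $\CodeSimAlgorithm$ at the $\SCDegree$ needed partial-sum indices, and answers each subsequent $\RandPoly(\vec{\alpha})$ query by sampling $\MaskedPoly(\vec{\alpha})$ via $\CodeSimAlgorithm$, querying $\SCPoly(\vec{\alpha})$, and returning the difference.

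Perfect zero knowledge will then follow from two facts: (i) by the guarantee of \corref{cor:efficient-poly-simulator}, every sampled value is distributed exactly as the corresponding entry of a uniform codeword of $\PSRM$ conditioned on $S$, matching the real conditional distribution of uniform $\RandPoly \in V$ (resp.\ uniform $\MaskedPoly$ of prescribed sum); and (ii) the map $\RandPoly \mapsto \rho\SCPoly + \RandPoly$ is an $\Field$-linear bijection between $V$ and the coset of polynomials in $\PolynomialRingIndOne{\Field}{\SCVars}{\VariableX}{\SCDegree}$ summing to $\rho\SCConstant$, so the pre- and post-$\rho$ representations of the distribution agree, making the reinterpretation step distribution-preserving. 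The main delicacy I anticipate is handling a verifier that adaptively interleaves oracle queries with protocol messages: the formal argument must induct on $\Malicious{\Verifier}$'s action sequence and exhibit a coupling showing that the simulator's state reproduces the real conditional distribution at every step, including the moment $\rho$ is sent. Efficiency $\poly(\log\SetCardinality{\Field} + \SCVars + \SCDegree + \SetCardinality{\SCSubset} + \QueryComplexity_{\Malicious{\Verifier}})$ and the $\SCPoly$-query count follow directly from \corref{cor:efficient-poly-simulator}.
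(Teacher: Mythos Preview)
Your proposal is correct and follows essentially the same approach as the paper: the completeness and soundness arguments are identical, and your simulator is precisely the paper's efficient simulator $\IPSCSimulator$, which maintains a table of observations, uses $\CodeSimAlgorithm$ from \corref{cor:efficient-poly-simulator} to lazily sample $\RandPoly$ before $\rho$ and $\MaskedPoly$ after, and performs the same shift $(\vec{\gamma},\beta)\mapsto(\vec{\gamma},\beta+\rho\SCPoly(\vec{\gamma}))$ when $\rho$ arrives. The only presentational difference is that the paper first exhibits an inefficient simulator $\SlowSimulator$ that explicitly samples $\Simulated{\RandPoly}$ and $\Simulated{\MaskedPoly}$, proves its output is correctly distributed via a clean ``$\FView{\MaskedPoly,\SCPoly,r}$'' argument (avoiding the inductive coupling you anticipate), and then observes that every access $\SlowSimulator$ makes to these polynomials is a $\PSRM$ query and hence can be replaced by a call to $\CodeSimAlgorithm$.
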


\begin{proof}
We argue first completeness, then soundness, and, finally, perfect zero knowledge.

\parhead{Completeness}
If both $\SCPoly$ sums to $\SCConstant$ on $\SCSubset^{\SCVars}$ and $\RandPoly$ sums to $0$ on $\SCSubset^{\SCVars}$, then $\MaskedPoly \DefineEqual \rho \SCPoly + \RandPoly$ sums to $\rho\SCConstant$ on $\SCSubset^{\SCVars}$ for every choice of $\rho$. Then completeness follows from the completeness of standard sumcheck.

\parhead{Soundness}
For every $\SCPoly,\RandPoly \in \PolynomialRingIndOne{\Field}{\SCVars}{\VariableX}{\SCDegree}$ with $\sum_{\vec{\alpha} \in \SCSubset^{\SCVars}} \SCPoly(\vec{\alpha}) \neq 0$. Then $\sum_{\vec{\alpha} \in \SCSubset^{\SCVars}} \MaskedPoly(\vec{\alpha})$ equals $\rho\SCConstant$ for at most one choice of $\rho$, namely, $(\sum_{\vec{\alpha} \in \SCSubset^{\SCVars}} \RandPoly(\vec{\alpha}))/(\SCConstant-\sum_{\vec{\alpha} \in \SCSubset^{\SCVars}} \SCPoly(\vec{\alpha}))$. Thus, except with probability $1/\SetCardinality{\Field}$, the sumcheck protocol is invoked on an incorrect claim, which incurs a soundness error of at most $\frac{\SCVars\SCDegree}{\SetCardinality{\Field}}$. The claimed soundness error follows by a union bound.

\parhead{Perfect zero knowledge}
We begin by proving perfect zero knowledge via a straightline simulator $\SlowSimulator$ whose number of queries to $\SCPoly$ equals $q_{\Malicious{\Verifier}}$, but runs in time $\poly(\SetCardinality{\Field}^{\SCVars}+q_{\Malicious{\IOPVerifier}})$. After that, we explain how to modify $\SlowSimulator$ into another simulator $\IPSCSimulator$, with an identical output distribution, that runs in the faster time claimed in the lemma.

\begin{mdframed}
{\small
The simulator $\SlowSimulator$, given straightline access to $\Malicious{\IOPVerifier}$ and oracle access to $\SCPoly$, works as follows:
\medskip
\begin{enumerate}[nolistsep]

  \item \label{step:ipsc-sim-draw-R}
  Draw a uniformly random $\Simulated{\RandPoly} \in \PolynomialRingIndOne{\Field}{\SCVars}{\VariableX}{\SCDegree}$ conditioned on $\sum_{\vec{\alpha} \in \SCSubset^{\SCVars}} \Simulated{\RandPoly}(\vec{\alpha}) = 0$.

  \item \label{step:ipsc-sim-before-rho}
  Whenever $\Malicious{\IOPVerifier}$ queries $\SCPoly$ at $\vec{\gamma} \in \Field^{\SCVars}$, return $\SCPoly(\vec{\gamma})$; whenever $\Malicious{\IOPVerifier}$ queries $\RandPoly$ at $\vec{\gamma} \in \Field^{\SCVars}$, return $\Simulated{\RandPoly}(\vec{\gamma})$.

  \item \label{step:ipsc-sim-draw-Q}
  Receive $\Malicious{\rho}$ from $\Malicious{\IOPVerifier}$, and draw a uniformly random $\Simulated{\MaskedPoly} \in \PolynomialRingIndOne{\Field}{\SCVars}{\VariableX}{\SCDegree}$ conditioned on $\sum_{\vec{\alpha} \in \SCSubset^{\SCVars}} \Simulated{\MaskedPoly}(\vec{\alpha}) = \Malicious{\rho}\SCConstant$ and $\Simulated{\MaskedPoly}(\vec{\gamma}) = \Malicious{\rho} \SCPoly(\vec{\gamma}) + \Simulated{\RandPoly}(\vec{\gamma})$ for every coordinate $\vec{\gamma} \in \Field^{\SCVars}$ queried in \stepref{step:ipsc-sim-before-rho}. (This latter condition requires querying $\SCPoly$ at $\vec{\gamma}$ for every coordinate $\vec{\gamma} \in \Field^{\SCVars}$ queried to $\Simulated{\RandPoly}$ in \stepref{step:ipsc-sim-before-rho}.)

  \item \label{step:ipsc-sim-after-rho}
  Hereafter: whenever $\Malicious{\IOPVerifier}$ queries $\SCPoly$ at $\vec{\gamma} \in \Field^{\SCVars}$, return $\SCPoly(\vec{\gamma})$; whenever $\Malicious{\IOPVerifier}$ queries $\RandPoly$ at $\vec{\gamma} \in \Field^{\SCVars}$, return $\Simulated{\MaskedPoly}(\vec{\gamma}) - \Malicious{\rho} \SCPoly(\vec{\gamma})$. (In either case, a query to $\SCPoly$ is required.)

  \item \label{step:ipsc-sim-sumcheck}
  Run the sumcheck protocol with $\Malicious{\IOPVerifier}$ on $\Simulated{\MaskedPoly}$. (Note that $\Malicious{\IOPVerifier}$ may query $\SCPoly$ or $\RandPoly$ before, during, or after this protocol.)

  \item Output the view of the simulated $\Malicious{\IOPVerifier}$.

\end{enumerate}
}
\end{mdframed}
Note that $\SlowSimulator$ runs in time $\poly(\SetCardinality{\Field}^{\SCVars}+q_{\Malicious{\IOPVerifier}})$. Also, $\SlowSimulator$ makes one query to $\SCPoly$ for every query to $\SCPoly$ or $\RandPoly$ by $\Malicious{\IOPVerifier}$ (at least provided that $\Malicious{\IOPVerifier}$'s queries have no duplicates, which we can assume without loss of generality). Thus, overall, the number of queries to $\SCPoly$ by $\SlowSimulator$ is $q_{\Malicious{\IOPVerifier}}$. We now argue that $\SlowSimulator$'s output is identically distributed to $\Malicious{\IOPVerifier}$'s view when interacting with the honest prover $\IPSCProver$, for $\RandPoly$ random in $\PolynomialRingIndOne{\Field}{\SCVars}{\VariableX}{\SCDegree}$.
\begin{adjustwidth}{1cm}{1cm}
\begin{uclaim}
$\SlowSimulator^{\Malicious{\Verifier},\SCPoly} \equiv \IPCPView{\IPSCProver^{\SCPoly,\RandPoly}}{\Malicious{\Verifier}^{\SCPoly,\RandPoly}}$.
\end{uclaim}
\begin{proof}
Define the random variable $\MaskedPoly \DefineEqual \Malicious{\rho} \SCPoly + \RandPoly$, where $\Malicious{\rho}$ is chosen by $\Malicious{\IOPVerifier}$. Observe that there exists a (deterministic) function $\FView{\cdot}$ such that
\begin{equation*}
\IPCPView{\IPSCProver^{\SCPoly,\RandPoly}}{\Malicious{\Verifier}^{\SCPoly,\RandPoly}} = \FView{\MaskedPoly,\SCPoly,r}
\quad\text{and}\quad
\SlowSimulator^{\Malicious{\Verifier},\SCPoly} = \FView{\Simulated{\MaskedPoly},\SCPoly,\Randomness} \enspace,
\end{equation*}
where the random variable $\Randomness$ is $\Malicious{\IOPVerifier}$'s private randomness. Indeed,
\begin{inparaenum}[(i)]
  \item the messages sent and received by $\Malicious{\IOPVerifier}$ are identical to those when interacting with $\IPSCProver$ on $\MaskedPoly$ and $\Simulated{\MaskedPoly}$, respectively;
  \item $\Malicious{\IOPVerifier}$'s queries to $\SCPoly$ are answered honestly;
  \item $\Malicious{\IOPVerifier}$'s queries to $\RandPoly$ are answered by $\RandPoly = \MaskedPoly - \Malicious{\rho} \SCPoly$ and $\Simulated{\RandPoly} = \Simulated{\MaskedPoly} - \Malicious{\rho} \SCPoly$ respectively.
\end{inparaenum}
We are only left to argue that, for any choice of $\Randomness$, $\MaskedPoly$ and $\Simulated{\MaskedPoly}$ are identically distributed:
\begin{itemize}[label=--]

  \item $\MaskedPoly = \Malicious{\rho}\SCPoly + \RandPoly$ is uniformly random in $\PolynomialRingIndOne{\Field}{\SCVars}{\VariableX}{\SCDegree}$ conditioned on $\sum_{\vec{\alpha} \in \SCSubset^{\SCVars}} \MaskedPoly(\vec{\alpha}) = \Malicious{\rho}\SCConstant$, because $\RandPoly$ is uniformly random in $\PolynomialRingIndOne{\Field}{\SCVars}{\VariableX}{\SCDegree}$ conditioned on $\sum_{\vec{\alpha} \in \SCSubset^{\SCVars}} \RandPoly(\vec{\alpha}) = 0$ (and $\SCPoly$ is in $\PolynomialRingIndOne{\Field}{\SCVars}{\VariableX}{\SCDegree}$ and satisfies $\sum_{\vec{\alpha} \in \SCSubset^{\SCVars}} \SCPoly(\vec{\alpha}) = \SCConstant$); and

  \item $\Simulated{\MaskedPoly}$ is uniformly random in $\PolynomialRingIndOne{\Field}{\SCVars}{\VariableX}{\SCDegree}$ conditioned on $\sum_{\vec{\alpha} \in \SCSubset^{\SCVars}} \Simulated{\MaskedPoly}(\vec{\alpha}) = \Malicious{\rho}\SCConstant$, because $\Simulated{\MaskedPoly}$ is sampled at random in $\PolynomialRingIndOne{\Field}{\SCVars}{\VariableX}{\SCDegree}$ conditioned on $\sum_{\vec{\alpha} \in \SCSubset^{\SCVars}} \Simulated{\MaskedPoly}(\vec{\alpha}) = \Malicious{\rho}\SCConstant$ and $\Simulated{\MaskedPoly}(\vec{\gamma}_{i}) = \Simulated{\RandPoly}(\vec{\gamma}_{i}) + \Malicious{\rho} \SCPoly(\vec{\gamma}_{i})$ for some (adversarial) choice of $\vec{\gamma}_{1},\dots,\vec{\gamma}_{k}$. But $\Simulated{\RandPoly}$ is uniformly random in $\PolynomialRingIndOne{\Field}{\SCVars}{\VariableX}{\SCDegree}$ such that $\sum_{\vec{\alpha} \in \SCSubset^{\SCVars}} \Simulated{\RandPoly}(\vec{\alpha}) = 0$, so the latter condition says that $\Simulated{\MaskedPoly}$ matches a random polynomial on the set of points $\{\vec{\gamma}_{1},\dots,\vec{\gamma}_{k}\}$, giving the claimed distribution for $\Simulated{\MaskedPoly}$. \qedhere

\end{itemize}
\end{proof}
\end{adjustwidth}
We explain how to modify $\SlowSimulator$ so as to reduce the running time to $\poly(\log \SetCardinality{\Field} + \SCVars + \SCDegree + \SetCardinality{\SCSubset} + \QueryComplexity_{\Malicious{\Verifier}})$.

Note that $\SlowSimulator$'s inefficiency arises from sampling two random polynomials in $\PolynomialRingIndOne{\Field}{\SCVars}{\VariableX}{\SCDegree}$, namely $\Simulated{\RandPoly}$ and $\Simulated{\MaskedPoly}$, subject to certain constraints, and using them to answer $\Malicious{\Verifier}$'s messages and queries. We observe (and carefully justify below) that all information about $\Simulated{\RandPoly}$ and $\Simulated{\MaskedPoly}$ received by $\Malicious{\Verifier}$ is answers to queries of the form ``given $\vec{\gamma} \in \Field^{\leq\SCVars}$, return the value $A(\vec{\gamma}) \DefineEqual \sum_{\vec{\alpha} \in \SCSubset^{\SCVars - |\vec{\gamma}|}} A(\vec{\gamma}, \vec{\alpha})$'' for a random $A \in \PolynomialRingIndOne{\Field}{\SCVars}{\VariableX}{\SCDegree}$, possibly conditioned on previous such queries; when $\vec{\gamma}$ has length zero we use the symbol $\EmptyVector$, so that $A(\EmptyVector)$ denotes $\sum_{\vec{\alpha} \in \SCSubset^{\SCVars}} A(\vec{\alpha})$. The new simulator can use the algorithm $\CodeSimAlgorithm$ from our \corref{cor:efficient-poly-simulator} to adaptively answer such queries, without ever explicitly sampling the two polynomials.

We now argue that all information about $\Simulated{\RandPoly}$ and $\Simulated{\MaskedPoly}$ received by $\Malicious{\Verifier}$ from $\SlowSimulator$ can be viewed as queries of the above form, by discussing each step of $\SlowSimulator$.
\begin{itemize}

  \item In \stepref{step:ipsc-sim-draw-R}, $\SlowSimulator$ draws a uniformly random $\Simulated{\RandPoly} \in \PolynomialRingIndOne{\Field}{\SCVars}{\VariableX}{\SCDegree}$ conditioned on $\Simulated{\RandPoly}(\EmptyVector) = 0$.

  \item In \stepref{step:ipsc-sim-before-rho}, $\SlowSimulator$ answers any query $\vec{\gamma} \in \Field^{\SCVars}$ to $\RandPoly$ with $\Simulated{\RandPoly}(\vec{\gamma})$.

  \item In \stepref{step:ipsc-sim-draw-Q}, $\SlowSimulator$ draws a uniformly random $\Simulated{\MaskedPoly} \in \PolynomialRingIndOne{\Field}{\SCVars}{\VariableX}{\SCDegree}$ conditioned on $\Simulated{\MaskedPoly}(\bot) = \Malicious{\rho}\SCConstant$ and also on $\Simulated{\MaskedPoly}(\vec{\gamma}) = \Simulated{\RandPoly}(\vec{\gamma}) + \Malicious{\rho} \SCPoly(\vec{\gamma})$ for at most $q_{\Malicious{\Verifier}}$ points $\vec{\gamma} \in \Field^{\SCVars}$ (namely, the points corresponding to queries in \stepref{step:ipsc-sim-before-rho}).

  \item In \stepref{step:ipsc-sim-after-rho}, $\SlowSimulator$ replies any query $\vec{\gamma} \in \Field^{\SCVars}$ to $\RandPoly$ with $\Simulated{\MaskedPoly}(\vec{\gamma}) - \Malicious{\rho} \SCPoly(\vec{\gamma})$.

  \item In \stepref{step:ipsc-sim-sumcheck}, $\SlowSimulator$ runs the sumcheck protocol with $\Malicious{\Verifier}$ on $\Simulated{\MaskedPoly}$, which requires computing univariate polynomials of the form $\sum_{\vec{\alpha} \in \SCSubset^{\SCVars-|\theta|-1}} \Simulated{\MaskedPoly}(\vec{\theta}, \VariableX, \vec{\alpha}) \in \Field[\VariableX]$ for various choices of $\vec{\theta} \in \Field^{<\SCVars}$. Each of these polynomials has degree less than $\SCDegree$, and so can be obtained by interpolation from its evaluation at any $\SCDegree$ distinct points; each of these is the answer of a query to $\Simulated{\MaskedPoly}$ of the required form, with $\vec{\gamma} = (\vec{\theta}, \delta)$  for some $\delta \in \Field$. Overall, during the protocol, $\SlowSimulator$ only needs to query $\Simulated{\MaskedPoly}$ at $\SCVars\SCDegree$ points $\vec{\gamma} \in \Field^{\leq \SCVars}$.

\end{itemize}
In sum, we can modify $\SlowSimulator$ so that instead of explicitly sampling $\Simulated{\RandPoly}$ and $\Simulated{\MaskedPoly}$, it uses $\CodeSimAlgorithm$ to sample the answer for each query to $\Simulated{\MaskedPoly}$ or $\Simulated{\RandPoly}$, conditioning the uniform distribution on the answers to previous queries. Putting all of this together, we obtain the simulator $\IPSCSimulator$ described below, whose output is identically distributed to the output of $\SlowSimulator$.
\begin{mdframed}
{\small
The simulator $\IPSCSimulator$, given straightline access to $\Malicious{\IOPVerifier}$ and oracle access to $\SCPoly$, works as follows:
\medskip

\begin{enumerate}[nolistsep]

  \item Let $\AnsTable{\Simulated{\RandPoly}}$ be a subset of $\Field^{\leq \SCVars} \times \Field$ that records query-value pairs for $\Simulated{\RandPoly}$; initially, $\AnsTable{\Simulated{\RandPoly}}$ equals $\{(\EmptyVector, 0)\}$.

  \item \label{step:fast-ipsc-sim-before-rho}
  Whenever $\Malicious{\IOPVerifier}$ queries $\SCPoly$ at $\vec{\gamma} \in \Field^{\SCVars}$, return $\SCPoly(\vec{\gamma})$; whenever $\Malicious{\IOPVerifier}$ queries $\RandPoly$ at $\vec{\gamma} \in \Field^{\SCVars}$, return $\beta \DefineEqual \CodeSimAlgorithm(\Field,\SCVars,\SCDegree,\SCSubset,\AnsTable{\Simulated{\RandPoly}}, \vec{\gamma})$. In the latter case, add $(\vec{\gamma}, \beta)$ to $\AnsTable{\Simulated{\RandPoly}}$.

  \item \label{step:fast-ipsc-sim-draw-Q}
  Receive $\Malicious{\rho}$ from $\Malicious{\IOPVerifier}$, and compute $\AnsTable{\Simulated{\MaskedPoly}} \DefineEqual \{ (\vec{\gamma}, \beta + \Malicious{\rho} \SCPoly(\vec{\gamma})) \}_{ (\vec{\gamma}, \beta) \in \AnsTable{\Simulated{\RandPoly}} }$; this subset of $\Field^{\leq \SCVars} \times \Field$ records query-value pairs for $\Simulated{\MaskedPoly}$. Note that $\AnsTable{\Simulated{\MaskedPoly}}$ includes the pair $(\EmptyVector, \Malicious{\rho}\SCConstant)$ because $\SCPoly(\EmptyVector) = \SCConstant$ by assumption.

  \item \label{step:fast-ipsc-sim-after-rho}
  Hereafter: whenever $\Malicious{\IOPVerifier}$ queries $\SCPoly$ at $\vec{\gamma} \in \Field^{\SCVars}$, return $\SCPoly(\vec{\gamma})$; whenever $\Malicious{\IOPVerifier}$ queries $\RandPoly$ at $\vec{\gamma} \in \Field^{\SCVars}$, return $\beta' \DefineEqual \beta - \Malicious{\rho} \SCPoly(\vec{\gamma})$ where $\beta \DefineEqual \CodeSimAlgorithm(\Field,\SCVars,\SCDegree,\SCSubset,\AnsTable{\Simulated{\MaskedPoly}},\vec{\gamma})$. In the latter case, add $(\vec{\gamma}, \beta)$ to $\AnsTable{\Simulated{\MaskedPoly}}$.

  \item \label{step:fast-ipsc-sim-sumcheck}
  Run the sumcheck protocol with $\Malicious{\IOPVerifier}$ on $\Simulated{\MaskedPoly}$, by using the algorithm $\CodeSimAlgorithm$ and updating $\AnsTable{\Simulated{\MaskedPoly}}$ appropriately. (Note that $\Malicious{\IOPVerifier}$ may query $\SCPoly$ or $\RandPoly$ before, during, or after this protocol.)

  \item Output the view of the simulated $\Malicious{\IOPVerifier}$.

\end{enumerate}
}
\end{mdframed}
Note that $\IPSCSimulator$ makes the same number of queries to $\SCPoly$ as $\SlowSimulator$ does. Also, the number of pairs in $\AnsTable{\Simulated{\RandPoly}}$ is at most $q_{\Malicious{\Verifier}} + \SCVars\SCDegree + 1$; ditto for $\AnsTable{\Simulated{\MaskedPoly}}$. Since the algorithm $\CodeSimAlgorithm$ is called at most $q_{\Malicious{\Verifier}} + \SCVars\SCDegree$ times, the running time of $\IPSCSimulator$ is $\poly(\log \SetCardinality{\Field} + \SCVars + \SCDegree + \SetCardinality{\SCSubset} + \QueryComplexity_{\Malicious{\Verifier}})$, as required.
\end{proof}

\subsection{Step 2}
\label{sec:zk-sumcheck-step2}

The IP described and analyzed in \secref{sec:zk-sumcheck-step1} is in the ``$\RandPoly$-hybrid'' model. We now compile that IP into an IPCP, by using proximity testing and self-correction, thereby concluding the proof of the PZK Sumcheck Theorem.

\begin{proof}[Proof of \thmref{thm:sumcheck-ipcp}]
Construct an IPCP system $(\Prover,\Verifier)$ for sumcheck as follows:
\begin{itemize}

  \item The prover $\Prover$, given input $(\Field,\SCVars,\SCDegree,\SCSubset,\SCConstant)$ and oracle access to $\SCPoly$, samples a uniformly random polynomial $\RandPoly \in \PolynomialRingIndOne{\Field}{\SCVars}{\VariableX}{\SCDegree}$ conditioned on $\sum_{\vec{\alpha} \in \SCSubset^{\SCVars}} \RandPoly(\vec{\alpha})=0$ and sends its evaluation $\Proof \colon \Field^{\SCVars} \to \Field$ to the verifier $\Verifier$. Then $\Prover$ simulates $\IPSCProver^{\SCPoly,\RandPoly}(\Field,\SCVars,\SCDegree,\SCSubset,\SCConstant)$.

  \item The verifier $\Verifier$, after receiving a proof string $\Proof \colon \Field^{\SCVars} \to \Field$, simulates $\IPSCVerifier^{\SCPoly,\Proof}(\Field,\SCVars,\SCDegree,\SCSubset,\SCConstant)$ up to $\IPSCVerifier$'s single query $\vec{\alpha} \in \Field^{\SCVars}$ to $\Proof$ (which occurs after the interaction), which $\Verifier$ does not answer directly but instead answers as follows. First, $\Verifier$ checks that $\Proof$ is $\varrho$-close to the evaluation of a polynomial in $\PolynomialRingIndOne{\Field}{\SCVars}{\VariableX}{\SCDegree}$ by performing an individual-degree test with proximity parameter $\varrho \DefineEqual \frac{1}{8}$ and soundness error $\epsilon \DefineEqual \frac{\SCVars\SCDegree}{\SetCardinality{\Field}}$ \cite{GoldreichS06,GurR15}; then, $\Verifier$ computes $\Proof(\vec{\alpha})$ via self-correction with soundness error $\epsilon$ \cite{RubinfeldS96,AroraS03}, and replies with that value. Both procedures require $\poly(\log \SetCardinality{\Field} + \SCVars + \SCDegree)$ queries and time. Finally, $\Verifier$ rejects if $\IPSCVerifier$ rejects or the individual degree test rejects.

\end{itemize}
Completeness and perfect zero knowledge of $(\Prover,\Verifier)$ are inherited, in a straightforward way, from those of $(\IPSCProver,\IPSCVerifier)$. We now argue soundness. So consider an instance-witness pair $\big( (\Field,\SCVars,\SCDegree,\SCSubset,\SCConstant) , \SCPoly \big) \in \SCRelation^{\no}$ and a malicious prover $\Malicious{\Prover}$, and denote by $\Malicious{\Proof} \colon \Field^{\SCVars} \to \Field$ the proof string sent by $\Malicious{\Prover}$. We distinguish between the following two cases.
\begin{itemize}

  \item \emph{Case 1: $\Malicious{\Proof}$ is $\varrho$-far from evaluations of polynomials in $\PolynomialRingIndOne{\Field}{\SCVars}{\VariableX}{\SCDegree}$.}

  In this case, the low-degree test accepts with probability at most $\epsilon$.

  \item \emph{Case 2: $\Malicious{\Proof}$ is $\varrho$-close to evaluations of polynomials in $\PolynomialRingIndOne{\Field}{\SCVars}{\VariableX}{\SCDegree}$.}

  In this case, let $\Malicious{\RandPoly}$ be the unique polynomial in $\PolynomialRingIndOne{\Field}{\SCVars}{\VariableX}{\SCDegree}$ whose evaluation is $\varrho$-close to $\Malicious{\Proof}$; this polynomial exists because $\varrho$ is less than the unique decoding radius (of the corresponding Reed--Muller code), which equals $\frac{1}{2} (1 - \frac{\SCDegree-1}{\SetCardinality{\Field}})^{\SCVars}$, and is at most $\frac{1}{4}$ by the assumption that $\frac{\SCVars\SCDegree}{\SetCardinality{\Field}} < \frac{1}{2}$. By the soundness of $(\IPSCProver,\IPSCVerifier)$, the probability that $\IPSCVerifier^{\SCPoly,\Malicious{\RandPoly}}$ accepts is at most $\frac{\SCVars\SCDegree+1}{\SetCardinality{\Field}}$ (see \lemref{lem:zk-sumcheck}). However $\Verifier$ only has access to $\Malicious{\Proof}$, and uses self-correction on it to compute $\Malicious{\RandPoly}$ at the single location $\vec{\alpha} \in \Field^{\SCVars}$ required by $\IPSCVerifier$; the probability that the returned value is not correct is at most $\epsilon$. Hence, by a union bound, $\Verifier$ accepts with probability at most $\frac{\SCVars\SCDegree+1}{\SetCardinality{\Field}} + \epsilon$.

\end{itemize}
Overall, we deduce that $\Verifier$ accepts with probability at most $\max \{\epsilon \,,\, \frac{\SCVars\SCDegree+1}{\SetCardinality{\Field}} + \epsilon \} \leq 3 \frac{\SCVars\SCDegree}{\SetCardinality{\Field}}$.
\end{proof}

\begin{remark}[necessity of IPCP]
One may be tempted to ``flatten'' the IPCP used to prove \thmref{thm:sumcheck-ipcp}, by sending a single PCP that already contains all possible transcripts, relative to all possible $\rho$'s. Such a modification does indeed preserve completeness and soundness. (In fact, even a small subset of $\rho$'s is enough for constant soundness error, because only one $\rho$ in $\Field$ is ``bad''.) However, this modification does \emph{not} preserve zero knowledge: if a verifier learns, say, the partial sums $\alpha_{1} \DefineEqual \rho_{1}\SCPoly(\vec{\gamma}) + \RandPoly(\vec{\gamma})$ and $\alpha_{2} \DefineEqual \rho_{2}\SCPoly(\vec{\gamma}) + \RandPoly(\vec{\gamma})$ for $\rho_{1} \neq \rho_{2}$ and some $\vec{\gamma} \in \Field^{\leq\SCVars}$ then he also learns $\SCPoly(\vec{\gamma}) = \frac{\alpha_{1} - \alpha_{2}}{\rho_{1} - \rho_{2}}$, violating zero knowledge. (That said, the modification \emph{does} preserve \emph{honest-verifier} zero knowledge.)
\end{remark}

\begin{remark}[space complexity of the prover]
\label{rem:prover-space}
The prover in a zero knowledge protocol is a probabilistic function, and hence reads bits from its randomness tape. In the case of the above protocol, the prover $\Prover$ must sample the evaluation of a random polynomial $\RandPoly$ in $\PolynomialRingIndOne{\Field}{\SCVars}{\VariableX}{\SCDegree}$; the entropy of $\RandPoly$ is exponential and thus requires reading an exponential number of random bits from the randomness tape. (Beyond this, $\Prover$ requires no other randomness.)

It is easy to see that $\Prover$ can run in exponential time and space. However, if the prover has \emph{two}-way access to its random tape, $\Prover$ can run in exponential time and polynomial space: the prover treats the random tape as the coefficients of $\RandPoly$, computing an evaluation at a given point by reading the tape coefficient-by-coefficient and summing the contributions of each monomial.

Two-way access to the randomness tape is a relaxation of the standard definition, which permits only \emph{one}-way access to it \cite{Nisan93}; that is, random bits must be stored on the work tape in order to be accessed again. It is not known whether the relaxation makes polynomial-space machines more powerful for decision problems (the class is equivalent to ``almost''-$\PSPACE$ \cite{BookVW98}), nor do we know how to obtain a polynomial-space prover with only one-way access. Nevertheless, we believe that polynomial space with two-way access to the random tape is still quite meaningful, e.g., it yields standard polynomial space relative to a random oracle.
\end{remark}

\doclearpage
\section{Perfect zero knowledge for counting problems}
\label{sec:zk-sharpp}

We prove that $\sharpP$ has an IPCP that is perfect zero knowledge against unbounded queries. (Recall that $\sharpP$ corresponds to all counting problems associated to decision problems in $\NP$.) We do so by constructing a suitable protocol for the counting problem associated to $\mathrm{3SAT}$, which is $\sharpP$-complete.

\begin{definition}
\label{def:sharp-3SAT}
Let $\sharpSATLanguage$ be the language of pairs $(\Formula,\NumSats)$ where $\Formula$ is a $3$-CNF boolean formula and $\NumSats$ is the number of satisfying assignments of $\Formula$. We denote by $\NumVars$ the number of variables and by $\NumClauses$ the number of clauses in $\Formula$.
\end{definition}

We construct a public-coin IPCP system for $\sharpSATLanguage$ that is perfect zero knowledge against unbounded queries, and has exponential proof length and polynomial query complexity. As in the non-ZK IP counterpart, the number of rounds is $O(\NumVars)$, the prover runs in space $\poly(\NumClauses)$ (with a caveat, see \remref{rem:prover-space}), and the verifier in time $\poly(\NumClauses)$.

\begin{theorem}[formal statement of \ref{thm:intro-sharpp}]
\label{thm:zk-sharpp}
There exists a IPCP system $\pair{\Prover}{\Verifier}$ that puts $\sharpSATLanguage$ in the complexity class
\begin{equation*}
{\small
\PZKIPCP
\left[
\begin{array}{llll}
\TextNumRounds
  & \NumRounds
  &=& O(\NumVars)
  \\
\TextProofLength
  & \ProofLength
  &=& \exp(\NumVars)
  \\
\TextQueryComplexity
  & \QueryComplexity
  &=& \poly(\NumClauses)
  \\
\TextSoundnessError
  & \SoundnessError
  &=& 1/2
  \\
\TextProverSpace
  & \ProverSpace
  &=& \poly(\NumClauses)
  \\
\TextVerifierTime
  & \VerifierTime
  &=& \poly(\NumClauses)
  \\
\TextQueryBound
  & \QueryBound
  &=& \AnyBound
\end{array}
\right]
}
\enspace.
\end{equation*}
Moreover, the verifier $\Verifier$ is public-coin and non-adaptive.
\end{theorem}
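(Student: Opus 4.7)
The plan is to follow the classical LFKN template for the interactive proof of $\sharpP$, but replace its (non zero knowledge) sumcheck subprotocol with the perfect zero knowledge IPCP sumcheck of Theorem \ref{thm:sumcheck-ipcp}. Since $\sharpSATLanguage$ is $\sharpP$-complete it suffices to give an IPCP for it. The prover $\Prover$ and verifier $\Verifier$ will both, given input $(\Formula,\NumSats)$, arithmetize $\Formula$ in the standard way: writing $\Formula = \bigwedge_{j=1}^{\NumClauses} C_{j}$, map each clause $C_{j} = (\ell_{j,1} \vee \ell_{j,2} \vee \ell_{j,3})$ to $p_{j}(\vec{\VariableX}) \DefineEqual 1 - \prod_{k=1}^{3}(1 - \tilde{\ell}_{j,k})$ where $\tilde{\ell}_{j,k} \in \{\VariableX_{i},1-\VariableX_{i}\}$ depending on the literal, and set $\Arithmetize{\Formula} \DefineEqual \prod_{j=1}^{\NumClauses} p_{j}$. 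Then $\Arithmetize{\Formula} \in \PolynomialRingIndOne{\Field}{\NumVars}{\VariableX}{\NumClauses+1}$ agrees with $\Formula$ on $\{0,1\}^{\NumVars}$, and $\sum_{\vec{\alpha} \in \{0,1\}^{\NumVars}} \Arithmetize{\Formula}(\vec{\alpha}) = \NumSats$ if and only if $(\Formula,\NumSats) \in \sharpSATLanguage$. Crucially, any single evaluation $\Arithmetize{\Formula}(\vec{\gamma})$ can be computed in $\poly(\NumClauses)$ time from $\Formula$ alone.

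Next, I would fix a prime field $\Field$ of size $\Theta(\NumClauses \cdot 2^{\NumVars})$, so that $\NumSats < \SetCardinality{\Field}$ (needed for $\NumSats$ to make sense as a field element, and for soundness to distinguish true from false claims) and so that the sumcheck soundness error $O(\NumVars(\NumClauses+1)/\SetCardinality{\Field}) \le 1/2$. The prover and verifier then invoke the PZK IPCP sumcheck of Theorem \ref{thm:sumcheck-ipcp} on the sumcheck instance $(\Field,\SCVars,\SCDegree,\SCSubset,\SCConstant) = (\Field,\NumVars,\NumClauses+1,\{0,1\},\NumSats)$ with implicit oracle $\SCPoly = \Arithmetize{\Formula}$; both parties answer their own queries to $\SCPoly$ by computing $\Arithmetize{\Formula}$ directly in $\poly(\NumClauses)$ time. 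Completeness and soundness of the overall IPCP for $\sharpSATLanguage$ are inherited from those of Theorem \ref{thm:sumcheck-ipcp}, and all the stated complexity bounds---$\NumRounds = O(\NumVars)$, $\ProofLength = \exp(\NumVars)$ (the sumcheck oracle has domain $\Field^{\NumVars}$ of exponential size), $\QueryComplexity = \poly(\NumClauses)$, $\ProverSpace = \poly(\NumClauses)$, and $\VerifierTime = \poly(\NumClauses)$, together with non-adaptivity and the public-coin property---transfer directly from the efficiency guarantees of Theorem \ref{thm:sumcheck-ipcp}, using that $\log \SetCardinality{\Field} = \poly(\NumClauses)$.

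For perfect zero knowledge against unbounded queries, let $\IPSCSimulator$ be the simulator guaranteed by Theorem \ref{thm:sumcheck-ipcp}. My simulator $\Simulator$ for $\sharpSATLanguage$, on input $(\Formula,\NumSats)$ and with straightline access to a (possibly unbounded-query) malicious verifier $\Malicious{\Verifier}$, simply runs $\IPSCSimulator^{\Malicious{\Verifier},\SCPoly}(\Field,\NumVars,\NumClauses+1,\{0,1\},\NumSats)$, answering each of $\IPSCSimulator$'s oracle queries to $\SCPoly$ by computing $\Arithmetize{\Formula}$ at the queried point in $\poly(\NumClauses)$ time from $\Formula$. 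Since the joint distribution $(\IPSCSimulator^{\Malicious{\Verifier},\SCPoly}, q_{\IPSCSimulator})$ is identical to $(\IPCPView{\IPSCProver^{\SCPoly}}{\Malicious{\Verifier}^{\SCPoly}}, q_{\Malicious{\Verifier}})$ by Theorem \ref{thm:sumcheck-ipcp}, and since in our protocol queries to $\SCPoly$ are answered by the same deterministic computation in both worlds, the view $\Simulator$ produces is identically distributed to $\Malicious{\Verifier}$'s real view. Moreover $\IPSCSimulator$ runs in time $\poly(\log\SetCardinality{\Field}+\NumVars+\NumClauses+\QueryComplexity_{\Malicious{\Verifier}}) = \poly(\NumClauses+\QueryComplexity_{\Malicious{\Verifier}})$, so $\Simulator$ does as well, meeting the efficiency requirement of Definition \ref{def:zk-iop}.

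The only non-routine step---and thus the main thing to verify carefully in a full write-up---is that the zero knowledge guarantee of Theorem \ref{thm:sumcheck-ipcp} remains meaningful in our setting even though the ``oracle'' $\SCPoly = \Arithmetize{\Formula}$ is in fact public. Here the key observation is that what the sumcheck protocol would normally leak are the partial sums $\sum_{\vec{\alpha} \in \{0,1\}^{\NumVars-|\vec{\gamma}|}} \Arithmetize{\Formula}(\vec{\gamma},\vec{\alpha})$, which encode $\sharpP$-hard information not recoverable from $\Formula$ alone; the masking-and-random-shift strategy inside Theorem \ref{thm:sumcheck-ipcp} (enabled by succinct constraint detection for $\PSRM$ via Theorem \ref{thm:psrm-succinct-constraint-detection}) is precisely what ensures the verifier learns no such partial sum, while $\Malicious{\Verifier}$'s queries to $\Arithmetize{\Formula}$ itself are indeed ``free'' because that polynomial is publicly computable. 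Putting these pieces together establishes all three properties and yields the claimed inclusion.
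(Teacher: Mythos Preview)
Your proposal is correct and follows essentially the same approach as the paper: arithmetize $\Formula$, choose a large enough prime field, and invoke the PZK IPCP sumcheck of Theorem~\ref{thm:sumcheck-ipcp} with $\SCPoly=\Arithmetize{\Formula}$, letting both parties (and the simulator) evaluate $\Arithmetize{\Formula}$ directly; completeness, soundness, PZK, and all efficiency bounds are inherited. The only cosmetic differences are the paper's choice of $\SCDegree=3\NumClauses$ and $q\in(2^{\NumVars},2^{2\NumVars}]$ versus your $\SCDegree=\NumClauses+1$ and $\SetCardinality{\Field}=\Theta(\NumClauses\cdot 2^{\NumVars})$, and the paper compresses your ``non-routine step'' into the one-line observation that $\Malicious{\Verifier}^{\SCPoly}=\Malicious{\Verifier}$ since $\SCPoly$ is computed from the public input.
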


\begin{proof}
Let $\pair{\ZKSCProver}{\ZKSCVerifier}$ be the PZK IPCP system for sumcheck from \thmref{thm:sumcheck-ipcp}, and let $\ZKSCSimulator$ be any simulator attests to its perfect zero knowledge. We construct an IPCP system $\pair{\Prover}{\Verifier}$ for $\sharpSATLanguage$ as follows.
\begin{itemize}

  \item The prover $\Prover$, given an instance $(\Formula, \NumSats)$, finds a prime $q \in (2^{\NumVars},2^{2\NumVars}]$, computes the arithmetization $\Arithmetize{\Formula} \in \PolynomialRingIndOne{\Field_{q}}{\NumVars}{\VariableX}{3\NumClauses}$ of $\Formula$ and simulates $\ZKSCProver^{\SCPoly}(\Field,\SCVars,\SCDegree,\SCSubset,\SCConstant)$ with
$\Field \DefineEqual \Field_{q}$,
$\SCVars \DefineEqual \NumVars$,
$\SCDegree \DefineEqual 3\NumClauses$,
$\SCSubset \DefineEqual \Bits$,
$\SCConstant \DefineEqual \NumSats$,
and
$\SCPoly(\VariableX_{1}, \dots, \VariableX_{\NumVars}) \DefineEqual \Arithmetize{\Formula}(\VariableX_{1}, \dots, \VariableX_{\NumVars})$. (The prover $\Prover$ also communicates the prime $q$ to the $\Verifier$ verifier.)

    \item The verifier $\Verifier$, given an instance $(\Formula, \NumSats)$, also computes $\Field,\SCVars,\SCDegree,\SCSubset,\SCConstant,\SCPoly$, and then simulates $\ZKSCVerifier^{\SCPoly}(\Field,\SCVars,\SCDegree,\SCSubset,\SCConstant)$, and accepts if and only if the simulation accepts.

\end{itemize}
Note that the arithmetization of a $3$-CNF formula $\Formula$ can be computed in time $\poly(\NumClauses)$, and the claimed given efficiency parameters follow from \thmref{thm:sumcheck-ipcp}. We now argue completeness, then soundness, and finally perfect zero knowledge.

\parhead{Completeness}
Completeness follows from the completeness of $\pair{\ZKSCProver}{\ZKSCVerifier}$ and the fact that if $(\Formula, \NumSats) \in \sharpSATLanguage$ then $\big( (\Field,\SCVars,\SCDegree,\SCSubset,\SCConstant) , \SCPoly \big) = ((\Field_{q}, \NumVars, 3\NumClauses,\Bits^{\NumVars},\NumSats), \Arithmetize{\Formula}) \in \SCRelation^{\yes}$.

\parhead{Soundness}
Soundness follows from the soundness of $\pair{\ZKSCProver}{\ZKSCVerifier}$ and the fact that if $(\Formula, \NumSats) \notin \sharpSATLanguage$ then $\big( (\Field,\SCVars,\SCDegree,\SCSubset,\SCConstant) , \SCPoly \big) = ((\Field_{q}, \NumVars, 3\NumClauses, \Bits^{\NumVars}, \NumSats),\Arithmetize{\Formula}) \in \SCRelation^{\no}$.

\parhead{Perfect zero knowledge}
We construct a simulator $\Simulator$ that provides perfect zero knowledge. Given an instance $(\Formula, \NumSats)$ and straightline access to a verifier $\Malicious{\Verifier}$, the simulator $\Simulator$ computes $\Field,\SCVars,\SCDegree,\SCSubset,\SCConstant,\SCPoly$ as above and simulates $\ZKSCSimulator^{\Malicious{\Verifier}, \SCPoly}(\Field,\SCVars,\SCDegree,\SCSubset,\SCConstant)$. By the perfect zero knowledge property of $\pair{\ZKSCProver}{\ZKSCVerifier}$, the simulator's output is identically distributed to $\IPCPView{\ZKSCProver^{\SCPoly}(\Field,\SCVars,\SCDegree,\SCSubset,\SCConstant)}{\Malicious{\Verifier}^{\SCPoly}}$; but note that $\Malicious{\Verifier}$ does not query any oracles outside of its interaction with $\ZKSCProver$ so that $\Malicious{\Verifier}^{\SCPoly} = \Malicious{\Verifier}$. By the construction of $\Prover$ above, this view equals to $\IPCPView{\Prover(\Formula, \NumSats)}{\Malicious{\Verifier}}$, as desired.
\end{proof}

\doclearpage
\section{Perfect zero knowledge from succinct constraint detection}
\label{sec:pzk-codes}

We show how to obtain perfect zero knowledge $2$-round IOPs of Proximity for \emph{any} linear code that has proximity proofs with succinct constraint detection (\secref{sec:pzk-general-transformation}). Afterwards, we instantiate this general transformation for the case of Reed--Solomon codes (\secref{sec:rs-pzk-iopp}), whose proximity proofs we discussed in \secref{sec:bsrs-succinct-constraint-detection}.

\subsection{A general transformation}
\label{sec:pzk-general-transformation}

\parhead{PCPs of proximity for codes}
A \emph{PCP of Proximity} \cite{DinurR04,BenSassonGHSV06} for a code family $\CodeClass = \{\Code_{\CodeIdx}\}_{\CodeIdx}$ is a pair $(\OfCode{\PCPPProver}, \OfCode{\PCPPVerifier})$ where for every index $\CodeIdx$ and $\Codeword \in \Field^{\Domain(\CodeIdx)}$: if $\Codeword \in \Code_{\CodeIdx}$ then $\OfCode{\PCPPVerifier}^{\Codeword,\Proof}(\CodeIdx)$ accepts with probability $1$ with $\Proof \DefineEqual \PCPPProver(\CodeIdx, \Codeword)$; if $\Codeword$ is `far' from $\Code_{\CodeIdx}$ then $\Codeword \in \Code_{\CodeIdx}$ rejects with high probability regardless of $\Proof$. We do not formally define this notion because it is a special case of a $1$-round IOP of Proximity (see \secref{sec:restrictions-and-extensions}) where the verifier message is empty; we use $\PCPP[\ProofLength,\QueryComplexity,\SoundnessError,\ProximityParameter,\ProverTime,\VerifierTime]$ to denote the corresponding complexity class. Note that, since both $\Proof$ and $\Codeword$ are provided as oracles to $\PCPPVerifier$, the query complexity of $\PCPPVerifier$ is the \emph{total} number of queries across both oracles.

\parhead{Leakage from proximity proofs}
While proximity proofs facilitate local testing, they are a liability for zero knowledge: in principle even a single query to $\Proof$ may `summarize' information that needs many queries to $\Codeword$ to simulate. (This holds for BS proximity proofs \cite{BS08}, for instance.) Our construction facilitates local testing while avoiding this leakage.

\parhead{Perfect zero knowledge IOPs of Proximity}
The notion of zero knowledge for IOPs of Proximity that we target is defined in \secref{sec:zk-iopp} and is analogous to \cite{IshaiW14}'s notion for PCPs of Proximity (a special case of our setting). Informally, it requiress an algorithm that simulates the verifier's view by making as many queries to $\Codeword$ as the \emph{total} number of verifier queries to either $\Codeword$ or any oracles sent by the prover; intuitively, this means that any bit of any message oracle reveals no more information than one bit of $\Codeword$.

\parhead{A generic `masking' construction}
Suppose that the linear code family $\CodeClass = \{\Code_{\CodeIdx}\}_{\CodeIdx}$ has a PCP of Proximity. Consider the $2$-round IOP of Proximity that uses masking via random self-reducibility (similarly to \cite{BenSassonCGV16}) as follows. The prover and verifier have input $\CodeIdx$ and oracle access to a codeword $\Codeword$, and the prover wants to convince the verifier that $\Codeword$ is close to $\Code_{\CodeIdx}$. Rather than sending a proximity proof for $\Codeword$, the prover samples a random codeword $\RandCodeword \in \Code_{\CodeIdx}$ and sends it to the verifier; the verifier replies with a random field element $\rho$; the prover sends a proximity proof for the new codeword $\rho \Codeword + \RandCodeword$. Completeness follows from linearity of $\Code_{\CodeIdx}$; soundness follows from the fact that if $\Codeword$ is far from $\Code_{\CodeIdx}$ then so is the word $\rho \Codeword + \RandCodeword$ for every $\RandCodeword$ with high probability over $\rho$.

Perfect zero knowledge intuitively follows from the observation that $\rho \Codeword + \RandCodeword$ is essentially a random codeword (up to malicious choice of $\rho$). We formally prove this for the case where \emph{the linear code family consisting of the concatenation of codewords in $\CodeClass$ with corresponding proximity proofs has succinct constraint detection}.

\medskip
\noindent
We are now ready to turn the above discussions into formal definitions and proofs. Throughout, given a code family $\CodeClass$, we denote by $\GetRelation{\CodeClass}$ the relation consisting of all pairs $(\CodeIdx,\Codeword)$ such that $\Codeword \in \Code_{\CodeIdx}$.

\begin{definition}
\label{def:pcpp-linear}
Let $\CodeClass = \{ \Code_{\CodeIdx} \}_{\CodeIdx}$ be a linear code family with domain $\EvaluationDomain(\cdot)$ and alphabet $\Field(\cdot)$, and let $(\Prover,\Verifier)$ be a PCPP system for $\GetRelation{\CodeClass}$. We say that $(\Prover,\Verifier)$ is \defemph{linear} if $\Prover$ is deterministic and is linear in its input codeword: for every index $\CodeIdx$ there exists a matrix $A_{\CodeIdx}$ with entries in $\Field(\CodeIdx)$ such that $\Prover(\CodeIdx,\Codeword)=A_{\CodeIdx} \cdot \Codeword$ for all $\Codeword \in \Code_{\CodeIdx}$ (equivalently, the set $\{\, \Codeword \| \Prover(\CodeIdx,\Codeword) \,\}_{\Codeword \in \Code_{\CodeIdx}}$ is a linear code).
\end{definition}

\begin{definition}
\label{def:pcpp-constraint-detection}
Let $\CodeClass = \{ \Code_{\CodeIdx} \}_{\CodeIdx}$ be a linear code family with domain $\EvaluationDomain(\cdot)$ and alphabet $\Field(\cdot)$, and let $(\Prover,\Verifier)$ be a PCPP system for $\GetRelation{\CodeClass}$. We say that $(\Prover,\Verifier)$ has \defemph{$\CodeTime(\cdot, \cdot)$-time constraint detection} if $(\Prover,\Verifier)$ is linear and, moreover, the linear code family $\Class{\OtherCode} = \{ \OtherCode_{\CodeIdx} \} $ has $\CodeTime(\cdot, \cdot)$-time constraint detection, where $\OtherCode_{\CodeIdx} \DefineEqual \{\, \Codeword \| \Prover(\CodeIdx,\Codeword) \,\}_{\Codeword \in \Code_{\CodeIdx}}$; we also say that $(\Prover,\Verifier)$ has \defemph{succinct constraint detection} if the same holds with $\CodeTime(\CodeIdx, \ListSize) = \poly(\CodeIdx + \ListSize)$.
\end{definition}

\begin{theorem}
\label{thm:pzk-iopp-for-codes}
Let $\CodeClass = \{ \Code_{\CodeIdx} \}_{\CodeIdx}$ be a linear code family with domain $\EvaluationDomain(\cdot)$, alphabet $\Field(\cdot)$, block length $\CodeBlockLength(\cdot)$, and a $\CodeSamplingTime(\cdot)$-time sampler. Suppose that there exists a PCPP system $(\OfCode{\Prover},\OfCode{\Verifier})$ for $\GetRelation{\CodeClass}$ that (is linear and) has succinct constraint detection and puts $\GetRelation{\CodeClass}$ in the complexity class
\begin{equation*}
{\small
\PCPP
\left[
\begin{array}{ll}
\TextAnswerAlphabet        & \Field(\CodeIdx)                         \\
\TextProofLength           & \OfCode{\ProofLength}(\CodeIdx)          \\
\TextQueryComplexity       & \OfCode{\QueryComplexity}(\CodeIdx)      \\
\TextSoundnessError        & \OfCode{\SoundnessError}(\CodeIdx)       \\
\TextProximityParameter    & \OfCode{\ProximityParameter}(\CodeIdx)   \\
\TextProverTime            & \OfCode{\ProverTime}(\CodeIdx)           \\
\TextVerifierTime          & \OfCode{\VerifierTime}(\CodeIdx)         \\
\TextVerifierRandomness    & \OfCode{\VerifierRandomness}(\CodeIdx)   \\
\end{array}
\right]
\enspace.
}
\end{equation*}
Then there exists an IOPP system $(\Prover,\Verifier)$ that puts $\GetRelation{\CodeClass}$ in the complexity class
\begin{equation*}
{\small
\PZKIOPP
\left[
\begin{array}{llll}
\TextAnswerAlphabet
  & \Field(\CodeIdx)
  &~&
  \\
\TextNumRounds
  & \NumRounds(\CodeIdx)
  &=& 2
  \\
\TextProofLength
  & \ProofLength(\CodeIdx)
  &=& \OfCode{\ProofLength}(\CodeIdx) + \CodeBlockLength(\CodeIdx)
  \\
\TextQueryComplexity
  & \QueryComplexity(\CodeIdx)
  &=& 2\OfCode{\QueryComplexity}(\CodeIdx)
  \\
\TextSoundnessError
  & \SoundnessError(\CodeIdx)
  &=& \OfCode{\SoundnessError}(\CodeIdx) + \frac{1}{\SetCardinality{\Field(\CodeIdx)}}
  \\
\TextProximityParameter
  & \ProximityParameter(\CodeIdx)
  &=& 2\OfCode{\ProximityParameter}(\CodeIdx)
  \\
\TextProverTime
  & \ProverTime(\CodeIdx)
  &=& \OfCode{\ProverTime}(\CodeIdx) + \CodeSamplingTime(\CodeIdx) + O(\CodeBlockLength(\CodeIdx))
  \\
\TextVerifierTime
  & \VerifierTime(\CodeIdx)
  &=& \OfCode{\VerifierTime}(\CodeIdx) + O(\OfCode{\QueryComplexity}(\CodeIdx))
  \\
\TextVerifierRandomness
  & \VerifierRandomness(\CodeIdx)
  &=& \OfCode{\VerifierRandomness}(\CodeIdx) + \log \SetCardinality{\Field(\CodeIdx)}
  \\
\TextQueryBound
  & \QueryBound(\CodeIdx)
  &=& \AnyBound
\end{array}
\right]
\enspace.
}
\end{equation*}
\end{theorem}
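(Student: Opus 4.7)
The plan is to formalize the masking construction sketched in the preamble. The protocol has the prover sample $\RandCodeword \in \Code_{\CodeIdx}$ uniformly (using the $\CodeSamplingTime(\CodeIdx)$-time sampler) and send it as the first oracle; the verifier responds with $\rho \in \Field(\CodeIdx)$ uniformly; the prover then sets $\MaskedCodeword \DefineEqual \rho \Codeword + \RandCodeword$ and sends $\Proof' \DefineEqual \OfCode{\Prover}(\CodeIdx, \MaskedCodeword)$ as the second oracle. The verifier simulates $\OfCode{\Verifier}(\CodeIdx)$, answering each of its queries to the underlying ``codeword oracle'' at coordinate $i$ by reading $\Codeword(i)$ and $\RandCodeword(i)$ and returning $\rho \Codeword(i) + \RandCodeword(i)$ (two queries), and its queries to the proof oracle directly from $\Proof'$, and accepts iff $\OfCode{\Verifier}$ accepts. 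All of the efficiency parameters in the theorem read off directly from this recipe.

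Completeness is immediate from linearity of $\Code_{\CodeIdx}$: if $\Codeword \in \Code_{\CodeIdx}$ then $\MaskedCodeword \in \Code_{\CodeIdx}$ and $\OfCode{\Verifier}$ accepts with probability $1$. Soundness follows a standard two-step argument: if $\Codeword$ is $2\OfCode{\ProximityParameter}(\CodeIdx)$-far from $\Code_{\CodeIdx}$, then \clmref{claim:distance-preservation} (applied to the $\Field(\CodeIdx)$-linear space $\Code_{\CodeIdx}$) implies that, for \emph{every} (even adversarial) $\Malicious{\RandCodeword}$ sent in lieu of $\RandCodeword$, the shift $\rho \Codeword + \Malicious{\RandCodeword}$ remains $\OfCode{\ProximityParameter}(\CodeIdx)$-far from $\Code_{\CodeIdx}$ with probability at least $1 - 1/\SetCardinality{\Field(\CodeIdx)}$ over $\rho$; conditioned on this event, the PCPP soundness of $(\OfCode{\Prover}, \OfCode{\Verifier})$ causes rejection except with probability $\OfCode{\SoundnessError}(\CodeIdx)$, and a union bound yields the claimed error.

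The main content is the perfect zero knowledge claim, whose simulator will exploit the assumption that $\OtherCode_{\CodeIdx} \DefineEqual \{\, \Codeword \| \OfCode{\Prover}(\CodeIdx, \Codeword) \,\}_{\Codeword \in \Code_{\CodeIdx}}$ has succinct constraint detection, together with the derived observation that $\Code_{\CodeIdx}$ inherits it (any query set to $\Code_{\CodeIdx}$ can be viewed as a query set to $\OtherCode_{\CodeIdx}$ supported on the codeword subdomain). The simulator will implement, statefully and lazily via \lemref{lem:efficient-codeword-simulator}, a uniformly random codeword $\MaskedCodeword \| \Proof' \in \OtherCode_{\CodeIdx}$. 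Before $\rho$ is sent, queries $\RandCodeword(i)$ are answered by the sampler $\CodeSimAlgorithm$ applied to $\Code_{\CodeIdx}$ and cached in a table $T_{\RandCodeword}$. Once $\rho$ arrives, for each pair $(i, \RandCodeword(i)) \in T_{\RandCodeword}$ the simulator queries $\Codeword(i)$ and promotes $(i, \rho \Codeword(i) + \RandCodeword(i))$ into a table $T_{\MaskedCodeword}$ of committed coordinates of $\MaskedCodeword$. Thereafter, queries $\Proof'(j)$ are answered by $\CodeSimAlgorithm$ for $\OtherCode_{\CodeIdx}$ conditional on $T_{\MaskedCodeword}$ (recording the sample), while late queries $\RandCodeword(i)$ are answered by querying $\Codeword(i)$, sampling $\MaskedCodeword(i)$ from $\OtherCode_{\CodeIdx}$ conditional on $T_{\MaskedCodeword}$, and returning $\MaskedCodeword(i) - \rho \Codeword(i)$. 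Queries to $\Codeword$ itself are forwarded verbatim.

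Correctness of the simulation will rest on two identifications: first, in the real protocol $\RandCodeword$ is uniform in $\Code_{\CodeIdx}$ independently of $\rho$ (the verifier may adaptively choose $\rho$ from what it has seen of $\RandCodeword$, but $\RandCodeword$ was drawn once and for all by the prover), so phase-1 answers coincide in distribution with the simulator's; and second, conditioned on $\rho$ and on the phase-1 answers to $\RandCodeword$, the pair $(\MaskedCodeword, \Proof')$ is uniform in $\OtherCode_{\CodeIdx}$ subject to the linear constraints $\MaskedCodeword(i) = \rho \Codeword(i) + \RandCodeword(i)$ at the phase-1 indices---precisely the conditional distribution produced by the simulator's phase-2 calls to $\CodeSimAlgorithm$. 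Query accounting gives $q_{\IOPPSimulator} = q_{\Malicious{\IOPPVerifier}}$, since exactly one $\Codeword$-query is spent per verifier query to $\Codeword$ or $\RandCodeword$ and zero per $\Proof'$-query. The chief obstacle the proposal sidesteps is that $\rho$ may adaptively depend on phase-1 queries to $\RandCodeword$, so the joint distribution of $(\RandCodeword, \rho, \MaskedCodeword, \Proof')$ is not a product; deferring the promotion of $T_{\RandCodeword}$ until $\rho$ is known, and invoking succinct constraint detection to sample the $\OtherCode_{\CodeIdx}$-codeword conditionally on that deferred commitment, is exactly what makes the simulator simultaneously perfectly correct and polynomial-time.
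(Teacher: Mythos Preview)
Your proposal is correct and follows essentially the same approach as the paper: the construction, the completeness and soundness arguments (including the use of \clmref{claim:distance-preservation}), and the simulator built from \lemref{lem:efficient-codeword-simulator} via succinct constraint detection of $\OtherCode_{\CodeIdx}$ (and the inherited detection for $\Code_{\CodeIdx}$) are all the same. One small slip: your accounting yields $q_{\IOPPSimulator} \le q_{\Malicious{\IOPPVerifier}}$, not equality, since queries to $\Proof'$ cost no $\Codeword$-queries; the paper handles this by padding the simulator's queries up to $q_{\Malicious{\IOPPVerifier}}$.
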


\begin{remark}[the case of LTCs]
\label{rem:pzk-for-ltc}
It is tempting to apply \thmref{thm:pzk-iopp-for-codes} to the notable special case where the proximity proof is \emph{empty} (e.g., when $\CodeClass$ is locally testable so no proximity proofs are needed). However, in this case, the zero knowledge guarantee of our construction does not buy anything compared to when the verifier queries only the codeword (indeed, the verifier \emph{already} learns precisely the value of codeword at those positions which it queries and nothing else).
\end{remark}

\begin{construction}
\label{con:pzk-iopp}
The IOPP system $(\Prover,\Verifier)$ is defined as follows. The prover receives a pair $(\CodeIdx,\Codeword)$ as input, while the verifier receives the index $\CodeIdx$ as input and $\Codeword$ as an oracle. The interaction proceeds as follows:
\begin{enumerate}[nolistsep]
  \item $\Prover$ samples a random codeword $\RandCodeword$ in $\Code_{\CodeIdx}$ and sends $\RandCodeword$ to $\Verifier$;
  \item $\Verifier$ samples a random element $\rho$ in $\Field(\CodeIdx)$ and sends $\rho$ to $\Prover$;
  \item $\Prover$ computes the proof $\Proof \DefineEqual \OfCode{\Prover}(\CodeIdx,\rho \Codeword + \RandCodeword)$ and sends $\Proof$ to $\Verifier$;
  \item $\Verifier$ checks that $\OfCode{\Verifier}^{\MaskedCodeword,\Proof}(\CodeIdx)$ accepts, where $\MaskedCodeword \DefineEqual \rho \Codeword + \RandCodeword$ (any query $\alpha$ to $\MaskedCodeword$ is computed as $\rho \Codeword(\alpha) + \RandCodeword(\alpha)$).
\end{enumerate}
\end{construction}

The round complexity, proof length, query complexity, and prover and verifier complexities claimed in \thmref{thm:pzk-iopp-for-codes} follow in a straightforward way from \conref{con:pzk-iopp}. We now argue completeness and soundness (\clmref{clm:pzk-iopp-1}) and perfect zero knowledge (\clmref{clm:pzk-iopp-2}).

\begin{claim}
\label{clm:pzk-iopp-1}
The IOPP system $(\Prover,\Verifier)$ has completeness $1$ and soundness error $\SoundnessError(\CodeIdx) = \OfCode{\SoundnessError}(\CodeIdx) + \frac{1}{\SetCardinality{\Field(\CodeIdx)}}$.
\end{claim}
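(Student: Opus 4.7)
The plan is to handle completeness and soundness separately, leveraging the linearity of $\Code_{\CodeIdx}$ together with the standard distance-preservation lemma (Claim \ref{claim:distance-preservation}).

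For completeness, the argument is immediate: if $(\CodeIdx,\Codeword) \in \GetRelation{\CodeClass}$, then $\Codeword \in \Code_{\CodeIdx}$; since $\RandCodeword \in \Code_{\CodeIdx}$ by construction and $\Code_{\CodeIdx}$ is a linear code, the masked word $\MaskedCodeword \DefineEqual \rho \Codeword + \RandCodeword$ lies in $\Code_{\CodeIdx}$ for every choice of $\rho$. Hence the proof $\Proof = \OfCode{\Prover}(\CodeIdx,\MaskedCodeword)$ produced by the honest prover is a valid PCPP for a codeword of $\Code_{\CodeIdx}$, so the PCPP verifier $\OfCode{\Verifier}^{\MaskedCodeword,\Proof}(\CodeIdx)$ accepts with probability $1$ by completeness of $(\OfCode{\Prover},\OfCode{\Verifier})$, and therefore $\Verifier$ also accepts with probability $1$.

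For soundness, consider $(\CodeIdx,\Codeword)$ with $\Distance(\Codeword,\Code_{\CodeIdx}) \ge \ProximityParameter(\CodeIdx) = 2\OfCode{\ProximityParameter}(\CodeIdx)$, and fix an arbitrary malicious prover $\Malicious{\Prover}$. The critical observation is that in Construction \ref{con:pzk-iopp}, the word $\Malicious{\RandCodeword}$ sent by $\Malicious{\Prover}$ is determined \emph{before} the verifier's random challenge $\rho$. Consequently, applying Claim \ref{claim:distance-preservation} with the linear space $\LinearSpace = \Code_{\CodeIdx}$, $x = \Codeword$, $y = \Malicious{\RandCodeword}$, and $\alpha = \rho$, we get that $\MaskedCodeword = \rho \Codeword + \Malicious{\RandCodeword}$ is $\OfCode{\ProximityParameter}(\CodeIdx)$-far from $\Code_{\CodeIdx}$ with probability at least $1 - 1/\SetCardinality{\Field(\CodeIdx)}$ over the choice of $\rho$. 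Conditioned on this event, the subsequent (possibly adversarial) proof $\Malicious{\Proof}$ is a PCPP for an input that is $\OfCode{\ProximityParameter}(\CodeIdx)$-far from the code, so by the soundness of $(\OfCode{\Prover},\OfCode{\Verifier})$ the verifier $\OfCode{\Verifier}^{\MaskedCodeword,\Malicious{\Proof}}(\CodeIdx)$ accepts with probability at most $\OfCode{\SoundnessError}(\CodeIdx)$. A union bound over the two bad events yields the overall acceptance probability of at most $\OfCode{\SoundnessError}(\CodeIdx) + \frac{1}{\SetCardinality{\Field(\CodeIdx)}}$, as claimed.

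The main (mild) subtlety to be careful about is the order of quantifiers: the distance-preservation claim is only useful because $\Malicious{\RandCodeword}$ is fixed before $\rho$ is sampled, which is exactly enforced by the two-round structure of Construction \ref{con:pzk-iopp}. I also need to note that $\Verifier$ simulates each query to $\MaskedCodeword$ as $\rho \Codeword(\alpha) + \Malicious{\RandCodeword}(\alpha)$, so the PCPP soundness can indeed be invoked with respect to the implicit input $\MaskedCodeword$. No other routine calculation is required.
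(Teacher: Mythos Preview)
Your proof is correct and matches the paper's argument essentially line for line: both use linearity of $\Code_{\CodeIdx}$ for completeness, and both invoke Claim~\ref{claim:distance-preservation} on the pair $(\Codeword,\Malicious{\RandCodeword})$ to conclude that $\rho\Codeword+\Malicious{\RandCodeword}$ is $\OfCode{\ProximityParameter}(\CodeIdx)$-far except with probability $1/\SetCardinality{\Field(\CodeIdx)}$, then apply PCPP soundness and a union bound. Your explicit remark about the order of commitments (that $\Malicious{\RandCodeword}$ is fixed before $\rho$) is a helpful clarification the paper leaves implicit.
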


\begin{proof}
First we argue completeness. Suppose that the instance-witness pair $(\CodeIdx,\Codeword)$ is in the relation $\GetRelation{\CodeClass}$, i.e., that the word $\Codeword$ is in the code $\Code_{\CodeIdx}$. Then, the linearity of $\Code_{\CodeIdx}$ implies that, for every word $\OtherCodeword$ in $\Code_{\CodeIdx}$ and element $\rho$ in $\Field(\CodeIdx)$, the word $\rho \Codeword + \OtherCodeword$ is also in $\Code_{\CodeIdx}$. Thus completeness of $(\Prover,\Verifier)$ follows from the completeness of $(\OfCode{\Prover},\OfCode{\Verifier})$.

Next we argue soundness. Suppose that $\Codeword$ is $2\OfCode{\ProximityParameter}(\CodeIdx)$-far from $\Code_{\CodeIdx}$. For every word $\OtherCodeword$ in $\Field(\CodeIdx)^{\CodeBlockLength(\CodeIdx)}$ (not necessarily in $\Code_{\CodeIdx}$), there exists at most one $\rho$ in $\Field(\CodeIdx)$ such that $\rho \Codeword + \OtherCodeword$ is $\OfCode{\ProximityParameter}(\CodeIdx)$-close to $\Code_{\CodeIdx}$ (see \clmref{claim:distance-preservation}). The soundness of $(\OfCode{\Prover},\OfCode{\Verifier})$ implies that $\OfCode{\Verifier}$ accepts with probability at most $\OfCode{\SoundnessError}(\CodeIdx)$ if $\OfCode{\Verifier}$ is invoked on a word that is $\OfCode{\ProximityParameter}(\CodeIdx)$-far from $\Code_{\CodeIdx}$. Thus, since $\Verifier$ invokes $\OfCode{\Verifier}$ on the word $\rho \Codeword + \RandCodeword$, the probability that $\Verifier$ accepts is at most $\OfCode{\SoundnessError}(\CodeIdx) + \frac{1}{\SetCardinality{\Field(\CodeIdx)}}$.
\end{proof}

\begin{claim}
\label{clm:pzk-iopp-2}
There exists a straightline simulator $\Simulator$ such that, for every $(\CodeIdx,\Codeword) \in \GetRelation{\CodeClass}$ and malicious verifier $\Malicious{\Verifier}$, the following two random variables are identically distributed
\begin{equation*}
\Big(\Simulator^{\Malicious{\Verifier},\Codeword}(\CodeIdx) \;,\; q_{\Simulator} \Big)
\quad\text{and}\quad
\Big(\IOPView{\Prover^{\Codeword}(\CodeIdx)}{\Malicious{\Verifier}^{\Codeword}} \;,\; q_{\Malicious{\Verifier}} \Big)
\enspace,
\end{equation*}
where $q_{\Simulator}$ is the number of queries to $\Codeword$ made by $\Simulator$ and $q_{\Malicious{\Verifier}}$ is the number of queries to $\Codeword$ or to prover messages made by $\Malicious{\Verifier}$. Moreover, $\Simulator$ runs in time $\poly(\CodeIdx + \QueryComplexity_{\Malicious{\Verifier}})$, where $\QueryComplexity_{\Malicious{\Verifier}}$ is $\Malicious{\Verifier}$'s query complexity.
\end{claim}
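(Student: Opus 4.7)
The plan is to construct a straightline simulator $\Simulator$ that exploits the succinct constraint detection of the code family $\Class{\OtherCode}$ (where $\OtherCode_{\CodeIdx} = \{\Codeword \FConc \OfCode{\Prover}(\CodeIdx,\Codeword)\}_{\Codeword \in \Code_{\CodeIdx}}$) to answer queries to a ``virtual'' random codeword $\MaskedCodeword \FConc \Proof \in \OtherCode_{\CodeIdx}$ one coordinate at a time, without ever explicitly sampling an exponential-size object. By assumption and Lemma~\ref{lem:efficient-codeword-simulator}, we obtain an algorithm $\CodeSimAlgorithm$ that samples any requested coordinate of a uniform codeword in $\OtherCode_{\CodeIdx}$ consistently with previously sampled coordinates, in time $\poly(\BitSize{\CodeIdx}+\ListSize)$.

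The simulator operates in two phases separated by $\Malicious{\rho}$. In the pre-$\rho$ phase, it answers queries to $\Codeword(\alpha)$ truthfully and queries to $\RandCodeword(\alpha)$ by invoking $\CodeSimAlgorithm$ restricted to the $\Codeword$-portion of $\OtherCode_{\CodeIdx}$ (which, marginally, is a uniform codeword in $\Code_{\CodeIdx}$); alongside each $\RandCodeword$ query it also queries $\Codeword(\alpha)$, so that the value is available for the phase transition and so that $q_\Simulator$ grows by exactly one per verifier query. Upon receiving $\Malicious{\rho}$, $\Simulator$ converts each pre-$\rho$ pair $(\alpha,\beta)$ in its $\RandCodeword$-table into a $\MaskedCodeword$-pair $(\alpha,\Malicious{\rho}\Codeword(\alpha)+\beta)$, using the already-queried values of $\Codeword$. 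In the post-$\rho$ phase, queries to $\Codeword$ are answered truthfully, queries to $\Proof(\alpha)$ are answered by $\CodeSimAlgorithm$ on the $\Proof$-portion of $\OtherCode_{\CodeIdx}$, and queries to $\RandCodeword(\alpha)$ are answered by invoking $\CodeSimAlgorithm$ on the $\MaskedCodeword$-portion to obtain $\Simulated{\MaskedCodeword}(\alpha)$ and returning $\Simulated{\MaskedCodeword}(\alpha)-\Malicious{\rho}\Codeword(\alpha)$; a padding $\Codeword$ query to an arbitrary fixed location is made when needed to keep $q_\Simulator = q_{\Malicious{\Verifier}}$ per verifier step. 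Since $\CodeSimAlgorithm$ is invoked $O(\QueryComplexity_{\Malicious{\Verifier}})$ times, the claimed running time follows.

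To show that $\Simulator$'s output is identically distributed to the real view, I would adopt the two-step template of the Sumcheck proof (Lemma~\ref{lem:zk-sumcheck}). First introduce a slow simulator $\SlowSimulator$ that behaves as $\Simulator$ but explicitly samples a uniform codeword $\Simulated{\RandCodeword}\in\Code_{\CodeIdx}$ in the first phase, and, upon receiving $\Malicious{\rho}$, explicitly samples $\Simulated{\MaskedCodeword}\FConc\Simulated{\Proof}\in\OtherCode_{\CodeIdx}$ uniformly subject to $\Simulated{\MaskedCodeword}(\alpha)=\Malicious{\rho}\Codeword(\alpha)+\Simulated{\RandCodeword}(\alpha)$ at every pre-$\rho$ $\RandCodeword$-query $\alpha$. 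Both the real view and $\SlowSimulator$'s output can then be expressed as the same deterministic function of $\Malicious{\Verifier}$'s randomness, the oracle $\Codeword$, the verifier's message $\rho$, and a codeword in $\OtherCode_{\CodeIdx}$. Equality of distributions reduces to showing that, conditioned on the pre-$\rho$ $\RandCodeword$-answers, the real $\MaskedCodeword\FConc\Proof$ is uniform in $\OtherCode_{\CodeIdx}$ subject to exactly the same linear constraints as $\Simulated{\MaskedCodeword}\FConc\Simulated{\Proof}$. Finally, Lemma~\ref{lem:efficient-codeword-simulator} directly implies that the lazy $\Simulator$, which defers sampling to $\CodeSimAlgorithm$, produces the same output distribution as $\SlowSimulator$.

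The main obstacle will be cleanly arguing the cross-phase consistency: in reality $\RandCodeword$ is fixed before $\rho$ is sent and $\MaskedCodeword$ is derived from it, whereas in the simulator the order is effectively reversed and $\MaskedCodeword\FConc\Proof$ is sampled afterwards. Linearity of $\OfCode{\Prover}$ (Definition~\ref{def:pcpp-linear}) is essential here: it makes the marginal distribution of $\MaskedCodeword$ uniform in $\Code_{\CodeIdx}$ (matching that of $\RandCodeword$) and it makes $\Proof$ a deterministic linear function of $\MaskedCodeword$, so that conditioning a uniform codeword of $\OtherCode_{\CodeIdx}$ to agree with $\Malicious{\rho}\Codeword+\Simulated{\RandCodeword}$ at prescribed locations automatically enforces the correct joint distribution over both the $\MaskedCodeword$- and $\Proof$-portions.
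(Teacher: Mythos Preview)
Your proposal is correct and follows essentially the same approach as the paper: introduce a slow simulator $\SlowSimulator$ that explicitly samples $\Simulated{\RandCodeword}\in\Code_{\CodeIdx}$ and then $\Simulated{\MaskedCodeword}\FConc\Simulated{\Proof}\in\OtherCode_{\CodeIdx}$ conditioned on the pre-$\rho$ answers, argue via a common function $\FView{\cdot}$ that $\MaskedCodeword$ and $\Simulated{\MaskedCodeword}$ are both uniform in $\Code_{\CodeIdx}$, then replace the explicit sampling with calls to the constraint-detection-based sampler $\CodeSimAlgorithm$ from Lemma~\ref{lem:efficient-codeword-simulator}. The only cosmetic differences are that the paper defers the $\Codeword$-queries corresponding to pre-$\rho$ $\RandCodeword$-queries until after $\Malicious{\rho}$ is received (rather than doing them alongside), and it instantiates two separate samplers $\CodeSimAlgorithm_{\CodeClass}$ and $\CodeSimAlgorithm_{\Class{\OtherCode}}$ (noting that succinct constraint detection for $\Class{\OtherCode}$ implies it for $\CodeClass$) rather than restricting a single sampler to the $\Codeword$-portion.
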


\begin{proof}
We begin by proving perfect zero knowledge via a straightline simulator $\SlowSimulator$ whose number of queries to $\Codeword$ equals $q_{\Malicious{\Verifier}}$, but runs in time $\poly(\ProverTime(\CodeIdx)+q_{\Malicious{\IOPVerifier}})$. After that, we explain how to modify $\SlowSimulator$ into another simulator $\Simulator$, with an identical output distribution, that runs in the faster time claimed in the lemma.

\begin{mdframed}
{\small
The simulator $\SlowSimulator$, given straightline access to $\Malicious{\IOPVerifier}$ and oracle access to $\Codeword$, works as follows:
\medskip
\begin{enumerate}[nolistsep]

  \item \label{step:iopp-sim-draw-R}
  Draw a uniformly random $\Simulated{\RandCodeword} \in \Code_{\CodeIdx}$.

  \item \label{step:iopp-sim-before-rho}
  Whenever $\Malicious{\IOPVerifier}$ queries $\Codeword$ at $\alpha \in \EvaluationDomain(\CodeIdx)$, return $\Codeword(\alpha)$; whenever $\Malicious{\IOPVerifier}$ queries $\RandCodeword$ at $\alpha \in \EvaluationDomain(\CodeIdx)$, return $\Simulated{\RandCodeword}(\alpha)$.

  \item \label{step:iopp-sim-draw-Q}
  Receive $\Malicious{\rho}$ from $\Malicious{\IOPVerifier}$, and draw a uniformly random $\Simulated{\MaskedCodeword} \| \Simulated{\Proof} \in \OtherCode_{\CodeIdx}$ conditioned on $\Simulated{\MaskedCodeword}(\alpha) = \Malicious{\rho} \Codeword(\alpha) + \Simulated{\RandCodeword}(\alpha)$ for every coordinate $\alpha \in \EvaluationDomain(\CodeIdx)$ queried in \stepref{step:iopp-sim-before-rho}. (This latter condition requires querying $\Codeword$ at $\alpha$ for every coordinate $\alpha \in \EvaluationDomain(\CodeIdx)$ queried to $\Simulated{\RandCodeword}$ in \stepref{step:iopp-sim-before-rho}.)

  \item \label{step:iopp-sim-after-rho}
  Hereafter: whenever $\Malicious{\IOPVerifier}$ queries $\Codeword$ at $\alpha \in \EvaluationDomain(\CodeIdx)$, return $\Codeword(\alpha)$; whenever $\Malicious{\IOPVerifier}$ queries $\RandCodeword$ at $\alpha \in \EvaluationDomain(\CodeIdx)$, return $\Simulated{\MaskedCodeword}(\alpha) - \Malicious{\rho} \Codeword(\alpha)$. (In either case, a query to $\Codeword$ is required.)

  \item \label{step:iopp-sim-proof}
  Tell $\Malicious{\IOPVerifier}$ that the oracle $\Proof$ has been `sent'; whenever $\Malicious{\IOPVerifier}$ queries the $i$-th entry of $\Proof$, return the $i$-th entry of $\Simulated{\Proof}$. (Note that $\Malicious{\IOPVerifier}$ may query $\Codeword$ or $\RandCodeword$ before or after learning about $\Proof$.)

  \item Output the view of the simulated $\Malicious{\IOPVerifier}$.

\end{enumerate}
}
\end{mdframed}
Note that $\SlowSimulator$ runs in time $\poly(\ProverTime(\CodeIdx)+q_{\Malicious{\IOPVerifier}})$. Also, $\SlowSimulator$ makes one query to $\Codeword$ for every query to $\Codeword$ or $\RandCodeword$ by $\Malicious{\IOPVerifier}$ (at least provided that $\Malicious{\IOPVerifier}$'s queries have no duplicates, which we can assume without loss of generality), and zero queries to $\Codeword$ for every query to $\Proof$ by $\Malicious{\IOPVerifier}$. Thus, overall, the number of queries to $\Codeword$ by $\SlowSimulator$ is at most $q_{\Malicious{\IOPVerifier}}$; clearly, this number of queries can be padded to be equal to $q_{\Malicious{\IOPVerifier}}$. We now argue that $\SlowSimulator$'s output is identically distributed to $\Malicious{\IOPVerifier}$'s view when interacting with the honest prover $\IOPProver$, for $\RandCodeword$ random in $\Code_{\CodeIdx}$.
\begin{adjustwidth}{1cm}{1cm}
\begin{uclaim}
$\SlowSimulator^{\Malicious{\Verifier},\Codeword} \equiv \IPCPView{\IOPProver^{\Codeword}(\CodeIdx)}{\Malicious{\Verifier}^{\Codeword}}$.
\end{uclaim}
\begin{proof}
Define the random variable $\MaskedCodeword \DefineEqual \Malicious{\rho} \Codeword + \RandCodeword$, where $\Malicious{\rho}$ is chosen by $\Malicious{\IOPVerifier}$. Observe that there exists a (deterministic) function $\FView{\cdot}$ such that
\begin{equation*}
\IPCPView{\IOPProver^{\Codeword,\RandCodeword}}{\Malicious{\Verifier}^{\Codeword,\RandCodeword}} = \FView{\MaskedCodeword,\Codeword,r}
\quad\text{and}\quad
\SlowSimulator^{\Malicious{\Verifier},\Codeword} = \FView{\Simulated{\MaskedCodeword},\Codeword,\Randomness} \enspace,
\end{equation*}
where the random variable $\Randomness$ is $\Malicious{\IOPVerifier}$'s private randomness. Indeed,
\begin{inparaenum}[(i)]
  \item the messages sent and received by $\Malicious{\IOPVerifier}$ are identical to those when interacting with $\IOPProver$ on $\MaskedCodeword$ and $\Simulated{\MaskedCodeword}$, respectively;
  \item $\Malicious{\IOPVerifier}$'s queries to $\Codeword$ are answered honestly;
  \item $\Malicious{\IOPVerifier}$'s queries to $\RandCodeword$ are answered by $\RandCodeword = \MaskedCodeword - \Malicious{\rho} \Codeword$ and $\Simulated{\RandCodeword} = \Simulated{\MaskedCodeword} - \Malicious{\rho} \Codeword$ respectively;
  \item $\Malicious{\IOPVerifier}$'s queries to $\Proof$ are answered by $\OfCode{\Prover}(\CodeIdx,\MaskedCodeword)$ and $\OfCode{\Prover}(\CodeIdx,\Simulated{\MaskedCodeword})$ respectively.
\end{inparaenum}
We are only left to argue that, for any choice of $\Randomness$, $\MaskedCodeword$ and $\Simulated{\MaskedCodeword}$ are identically distributed:
\begin{itemize}[label=--]

  \item $\MaskedCodeword = \Malicious{\rho}\Codeword + \RandCodeword$ is uniformly random in $\Code_{\CodeIdx}$, because $\RandCodeword$ is uniformly random in $\Code_{\CodeIdx}$, $\Codeword$ is in $\Code_{\CodeIdx}$, and $\Code_{\CodeIdx}$ is linear; and

  \item $\Simulated{\MaskedCodeword}$ is uniformly random in $\Code_{\CodeIdx}$, because $\Simulated{\MaskedCodeword}$ is sampled at random in $\Code_{\CodeIdx}$ conditioned on $\Simulated{\MaskedCodeword}(\alpha_{i}) = \Simulated{\RandCodeword}(\alpha_{i}) + \Malicious{\rho} \Codeword(\alpha_{i})$ for some (adversarial) choice of $\alpha_{1},\dots,\alpha_{k}$. But $\Simulated{\RandCodeword}$ is uniformly random in $\Code_{\CodeIdx}$, so the latter condition says that $\Simulated{\MaskedCodeword}$ matches a random codeword on the set of points $\{\alpha_{1},\dots,\alpha_{k}\}$, giving the claimed distribution for $\Simulated{\MaskedCodeword}$. \qedhere

\end{itemize}
\end{proof}
\end{adjustwidth}

We explain how to modify $\SlowSimulator$ so as to reduce the running time to $\poly(\CodeIdx + \QueryComplexity_{\Malicious{\Verifier}})$. The inefficiency of $\SlowSimulator$ comes from sampling $\Simulated{\RandCodeword} \in \Code_{\CodeIdx}$ and $\Simulated{\MaskedCodeword} \| \Simulated{\Proof} \in \OtherCode_{\CodeIdx}$, which takes time $\poly(\CodeSamplingTime(\CodeIdx) + \CodeBlockLength(\CodeIdx))$ and $\poly(\OfCode{\ProverTime}(\CodeIdx))$ respectively, which need not be polynomial in $\CodeIdx$. In the following, we show how to not explicitly sample these codewords but, instead, adaptively sample them by relying on constraint detection.

First, note that if $\Class{\OtherCode}$ has constraint detection with a certain efficiency then so does $\CodeClass$ with the same efficiency. The theorem's hypothesis says that $\Class{\OtherCode}$ has succinct constraint detection; so both $\CodeClass$ and $\Class{\OtherCode}$ have succinct constraint detection. We then invoke \lemref{lem:efficient-codeword-simulator} to obtain probabilistic polynomial-time algorithms $\CodeSimAlgorithm_{\CodeClass}, \CodeSimAlgorithm_{\Class{\OtherCode}}$ for the code families $\CodeClass, \Class{\OtherCode}$ respectively. Using these algorithms, we write the more efficient simulator $\Simulator$, as follows.
\begin{mdframed}
{\small
Let $\EvaluationDomain^*(\CodeIdx)$ be the domain of $\OtherCode_{\CodeIdx}$: this is the disjoint union of $\EvaluationDomain(\CodeIdx)$ (the domain of $\Code_{\CodeIdx}$) and $[\OfCode{\ProofLength}(\CodeIdx)]$ (the domain of $\OfCode{\Prover}(\CodeIdx,\Code_{\CodeIdx})$). The simulator $\Simulator$, given straightline access to $\Malicious{\IOPVerifier}$ and oracle access to $\Codeword$, works as follows:
\medskip

\begin{enumerate}[nolistsep]

  \item Let $\AnsTable{\Simulated{\RandCodeword}}$ be a subset of $\EvaluationDomain(\CodeIdx) \times \Field(\CodeIdx)$ that records query-value pairs for $\Simulated{\RandCodeword}$; initially, $\AnsTable{\Simulated{\RandCodeword}}$ equals $\varnothing$.

  \item \label{step:fast-iopp-sim-before-rho}
  Whenever $\Malicious{\IOPVerifier}$ queries $\Codeword$ at $\alpha \in \EvaluationDomain(\CodeIdx)$, return $\Codeword(\alpha)$; whenever $\Malicious{\IOPVerifier}$ queries $\RandCodeword$ at $\alpha \in \EvaluationDomain(\CodeIdx)$, return $\beta \DefineEqual \CodeSimAlgorithm_{\CodeClass}(\CodeIdx, \AnsTable{\Simulated{\RandCodeword}}, \alpha)$. In the latter case, add $(\alpha, \beta)$ to $\AnsTable{\Simulated{\RandCodeword}}$.

  \item \label{step:fast-iopp-sim-draw-Q}
  Receive $\Malicious{\rho}$ from $\Malicious{\IOPVerifier}$, and compute $\AnsTable{} \DefineEqual \{ (\alpha, \beta + \Malicious{\rho} \Codeword(\alpha)) \}_{ (\alpha, \beta) \in \AnsTable{\Simulated{\RandCodeword}} }$; this subset of $\EvaluationDomain^{*}(\CodeIdx) \times \Field(\CodeIdx)$ records query-value pairs for $\Simulated{\MaskedCodeword} \| \Simulated{\Proof}$.

  \item \label{step:fast-iopp-sim-after-rho}
  Hereafter: whenever $\Malicious{\IOPVerifier}$ queries $\Codeword$ at $\alpha \in \EvaluationDomain(\CodeIdx)$, return $\Codeword(\alpha)$; whenever $\Malicious{\IOPVerifier}$ queries $\RandCodeword$ at $\alpha \in \EvaluationDomain(\CodeIdx)$, return $\beta' \DefineEqual \beta - \Malicious{\rho} \Codeword(\alpha)$ where $\beta \DefineEqual \CodeSimAlgorithm_{\Class{\OtherCode}}(\CodeIdx, \AnsTable{}, \alpha)$ and add $(\alpha, \beta)$ to $\AnsTable{}$.

  \item  \label{step:fast-iopp-sim-proof}
  Tell $\Malicious{\IOPVerifier}$ that the oracle $\Proof$ has been `sent'; note that we do not yet commit to any entries in the proof, save for those which are implied by the verifier's previous queries to $\Codeword$. Whenever $\Malicious{\IOPVerifier}$ queries the $i$-th location in $\Proof$, return $\Proof_{i} \DefineEqual \CodeSimAlgorithm_{\Class{\OtherCode}}(\CodeIdx, \AnsTable{}, i)$ and add $(i, \Proof_{i})$ to $\AnsTable{}$. (Note that $\Malicious{\IOPVerifier}$ may query $\Codeword$ or $\RandCodeword$ before or after learning about $\Proof$.)

  \item Output the view of the simulated $\Malicious{\IOPVerifier}$.

\end{enumerate}
}
\end{mdframed}
Note that $\Simulator$ makes the same number of queries to $\Codeword$ as $\SlowSimulator$ does. Also, the number of pairs in $\AnsTable{\Simulated{\RandCodeword}}$ is at most $q_{\Malicious{\Verifier}}$; ditto for $\AnsTable{}$. Since the algorithm $\CodeSimAlgorithm$ is called at most $q_{\Malicious{\Verifier}}$ times, the running time of $\Simulator$ is $\poly(\CodeIdx + \QueryComplexity_{\Malicious{\Verifier}})$, as required.
\end{proof}

We conclude with a lemma that says that succinct constraint detection is in some sense inherent to the ``masking'' approach used in \conref{con:pzk-iopp}.

\begin{lemma}
\label{lem:converse}
If $(\OfCode{\Prover},\OfCode{\Verifier})$ is a linear PCPP such that \conref{con:pzk-iopp} yields $(\Prover, \Verifier)$ with perfect zero knowledge, then $(\OfCode{\Prover},\OfCode{\Verifier})$ has a constraint detector that runs in probabilistic polynomial time. (In fact, the same statement holds even if the construction yields $(\Prover, \Verifier)$ with only statistical zero knowledge.)
\end{lemma}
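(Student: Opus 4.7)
The plan is to use the zero knowledge simulator guaranteed by \conref{con:pzk-iopp} as a black-box sampler for uniformly random codewords of $\OtherCode_{\CodeIdx} \DefineEqual \{\Codeword \FConc \OfCode{\Prover}(\CodeIdx, \Codeword)\}_{\Codeword \in \Code_{\CodeIdx}}$ restricted to an arbitrary subdomain $\IdxSet$ of its domain. First I fix the witness $\Codeword \DefineEqual \vec{0}$, which lies in $\Code_{\CodeIdx}$ by linearity, so $(\CodeIdx, \vec{0}) \in \GetRelation{\CodeClass}$. In an honest execution of $(\Prover, \Verifier)$ on this input, the prover sends a uniformly random $\RandCodeword \in \Code_{\CodeIdx}$, receives an arbitrary verifier element $\rho$, and then sends $\Proof = \OfCode{\Prover}(\CodeIdx, \rho \cdot \vec{0} + \RandCodeword) = \OfCode{\Prover}(\CodeIdx, \RandCodeword)$; by linearity of $\OfCode{\Prover}$, the pair $\RandCodeword \FConc \Proof$ is then uniformly distributed over $\OtherCode_{\CodeIdx}$.

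Given input $(\CodeIdx, \IdxSet)$ to the constraint detector, I construct a deterministic malicious verifier $\Malicious{\Verifier}_{\IdxSet}$ that makes no queries to $\Codeword$, sends the round-one message $\rho = 0$, queries every coordinate of $\IdxSet$ in the appropriate prover oracle ($\RandCodeword$ for coordinates in $\EvaluationDomain(\CodeIdx)$, and $\Proof$ for the rest), and outputs the resulting $\SetCardinality{\IdxSet}$-tuple of field elements. By the zero knowledge hypothesis, $\Simulator^{\Malicious{\Verifier}_{\IdxSet}, \vec{0}}(\CodeIdx)$ then produces, in time $\poly(\BitSize{\CodeIdx} + \SetCardinality{\IdxSet})$, an output whose distribution is identical (or, for SZK, negligibly close) to the uniform distribution on $\Restrict{\OtherCode_{\CodeIdx}}{\IdxSet}$. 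The detector invokes this simulator $N \DefineEqual O(\SetCardinality{\IdxSet} + \SetCardinality{\IdxSet} \cdot \log \SetCardinality{\Field(\CodeIdx)})$ times, obtaining vectors $v_{1}, \dots, v_{N} \in \Field(\CodeIdx)^{\IdxSet}$, computes a basis $B$ of $\Span(v_{1}, \dots, v_{N})$ via Gaussian elimination, and then outputs a basis for the dual of $\Span(B)$. In the PZK case each $v_{i}$ lies deterministically in $\Restrict{\OtherCode_{\CodeIdx}}{\IdxSet}$; moreover, whenever the partial span has dimension strictly less than $\Dimension{\Restrict{\OtherCode_{\CodeIdx}}{\IdxSet}}$, a fresh sample escapes it with probability at least $1 - 1/\SetCardinality{\Field(\CodeIdx)}$, so the choice of $N$ ensures $\Span(v_{1}, \dots, v_{N}) = \Restrict{\OtherCode_{\CodeIdx}}{\IdxSet}$ except with probability $2^{-\Omega(\SetCardinality{\IdxSet})}$. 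By \clmref{claim:bsrs-dual-of-projection}, the dual of this span is exactly $\Puncture{(\Dual{\OtherCode_{\CodeIdx}})}{\IdxSet}$, which is the output required by \defref{def:constraint-detector}.

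The main technical subtlety is the statistical-ZK case, in which each simulator sample may fall outside $\Restrict{\OtherCode_{\CodeIdx}}{\IdxSet}$ with some probability $\epsilon$; an unfiltered span computation could then overshoot the true subspace and return too few dual constraints. When the SZK error $\epsilon$ is negligible in $\BitSize{\CodeIdx}$, however, a union bound over the $N$ invocations shows that no outlier is produced except with negligible probability $N\epsilon$, and the PZK analysis goes through verbatim. If the SZK error is only inverse-polynomial, one additionally filters each candidate $v_{i}$ by drawing further simulator samples and retaining $v_{i}$ only if it is consistent with the majority vote across repetitions over its coordinates; standard Chernoff bounds then suppress the outlier probability to negligible levels, after which the PZK argument applies as before.
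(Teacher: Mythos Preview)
Your approach is essentially the same as the paper's: fix the zero witness, build a malicious verifier that sends $\rho=0$ and queries all of $\IdxSet$, use the simulator as a black-box sampler for $\Restrict{\OtherCode_{\CodeIdx}}{\IdxSet}$, collect polynomially many samples, and read off the dual (equivalently, the nullspace) by Gaussian elimination. The only cosmetic difference is the error analysis---the paper union-bounds over all putative false constraints $u\in\Field^{\IdxSet}$, while you argue directly that the sample span fills $\Restrict{\OtherCode_{\CodeIdx}}{\IdxSet}$; both are fine.

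One remark: your final sentence about handling an inverse-polynomial SZK error via ``majority vote across repetitions over its coordinates'' does not make sense as written, since each simulator call returns a fresh random codeword rather than a noisy estimate of a fixed vector. This does not matter for the lemma---statistical zero knowledge by convention means negligible simulation error, and your preceding union-bound argument over $N$ invocations already handles that case correctly---so you can simply drop that sentence.
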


\begin{proof}
The linear code $\OtherCode_{\CodeIdx} \DefineEqual \{\, \Codeword \| \Prover(\CodeIdx,\Codeword) \,\}_{\Codeword \in \Code_{\CodeIdx}}$ has domain $\EvaluationDomain^*(\CodeIdx) \DefineEqual \EvaluationDomain(\CodeIdx) \sqcup [\OfCode{\ProofLength}(\CodeIdx)]$, which is the disjoint union of $\EvaluationDomain(\CodeIdx)$ (the domain of $\Code_{\CodeIdx}$) and $[\OfCode{\ProofLength}(\CodeIdx)]$ (the domain of $\Prover(\CodeIdx,\Code_{\CodeIdx})$); see \defref{def:pcpp-constraint-detection}. Let $\IdxSet \subseteq \EvaluationDomain^*(\CodeIdx)$. We need, in probabilistic polynomial time, to output a basis for $\Dual{(\Restrict{\OtherCode_{\CodeIdx}}{\IdxSet})}$. Construct a malicious verifier $\Malicious{\Verifier}$ that works as follows:
\begin{enumerate}[nolistsep,leftmargin=20pt]

\item Receive $\RandCodeword \in \Code_{\CodeIdx}$ from $\Prover$.

\item Send $\rho = 0$ to $\Prover$.

\item Receive $\Proof$ from $\Prover$.

\item For each $i \in \IdxSet$: if $i \in \EvaluationDomain(\CodeIdx)$ then query $\RandCodeword$ at $i$, and if $i \in [\OfCode{\ProofLength}(\CodeIdx)]$ then query $\Proof$ at $i$; call the answer $\beta_{i}$.

\end{enumerate}
By PZK, there is a probabilistic polynomial time algorithm $\Simulator$ such that the output of $\Simulator^{\Malicious{\Verifier},\Codeword}(\CodeIdx)$ is identically distributed to $\IOPView{\Prover^{\Codeword}(\CodeIdx)}{\Malicious{\Verifier}^{\Codeword}}$, for every $\Codeword \in \Code_{\CodeIdx}$. Set $\Codeword$ to be the zero codeword, and suppose we run $\Simulator^{\Malicious{\Verifier},\Codeword}(\CodeIdx)$; this invocation makes $\Simulator$ sample answers $(\beta_{i})_{i \in \IdxSet} = (\RandCodeword'(i))_{i \in \IdxSet}$ for $\RandCodeword' = \RandCodeword \| \OfCode{\Prover}(\CodeIdx, \RandCodeword)$ uniformly random in $\OtherCode_{\CodeIdx}$.

Thus, to perform constraint detection in probabilistic polynomial time, we proceed as follows. We run $\Simulator^{\Malicious{\Verifier},\Codeword}(\CodeIdx)$ $k > \SetCardinality{\IdxSet}$ times, recording a vector $\vec{\beta}^{j} = (\beta_{i})_{i \in \IdxSet}$ at the $j$-th iteration. Let $B$ be the $k \times \SetCardinality{\IdxSet}$ matrix with rows $\vec{\beta}^{1}, \dots, \vec{\beta}^{k}$. Output a basis for the nullspace of $B$, which we can find in $\poly(\log \SetCardinality{\Field} + k + \SetCardinality{\IdxSet})$ time.

We now argue correctness of the above approach. First, for every $u \in \Field^{\IdxSet}$ such that $\sum_{i \in \IdxSet} u(i) \Codeword'(i) = 0$ for every $\Codeword' \in \OtherCode_{\CodeIdx}$, it holds that $u$ is in the nullspace of $B$, because codewords used to generate $B$ satisfy the same relation. Next, the probability that there exists $u \in \Field^{\IdxSet}$ in the nullspace of $B$ such that $\sum_{i \in \IdxSet} u(i) \Codeword'(i) \neq 0$ for some $\Codeword' \in \OtherCode_{\CodeIdx}$ is at most $1/\SetCardinality{\Field}^{k - \SetCardinality{\IdxSet}}$. Indeed, for every such $u$, $\Pr_{\RandCodeword' \gets \OtherCode_{\CodeIdx}}[\sum_{i \in \IdxSet} u(i) \RandCodeword'(i) = 0] \leq 1/\SetCardinality{\Field}$ (since $\OtherCode_{\CodeIdx}$ is a linear code), so the probability that $u$ is in the nullspace of $B$ is at most $1/\SetCardinality{\Field}^{k}$; we then obtain the claimed probability by a union bound. Overall, the probability that the algorithm answers incorrectly is at most $1/\SetCardinality{\Field}^{k - \SetCardinality{\IdxSet}}$.
\end{proof}

\subsection{Perfect zero knowledge IOPs of proximity for Reed--Solomon codes}
\label{sec:rs-pzk-iopp}

We have already proved that the linear code family $\BSCode$, which consists of low-degree univariate polynomials concatenated with corresponding BS proximity proofs \cite{BS08}, has succinct constraint detection. When combined with the results in \secref{sec:pzk-general-transformation}, we obtain IOPs of Proximity for Reed--Solomon codes, as stated in the corollary below.

\begin{definition}
\label{def:rs-from-bs}
We denote by $\ARS$ the linear code family indexed by tuples $\CodeIdx = (\Field,\BSSpace,\RSDegree)$, where $\Field$ is an extension field of $\Field_{2}$ and $\BSSpace$ is an $\Field_{2}$-linear subspace of $\Field$ with $\RSDegree \leq \SetCardinality{\BSSpace}/8$, and the $\CodeIdx$-th code consists of the codewords from the Reed--Solomon code $\RSCode{\Field}{\BSSpace}{\RSDegree}$.
\end{definition}

\begin{theorem}[\cite{BS08}]
\label{thm:bs08-parameters}
For every function $\ProximityParameter \colon \Bits^{*} \to (0,1)$, the linear code family $\ARS$ has PCPs of Proximity with soundness error $1/2$, proximity parameter $\ProximityParameter$, prover running time (and thus proof length) that is quasilinear in the block length $\CodeBlockLength(\CodeIdx)$, and verifier running time (and thus query complexity) that is polylogarithmic in $\CodeBlockLength(\CodeIdx)/\ProximityParameter(\CodeIdx)$.
\end{theorem}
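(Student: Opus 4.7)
The plan is to invoke directly the main construction of \cite{BS08} rather than reprove it, since the theorem is explicitly credited to that paper; what remains is to verify that the hypotheses of our code family $\ARS$ match the setting handled there and that the stated parameters are those established by that construction. First I would check that the definition of $\ARS$ (codes over an extension field $\Field$ of $\Field_{2}$, evaluated on an $\Field_{2}$-linear subspace $\BSSpace$, with degree $\RSDegree \leq \SetCardinality{\BSSpace}/8$) is exactly the regime in which \cite{BS08} construct PCPs of Proximity: binary-field Reed--Solomon codes over linear evaluation domains, with rate bounded away from $1$ (here by $1/8$) so that there is room for the distance-based soundness analysis.

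Next I would recall the high-level structure of the BS08 construction in order to justify the quantitative parameters. Given an alleged codeword $\Codeword \colon \BSSpace \to \Field$, one identifies $\BSSpace$ with a product $\BSSpace_{0} \times \BSSpace_{1}$ of two $\Field_{2}$-linear subspaces of roughly equal dimension, reinterpreting the claim ``$\Codeword$ is close to a polynomial of degree $<\RSDegree$'' as a claim about a bivariate function of individual degrees $O(\sqrt{\RSDegree})$. The prover then supplies row and column univariate extensions as auxiliary oracles, reducing proximity testing of $\Codeword$ to $\Theta(\sqrt{\SetCardinality{\BSSpace}})$ instances of proximity testing for Reed--Solomon codes of block length and degree $O(\sqrt{\SetCardinality{\BSSpace}})$. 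Iterating this reduction recursively down to a constant-size base case, and composing with a proximity-proof composition theorem, yields total proof length quasilinear in $\CodeBlockLength(\CodeIdx) = \SetCardinality{\BSSpace}$, while the verifier probabilistically selects one row or column at each level of the recursion, yielding polylogarithmic query complexity and running time in $\CodeBlockLength(\CodeIdx)/\ProximityParameter(\CodeIdx)$.

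The main step that would require care is the soundness amplification across the recursion: one must argue that if $\Codeword$ is $\ProximityParameter$-far from $\RSCode{\Field}{\BSSpace}{\RSDegree}$ then at least a constant fraction of the row/column sub-instances inherit a comparable farness, and that the distance does not decay too rapidly through the composition layers. This (together with the subsequent reduction from a constant soundness error to $1/2$ by a constant number of independent repetitions, which does not affect the quasilinear and polylogarithmic bounds asymptotically) is the bulk of the analysis in \cite{BS08}. I would invoke their soundness theorem as a black box and simply verify that our regime for $(\Field,\BSSpace,\RSDegree)$ satisfies its hypotheses, concluding the claimed parameters.
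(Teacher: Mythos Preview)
Your proposal is correct and matches the paper's treatment: the theorem is stated as a citation to \cite{BS08} and no proof is given in the paper itself. Your outline of the recursive bivariate decomposition and the parameter accounting is accurate and in fact more detailed than what the paper provides, but since the result is used as a black box here, simply invoking \cite{BS08} (as you do) is all that is required.
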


\begin{corollary}
\label{cor:pzk-just-rs}
For every function $\ProximityParameter \colon \Bits^{*} \to (0,1)$, there exists an IOPP system that puts $\GetRelation{\ARS}$ in the complexity class
\begin{equation*}
{\footnotesize
\PZKIOPP
\left[
\begin{array}{llll}
\TextAnswerAlphabet
  & \Field(\CodeIdx)
  &~&
  \\
\TextNumRounds
  & \NumRounds(\CodeIdx)
  &=& 2
  \\
\TextProofLength
  & \ProofLength(\CodeIdx)
  &=& \Ot{\CodeBlockLength(\CodeIdx)}
  \\
\TextQueryComplexity
  & \QueryComplexity(\CodeIdx)
  &=& \polylog(\CodeBlockLength(\CodeIdx)/\ProximityParameter(\CodeIdx))
  \\
\TextSoundnessError
  & \SoundnessError(\CodeIdx)
  &=& 1/2
  \\
\TextProximityParameter
  & \ProximityParameter(\CodeIdx)
  &~&
  \\
\TextProverTime
  & \ProverTime(\CodeIdx)
  &=& \Ot{\CodeBlockLength(\CodeIdx)}
  \\
\TextVerifierTime
  & \VerifierTime(\CodeIdx)
  &=& \polylog(\CodeBlockLength(\CodeIdx)/\ProximityParameter(\CodeIdx))
  \\
\TextQueryBound
  & \QueryBound(\CodeIdx)
  &=& \AnyBound
\end{array}
\right]
\enspace.
}
\end{equation*}
\end{corollary}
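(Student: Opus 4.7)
The plan is to obtain the corollary as an immediate consequence of the general transformation \thmref{thm:pzk-iopp-for-codes}, instantiated with the PCP of Proximity from \thmref{thm:bs08-parameters} and with succinct constraint detection supplied by \thmref{thm:bsrs-succinct-constraint-detection}. Concretely, I would let $(\OfCode{\Prover},\OfCode{\Verifier})$ denote the BS08 PCPP system for $\ARS$ guaranteed by \thmref{thm:bs08-parameters}, and check that it satisfies both hypotheses needed to invoke \thmref{thm:pzk-iopp-for-codes}.

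First I would verify linearity (\defref{def:pcpp-linear}): by inspection of the BS construction, the proximity proof $\OfCode{\Prover}(\CodeIdx,\Codeword)$ is obtained by applying a fixed $\Field$-linear map to $\Codeword$ (essentially, evaluating a sequence of bivariate arithmetizations and their recursive sub-proofs), so $\OfCode{\Prover}$ is deterministic and $\Field$-linear in its input codeword. Next, I would verify succinct constraint detection (\defref{def:pcpp-constraint-detection}) by identifying the concatenated linear code $\OtherCode_{\CodeIdx} = \{\Codeword \FConc \OfCode{\Prover}(\CodeIdx,\Codeword)\}_{\Codeword \in \Code_{\CodeIdx}}$ with the code $\BSCode[\SubField,\Field,\BSSpace,\BSBalance,\BSBaseDim]$ (up to fixing $\SubField = \Field_{2}$ and choosing $\BSBalance,\BSBaseDim$ as in the BS08 instantiation and reindexing). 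Then \thmref{thm:bsrs-succinct-constraint-detection} immediately furnishes a $\poly(\log \SetCardinality{\Field} + \Dimension{\BSSpace} + \SetCardinality{\IdxSet})$-time constraint detector for $\OtherCode_{\CodeIdx}$, i.e., $\poly(\BitSize{\CodeIdx} + \SetCardinality{\IdxSet})$-time, giving succinct constraint detection for the PCPP.

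Having verified both hypotheses, I would invoke \thmref{thm:pzk-iopp-for-codes} and then simplify the resulting parameter table using the bounds from \thmref{thm:bs08-parameters}: the PCPP proof length and prover time are $\Ot{\CodeBlockLength(\CodeIdx)}$, and the PCPP query complexity and verifier time are $\polylog(\CodeBlockLength(\CodeIdx)/\ProximityParameter(\CodeIdx))$, with soundness error $1/2$ and proximity parameter $\ProximityParameter$. Plugging into the output class of \thmref{thm:pzk-iopp-for-codes} yields
\[
\ProofLength(\CodeIdx) = \Ot{\CodeBlockLength(\CodeIdx)} + \CodeBlockLength(\CodeIdx) = \Ot{\CodeBlockLength(\CodeIdx)}, \quad
\QueryComplexity(\CodeIdx) = \polylog(\CodeBlockLength(\CodeIdx)/\ProximityParameter(\CodeIdx)),
\]
$\SoundnessError(\CodeIdx) = 1/2 + 1/\SetCardinality{\Field(\CodeIdx)}$ (which can be absorbed into $1/2$ by one round of parallel repetition or by taking $\OfCode{\SoundnessError}$ slightly smaller), proximity parameter $2\ProximityParameter$, prover time $\Ot{\CodeBlockLength(\CodeIdx)}$ (using that the uniform sampler for $\Code_{\CodeIdx} = \RSCode{\Field}{\BSSpace}{\RSDegree}$ runs in $\Ot{\CodeBlockLength(\CodeIdx)}$ time by sampling $\RSDegree$ random coefficients and evaluating via FFT), and verifier time $\polylog(\CodeBlockLength(\CodeIdx)/\ProximityParameter(\CodeIdx))$.

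The bulk of the conceptual work is already done in \thmref{thm:pzk-iopp-for-codes} and \thmref{thm:bsrs-succinct-constraint-detection}, so this proof is short; the only mildly delicate step is the identification of $\OtherCode_{\CodeIdx}$ with a code from the family $\BSCode$, which requires knowing that the BS08 proof is exactly what \defref{def:intro-bsrs} captures (including the choice of domains for proximity proofs and the recursion's base case). A small technicality is rescaling the proximity parameter by a factor of $2$ to satisfy the hypothesis of \thmref{thm:pzk-iopp-for-codes}, which does not affect the asymptotic query complexity bound since the latter depends only polylogarithmically on $1/\ProximityParameter$.
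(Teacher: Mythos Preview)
Your proposal is correct and follows essentially the same approach as the paper: invoke \thmref{thm:pzk-iopp-for-codes} on $\ARS$ using the BS proximity proofs of \thmref{thm:bs08-parameters}, with succinct constraint detection supplied by \thmref{thm:bsrs-succinct-constraint-detection} via the identification of $\{\Codeword \FConc \OfCode{\Prover}(\CodeIdx,\Codeword)\}$ with $\BSCode$. The paper additionally flags a small technicality you do not mention explicitly: the $\BSCode$ family in \thmref{thm:bsrs-succinct-constraint-detection} is defined for degree exactly $\SetCardinality{\BSSpace}/8$ rather than at most $\SetCardinality{\BSSpace}/8$, so one must use the standard reduction (from \cite{BS08}) to handle smaller degrees; this is minor and fits within your remark that the identification with $\BSCode$ is the only delicate step.
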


\begin{proof}
Invoke \thmref{thm:pzk-iopp-for-codes} on the linear code family $\ARS$ with corresponding BS proximity proofs (\thmref{thm:bs08-parameters}). Indeed, the concatenation of codewords in $\ARS$ and proximity proofs yields the family $\BSCode$, which has succinct constraint detection by \thmref{thm:bsrs-succinct-constraint-detection}. (This last step omits a technical, but uninteresting, step: the proximity proofs from \thmref{thm:bsrs-succinct-constraint-detection} consider the case where the degree $\RSDegree$ \emph{equals} the special value $\SetCardinality{\BSSpace}/8$, rather than being bounded by it; but proximity proofs for smaller degree $\RSDegree$ are easily obtained from these, as explained in \cite{BS08}.)
\end{proof}

When constructing perfect zero knowledge IOPs for $\NEXP$ (\secref{sec:pzk-ntime}) we shall need perfect zero knowledge IOPs of Proximity not quite for the family $\ARS$ but for an extension of it that we denote by $\RSVRS$, and for which \cite{BS08} also gives PCPs of proximity. The analogous perfect zero knowledge result follows in a similar way, as explained below.

\begin{definition}
\label{def:rsvrs-from-bs}
Given a field $\Field$ of characteristic 2, $\GF{2}$-linear subspaces $\RSVRSVanishing, \RSVRSSubdomain \subseteq \Field$ with $\SetCardinality{\RSVRSVanishing} \leq \SetCardinality{\RSVRSSubdomain}/8$, and $\RSDegree_{0},\RSDegree_{1} \in \Naturals$ with $\RSDegree_{0},\RSDegree_{1} \leq \SetCardinality{\RSVRSSubdomain}/8$, we denote by $\RSVRSCode{\Field}{\RSVRSSubdomain}{\RSVRSVanishing}{\RSDegree_{0}}{\RSDegree_{1}}$ the linear code consisting of all pairs $(\Codeword_{0}, \Codeword_{1})$ where $\Codeword_{0} \in \RSCode{\Field}{\RSVRSSubdomain}{\RSDegree_{0}}$, $\Codeword_{1} \in \RSCode{\Field}{\RSVRSSubdomain}{\RSDegree_{1}}$, and $\Codeword_{1}(x) = 0$ for all $x \in \RSVRSVanishing$. We denote by $\RSVRS$ the linear code family indexed by tuples $\CodeIdx = (\Field,\BSSpace,\RSDegree_{0},\RSDegree_{1})$ for which the $\CodeIdx$-th code is $\RSVRSCode{\Field}{\RSVRSSubdomain}{\RSVRSVanishing}{\RSDegree_{0}}{\RSDegree_{1}}$. 
\end{definition}

\begin{theorem}[\cite{BS08}]
\label{thm:bs08-vsrs}
For every function $\ProximityParameter \colon \Bits^{*} \to (0,1)$, the linear code family $\RSVRS$ has PCPs of Proximity with soundness error $1/2$, proximity parameter $\ProximityParameter$, prover running time (and thus proof length) that is quasilinear in the block length $\CodeBlockLength(\CodeIdx)$, and verifier running time (and thus query complexity) that is polylogarithmic in $\CodeBlockLength(\CodeIdx)/\ProximityParameter(\CodeIdx)$.
\end{theorem}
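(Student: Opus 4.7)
The plan is to reduce proximity testing for $\RSVRS$ to the Reed--Solomon proximity testing guaranteed by \thmref{thm:bs08-parameters}, by combining BS proximity proofs with a standard algebraic trick for enforcing vanishing on an $\Field_{2}$-linear subspace. Concretely, given an instance $\CodeIdx = (\Field,\RSVRSSubdomain,\RSVRSVanishing,\RSDegree_{0},\RSDegree_{1})$ and a purported codeword $(\Codeword_{0},\Codeword_{1}) \in \Field^{\RSVRSSubdomain} \times \Field^{\RSVRSSubdomain}$, I would observe that $(\Codeword_{0},\Codeword_{1}) \in \RSVRSCode{\Field}{\RSVRSSubdomain}{\RSVRSVanishing}{\RSDegree_{0}}{\RSDegree_{1}}$ if and only if: (a) $\Codeword_{0}$ is the evaluation of some $P_{0} \in \Field^{<\RSDegree_{0}}[\VariableX]$, (b) $\Codeword_{1}$ is the evaluation of some $P_{1} \in \Field^{<\RSDegree_{1}}[\VariableX]$, and (c) $P_{1}$ vanishes on $\RSVRSVanishing$. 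Since $\RSVRSVanishing$ is an $\Field_{2}$-linear subspace of $\Field$, its vanishing polynomial $\ZeroPolynomial{\RSVRSVanishing}(\VariableX) \DefineEqual \prod_{v \in \RSVRSVanishing}(\VariableX - v)$ is a linearized (affine) polynomial of degree $\SetCardinality{\RSVRSVanishing}$ that can be evaluated at any point of $\Field$ in time $\polylog \SetCardinality{\Field}$, and condition (c) is equivalent to the existence of $P_{1}' \in \Field^{< \RSDegree_{1} - \SetCardinality{\RSVRSVanishing}}[\VariableX]$ with $P_{1}(x) = \ZeroPolynomial{\RSVRSVanishing}(x) \cdot P_{1}'(x)$ identically.

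With this equivalence in hand, the prover's PCP of Proximity consists of: (i) the BS proximity proof $\pi_{0}$ that $\Codeword_{0}$ is close to $\RSCode{\Field}{\RSVRSSubdomain}{\RSDegree_{0}}$; (ii) an auxiliary evaluation $\Codeword_{1}' \colon \RSVRSSubdomain \to \Field$ of $P_{1}'$; and (iii) the BS proximity proof $\pi_{1}'$ that $\Codeword_{1}'$ is close to $\RSCode{\Field}{\RSVRSSubdomain}{\RSDegree_{1} - \SetCardinality{\RSVRSVanishing}}$. The verifier then: runs the BS verifier on $(\Codeword_{0},\pi_{0})$ and on $(\Codeword_{1}',\pi_{1}')$ each with proximity parameter $\ProximityParameter/4$ and soundness error $1/8$; then, to check the multiplicative relation, samples a random point $x \gets \RSVRSSubdomain$ and tests $\Codeword_{1}(x) = \ZeroPolynomial{\RSVRSVanishing}(x) \cdot \Codeword_{1}'(x)$, using self-correction (as in \cite{RubinfeldS96,AroraS03}) on both $\Codeword_{1}$ (as a purported codeword of $\RSCode{\Field}{\RSVRSSubdomain}{\RSDegree_{1}}$) and on $\Codeword_{1}'$ to reliably obtain the two values. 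Observe that a proximity proof that $\Codeword_{1}$ itself is close to $\RSCode{\Field}{\RSVRSSubdomain}{\RSDegree_{1}}$ is not separately needed: if $\Codeword_{1}'$ is close to the Reed--Solomon code of degree $\RSDegree_{1} - \SetCardinality{\RSVRSVanishing}$, the pointwise product with $\ZeroPolynomial{\RSVRSVanishing}$ already forces $\Codeword_{1}$ to be close to the Reed--Solomon code of degree $\RSDegree_{1}$, up to the usual distance loss.

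For completeness, every honest $(\Codeword_{0},\Codeword_{1})$ admits such $\Codeword_{1}'$, and all three checks accept with probability $1$. For soundness, one argues the contrapositive: if $(\Codeword_{0},\Codeword_{1})$ is $\ProximityParameter$-far from $\RSVRSCode{\Field}{\RSVRSSubdomain}{\RSVRSVanishing}{\RSDegree_{0}}{\RSDegree_{1}}$, then either $\Codeword_{0}$ is $\ProximityParameter/4$-far from $\RSCode{\Field}{\RSVRSSubdomain}{\RSDegree_{0}}$ (rejected by the first BS test), or $\Codeword_{1}'$ is $\ProximityParameter/4$-far from $\RSCode{\Field}{\RSVRSSubdomain}{\RSDegree_{1} - \SetCardinality{\RSVRSVanishing}}$ (rejected by the second BS test), or both are sufficiently close but the unique nearby polynomials $\widetilde{P}_{0},\widetilde{P}_{1}'$ satisfy $\widetilde{P}_{1}(\VariableX) \DefineEqual \ZeroPolynomial{\RSVRSVanishing}(\VariableX)\widetilde{P}_{1}'(\VariableX) \not\equiv P_{1}$ on $\RSVRSSubdomain$ for the nearest $P_{1}$ to $\Codeword_{1}$; the multiplicative identity then fails at an $\Omega(\ProximityParameter)$ fraction of points, and self-correction returns the correct values with probability $1-o(1)$, so a random $x$ catches the discrepancy. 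Combining the three rejection probabilities by a union bound, and tuning the individual proximity parameters and soundness errors of the sub-protocols, yields overall soundness error $1/2$ at proximity $\ProximityParameter$.

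The efficiency bounds follow directly from those of \thmref{thm:bs08-parameters}: the auxiliary oracle $\Codeword_{1}'$ and both BS proofs each have quasilinear size in $\CodeBlockLength(\CodeIdx) = 2\SetCardinality{\RSVRSSubdomain}$, the prover work is dominated by evaluating $P_{1}'$ on $\RSVRSSubdomain$ (done in $\Ot{\SetCardinality{\RSVRSSubdomain}}$ time via FFT-like techniques over linear spaces) and generating the two BS proofs, and the verifier's query complexity and running time are the sum of those of the two BS verifiers plus $\polylog(\SetCardinality{\Field})$ for self-correction and evaluation of $\ZeroPolynomial{\RSVRSVanishing}$, giving the claimed $\polylog(\CodeBlockLength(\CodeIdx)/\ProximityParameter(\CodeIdx))$ bound. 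The main subtlety I anticipate is calibrating the proximity parameters and soundness errors of the sub-components so that the final proximity parameter remains $\ProximityParameter$ and the final soundness error is $1/2$; this is a routine but slightly delicate bookkeeping exercise of the standard kind that appears throughout \cite{BS08}.
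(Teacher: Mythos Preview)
Your construction is exactly the paper's: the proximity proof is $(\pi_{0},\Codeword_{1}',\pi_{1})$ where $\pi_{0}$ is a BS proof for $\Codeword_{0}$, $\Codeword_{1}'$ is the evaluation of the quotient $P_{1}/\ZeroPolynomial{\RSVRSVanishing}$, and $\pi_{1}$ is a BS proof for $\Codeword_{1}'$, and the verifier checks both BS proofs plus a consistency check between $\Codeword_{1}$ and $\Codeword_{1}'$.

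One correction, however: drop the self-correction step. Reed--Solomon codes are \emph{not} locally decodable with polylogarithmic query complexity, so you cannot self-correct $\Codeword_{1}$ or $\Codeword_{1}'$ within the stated verifier budget; and in any case it is unnecessary. The verifier has direct oracle access to both $\Codeword_{1}$ and $\Codeword_{1}'$ and simply queries them at a uniformly random $x \in \RSVRSSubdomain$, checking $\Codeword_{1}(x) = \ZeroPolynomial{\RSVRSVanishing}(x)\cdot \Codeword_{1}'(x)$. The soundness case analysis then goes through by a plain distance argument: if $\Codeword_{1}'$ is $\delta'$-close to some $\widetilde P_{1}'$ of degree $< \RSDegree_{1}-\SetCardinality{\RSVRSVanishing}$ but $\Codeword_{1}$ is $\varepsilon$-far from $\ZeroPolynomial{\RSVRSVanishing}\cdot \widetilde P_{1}'$, the raw pointwise check already fails with probability at least $\varepsilon - \delta'$ over $x$, no decoding needed.
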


\begin{proof}[Proof sketch]
A PCP of proximity for a codeword $(\Codeword_{0}, \Codeword_{1})$ to $\RSVRSCode{\Field}{\RSVRSSubdomain}{\RSVRSVanishing}{\RSDegree_{0}}{\RSDegree_{1}}$ consists of $(\pi_{0},\Codeword_{1}',\pi_{1})$, where
\begin{itemize}[nolistsep]
  \item $\pi_{0}$ is a PCP of proximity for $\Codeword_{0}$ to $\RSCode{\Field}{\RSVRSSubdomain}{\RSDegree_{0}}$;
  \item $\Codeword_{1}'$ is the evaluation of the polynomial obtained by dividing (the polynomial of) $\Codeword_{1}$ by the zero polynomial of $\RSVRSVanishing$;
  \item $\pi_{1}$ is a PCP of proximity for $\Codeword_{1}'$ to $\RSCode{\Field}{\RSVRSSubdomain}{\RSDegree_{1} - \SetCardinality{\RSVRSVanishing}}$.
\end{itemize}
The verifier, which has oracle access to $(\Codeword_{0}, \Codeword_{1})$ and $(\pi_{0},\Codeword_{1}',\pi_{1})$, checks both PCPs or proximity and then performs a consistency check between $\Codeword_{1}$ and $\Codeword_{1}'$. See \cite{BS08} for details.
\end{proof}

\begin{corollary}
\label{cor:rsvrs-pzk-iopp}
For every function $\ProximityParameter \colon \Bits^{*} \to (0,1)$, there exists an IOPP system that puts $\GetRelation{\RSVRS}$ in the complexity class
\begin{equation*}
{\footnotesize
\PZKIOPP
\left[
\begin{array}{llll}
\TextAnswerAlphabet
  & \Field(\CodeIdx)
  &~&
  \\
\TextNumRounds
  & \NumRounds(\CodeIdx)
  &=& 2
  \\
\TextProofLength
  & \ProofLength(\CodeIdx)
  &=& \Ot{\CodeBlockLength(\CodeIdx)}
  \\
\TextQueryComplexity
  & \QueryComplexity(\CodeIdx)
  &=& \polylog(\CodeBlockLength(\CodeIdx)/\ProximityParameter(\CodeIdx))
  \\
\TextSoundnessError
  & \SoundnessError(\CodeIdx)
  &=& 1/2
  \\
\TextProximityParameter
  & \ProximityParameter(\CodeIdx)
  &~&
  \\
\TextProverTime
  & \ProverTime(\CodeIdx)
  &=& \Ot{\CodeBlockLength(\CodeIdx)}
  \\
\TextVerifierTime
  & \VerifierTime(\CodeIdx)
  &=& \polylog(\CodeBlockLength(\CodeIdx)/\ProximityParameter(\CodeIdx))
  \\
\TextQueryBound
  & \QueryBound(\CodeIdx)
  &=& \AnyBound
\end{array}
\right]
\enspace.
}
\end{equation*}
\end{corollary}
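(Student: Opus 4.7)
The plan is to invoke Theorem \ref{thm:pzk-iopp-for-codes} on the linear code family $\RSVRS$, using the PCPs of Proximity from Theorem \ref{thm:bs08-vsrs} as the underlying $(\OfCode{\Prover}, \OfCode{\Verifier})$. All of the stated parameters---quasilinear prover time, polylogarithmic verifier time, soundness error $1/2$, proximity parameter $\ProximityParameter$, and query bound $\AnyBound$---then follow directly from Theorem \ref{thm:pzk-iopp-for-codes} combined with the parameters of Theorem \ref{thm:bs08-vsrs}, since the overhead of the transformation is essentially free. What remains is to verify the hypothesis of Theorem \ref{thm:pzk-iopp-for-codes}: the concatenated linear code family
\begin{equation*}
\OtherCode_{\CodeIdx} \DefineEqual \{\, (\Codeword_0,\Codeword_1) \,\|\, \OfCode{\Prover}(\CodeIdx,(\Codeword_0,\Codeword_1)) \,\}_{(\Codeword_0,\Codeword_1) \in \RSVRSCode{\Field}{\RSVRSSubdomain}{\RSVRSVanishing}{\RSDegree_0}{\RSDegree_1}}
\end{equation*}
is linear and has succinct constraint detection. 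Linearity is immediate: the BS prover is linear in its input, and the quotient $\Codeword_1/\VanishingPoly{\RSVRSVanishing}$ is a linear function of $\Codeword_1$ (well-defined, since $\Codeword_1$ vanishes on $\RSVRSVanishing$ by definition of $\RSVRS$).

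The central step---and the main insight---is identifying the product structure of $\OtherCode_{\CodeIdx}$. A codeword has five labelled blocks $(\Codeword_0, \Codeword_1, \pi_0, \Codeword_1', \pi_1)$, where $(\Codeword_0 \,\|\, \pi_0)$ is a codeword of an instance $A$ of $\BSCode$ for degree $\RSDegree_0$, $(\Codeword_1' \,\|\, \pi_1)$ is a codeword of an instance $B$ of $\BSCode$ for the shifted degree $\RSDegree_1 - \SetCardinality{\RSVRSVanishing}$, and the remaining block satisfies the pointwise identity $\Codeword_1(x) = \VanishingPoly{\RSVRSVanishing}(x) \cdot \Codeword_1'(x)$ for every $x \in \RSVRSSubdomain$. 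Equivalently, $\OtherCode_{\CodeIdx}$ is the image of a linear bijection $\phi \colon A \times B \to \OtherCode_{\CodeIdx}$ that duplicates $\Codeword_1'$ into the $\Codeword_1$-block after pointwise multiplication by the publicly known scalars $\VanishingPoly{\RSVRSVanishing}(x)$.

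Given a query subset $\IdxSet$, the detector partitions it across the five blocks as $\IdxSet = \IdxSet_0 \cup \IdxSet_1 \cup \IdxSet_{p0} \cup \IdxSet_1' \cup \IdxSet_{p1}$ and then uses the adjoint of $\phi$. A short computation shows that $\tilde z = (\tilde z_0, \tilde z_1, \tilde z_{p0}, \tilde z_1', \tilde z_{p1}) \in \Dual{\OtherCode_{\CodeIdx}}$ if and only if $(\tilde z_0, \tilde z_{p0}) \in \Dual{A}$ and $(\VanishingPoly{\RSVRSVanishing} \cdot \tilde z_1 + \tilde z_1', \tilde z_{p1}) \in \Dual{B}$, where $\cdot$ is pointwise multiplication on $\RSVRSSubdomain$; completeness of this decomposition is automatic because $\phi$ is a bijection onto $\OtherCode_{\CodeIdx}$. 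Each condition now reduces to constraint detection for $\BSCode$ (Theorem \ref{thm:bsrs-succinct-constraint-detection}): apply it to $\IdxSet_0 \cup \IdxSet_{p0}$ for the first, and to $\IdxSet_1 \cup \IdxSet_1' \cup \IdxSet_{p1}$ for the second. The only place that requires care is the final bookkeeping step: for each basis element $(w, \tilde z_{p1})$ returned by the second call, enumerate the $O(\SetCardinality{\IdxSet})$-dimensional space of pairs $(\tilde z_1, \tilde z_1')$ with $w = \VanishingPoly{\RSVRSVanishing} \cdot \tilde z_1 + \tilde z_1'$, $\Support{\tilde z_1} \subseteq \IdxSet_1$, and $\Support{\tilde z_1'} \subseteq \IdxSet_1'$---a small linear system solvable by Gaussian elimination in $\poly(\SetCardinality{\IdxSet})$ time (using that $\VanishingPoly{\RSVRSVanishing}(x)$ is computable in polynomial time and is zero exactly on $\RSVRSVanishing$). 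Combining the two bases yields a basis for $\Puncture{(\Dual{\OtherCode_{\CodeIdx}})}{\IdxSet}$ in time $\poly(\BitSize{\CodeIdx} + \SetCardinality{\IdxSet})$, establishing succinct constraint detection, and Theorem \ref{thm:pzk-iopp-for-codes} then delivers the corollary.
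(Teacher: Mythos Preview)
Your proposal is correct and follows essentially the same approach as the paper: invoke \thmref{thm:pzk-iopp-for-codes} on $\RSVRS$ with the BS proximity proofs of \thmref{thm:bs08-vsrs}, and verify succinct constraint detection by decomposing the concatenated code into two $\BSCode$ instances linked by the pointwise relation $\Codeword_1(x)=\VanishingPoly{\RSVRSVanishing}(x)\cdot\Codeword_1'(x)$. The paper's own proof says this in one sentence (``every coordinate of $\Codeword_{1}'$ is easy to compute from the same coordinate in $\Codeword_{1}$, and concatenating codewords preserves succinct constraint detection''); your version spells out the adjoint computation and the bookkeeping for splitting $w=\VanishingPoly{\RSVRSVanishing}\cdot\tilde z_1+\tilde z_1'$, which is exactly what lies behind that sentence.
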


\begin{proof}
Invoke \thmref{thm:pzk-iopp-for-codes} on the linear code family $\RSVRS$ with corresponding BS proximity proofs (\thmref{thm:bs08-vsrs}), which has succinct constraint detection as we now clarify. A codeword $(\Codeword_{0}, \Codeword_{1})$ has proximity proof $(\pi_{0},\Codeword_{1}',\pi_{1})$, and \thmref{thm:bsrs-succinct-constraint-detection} implies that $(\Codeword_{0},\pi_{0})$ and $(\Codeword_{1}',\pi_{1})$ have succinct constraint detection. But every coordinate of $\Codeword_{1}'$ is easy to compute from the same coordinate in $\Codeword_{1}$, and concatenating codewords preserves succinct constraint detection.
\end{proof}

\doclearpage
\section{Perfect zero knowledge for nondeterministic time}
\label{sec:pzk-ntime}

We prove that $\NEXP$ has $2$-round IOPs that are perfect zero knowledge against unbounded queries. We do so by constructing a suitable IOP system for $\NTIME(\DeciderTime)$ against query bound $\QueryBound$, for each time function $\DeciderTime$ and query bound function $\QueryBound$, where the verifier runs in time polylogarithmic in both $\DeciderTime$ and $\QueryBound$. Crucially, the simulator runs in time $\poly(\Malicious{q} + \log \DeciderTime+ \log \QueryBound)$, where $\Malicious{q}$ is the actual number of queries made by the malicious verifier; this exponential improvement over \cite{BenSassonCGV16}, where the simulator runs in time $\poly(\DeciderTime+\QueryBound)$, enables us to ``go up to $\NEXP$''.

\begin{theorem}[formal statement of \thmref{thm:intro-ntime}]
\label{thm:ntime}
For every constant $d>0$,
time bound function $\DeciderTime \colon \Naturals \to \Naturals$ with
$\InstanceSize\leq \DeciderTime(\InstanceSize) \leq 2^{\InstanceSize^{d}}$, and
query bound function $\QueryBound \colon \Naturals\to \Naturals$ with
$\QueryBound(\InstanceSize)\leq  2^{\InstanceSize^{d}}$,
there exists an IOP system $\pair{\Prover}{\Verifier}$ that makes $\NTIME(\DeciderTime)$ a subset of the complexity class
\begin{equation*}
\PZKIOP
\left[
\begin{array}{llll}
\TextAnswerAlphabet
  & \Field(\InstanceSize)
  &=& \Field_{2}
  \\
\TextNumRounds
  & \NumRounds(\InstanceSize)
  &=& 2
  \\
\TextProofLength
  & \ProofLength(\InstanceSize)
  &=& \Ot{\DeciderTime(\InstanceSize)+\QueryBound(\InstanceSize)}
  \\
\TextQueryComplexity
  & \QueryComplexity(\InstanceSize)
  &=& \polylog(\DeciderTime(\InstanceSize)+\QueryBound(\InstanceSize))
  \\
\TextSoundnessError
  & \SoundnessError(\InstanceSize)
  &=& 1/2
  \\
\TextProverTime
  & \ProverTime(\InstanceSize)
  &=& \poly(\InstanceSize) \cdot \Ot{\DeciderTime(\InstanceSize)+\QueryBound(\InstanceSize)}
  \\
\TextVerifierTime
  & \VerifierTime(\InstanceSize)
  &=& \poly(\InstanceSize + \log (\DeciderTime(\InstanceSize)+\QueryBound(\InstanceSize)))
  \\
\TextQueryBound
  & \QueryBound(\InstanceSize)
  &&
\end{array}
\right]
\enspace.
\end{equation*}
Moreover, the verifier $\Verifier$ is public-coin and non-adaptive.
\end{theorem}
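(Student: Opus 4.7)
The plan is to follow the two-step strategy outlined in Section \ref{techniques:zk-nexp}, combining two building blocks: \cite{BenSassonCGV16}'s reduction from $\NEXP$ to a \emph{randomizable} linear-algebraic CSP, and our \corref{cor:rsvrs-pzk-iopp} giving perfect zero knowledge IOPs of Proximity for the code family $\RSVRS$. The overall template mirrors that of $\NP$-based constructions such as \cite{BS08,BenSassonCGV16}: reduce the $\NTIME(\DeciderTime)$ instance to a univariate LACSP whose witness is a codeword in an $\RSVRS$ code (of degree and block length $\Ot{\DeciderTime + \QueryBound}$), then have the prover send the witness as an oracle and convince the verifier that it is close to the code \emph{and} satisfies the LACSP's auxiliary local constraints. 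We need to sanitize both the proximity proof \emph{and} the witness.

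First, I would invoke a variant of the \cite{BS08} arithmetization of $\NTIME(\DeciderTime)$ to obtain an LACSP instance over $\RSVRSCode{\Field}{L}{H}{d_{0}}{d_{1}}$, whose valid witnesses $(\Codeword_{0},\Codeword_{1})$ encode an accepting computation of the nondeterministic decider; here $\Codeword_{1}$ is determined by $\Codeword_{0}$ and the instance via a local map of constant locality (so any query to $\Codeword_{1}$ reduces to $O(1)$ queries to $\Codeword_{0}$). The verifier's task reduces to checking
\begin{inparaenum}[(i)]
\item proximity of $(\Codeword_{0},\Codeword_{1})$ to $\RSVRSCode{\Field}{L}{H}{d_{0}}{d_{1}}$, and
\item a small number of local consistency constraints enforcing the LACSP.
\end{inparaenum}
For (i) I replace the BS proximity proof by the $2$-round PZK IOP of Proximity for $\RSVRS$ from \corref{cor:rsvrs-pzk-iopp}, and for (ii) I use self-correction on the (low-degree) witness. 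This already yields a $2$-round PZK IOP \emph{of Proximity} for $\NEXP$: there is a simulator that reproduces the verifier's view using only as many queries to $(\Codeword_{0},\Codeword_{1})$ as the verifier makes in total (to witness or prover messages).

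Second, to upgrade from a PZK IOPP to a PZK IOP, I would apply \cite{BenSassonCGV16}'s transformation to a \emph{randomizable} LACSP: for any query bound $\QueryBound$, the reduction produces a witness $\Witness'$ whose local values at up to $\QueryBound$ points are jointly distributed as uniformly random elements of $\Field$ (subject only to the LACSP's local constraints, which are easy to resample). The honest prover now samples $\Witness'$ at random from the randomization distribution and uses it as input to the IOPP above. Composing the simulator from Step 1 with the $\QueryBound$-wise independence of $\Witness'$ yields a straightline simulator for the whole system: it answers each verifier query by invoking the IOPP simulator, but whenever the IOPP simulator would itself query $\Witness'$ (at some point not previously queried), the outer simulator simply returns a fresh random field element, using the constant-locality map to maintain joint consistency between simulated values of $\Codeword_{0}$ and $\Codeword_{1}$. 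Since \corref{cor:rsvrs-pzk-iopp} guarantees verifier complexity polylogarithmic in the block length, I can afford to instantiate $\QueryBound$ to any $2^{\InstanceSize^{d}}$ while still keeping verifier time $\poly(\InstanceSize + \log(\DeciderTime + \QueryBound))$; this lets the single protocol work against \emph{any} polynomial-query verifier, as needed to capture the full power of $\NEXP$.

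The main obstacle is the coupling between $\Codeword_{0}$ and $\Codeword_{1}$ via the local map. Because $\Codeword_{1}$ is a deterministic local function of $\Codeword_{0}$, a naive simulator that samples the two components independently would produce the wrong distribution; conversely, answering a query to $\Codeword_{1}$ requires committing to a constant number of $\Codeword_{0}$ values that must then be consistent with any subsequent $\Codeword_{0}$-queries, and ditto for the partial proximity-proof values produced by the IOPP simulator of \corref{cor:rsvrs-pzk-iopp}. The resolution is to treat the triple (witness, proximity oracle, LACSP-local-constraint-values) as a single linear code and invoke the succinct-constraint-detection machinery uniformly, so that every verifier query $-$ whether to the witness, the masking codeword, or the proximity proof $-$ is answered by a single application of the algorithm from \lemref{lem:efficient-codeword-simulator}, with joint distribution conditioned on all prior answers. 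Tracking the query-inflation factor introduced by the local map (a constant blow-up) and verifying that it still respects the overall $\QueryBound$-bound on induced queries is the principal bookkeeping task; once that is handled the stated parameters follow from those in \corref{cor:rsvrs-pzk-iopp} and \thmref{thm:bs08-vsrs}.
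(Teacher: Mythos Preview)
Your high-level plan is correct and matches the paper's proof: reduce $\NTIME(\DeciderTime)$ to a randomizable LACSP over $\RSVRS$ via \cite{BenSassonCGV16}, invoke \corref{cor:rsvrs-pzk-iopp} for the proximity part, and layer an outer simulator that feeds the IOPP simulator witness values drawn from a $\Randomizability$-wise independent distribution. In fact the paper packages this modularly: \thmref{thm:ioppzk-construction} turns any PZK IOPP for $\MakeZero{\Code}\times\MakeOne{\Code}$ into a PZK IOPP for the associated LACSP (proximity test plus one random local check $\LocalMap(\MakeZero{\Codeword})[j]=\MakeOne{\Codeword}[j]$, with no self-correction needed); \thmref{thm:iopzk-construction} upgrades that to a PZK IOP for randomizable LACSPs exactly as you describe in your second paragraph; and \thmref{thm:ntime-to-lacsp} supplies the reduction with $\MakeZero{\Code}\times\MakeOne{\Code}\subseteq\RSVRS$ and locality $\polylog\DeciderTime$ (not constant).

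Where you over-engineer is the ``obstacle'' paragraph. You do \emph{not} need to treat the triple (witness, masking codeword, proximity proof) as one linear code and run constraint detection on the whole thing. In general you cannot: the local map $\LocalMap$ is not assumed linear, so $\{(\MakeZero{\Codeword},\LocalMap(\MakeZero{\Codeword}))\}$ need not be a linear space, and there is no succinct-constraint-detection statement for such an object. The paper's resolution is exactly the layered one you already described in your second paragraph: the inner IOPP simulator (from \thmref{thm:pzk-iopp-for-codes}/\corref{cor:rsvrs-pzk-iopp}) handles all consistency between the masking codeword and the proximity proof via constraint detection on the \emph{linear} code $\{\Codeword\|\Prover(\Codeword)\}_{\Codeword\in\RSVRS}$, and it treats the witness $(\MakeZero{\Codeword'},\MakeOne{\Codeword'})$ purely as an external oracle. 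The outer simulator answers those oracle calls by sampling fresh random values for $\MakeZero{\Codeword'}$ and computing $\MakeOne{\Codeword'}[j]=\LocalMap(\MakeZero{\Codeword'})[j]$ via the local map, which is why the query bound becomes $\Randomizability/\MapLocality$ rather than $\Randomizability$. There is no cross-layer consistency problem to solve beyond this bookkeeping.
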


Our proof is similar to that of \cite{BenSassonCGV16}, and the only major difference is that  \cite{BenSassonCGV16}'s simulator \emph{explicitly} samples random codewords, while we rely on succinct constraint detection to do so \emph{implicitly}. Indeed, the reduction from $\NTIME(\DeciderTime)$ generates codewords of size $\Ot{\DeciderTime}$, which means that sampling random codewords of that size is infeasible when $\DeciderTime$ is super-polynomial. We structure our argument in three steps, highlighting the essential components that implicitly underlie \cite{BenSassonCGV16}'s `monolithic' argument; we view this as a conceptual contribution of our work.

\parhead{Step 1 (\secref{sec:lacsp-definition})}
We construct perfect zero knowledge IOPs of Proximity for \emph{linear algebraic constraint satisfaction problems} (LACSPs) \cite{BenSassonCGV16}, a family of constraint satisfaction problems whose domain and range are linear codes. An instance $\Instance$ of LACSP is specified by a function $\LocalMap$ and a pair of codes $\MakeZero{\Code}$, $\MakeOne{\Code}$; a witness $\Witness$ for $\Instance$ is a pair $(\MakeZero{\Codeword}, \MakeOne{\Codeword})$ such that $\MakeZero{\Codeword} \in \MakeZero{\Code}$, $\MakeOne{\Codeword} \in \MakeOne{\Code}$, and $\LocalMap(\MakeZero{\Codeword}) = \MakeOne{\Codeword}$. A natural approach to construct a perfect zero knowledge IOPP for this relation is the following: if we are given a perfect zero knowledge IOP of Proximity for the relation $\GetRelation{\MakeZero{\Code} \times \MakeOne{\Code}}$, then the verifier can test proximity of $\Witness=(\MakeZero{\Codeword}, \MakeOne{\Codeword})$ to $\MakeZero{\Code} \times \MakeOne{\Code}$ and then sample a random index $j$ and check that $\LocalMap(\MakeZero{\Codeword})[j] = \MakeOne{\Codeword}[j]$. In order for the verifier's strategy to make sense, we require $\LocalMap$ to
\begin{inparaenum}[(i)]
  \item satisfy a distance condition with respect to $\MakeZero{\Code}, \MakeOne{\Code}$, namely, that $\MakeOne{\Code} \cup \LocalMap(\MakeZero{\Code})$ has large relative distance;
  \item be `local', which means that computing $\LocalMap(\MakeZero{\Codeword})[j]$ requires examining only a few indices of $\MakeZero{\Codeword}$; and
  \item be `evasive', which means that  if $\MakeZero{\Malicious{\Codeword}}$ is close to some $\MakeZero{\Codeword} \in \MakeZero{\Code}$, then $\LocalMap(\MakeZero{\Malicious{\Codeword}})$ is close to $\LocalMap(\MakeZero{\Codeword})$.
\end{inparaenum}
All of this implies that if $(\MakeZero{\Malicious{\Codeword}}, \MakeOne{\Malicious{\Codeword}})$ is far from any valid witness but close to $\MakeZero{\Code} \times \MakeOne{\Code}$, we know that $\LocalMap(\MakeZero{\Malicious{\Codeword}})$ is far from $\MakeOne{\Malicious{\Codeword}}$, so that examining a random index $j$ gives good soundness.

\parhead{Step 2 (\secref{sec:rlacsp-definition})}
We build on the above result to derive perfect zero knowledge IOPs for a subfamily of LACSPs called \emph{randomizable LACSPs} (RLACSPs) \cite{BenSassonCGV16}. The key difference between this protocol and the IOP of Proximity described above is that in the ``proximity setting'', the verifier, and thus also the simulator, has oracle access to the witness, while in the ``non-proximity setting'' the witness is \emph{sent} to the verifier but the simulator must make do without it; in particular, merely sending the witness $(\MakeZero{\Codeword}, \MakeOne{\Codeword})$ is \emph{not} zero knowledge. We thus rely on the randomizability property of RLACSPs to generate witnesses from a $\Randomizability$-wise independent distribution, where $\Randomizability$ is larger than the query bound $\QueryBound$. In particular, while the simulator runs in time polynomial in the actual number of queries made by a verifier, it runs in time \emph{polylogarithmic} in $\Randomizability$, and thus we can set $\QueryBound$ to be super-polynomial in order to obtain unbounded-query zero knowledge against polynomial-time verifiers.

\parhead{Step 3 (\secref{sec:pzk-iopp-ntime})}
We derive \thmref{thm:ntime} (perfect zero knowledge IOPs for $\NTIME(\DeciderTime)$) by combining:
\begin{inparaenum}[(1)]
  \item the aforementioned result for RLACSPs;
  \item \cite{BenSassonCGV16}'s reduction from $\NTIME$ to RLACSPs;
  \item a perfect zero knowledge IOP of Proximity for a suitable choice of $\MakeZero{\Code} \times \MakeOne{\Code}$, which we derived in \secref{sec:rs-pzk-iopp}.
\end{inparaenum}
This last component is the one that makes use of succinct constraint detection, and relies on the technical innovations of our work.

\subsection{Perfect zero knowledge IOPs of proximity for LACSPs}
\label{sec:lacsp-definition}

A constraint satisfaction problem asks whether, for a given ``local'' function $\LocalMap$, there exists an input $\Assignment$ such that $\LocalMap(\Assignment)$ is an accepting output. For example, in the case of 3SAT with $\LocalMapDomainSize$ variables and $\LocalMapRangeSize$ clauses, the function $\LocalMap$ maps $\Bits^{\LocalMapDomainSize}$ to $\Bits^{\LocalMapRangeSize}$, and $\LocalMap(\Assignment)$ indicates which clauses are satisfied by $\Assignment \in \Bits^{\LocalMapDomainSize}$; hence $\Assignment$ yields an accepting output if (and only if) $\LocalMap(\Assignment)=1^{\LocalMapRangeSize}$. Below we introduce a family of constraint satisfaction problems whose domain and range are linear-algebraic objects, namely, linear codes.

We begin by providing the notion of locality that we use for $\LocalMap$, along with a measure of $\LocalMap$'s ``pseudorandomness''.

\begin{definition}
\label{def:local-map-properties}
Let $\LocalMap \colon \LocalMapAlphabet^{\LocalMapDomainSize} \to \LocalMapAlphabet^{\LocalMapRangeSize}$ be a function. We say that $\LocalMap$ is \defemph{$\MapLocality$-local} if for every $j\in [\LocalMapRangeSize]$ there exists $\IndexSet_{j} \subseteq [\LocalMapDomainSize]$ with $\SetCardinality{\IndexSet_{j}} = \MapLocality$ such that $\LocalMap(\Assignment)[j]$ (the $j$-th coordinate of $\LocalMap(\Assignment)$) depends only on $\Restrict{\Assignment}{\IndexSet_{j}}$ (the restriction of $\Assignment$ to $\IndexSet_{j}$). Moreover, we say that $\LocalMap$ is \defemph{$s$-evasive} if for every $\IndexSet\subseteq [\LocalMapDomainSize]$ the probability that $\IndexSet_{j}$ intersects $\IndexSet$ for a uniform $j \in [\LocalMapRangeSize]$ is at most $s\cdot \frac{\SetCardinality{\IndexSet}}{\LocalMapDomainSize}$.
\end{definition}

For example, if $\LocalMap$ is a 3SAT formula then $\LocalMap$ is 3-local because $\IndexSet_{j}$ equals the variables appearing in clause $j$; moreover, $\LocalMap$ is $s$-evasive if and only if every variable $x_{i}$ appears in at most a fraction $s/\LocalMapDomainSize$ of the clauses (i.e., the evasiveness property corresponds to the fraction of clauses in which a variable appears). Also, a natural case where $\LocalMap$ is $\MapLocality$-evasive is when the elements of $I_{j}$ are individually uniform in $[\LocalMapDomainSize]$ when $j$ is uniform in $[\LocalMapRangeSize]$.

\begin{definition}
\label{def:local-efficiency}
Let $\LocalMap \colon \LocalMapAlphabet^{\LocalMapDomainSize} \to \LocalMapAlphabet^{\LocalMapRangeSize}$ be a function. We say that $\LocalMap$ is \defemph{$\MapEfficiency$-efficient} if there is a $\MapEfficiency$-time algorithm that, given $j$ and $\Restrict{\Assignment}{\IndexSet_{j}}$, computes the set $\IndexSet_{j}$ and value $\LocalMap(\Assignment)[j]$.
\end{definition}

The above definition targets succinctly-described languages. For example, a succinct 3SAT instance is given by a circuit of size $S$ that, on input $j$, outputs a description of the $j$-th clause; the definition is then satisfied with $\MapEfficiency = O(S)$.

\begin{definition}[LACSP]
\label{def:linear-algebraic-csp}
Let $\MakeZero{\Code}(\InstanceSize), \MakeOne{\Code}(\InstanceSize)$ be (descriptions of) linear codes over $\Field(\InstanceSize)$ with block length $\CodeBlockLength(\InstanceSize)$ and relative distance $\CodeRelativeDistance(\InstanceSize)$. The promise relation of \defemph{linear algebraic CSPs} (LACSPs)
\begin{equation*}
\LACSPRelationP
\end{equation*}
considers instance-witness pairs $(\Instance,\Witness)$ of the following form.
\begin{itemize}

  \item An instance $\Instance$ is a tuple $(1^{\InstanceSize},\LocalMap)$ where:

  \begin{itemize}[nolistsep]

    \item $\LocalMap \colon \Field(\InstanceSize)^{\CodeBlockLength(\InstanceSize)}\to \Field(\InstanceSize)^{\CodeBlockLength(\InstanceSize)}$ is $\MapLocality(\InstanceSize)$-local, $\MapLocality(\InstanceSize)$-evasive, and $\MapEfficiency(\InstanceSize)$-efficient;
     \item $\MakeOne{\Code}(\InstanceSize) \cup \LocalMap(\MakeZero{\Code}(\InstanceSize))$ has relative distance at least $\CodeRelativeDistance(\InstanceSize)$ (though may not be a linear space).

  \end{itemize}

  \item A witness $\Witness$ is a tuple $(\MakeZero{\Assignment},\MakeOne{\Assignment})$ where $\MakeZero{\Assignment},\MakeOne{\Assignment} \in \Field(\InstanceSize)^{\CodeBlockLength(\InstanceSize)}$.
  
\end{itemize}
The yes-relation $\LACSPRelation^{\yes}$ consists of all pairs $(\Instance, \Witness)$ as above where the instance $\Instance$ and witness $\Witness$ jointly satisfy the following:
  $\MakeZero{\Assignment} \in \MakeZero{\Code}(\InstanceSize)$,
  $\MakeOne{\Assignment} \in \MakeOne{\Code}(\InstanceSize)$, and
  $\LocalMap(\MakeZero{\Assignment}) = \MakeOne{\Assignment}$.
  (In particular, a witness $\Witness=(\MakeZero{\Codeword}, \LocalMap(\MakeZero{\Codeword}))$ with $\MakeZero{\Codeword} \in \MakeZero{\Code}(\InstanceSize)$ satisfies $\Instance$ if and only if $\LocalMap(\MakeZero{\Codeword}) \in \MakeOne{\Code}(\InstanceSize)$.) The no-language consists of all instances $\Instance$ as above where $\Instance \notin \GetLanguage{\LACSPRelation^{\yes}}$.
\end{definition}

\begin{remark}
In \cite{BenSassonCGV16} the codes $\MakeZero{\Code}$ and $\MakeOne{\Code}$ are allowed to have distinct block lengths while, for simplicity, we assume that they have the same block length; this restriction does not change any of their, or our, results.
\end{remark}

We are now ready to give perfect zero knowledge IOPs of proximity for LACSPs.

\begin{theorem}
\label{thm:ioppzk-construction}
Suppose that there exists an IOPP system $(\CodesProver,\CodesVerifier)$ that puts $\LACodeRelation$ in the complexity class
\begin{equation*}
\PZKIOPP[\Field,\IOPPize{\NumRounds},\IOPPize{\ProofLength},\IOPPize{\QueryComplexity},\IOPPize{\ProximityParameter},\IOPPize{\SoundnessError},\IOPPize{\ProverTime},\IOPPize{\VerifierTime},\AnyBound]
\enspace.
\end{equation*}
Then there exists an IOPP system $(\LAProver,\LAVerifier)$ that puts $\LACSPRelationQ$ in the complexity class
\begin{equation*}
{\small
\PZKIOPP
\left[
\begin{array}{llll}
\TextAnswerAlphabet
  &   \Field(\InstanceSize)
  &~& \\
\TextNumRounds
  &   \NumRounds(\InstanceSize)
  &=& \IOPPize{\NumRounds}(\InstanceSize) \\
\TextProofLength
  &    \ProofLength(\InstanceSize)
  &=&  \IOPPize{\ProofLength}(\InstanceSize) \\
\TextQueryComplexity
  &   \QueryComplexity(\InstanceSize)
  &=& \IOPPize{\QueryComplexity}(\InstanceSize) + \LAize{\MapLocality}(\InstanceSize) + 1 \\
\TextSoundnessError
  &   \SoundnessError(\InstanceSize)
  &=& \max \big\{\IOPPize{\SoundnessError}(\InstanceSize)\,,\, 1 - \LAize{\CodeRelativeDistance}(\InstanceSize) + 2\IOPPize{\ProximityParameter}(\InstanceSize) \cdot (\LAize{\MapLocality}(\InstanceSize) + 1)\big\} \\
\TextProximityParameter
  &   \ProximityParameter(\InstanceSize)
  &=& \IOPPize{\ProximityParameter}(\InstanceSize) \\
\TextProverTime
  &   \ProverTime(\InstanceSize)
  &=& \IOPPize{\ProverTime}(\InstanceSize) \\
\TextVerifierTime
  &   \VerifierTime(\InstanceSize)
  &=& \IOPPize{\VerifierTime}(\InstanceSize) + \RLAize{\MapEfficiency}(\InstanceSize)  \\
\TextQueryBound
  &   \QueryBound(\InstanceSize)
  &=& \AnyBound
\end{array}
\right]
\enspace.
}
\end{equation*}
Moreover, if $(\CodesProver,\CodesVerifier)$ is non-adaptive (respectively, public-coin) then so is $(\LAProver,\LAVerifier)$.
\end{theorem}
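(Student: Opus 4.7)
The plan is to construct $(\LAProver,\LAVerifier)$ as the natural ``proximity-test plus local spot-check'' composition: $\LAProver$ is granted as input $(\Instance,\Witness) = ((1^{\InstanceSize},\LocalMap),(\MakeZero{\Assignment},\MakeOne{\Assignment}))$ and $\LAVerifier$ is granted $\Instance$ together with oracle access to $\Witness$. The two parties run $(\CodesProver,\CodesVerifier)$ on input $(\MakeZero{\Code}(\InstanceSize) \times \MakeOne{\Code}(\InstanceSize))$, with $\LAVerifier$ giving $\CodesVerifier$ oracle access to $\Witness = (\MakeZero{\Assignment},\MakeOne{\Assignment})$; after this subprotocol, $\LAVerifier$ draws a uniform $j \in [\CodeBlockLength(\InstanceSize)]$, reads $\MakeZero{\Assignment}|_{\IndexSet_j}$ and $\MakeOne{\Assignment}[j]$, and accepts iff both $\CodesVerifier$ accepted and $\LocalMap(\MakeZero{\Assignment})[j] = \MakeOne{\Assignment}[j]$. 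Round complexity, proof length, and the $+\LAize{\MapLocality}+1$ additive terms in the query and verifier-time complexities are then immediate.

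The main technical step is the soundness analysis. Fix $\Instance \in \LACSPLanguage^{\no}$ and any malicious $\Witness = (\MakeZero{\Malicious{\Assignment}},\MakeOne{\Malicious{\Assignment}})$ that is $\IOPPize{\ProximityParameter}$-far from $\LACSPRelation^{\yes}$. If $\Witness$ is also $\IOPPize{\ProximityParameter}$-far from $\MakeZero{\Code} \times \MakeOne{\Code}$, soundness of $(\CodesProver,\CodesVerifier)$ bounds acceptance by $\IOPPize{\SoundnessError}$. Otherwise there is a nearest codeword $(\MakeZero{\Assignment},\MakeOne{\Assignment}) \in \MakeZero{\Code} \times \MakeOne{\Code}$ within (normalized) distance $\IOPPize{\ProximityParameter}$, which cannot satisfy $\LocalMap(\MakeZero{\Assignment}) = \MakeOne{\Assignment}$ (else $\Witness$ would be too close to $\LACSPRelation^{\yes}$). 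The assumed distance $\LAize{\CodeRelativeDistance}$ of $\MakeOne{\Code}(\InstanceSize) \cup \LocalMap(\MakeZero{\Code}(\InstanceSize))$ then forces $\LocalMap(\MakeZero{\Assignment})$ and $\MakeOne{\Assignment}$ to disagree on a $\LAize{\CodeRelativeDistance}$-fraction of positions, so the honest check on the noise-free codewords fails with probability at least $\LAize{\CodeRelativeDistance}$ over $j$. Because the two blocks together are $\IOPPize{\ProximityParameter}$-close, each block is individually $2\IOPPize{\ProximityParameter}$-close (in the worst case), so $\Pr_j[\MakeOne{\Malicious{\Assignment}}[j]\neq\MakeOne{\Assignment}[j]] \leq 2\IOPPize{\ProximityParameter}$ and, by $\LAize{\MapLocality}$-evasiveness, $\Pr_j[\IndexSet_j \text{ meets } \{i:\MakeZero{\Malicious{\Assignment}}(i)\neq\MakeZero{\Assignment}(i)\}] \leq 2\IOPPize{\ProximityParameter}\cdot\LAize{\MapLocality}$. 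A union bound yields acceptance at most $1-\LAize{\CodeRelativeDistance} + 2\IOPPize{\ProximityParameter}(\LAize{\MapLocality}+1)$, giving the claimed error.

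For perfect zero knowledge I simply instantiate $\LASimulator^{\Malicious{\LAVerifier},\Witness}$ as $\CodesSimulator^{\Malicious{\LAVerifier},\Witness}$: the malicious verifier in our protocol is exactly a malicious IOPP verifier for $\LACodeRelation$ that, in addition to whatever it does during the subprotocol, may at any time query $\Witness$ or the prover's oracles. Since $\CodesSimulator$ already promises an identically distributed transcript-plus-answers tuple while matching the \emph{total} number of witness queries made by $\Malicious{\LAVerifier}$ to the number it makes itself, the final $\LAize{\MapLocality}+1$ spot-check queries are absorbed for free into the count. The running-time and query-bound guarantees carry over, and non-adaptivity and public-coin preservation are clear since we add only non-adaptive queries and the coin $j$ is public.

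The step I expect to require the most care is the proximity accounting in soundness, because it is tempting to conflate the ``product'' proximity of $\Witness$ to $\MakeZero{\Code}\times\MakeOne{\Code}$ with the individual per-block proximities; getting the factor of $2$ right (and confirming that it is consistent with the proximity-preservation convention implicitly used by $(\CodesProver,\CodesVerifier)$) is the delicate bookkeeping. Everything else reduces to invoking the hypotheses on $\LocalMap$ (locality, evasiveness, efficiency) and the guarantees of the assumed PZK IOPP.
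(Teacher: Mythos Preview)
Your proposal is correct and matches the paper's proof essentially line for line: the same construction (run the inner PZK IOPP, then do a single spot-check $\LocalMap(\MakeZero{\Assignment})[j] = \MakeOne{\Assignment}[j]$), the same two-case soundness analysis with the factor-of-$2$ blowup from product proximity to per-block proximity and the evasiveness bound, and the same PZK argument that $\LASimulator = \CodesSimulator$ because $\LAProver$ behaves identically to $\CodesProver$. One minor wording issue: in the soundness clause you should fix $\Instance \in \GetLanguage{\LACSPRelation^{\yes}} \cup \LACSPLanguage^{\no}$ (not just $\LACSPLanguage^{\no}$), since IOPP soundness for a promise relation must also cover yes-instances with far witnesses; your argument already handles this case without change.
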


\begin{proof}
We construct the IOPP system $(\LAProver,\LAVerifier)$ for $\Relation$, where the prover receives $(\Instance,\Witness) = \big((1^{\InstanceSize},\LocalMap),(\MakeZero{\Assignment},\MakeOne{\Assignment})\big)$ as input while the verifier receives $\Instance$ as input and $\Witness$ as an oracle, as follows:
\begin{enumerate}[nolistsep]

  \item $\LAProver$ and $\LAVerifier$ invoke the IOPP system $(\CodesProver,\CodesVerifier)$ to prove that $(\MakeZero{\Codeword}, \MakeOne{\Codeword}) \in \MakeZero{\Code} \times \MakeOne{\Code}$;

  \item $\LAVerifier$ chooses a random $j \in [\LAize{\CodeBlockLength}]$ and checks that $\LocalMap(\MakeZero{\Codeword})[j] = \MakeOne{\Codeword}[j]$;

  \item $\LAVerifier$ rejects if and only if $\CodesVerifier$ rejects or the above check fails.

\end{enumerate}

\parhead{Completeness}
If $(\Instance,\Witness) \in \LACSPRelation^{\yes}$, then
\begin{inparaenum}[(i)]
  \item $\MakeZero{\Codeword} \in \MakeZero{\Code}, \MakeOne{\Codeword} \in \MakeOne{\Code}$, so $\CodesVerifier$ always accepts, and
  \item $\LocalMap(\MakeZero{\Codeword}) = \MakeOne{\Codeword}$ so the consistency check succeeds for every $j \in [\LAize{\CodeBlockLength}]$.
\end{inparaenum}
We deduce that $\LAVerifier$ always accepts. 

\parhead{Soundness}
Suppose that $\Instance \in \GetLanguage{\LACSPRelation^{\yes}} \cup \LACSPLanguage^{\no}$ and $\Malicious{\Witness}$ are such that $\DistanceMeasure(\Malicious{\Witness},\Witnesses{\LACSPRelation^{\yes}}{\Instance}) \geq \IOPPize{\ProximityParameter}$. Writing $\Malicious{\Witness} = (\MakeZero{\Malicious{\Codeword}},\MakeOne{\Malicious{\Codeword}})$, we argue as follows.
\begin{itemize}

\item \emph{Case 1: $\DistanceMeasure(\Malicious{\Witness},\MakeZero{\Code} \times \MakeOne{\Code}) \geq \IOPPize{\ProximityParameter}$.}
In this case $\CodesVerifier$ rejects with probability at least $1 - \IOPPize{\SoundnessError}$.

\item \emph{Case 2: $\DistanceMeasure(\Malicious{\Witness},\MakeZero{\Code} \times \MakeOne{\Code}) < \IOPPize{\ProximityParameter}$.}
There exist codewords $\MakeZero{\Codeword} \in \MakeZero{\Code}$ and $\MakeOne{\Codeword} \in \MakeOne{\Code}$ such that $(\MakeZero{\Codeword}, \MakeOne{\Codeword})$ is $\ProximityParameter$-close to $(\MakeZero{\Malicious{\Codeword}},\MakeOne{\Malicious{\Codeword}})$ for $\ProximityParameter < \IOPPize{\ProximityParameter}$. By assumption, $\DistanceMeasure(\Malicious{\Witness},\Witnesses{\LACSPRelation^{\yes}}{\Instance}) \geq \IOPPize{\ProximityParameter}$, so in particular $(\MakeZero{\Codeword}, \MakeOne{\Codeword})$ cannot be in $\Witnesses{\LACSPRelation^{\yes}}{\Instance}$, and $\LocalMap(\MakeZero{\Codeword}) \neq \MakeOne{\Codeword}$. Since $\MakeOne{\Code} \cup \LocalMap(\MakeZero{\Code})$ has relative distance at least $\LAize{\CodeRelativeDistance}$, $\DistanceMeasure(\LocalMap(\MakeZero{\Codeword}), \MakeOne{\Codeword}) \geq \LAize{\CodeRelativeDistance}$.
Observe that since $\MakeZero{\Code}$ and $\MakeOne{\Code}$ have the same block length, $\DistanceMeasure(\MakeZero{\Malicious{\Codeword}}, \MakeZero{\Codeword}) \leq 2 \IOPPize{\ProximityParameter}$ and $\DistanceMeasure(\MakeOne{\Malicious{\Codeword}}, \MakeOne{\Codeword}) \leq 2 \IOPPize{\ProximityParameter}$. Thus since $\LocalMap$ is $\LAize{\MapLocality}$-evasive, the probability that the set of coordinates $I \DefineEqual \{ i \in [\LAize{\CodeBlockLength}] : \MakeZero{\Codeword}[i] \neq \MakeZero{\Malicious{\Codeword}}[i]\}$ intersects with $\IndexSet_{j}$ for random $j \in [\LAize{\CodeBlockLength}]$ is at most $2 \IOPPize{\ProximityParameter} \LAize{\MapLocality}$, so $\DistanceMeasure(\LocalMap(\MakeZero{\Codeword}),\LocalMap(\MakeZero{\Malicious{\Codeword}})) \leq 2 \IOPPize{\ProximityParameter} \LAize{\MapLocality}$.
Using the triangle inequality, we deduce that
\begin{equation*}
\DistanceMeasure(\LocalMap(\MakeZero{\Malicious{\Codeword}}),\MakeOne{\Malicious{\Codeword}}) \geq \LAize{\CodeRelativeDistance} - 2\IOPPize{\ProximityParameter} (\LAize{\MapLocality} + 1),
\end{equation*}
which means the consistency check rejects with probability at least $\LAize{\CodeRelativeDistance} - 2\IOPPize{\ProximityParameter} (\LAize{\MapLocality} + 1)$.
\end{itemize}
It follows that $\LAVerifier$ accepts with probability at most $\max \{\IOPPize{\SoundnessError}, 1 - \LAize{\CodeRelativeDistance} + 2\IOPPize{\ProximityParameter} (\LAize{\MapLocality} + 1)\}$.

\parhead{Perfect zero knowledge}
We can choose the simulator $\LASimulator$ for $(\LAProver,\LAVerifier)$ to equal any simulator $\CodesSimulator$ that fulfills the perfect zero knowledge guarantee of $(\CodesProver,\CodesVerifier)$. Indeed, the behavior of $\LAProver$ is the same as $\CodesProver$, and so the view of any malicious verifier $\Malicious{\IOPPVerifier}$ when interacting with $\LAProver$ is identical to its view when interacting with $\CodesProver$.
\end{proof}

\subsection{Perfect zero knowledge IOPs for RLACSPs}
\label{sec:rlacsp-definition}

The above discussion achieves perfect zero knowledge for LACSPs, ``up to queries to the witness''. We now explain how to simulate these queries as well, without any knowledge of the witness, for a special class of LACSPs called \emph{randomizable LACSPs}. For these, the prover can randomize a given witness $(\MakeZero{\Codeword}, \LocalMap(\MakeZero{\Codeword}))$ by sampling a random $\IOPProverNonce$ in a $\Randomizability$-wise independent subcode $\Subcode$ of $\MakeZero{\Code}$, and use the new `shifted' witness $(\MakeZero{\Codeword} + \IOPProverNonce, \LocalMap(\MakeZero{\Codeword} + \IOPProverNonce))$ instead of the original one. We now define the notion of randomizable LACSPs, and then show how to construct perfect zero knowledge IOPs for these, against bounded-query verifiers and where the the query bound depends on $\Randomizability$.

\begin{definition}[randomizability]
\label{def:randomizable-instances}
An instance $\Instance=(1^{\InstanceSize},\LocalMap)$ is \defemph{$\Randomizability(\InstanceSize)$-randomizable in time $\RandomizeTime(\InstanceSize)$} (with respect to code families $\MakeZero{\Code}(\InstanceSize),\MakeOne{\Code}(\InstanceSize)$) if:
\begin{inparaenum}[(i)]
  \item there exists a $\Randomizability(\InstanceSize)$-wise independent subcode $\Subcode \subseteq \MakeZero{\Code}(\InstanceSize)$ such that if $(\MakeZero{\Codeword}, \LocalMap(\MakeZero{\Codeword}))$ satisfies $\Instance$, then, for every $\MakeZero{\Codeword'}$ in $\Subcode + \MakeZero{\Codeword} \DefineEqual \ConditionalSet{\SubcodeCodeword + \MakeZero{\Codeword}}{\SubcodeCodeword \in \Subcode}$, the witness $(\MakeZero{\Codeword'}, \LocalMap(\MakeZero{\Codeword'}))$ also satisfies $\Instance$; and
  \item one can sample, in time $\RandomizeTime(\InstanceSize)$, three uniformly random elements in $\Subcode,\MakeZero{\Code}(\InstanceSize),\MakeOne{\Code}(\InstanceSize)$ respectively.
\end{inparaenum}
\end{definition}

\begin{definition}[RLACSP]
\label{def:randomizable-lacsp}
The promise relation of \defemph{randomizable linear algebraic CSPs} (RLACSPs) is
\begin{equation*}
\RLACSPRelationP
\end{equation*}
where $\RLACSPRelation^{\yes}$ is obtained by restricting $\LACSPRelation$ to instances that are $\Randomizability$-randomizable in time $\RandomizeTime$, and $\RLACSPLanguage^{\no} \DefineEqual \LACSPLanguage^{\no}$.
\end{definition}

\begin{theorem}
\label{thm:iopzk-construction}
Suppose that there exists an IOPP system $(\CodesProver,\CodesVerifier)$ that puts $\LACodeRelation$ in the complexity class
\begin{equation*}
\PZKIOPP[\Field,\IOPPize{\NumRounds},\IOPPize{\ProofLength},\IOPPize{\QueryComplexity},\IOPPize{\ProximityParameter},\IOPPize{\SoundnessError},\IOPPize{\ProverTime},\IOPPize{\VerifierTime},\AnyBound]
\enspace.
\end{equation*}
Then there exists an IOP system $(\RLAProver,\RLAVerifier)$ that puts $\RLACSPRelationQ$ (with $\RLAize{\MapEfficiency}$ polynomially bounded) in the complexity class
\begin{equation*}
{\small
\PZKIOP
\left[
\begin{array}{llll}
\TextAnswerAlphabet
  &   \Field(\InstanceSize)
  &~& \\
\TextNumRounds
  &   \NumRounds(\InstanceSize)
  &=& \IOPPize{\NumRounds}(\InstanceSize) \\
\TextProofLength
  &   \ProofLength(\InstanceSize)
  &=& \IOPPize{\ProofLength}(\InstanceSize) +  \RLAize{\CodeBlockLength}(\InstanceSize) \\
\TextQueryComplexity
  &   \QueryComplexity(\InstanceSize)
  &=& \IOPPize{\QueryComplexity}(\InstanceSize) + \RLAize{\MapLocality}(\InstanceSize) + 1 \\
\TextSoundnessError
  &   \SoundnessError(\InstanceSize)
  &=& \max \big\{\IOPPize{\SoundnessError}(\InstanceSize)\,,\,1-\RLAize{\CodeRelativeDistance}(\InstanceSize)+2\cdot \IOPPize{\ProximityParameter}(\InstanceSize) \cdot(\RLAize{\MapLocality}(\InstanceSize)+1)\big\} \\
\TextProverTime
  &   \ProverTime(\InstanceSize)
  &=& \IOPPize{\ProverTime}(\InstanceSize) +\RLAize{\MapEfficiency}(\InstanceSize) \cdot \RLAize{\CodeBlockLength}(\InstanceSize) + \RLAize{\RandomizeTime}(\InstanceSize) \\
\TextVerifierTime
  &   \VerifierTime(\InstanceSize)
  &=& \IOPPize{\VerifierTime}(\InstanceSize) + \RLAize{\MapEfficiency}(\InstanceSize)  \\
\TextQueryBound
  &   \QueryBound (\InstanceSize)
  &=& \RLAize{\Randomizability}(\InstanceSize) / \RLAize{\MapLocality}(\InstanceSize)
\end{array}
\right]
\enspace.
}
\end{equation*}
Moreover, if $(\CodesProver,\CodesVerifier)$ is non-adaptive (respectively, public-coin) then so is $(\RLAProver,\RLAVerifier)$.
\end{theorem}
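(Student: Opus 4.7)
The plan is to adapt the IOPP of Theorem~\ref{thm:ioppzk-construction} to the non-proximity setting by using randomizability to ``mask'' the witness before sending it as an oracle. Given $(\Instance, \Witness)$ with $\Witness = (\MakeZero{\Codeword}, \MakeOne{\Codeword})$, the prover $\RLAProver$ samples a uniformly random $\IOPProverNonce \in \Subcode$, where $\Subcode \subseteq \MakeZero{\Code}(\InstanceSize)$ is the $\RLAize{\Randomizability}$-wise independent subcode guaranteed by randomizability, and sends $\MakeZero{\Codeword}' \DefineEqual \MakeZero{\Codeword} + \IOPProverNonce$ as an oracle. The parties then simulate the LACSP IOPP system of Theorem~\ref{thm:ioppzk-construction} on the implicit witness $(\MakeZero{\Codeword}', \MakeOne{\Codeword}')$, where $\MakeOne{\Codeword}' \DefineEqual \LocalMap(\MakeZero{\Codeword}')$: any query of the underlying IOPP verifier into $\MakeZero{\Codeword}'$ is forwarded directly, while each query into $\MakeOne{\Codeword}'[j]$ is resolved by reading the at most $\RLAize{\MapLocality}$ entries of $\MakeZero{\Codeword}'$ in $\IndexSet_{j}$ and applying $\LocalMap$.

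Completeness and soundness are essentially inherited. Randomizability ensures that the shifted witness $(\MakeZero{\Codeword}', \MakeOne{\Codeword}')$ remains a valid LACSP witness whenever the original one is, so the LACSP IOPP verifier always accepts; and for $\Instance \in \RLACSPLanguage^{\no} = \LACSPLanguage^{\no}$ no oracle $\MakeZero{\Malicious{\Codeword}}'$ chosen by a malicious prover can induce a satisfying LACSP witness, so the LACSP soundness bound applies verbatim. All efficiency parameters follow from direct bookkeeping: the extra $\RLAize{\CodeBlockLength}$ in proof length accounts for the $\MakeZero{\Codeword}'$ oracle; the extra $\RLAize{\MapEfficiency} \cdot \RLAize{\CodeBlockLength} + \RLAize{\RandomizeTime}$ prover time comes from sampling $\IOPProverNonce$ and computing $\MakeOne{\Codeword}'$ once so that the honest IOPP prover can be simulated; and the extra $\RLAize{\MapEfficiency}$ verifier time arises from a single $\LocalMap$ evaluation. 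Public coins and non-adaptivity are preserved since the only non-IOPP action is the prover's initial oracle.

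The core of the proof is perfect zero knowledge against any $\QueryBound = \RLAize{\Randomizability}/\RLAize{\MapLocality}$-query malicious verifier $\Malicious{\IOPVerifier}$. The simulator $\RLASimulator$ internally runs the simulator $\LASimulator$ guaranteed by Theorem~\ref{thm:ioppzk-construction} on $\Malicious{\IOPVerifier}$, forwarding every interactive message and supplying the witness oracle that $\LASimulator$ expects by \emph{lazy sampling}: each fresh query to $\MakeZero{\Codeword}'[i]$ is answered with an independent uniform element of $\Field(\InstanceSize)$ (cached for consistency across repeated queries), while each query to $\MakeOne{\Codeword}'[j]$ is answered by applying $\LocalMap$ to the at most $\RLAize{\MapLocality}$ lazy-sampled answers on $\IndexSet_{j}$. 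To see that the output is identically distributed to the real view, note that $\Malicious{\IOPVerifier}$ touches at most $\QueryBound \cdot \RLAize{\MapLocality} \leq \RLAize{\Randomizability}$ distinct coordinates of $\MakeZero{\Codeword}'$, and on any such set of coordinates the real codeword $\MakeZero{\Codeword}' = \MakeZero{\Codeword} + \IOPProverNonce$ is uniformly distributed by the $\RLAize{\Randomizability}$-wise independence of $\Subcode$; Theorem~\ref{thm:ioppzk-construction}'s simulator then guarantees that, conditional on those witness entries, the rest of the view is correctly distributed. The main conceptual obstacle is matching the ``locality budget'' to the randomizability bound: each logical query to $\MakeOne{\Codeword}'$ consumes $\RLAize{\MapLocality}$ real queries against $\MakeZero{\Codeword}'$, and the $\RLAize{\Randomizability}$-wise independence of $\Subcode$ must simultaneously cover all of them --- which is precisely why the theorem sets $\QueryBound = \RLAize{\Randomizability}/\RLAize{\MapLocality}$ rather than $\RLAize{\Randomizability}$. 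The $\poly(\InstanceSize + \QueryComplexity_{\Malicious{\IOPVerifier}})$ runtime of $\RLASimulator$ then follows from the efficiency of lazy sampling, the polynomial bound on $\RLAize{\MapEfficiency}$, and the corresponding runtime guarantee for $\LASimulator$.
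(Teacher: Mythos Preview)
Your proposal is essentially the paper's proof: sample $\IOPProverNonce \in \Subcode$, send the randomized witness, invoke the LACSP IOPP of Theorem~\ref{thm:ioppzk-construction}, and simulate by running $\LASimulator$ against a lazily-sampled witness, relying on $\RLAize{\Randomizability}$-wise independence of $\MakeZero{\Codeword}'$ to absorb the at most $\QueryBound \cdot \RLAize{\MapLocality} \leq \RLAize{\Randomizability}$ touched coordinates. The one difference is what the prover transmits: the paper sends the \emph{pair} $(\MakeZero{\Codeword}', \MakeOne{\Codeword}')$ as oracles and lets the LACSP verifier query both directly, whereas you send only $\MakeZero{\Codeword}'$ and have the verifier reconstruct each $\MakeOne{\Codeword}'[j]$ on the fly from the $\RLAize{\MapLocality}$ entries in $\IndexSet_{j}$. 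Both variants are sound and perfectly zero knowledge via the same simulator, but the bookkeeping diverges: in your variant every query that $\CodesVerifier$ makes into the $\MakeOne{\Codeword}'$ part of the witness costs $\RLAize{\MapLocality}$ queries to $\MakeZero{\Codeword}'$, so the honest verifier's query complexity can be as large as $\IOPPize{\QueryComplexity} \cdot \RLAize{\MapLocality}$ rather than the stated $\IOPPize{\QueryComplexity} + \RLAize{\MapLocality} + 1$. Sending $\MakeOne{\Codeword}'$ explicitly, as the paper does, avoids this multiplicative blowup (at the cost of another $\RLAize{\CodeBlockLength}$ in proof length). The zero-knowledge argument itself is identical in both variants.
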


\begin{proof}
Let $(\LAProver,\LAVerifier)$ be the IOPP system for $\LACSPRelation$ guaranteed by \thmref{thm:ioppzk-construction}. We construct the IOP system $(\RLAProver,\RLAVerifier)$ for $(\RLACSPRelation^{\yes}, \RLACSPLanguage^{\no})$, where the prover receives $(\Instance,\Witness) = \big((1^{\InstanceSize},\LocalMap),(\MakeZero{\Assignment},\MakeOne{\Assignment})\big)$ as input while the verifier receives $\Instance$ as input, as follows:
\begin{enumerate}

  \item
  The prover $\RLAProver$ parses the witness $\Witness$ as $(\MakeZero{\IOPProverAssignment},\MakeOne{\IOPProverAssignment}) \in \MakeZero{\Code} \times \MakeOne{\Code}$, samples a random $\IOPProverNonce \in \Subcode$ (the subcode of $\MakeZero{\Code}$ for which $\RLAize{\Randomizability}$-randomizability holds), sets $\MakeZero{\IOPProverRandAssignment} \DefineEqual \IOPProverNonce + \MakeZero{\IOPProverAssignment}$ and $\MakeOne{\IOPProverRandAssignment} \DefineEqual \LocalMap(\MakeZero{\IOPProverRandAssignment})$, and sends $\RLAWitness \DefineEqual (\MakeZero{\IOPProverRandAssignment},\MakeOne{\IOPProverRandAssignment})$ to $\RLAVerifier$.
 
  \item
  In parallel to the above interaction, the prover $\RLAProver$ and verifier $\RLAVerifier$ invoke the IOPP system $(\LAProver,\LAVerifier)$ on the input $\Instance$ and new ``prover-randomized'' witness $\RLAWitness$. The verifier $\RLAVerifier$ accepts if and only if $\LAVerifier$ does.

\end{enumerate}
The claimed efficiency parameters immediately follow by construction. We now show that $(\RLAProver,\RLAVerifier)$ satisfies completeness, soundness, and perfect zero-knowledge.

\parhead{Completeness}
Suppose that $(\Instance,\Witness) = \big((1^{\InstanceSize},\LocalMap),(\MakeZero{\IOPProverAssignment},\MakeOne{\IOPProverAssignment})\big)$ is in the relation $\RLACSPRelation^{\yes}$, so that $\MakeZero{\IOPProverAssignment} \in \MakeZero{\Code}$, $\MakeOne{\IOPProverAssignment} \in \MakeOne{\Code}$, and $\LocalMap(\MakeZero{\IOPProverAssignment}) = \MakeOne{\IOPProverAssignment}$. By randomizability (\defref{def:randomizable-lacsp}), since $\MakeZero{\IOPProverRandAssignment} \in \Subcode + \MakeZero{\IOPProverAssignment}$, we deduce that $\RLAWitness = (\MakeZero{\IOPProverRandAssignment}, \MakeOne{\IOPProverRandAssignment}) = (\MakeZero{\IOPProverRandAssignment}, \LocalMap(\MakeZero{\IOPProverRandAssignment}))$ satisfies $\Instance$, and so $(\Instance,\RLAWitness) \in \LACSPRelation^{\yes}$. Completeness then follows by the completeness of $(\LAProver,\LAVerifier)$.

\parhead{Soundness}
Suppose that $\Instance = (1^{\InstanceSize},\LocalMap)$ is in the language $\RLACSPLanguage^{\no} = \LACSPLanguage^{\no}$, so that $\Witnesses{\LACSPRelation^{\yes}}{\Instance}=\emptyset$. Regardless of what `witness' $\RLAWitness$ is sent by $\RLAProver$, it holds that $\DistanceMeasure(\RLAWitness,\Witnesses{\LACSPRelation^{\yes}}{\Instance}) = \DistanceMeasure(\RLAWitness,\emptyset) = 1 \geq \IOPPize{\ProximityParameter}$, so that the soundness of $(\LAProver,\LAVerifier)$ implies that $\LAVerifier$, and thus $\RLAVerifier$, accepts with probability at most $\max \{\IOPPize{\SoundnessError}, 1 - \LAize{\CodeRelativeDistance} + 2\IOPPize{\ProximityParameter} \cdot (\LAize{\MapLocality} + 1)\}$.

\parhead{Perfect zero knowledge}
Let $\Malicious{\Verifier}$ be any verifier that makes at most $\QueryBound \DefineEqual \RLAize{\Randomizability} / \RLAize{\MapLocality}$ queries, and let $\LASimulator$ be the perfect zero knowledge simulator for $(\LAProver,\LAVerifier)$. We construct a simulator $\RLASimulator$ for $(\RLAProver,\RLAVerifier)$ as follows:

\begin{adjustwidth}{0.5cm}{0.5cm}
$\RLASimulator^{\Malicious{\Verifier}}(\Instance)$:
\begin{enumerate}
 \item $\RLASimulator$ initializes two empty strings $\MakeZero{\SimRLAWitness}$ and $\MakeOne{\SimRLAWitness}$ which will be partially filled during the simulation.
 \item $\RLASimulator$ invokes $\LASimulator^{\Malicious{\Verifier},(\MakeZero{\SimRLAWitness},\MakeOne{\SimRLAWitness})}$, and during the execution answers oracle queries to $(\MakeZero{\SimRLAWitness},\MakeOne{\SimRLAWitness})$ in the following way.
 \begin{enumerate}[nolistsep]
 
  \item If $\LASimulator$ queries $\MakeZero{\SimRLAWitness}$ at a location $j\in [\RLAize{\CodeBlockLength}]$:
  if $\MakeZero{\SimRLAWitness}[j]$ is already defined then return that value; otherwise sample a random $a\in \Field(\InstanceSize)$, set $\MakeZero{\SimRLAWitness}[j]\DefineEqual a$, and reply with $\MakeZero{\SimRLAWitness}[j]$.
 
  \item If $\LASimulator$ queries $\MakeOne{\SimRLAWitness}$ at a location $j\in [\RLAize{\CodeBlockLength}]$:
  if $\MakeOne{\SimRLAWitness}[j]$ is already defined then return that value; otherwise compute the set of indices $I_{j}\subseteq [\RLAize{\CodeBlockLength}]$ that $\LocalMap(\cdot)_{j}$ depends on; then `query' the values of $\MakeZero{\SimRLAWitness}[i]$ for  all $i\in I_{j}$ as in the previous step; then update $\MakeOne{\SimRLAWitness}[j]\DefineEqual \LocalMap(\MakeZero{\SimRLAWitness})[j]$ and reply with $\MakeOne{\SimRLAWitness}[j]$.

\end{enumerate}
\end{enumerate}
\end{adjustwidth}
Observe that $\RLASimulator$ runs in time $\poly(\BitSize{\Instance} + q_{\Malicious{\Verifier}} + \RLAize{\MapEfficiency})$, where $q_{\Malicious{\Verifier}}$ denotes the actual number of queries made by $\Malicious{\Verifier}$ and $\RLAize{\MapEfficiency}$ is $\LocalMap$'s efficiency (see \defref{def:local-efficiency}). Since $\RLAize{\MapEfficiency}$ is polynomially bounded, $\RLASimulator$ runs in time $\poly(\BitSize{\Instance} + \QueryComplexity_{\Malicious{\Verifier}})$, as required.

We must show that $\IOPView{\RLAProver(\Instance,\Witness)}{\Malicious{\Verifier}(\Instance)}$ and $\RLASimulator^{\Malicious{\IOPVerifier}}(\Instance)$ are identically distributed. Recall that $\RLAProver(\Instance,\Witness)$ samples $\RLAWitness$ and then invokes $\LAProver(\Instance,\RLAWitness)$; viewing $\RLAWitness$ as a random variable, we get that $\IOPView{\RLAProver(\Instance,\Witness)}{\Malicious{\Verifier}(\Instance)} \equiv \IOPPView{\LAProver(\Instance,\RLAWitness)}{\Malicious{\Verifier}(\Instance)}$. By $(\LAProver,\LAVerifier)$'s perfect zero knowledge guarantee, we also know that
\begin{equation*}
\big(\IOPPView{\LAProver(\Instance,\RLAWitness)}{\Malicious{\Verifier}(\Instance)}, q_{\Malicious{\Verifier}}\big)
\equiv
\big(\LASimulator^{\Malicious{\Verifier},\RLAWitness}(\Instance), q_{\LASimulator}\big)
\enspace.
\end{equation*}
We are left to show that $\LASimulator^{\Malicious{\Verifier},\RLAWitness}(\Instance) \equiv \RLASimulator^{\Malicious{\Verifier}}(\Instance)$.

By the query bound, we know that $\LASimulator$ makes at most $\RLAize{\Randomizability}/\RLAize{\MapLocality}$ queries to $\RLAWitness$. By construction of $\RLASimulator$, this causes at most $\RLAize{\Randomizability}$ entries in $\MakeZero{\SimRLAWitness}$ to be `defined', since $\SetCardinality{\IndexSet_{j}} \leq \RLAize{\MapLocality}$ for all $j \in [\RLAize{\CodeBlockLength}]$ (by $\LocalMap$'s locality); let $\RLAEntries \subseteq [\RLAize{\CodeBlockLength}]$ be these entries. Since $\MakeOne{\Codeword} = \LocalMap(\MakeZero{\Codeword})$, all of the responses to $\LASimulator$'s queries are determined by $\Restrict{\MakeZero{\IOPProverRandAssignment}}{\RLAEntries}$. While $\RLAEntries$ is itself dependent on $\MakeZero{\IOPProverRandAssignment}$ (as $\Malicious{\Verifier}$'s queries may be adaptive), this does not affect the distribution of the string $\Restrict{\MakeZero{\IOPProverRandAssignment}}{\RLAEntries}$ because $\SetCardinality{\RLAEntries} \leq \RLAize{\Randomizability}$ and $\MakeZero{\IOPProverRandAssignment}$ is drawn from a $\RLAize{\Randomizability}$-wise independent distribution. We deduce that there exists a deterministic function $\FView{\cdot}$ such that $\LASimulator$'s queries to $\RLAWitness$ are answered by $\FView{\Restrict{\MakeZero{\IOPProverRandAssignment}}{\RLAEntries}}$ in the `real' execution, and $\RLASimulator$ answers the same queries with $\FView{U}$ where $U$ is uniformly random in $\Field^{\RLAEntries}$. But $\MakeZero{\IOPProverRandAssignment}$ is $\SetCardinality{\RLAEntries}$-wise independent, so that $\Restrict{\MakeZero{\IOPProverRandAssignment}}{\RLAEntries} \equiv U$, and thus $\LASimulator^{\Malicious{\Verifier},\RLAWitness}(\Instance) \equiv \RLASimulator^{\Malicious{\Verifier}}(\Instance)$.
\end{proof}

\subsection{Putting things together}
\label{sec:pzk-iopp-ntime}

We are almost ready to prove \thmref{thm:ntime}, the main theorem of this section. The last missing piece is a suitable reduction from $\NTIME(\DeciderTime)$ to $\RLACSPRelation$, the promise relation of RLACSPs. Below, we state a special case of \cite[Thm.~7.9]{BenSassonCGV16}, which provides the reduction that we need.

\begin{theorem}[$\NTIME \to \RLACSPRelation$]
\label{thm:ntime-to-lacsp}
For every $\DeciderTime,\Randomizability \colon \Naturals \to \Naturals$, constant $\CodeRelativeDistance \in (0,1)$, and $\Relation \in \NTIME(\DeciderTime)$ there exist algorithms $\InstanceMap,\WitnessMap,\ExtractorMap$ satisfying the following conditions:
\begin{itemize}

  \item \textsc{Efficient reduction.}
For every instance $\Instance$, letting $\Instance' \DefineEqual \InstanceMap(\Instance)$:
\begin{itemize}[nolistsep]
  \item if $\Instance \in \GetLanguage{\Relation}$ then $\Instance' \in \GetLanguage{\RLACSPRelation^{\yes}}$;
  \item if $\Instance \not\in \GetLanguage{\Relation}$ then $\Instance' \in \RLACSPLanguage^{\no}$;
  \item for every witness $\Witness$, if $(\Instance,\Witness) \in \Relation$ then $(\Instance',\WitnessMap(\Instance,\Witness)) \in \RLACSPRelation^{\yes}$;
  \item for every witness $\Witness'$, if $(\Instance',\Witness') \in \RLACSPRelation^{\yes}$ then $(\Instance,\ExtractorMap(\Instance,\Witness')) \in \Relation$.
\end{itemize}
Moreover, $\InstanceMap$ runs in time $\poly(\InstanceSize + \log (\DeciderTime(\InstanceSize)+\Randomizability(\InstanceSize)))$ and $\WitnessMap,\ExtractorMap$ run in time $\poly(\InstanceSize) \cdot \Ot{\DeciderTime(\InstanceSize)+\Randomizability(\InstanceSize)}$.

  \item \textsc{Randomizable linear algebraic CSP.}
The promise relation $(\RLACSPRelation^{\yes},\RLACSPLanguage^{\no})$ has the parameters:
\begin{equation*}
(\RLACSPRelation^{\yes},\RLACSPLanguage^{\no})
\left[
\begin{array}{llll}
\TextField                & \Field                &=& \Field_{2^{\log(\DeciderTime+\Randomizability)+O(\log \log (\DeciderTime+\Randomizability))}} \\
\TextCodeZero             & \MakeZero{\Code}      &~& \\
\TextCodeOne              & \MakeOne{\Code}       &~& \\
\TextCodeBlockLength      & \CodeBlockLength      &=& \Ot{\DeciderTime+\Randomizability} \\
\TextCodeRelativeDistance & \CodeRelativeDistance & & \\
\TextMapLocality          & \MapLocality          &=& \polylog{\DeciderTime} \\
\TextMapEfficiency        & \MapEfficiency        &=& \poly(\InstanceSize + \log \DeciderTime) \\
\TextRandomizability      & \Randomizability      & & \\
\TextRandomizeTime        & \RandomizeTime        &=& \Ot{\DeciderTime+\Randomizability} \\
\end{array}
\right]
\enspace.
\end{equation*}

(The hidden constants depend on the choice of $\CodeRelativeDistance$; see \cite[Thm.~7.9]{BenSassonCGV16} for the dependence on $\CodeRelativeDistance$.)

\item \textsc{Additive Reed--Solomon codes.}
$\LACodeRelation$ is a subfamily of $\RSVRS$.

\end{itemize}
\end{theorem}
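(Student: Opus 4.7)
The plan is to combine a generic reduction from $\NTIME(\DeciderTime)$ to succinctly-described algebraic constraint satisfaction with a Reed--Solomon-based encoding that admits the randomizability structure demanded by \defref{def:randomizable-lacsp}. The skeleton of the reduction follows \cite{BS08}: compile the nondeterministic machine $M$ deciding $\Relation$ into a uniform constraint satisfaction problem whose variables form a computation trace indexed by an $\Field_{2}$-subspace $\RSVRSVanishing \subseteq \Field$ of dimension $\log_{2}\DeciderTime + O(1)$, and whose local constraints are expressed via a constant-arity polynomial $P$ applied to trace values at coordinates related by a fixed family of affine shifts $\sigma_{1},\dots,\sigma_{\MapLocality}$. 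The shifts are realized by a De Bruijn routing network so that every wire of the original nondeterministic computation can be ``reached'' in $\polylog(\DeciderTime)$ hops, yielding $\MapLocality = \polylog(\DeciderTime)$ local dependencies per constraint and $O(\MapLocality)$-evasiveness (each $\sigma_{i}$ being an affine bijection of $\RSVRSSubdomain$). The field $\Field$ is taken to be $\Field_{2^{\log(\DeciderTime+\Randomizability)+O(\log\log(\DeciderTime+\Randomizability))}}$ and the evaluation domain $\RSVRSSubdomain \supseteq \RSVRSVanishing$ is an $\Field_{2}$-subspace of size $\Theta(\SetCardinality{\RSVRSVanishing}+\Randomizability)$, so that the two Reed--Solomon rates remain below $1/8$, matching the admissibility requirement of $\RSVRS$.

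Given this skeleton, $\InstanceMap$ outputs the tuple $(1^{\InstanceSize'},\LocalMap)$ where $\MakeZero{\Code} = \RSCode{\Field}{\RSVRSSubdomain}{\RSDegree_{0}}$ with $\RSDegree_{0} = \SetCardinality{\RSVRSVanishing}+\Theta(\Randomizability)$, $\MakeOne{\Code}$ is the vanishing subcode of $\RSCode{\Field}{\RSVRSSubdomain}{\RSDegree_{1}}$ of polynomials that vanish on $\RSVRSVanishing$ (with $\RSDegree_{1} = \Theta(\SetCardinality{\RSVRSVanishing}+\Randomizability)$), and $\LocalMap$ sends $\MakeZero{\Codeword}$ to the pointwise evaluation $x \mapsto P(\MakeZero{\Codeword}(\sigma_{1}(x)),\dots,\MakeZero{\Codeword}(\sigma_{\MapLocality}(x)))$; by inspection $\LocalMap$ is $\polylog(\DeciderTime)$-local and admits a succinct description writable in $\poly(\InstanceSize + \log(\DeciderTime+\Randomizability))$ time, giving $\MapEfficiency = \poly(\InstanceSize + \log\DeciderTime)$. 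The map $\WitnessMap$ simulates $M(\Instance,\Witness)$ to obtain the trace $\tau \in \Field^{\RSVRSVanishing}$, interpolates $\MakeZero{\Codeword}$ of degree less than $\RSDegree_{0}$ matching $\tau$ on $\RSVRSVanishing$, and sets $\MakeOne{\Codeword} \DefineEqual \LocalMap(\MakeZero{\Codeword})$; because the arithmetization was chosen so that $P$ applied to a valid trace vanishes at every trace index, $\MakeOne{\Codeword}$ lies in $\MakeOne{\Code}$. The extractor $\ExtractorMap$ evaluates $\MakeZero{\Codeword}$ on $\RSVRSVanishing$ and reads off the witness portion of $\tau$. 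Both $\WitnessMap$ and $\ExtractorMap$ run in $\poly(\InstanceSize)\cdot\Ot{\DeciderTime+\Randomizability}$ via fast additive FFTs over $\RSVRSSubdomain$, while $\InstanceMap$ only emits succinct descriptions. By construction $\LACodeRelation = \GetRelation{\MakeZero{\Code}\times\MakeOne{\Code}}$ is a subfamily of $\RSVRS$.

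The main obstacle is establishing randomizability. I would take $\Subcode \subseteq \MakeZero{\Code}$ to be the subcode of polynomials of degree less than $\RSDegree_{0}$ that vanish on the ``constraint-relevant closure'' $\RSVRSVanishing^{*} \DefineEqual \bigcup_{i=1}^{\MapLocality} \sigma_{i}^{-1}(\RSVRSVanishing) \supseteq \RSVRSVanishing$. Two points must be verified:
\begin{inparaenum}[(a)]
\item for every $z \in \Subcode$ and every $x \in \RSVRSVanishing$, one has $\LocalMap(\MakeZero{\Codeword}+z)(x) = P(\MakeZero{\Codeword}(\sigma_{i}(x))+z(\sigma_{i}(x)))_{i} = \LocalMap(\MakeZero{\Codeword})(x)$ because $\sigma_{i}(x)\in \RSVRSVanishing^{*}$ and $z$ vanishes on $\RSVRSVanishing^{*}$, so the vanishing condition on $\MakeOne{\Codeword}$ is preserved; and
\item after imposing vanishing on the $\MapLocality\cdot\SetCardinality{\RSVRSVanishing}$ points of $\RSVRSVanishing^{*}$, the subcode $\Subcode$ retains dimension at least $\Randomizability$, provided $\RSDegree_{0}$ is chosen with enough slack, which the bound $\RSDegree_{0} = \SetCardinality{\RSVRSVanishing}+\Theta(\Randomizability)$ affords (with polylog factors absorbed into $\Ot{\cdot}$).
\end{inparaenum}
The standard Reed--Solomon fact that a uniformly random polynomial of degree less than $d$ gives $d$-wise independent evaluations then yields $\Randomizability$-wise independence of $\Subcode$ on $\RSVRSSubdomain \setminus \RSVRSVanishing^{*}$, and sampling a uniform element of $\Subcode$, $\MakeZero{\Code}$, or $\MakeOne{\Code}$ reduces to drawing $\Ot{\DeciderTime+\Randomizability}$ free coefficients and running a single additive FFT, meeting the $\RandomizeTime = \Ot{\DeciderTime+\Randomizability}$ bound. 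The delicate step within the obstacle is engineering the De Bruijn layout so that the shifts $\sigma_{i}$ are $\Field_{2}$-affine bijections whose preimages $\sigma_{i}^{-1}(\RSVRSVanishing)$ are themselves (cosets of) subspaces of comparable size to $\RSVRSVanishing$, since otherwise the vanishing condition defining $\Subcode$ would kill too many degrees of freedom; this is exactly the reason for choosing the shifts from a fixed $\Field_{2}$-linear family in the first place. Completeness and soundness of the reduction are then immediate from the definitions of $\WitnessMap$ and $\ExtractorMap$, completing the proof.
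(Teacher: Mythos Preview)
The paper does not prove this theorem: it is quoted verbatim as ``a special case of \cite[Thm.~7.9]{BenSassonCGV16}'' and used as a black box. Your sketch is therefore an attempt to reconstruct the cited reduction, and its overall architecture (machine $\to$ De Bruijn-routed trace $\to$ univariate arithmetization over an additive subspace, with the vanishing Reed--Solomon subcode playing the role of $\MakeOne{\Code}$) is the right one.

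There is, however, a genuine gap in your randomizability argument. You take $\RSVRSSubdomain \supseteq \RSVRSVanishing$ and define the subcode $\Subcode$ to consist of polynomials of degree $<\RSDegree_{0}$ that vanish on $\RSVRSVanishing^{*}\supseteq\RSVRSVanishing$. But then every codeword of $\Subcode$ is identically $0$ on the coordinates in $\RSVRSVanishing^{*}\cap\RSVRSSubdomain\supseteq\RSVRSVanishing$, so $\Subcode$ is not even $1$-wise independent as a code over the domain $\RSVRSSubdomain$. You note this yourself when you write that $\Subcode$ is $\Randomizability$-wise independent only ``on $\RSVRSSubdomain\setminus\RSVRSVanishing^{*}$''; that weaker statement does not match \defref{def:randomizable-instances}, and the zero-knowledge simulator in the proof of \thmref{thm:iopzk-construction} really uses full $\Randomizability$-wise independence of $\MakeZero{\IOPProverRandAssignment}$ over the entire block. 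The standard repair is to keep the evaluation domain and the vanishing set disjoint (take $\RSVRSSubdomain$ to be an affine coset of a subspace that avoids $\RSVRSVanishing$, or equivalently evaluate $\VanishingPoly{\RSVRSVanishing}\cdot q$ on a domain where $\VanishingPoly{\RSVRSVanishing}$ never vanishes); then $\Subcode$ restricted to $\RSVRSSubdomain$ is, up to a pointwise nonzero scaling, a genuine Reed--Solomon code of dimension $\RSDegree_{0}-\SetCardinality{\RSVRSVanishing^{*}}$ and hence $(\RSDegree_{0}-\SetCardinality{\RSVRSVanishing^{*}})$-wise independent over all of $\RSVRSSubdomain$. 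This is also where the ``shifts are $\Field_{2}$-affine and preserve the subspace structure'' design choice pays off, since it keeps $\SetCardinality{\RSVRSVanishing^{*}}$ comparable to $\SetCardinality{\RSVRSVanishing}$.

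A minor correction: to ensure $\LocalMap(\MakeZero{\Codeword}+z)(x)=\LocalMap(\MakeZero{\Codeword})(x)$ for $x\in\RSVRSVanishing$ you need $z(\sigma_{i}(x))=0$ for $x\in\RSVRSVanishing$, i.e., $z$ must vanish on $\bigcup_{i}\sigma_{i}(\RSVRSVanishing)$, not on $\bigcup_{i}\sigma_{i}^{-1}(\RSVRSVanishing)$.
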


\begin{proof}[Proof of \thmref{thm:ntime}]
The theorem directly follows by having the prover and verifier reduce the given relation in $\NTIME(\DeciderTime)$ to $(\RLACSPRelation^{\yes},\RLACSPLanguage^{\no})$, following \thmref{thm:ntime-to-lacsp}, and then invoking \thmref{thm:iopzk-construction} with the perfect zero knowledge IOP of Proximity for $\RSVRS$ from \corref{cor:rsvrs-pzk-iopp}.
\end{proof}

\doclearpage
\section*{Acknowledgements}
\label{sec:acks}

Work of E.\ Ben-Sasson, A.\ Gabizon, and M. Riabzev was supported by the Israel Science Foundation (grant 1501/14). Work of M.\ A.\ Forbes was supported by the NSF, including NSF CCF-1617580, and the DARPA Safeware program; it was also partially completed when the author was at Princeton University, supported by the Princeton Center for Theoretical Computer Science.

\appendix

\doclearpage
\section{Prior work on single-prover unconditional zero knowledge}
\label{sec:prior-work-summary}

We summarize prior work on \emph{single}-prover proof systems that achieve zero knowledge unconditionally. First, the complexity classes of PZK IPs and SZK IPs are contained in $\AM \cap \coAM$ \cite{Fortnow87,AielloH91}, so they do not contain $\NP$ unless the polynomial hierarchy collapses \cite{BoppanaHZ87}; thus, IPs have strong limitations. Next, we discuss other single-prover proof systems: PCPs and IPCPs; all prior work for these is about \emph{statistical} zero knowledge (SZK), via simulators that are straightline (which is needed in many of the cryptographic applications explored in these works).

\parhead{SZK PCP for $\NEXP$}
\cite{KilianPT97} obtain PCPs for $\NEXP$ that are SZK against unbounded queries; the PCP has exponential length, the honest verifier makes a polynomial number of queries, and malicious verifiers can make any polynomial number of queries. Their construction has two steps:
\begin{inparaenum}[(1)]
  \item transform a given PCP into a new one that is PZK against (several independent copies of) the honest verifier;
  \item transform the latter PCP into a new one that is SZK against malicious verifiers.
\end{inparaenum}
The first step uses secret sharing and builds on techniques of \cite{DworkFKNS92}; the second uses \emph{locking schemes}, which are information-theoretic PCP-analogues of commitment schemes. Subsequent work simplifies the steps: \cite{IshaiW14} use MPC techniques to simplify the first step; and \cite{IshaiMS12,IshaiMSX15} give a simple construction of locking schemes, by obtaining a non-interactive PCP-analogue of \cite{Naor91}'s commitment scheme.

\parhead{SZK PCP for $\NP$ against \sunderline{unbounded} queries}
A PCP must have super-polynomial length if it ensures SZK against any polynomial number of malicious queries: if not, a malicious verifier could read the entire PCP, in which case zero knowledge is impossible for non-trivial languages \cite{GoldreichO94}. If one allows the prover to be inefficient, then invoking \cite{KilianPT97}'s result for any language in $\NEXP$, including $\NP$ languages, suffices. Yet, in the case of $\NP$, one can still aim for \emph{oracle efficiency}: the prover outputs a succinct representation of the oracle, i.e., a polynomial-size circuit that, given an index, outputs the value at that index. However, \cite{IshaiMS12,MahmoodyX13,IshaiMSX15} show that languages with oracle-efficient PCPs that are SZK against unbounded queries are contained in the complexity class of SZP IPs, which is unlikely to contain $\NP$.

\parhead{SZK PCP for $\NP$ against \sunderline{bounded} queries}
\cite{KilianPT97} obtain PCPs for $\NP$ that are SZK against $\QueryBound$ malicious queries, for a given polynomially-bounded function $\QueryBound$. The construction is analogous to the one for $\NEXP$, but with different parameter choices. (The simplifications in \cite{IshaiMS12,IshaiMSX15,IshaiW14} also apply to this case.)

Subsequently, \cite{IshaiW14} consider the case of zero knowledge PCPs \emph{of proximity}; they obtain PCPPs for $\NP$ that are SZK against $\QueryBound$ malicious queries. Like \cite{KilianPT97}, their construction has two steps:
\begin{inparaenum}[(1)]
  \item use MPC techniques to transform a given PCPP into a new one that is PZK against (several independent copies of) the honest verifier;
  \item use locking schemes to transform the latter PCPP into a new one that is SZK against malicious verifiers.
\end{inparaenum}

\parhead{SZK IPCP for $\NP$ against \sunderline{unbounded} queries}
For an IPCP to ensure SZK against any polynomial number of queries, the prover must send a PCP with super-polynomial length: if not, a malicious verifier could read the entire PCP, forcing the IPCP model to ``collapse'' to IP (recall that the complexity class of SZK IPs is unlikely to contain $\NP$). As in the PCP model, one may still aim for oracle efficiency, and this time no limitations apply because a positive result is known: \cite{GoyalIMS10} obtain oracle-efficient IPCPs for $\NP$ that are SZK against unbounded queries. Their construction is analogous to \cite{KilianPT97}'s, but relies on \emph{interactive locking schemes} in the IPCP model, rather than non-interactive ones in the PCP model; this circumvents the impossibility result for oracle-efficient PCPs.

\doclearpage
\section{Proof of \lemref{lem:efficient-codeword-simulator}}
\label{sec:proof-of-efficient-codeword-simulator}

The algorithm $\CodeSimAlgorithm$, given $(\CodeIdx, S, \alpha)$, where $S = \{(\alpha_{1}, \beta_{1}), \dots, (\alpha_{\ListSize}, \beta_{\ListSize})\} \subseteq \EvaluationDomain(\CodeIdx) \times \Field(\CodeIdx)$ and $\alpha \in \EvaluationDomain(\CodeIdx)$, works as follows:
\begin{inparaenum}[(1)]
\item run $\CodeClass$'s constraint detector on input $(\CodeIdx, \{ \alpha_{1}, \dots, \alpha_{\ListSize}, \alpha \})$;
\item if the detector outputs an empty basis or a basis $z_{1}, \dots, z_{d}$ where $z_{i}(\alpha) = 0$ for all $i$, then output a random element in $\Field(\CodeIdx)$;
\item if the detector outputs some basis element $z_{j}$ where $z_{j}(\alpha) \neq 0$, then output $-\sum_{i=1}^{\ListSize} \frac{z_{j}(\alpha_{i})}{z_{j}(\alpha)} \beta_{i}$.
\end{inparaenum}
The stated time complexity of $\CodeSimAlgorithm$ is clear from its construction. We now argue correctness. Define the probability
\begin{equation*}
p \DefineEqual \Pr\limits_{\Codeword \gets \Code_{\CodeIdx}}
\left[
\Codeword(\alpha) = \beta
\,
\middle\vert
\,
\begin{array}{c}
\Codeword(\alpha_{1}) = \beta_{1} \\
\vdots \\
\Codeword(\alpha_{\ListSize}) = \beta_{\ListSize}
\end{array}
\right]
\enspace.
\end{equation*}
\begin{mdframed}
{\small
\begin{uclaim}
\begin{inparaenum}[(A)]
\item \label{cond:lindep} If there exist $a_{1}, \dots, a_{\ListSize} \in \Field(\CodeIdx)$ such that $\Codeword(\alpha) = \sum_{i=1}^{\ListSize} a_{i} \Codeword(\alpha_{i})$ for all $\Codeword \in \Code_{n}$ \emph{(\condref{cond:lindep})}, then $p = 1$ if $\beta = \sum_{i=1}^{\ListSize} a_{i} \beta_{i}$ and $p=0$ otherwise.
\item If no such $a_{1}, \dots, a_{\ListSize}$ exist, then $p = \frac{1}{\SetCardinality{\Field(\CodeIdx)}}$.
\end{inparaenum}
\end{uclaim}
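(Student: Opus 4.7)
The plan is to reduce the conditional distribution of $\Codeword(\alpha)$ to a linear-algebraic question about a single linear functional on a finite-dimensional vector space. First I would fix an arbitrary $\Codeword_{0} \in \Code_{\CodeIdx}$ satisfying $\Codeword_{0}(\alpha_{i}) = \beta_{i}$ for all $i \in [\ListSize]$ --- the existence of such a $\Codeword_{0}$ is implicit in the claim, since otherwise the conditional probability $p$ is undefined. Setting $K \DefineEqual \{k \in \Code_{\CodeIdx} : k(\alpha_{i}) = 0 \text{ for all } i \in [\ListSize]\}$, the linearity of $\Code_{\CodeIdx}$ gives that the affine subspace of codewords satisfying all the conditioning equalities is exactly $\Codeword_{0} + K$, so a uniform draw from the conditional distribution has the form $\Codeword = \Codeword_{0} + k$ for $k$ uniform in $K$. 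Hence $\Codeword(\alpha) = \Codeword_{0}(\alpha) + \eta(k)$, where $\eta \colon K \to \Field(\CodeIdx)$ is the linear functional defined by $\eta(k) \DefineEqual k(\alpha)$.

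The crux will be a clean dichotomy: either $\eta \equiv 0$ or $\eta$ is surjective onto $\Field(\CodeIdx)$, and these two cases will correspond exactly to the two parts of the claim. I would establish the equivalence between Condition~A and $\eta \equiv 0$ as follows. The forward direction is immediate: if $\Codeword(\alpha) = \sum_{i} a_{i} \Codeword(\alpha_{i})$ for every $\Codeword \in \Code_{\CodeIdx}$, then applying this identity to any $k \in K$ gives $\eta(k) = k(\alpha) = \sum_{i} a_{i} \cdot 0 = 0$. For the converse, $\eta \equiv 0$ means that every codeword vanishing at all of $\alpha_{1},\dots,\alpha_{\ListSize}$ also vanishes at $\alpha$; by a standard finite-dimensional duality argument, this forces the evaluation functional $\Codeword \mapsto \Codeword(\alpha)$ to lie in the span of the evaluation functionals $\Codeword \mapsto \Codeword(\alpha_{i})$ inside the dual space of $\Code_{\CodeIdx}$, yielding coefficients $a_{i}$ witnessing Condition~A.

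Given the dichotomy, both cases should be easy to dispatch. In Case~A ($\eta \equiv 0$), $\Codeword(\alpha) = \Codeword_{0}(\alpha)$ deterministically, and applying Condition~A to $\Codeword_{0}$ gives $\Codeword_{0}(\alpha) = \sum_{i} a_{i} \Codeword_{0}(\alpha_{i}) = \sum_{i} a_{i} \beta_{i}$; this yields $p = 1$ if $\beta = \sum_{i} a_{i} \beta_{i}$ and $p = 0$ otherwise. In Case~B ($\eta$ surjective onto the one-dimensional space $\Field(\CodeIdx)$), the fibers of $\eta$ are cosets of $\ker \eta$ of equal size $\SetCardinality{K}/\SetCardinality{\Field(\CodeIdx)}$, so $\eta(k)$ is uniformly distributed over $\Field(\CodeIdx)$ when $k$ is uniform in $K$; consequently $\Codeword(\alpha) = \Codeword_{0}(\alpha) + \eta(k)$ is uniform over $\Field(\CodeIdx)$ and $p = 1/\SetCardinality{\Field(\CodeIdx)}$.

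The only subtle point will be the equivalence between Condition~A and $\eta \equiv 0$ --- specifically the dual-space direction of the converse --- but this is a short and standard finite-dimensional linear-algebra fact, so I do not anticipate any substantial obstacle.
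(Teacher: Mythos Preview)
Your proposal is correct and follows essentially the same approach as the paper. The paper works with an explicit basis $\Codeword_{1},\dots,\Codeword_{d}$ of $\Code_{\CodeIdx}$, defines $\phi_{\gamma} \DefineEqual (\Codeword_{1}(\gamma),\dots,\Codeword_{d}(\gamma))$, shows Condition~A is equivalent to $\phi_{\alpha} \in \Span(\phi_{\alpha_{1}},\dots,\phi_{\alpha_{\ListSize}})$, and for Part~B exhibits a single nullspace vector $\Codeword'_{j}$ with $\langle \Codeword'_{j},\phi_{\alpha}\rangle \neq 0$ to run a randomization argument $\Codeword \mapsto \Codeword + r\Codeword'_{j}$; your coset $\Codeword_{0}+K$ is exactly the conditioned set, your $\eta$ is the restriction of the evaluation-at-$\alpha$ functional to $K$, and the paper's $\Codeword'_{j}$ is precisely a witness that your $\eta$ is nonzero---so the two arguments are the same linear algebra in different notation, with your surjectivity/fiber-counting replacing the paper's one-parameter shift.
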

\begin{proof}[Proof of claim]
If \condref{cond:lindep} holds, then, for any $\Codeword \in \Code_{n}$ such that $\Codeword(\alpha_{1}) = \beta_{1}, \dots, \Codeword(\alpha_{\ListSize}) = \beta_{\ListSize}$, it holds that $\Codeword(\alpha) = \sum_{i=1}^{\ListSize} a_{i} \Codeword(\alpha_{i}) = \sum_{i=1}^{\ListSize} a_{i} \beta_{i}$, which proves the first part of the claim.

Next, let $d \DefineEqual \Dimension{\Code_{n}}$ and let $\Codeword_{1}, \dots, \Codeword_{d}$ be a basis of $\Code_{n}$. Define $\phi_{\alpha} \DefineEqual (\Codeword_{1}(\alpha), \dots, \Codeword_{d}(\alpha))$. We argue that \condref{cond:lindep} holds if and only if $\phi_{\alpha} \in \Span(\phi_{\alpha_{1}}, \dots, \phi_{\alpha_{\ListSize}})$:
\begin{itemize}

  \item Suppose that \condref{cond:lindep} holds. Then $\Codeword_{j}(\alpha) = \sum_{i=1}^{\ListSize} a_{i} \Codeword_{j}(\alpha_{i})$ for every $j \in \{1, \dots, d\}$. Since $\Codeword_{j}(\alpha)$ is the $j$-th coordinate of $\phi_{\alpha}$, it also holds that $\phi_{\alpha} = \sum_{i=1}^{\ListSize} a_{i} \phi_{\alpha_{i}}$, so that $\phi_{\alpha} \in \Span(\phi_{\alpha_{1}}, \dots, \phi_{\alpha_{\ListSize}})$.

  \item Suppose that $\phi_{\alpha} \in \Span(\phi_{\alpha_{1}}, \dots, \phi_{\alpha_{\ListSize}})$. Then there exist $a_{1}, \dots, a_{\ListSize}$ such that $\phi_\alpha = \sum_{i=1}^{\ListSize} a_{i} \phi_{\alpha_{i}}$. For any $\Codeword \in \Code_{n}$, we can write $\Codeword = \sum_{j=1}^{d} b_{j} \Codeword_{j}$ (for some $b_{j}$'s), so that $\Codeword(\alpha) = \sum_{j=1}^d b_{j} \Codeword_{j}(\alpha) = \InnerProduct{\Codeword}{\phi_{\alpha}} = \sum_{i=1}^{\ListSize} a_{i} \InnerProduct{\Codeword}{\phi_{\alpha_{i}}} = \sum_{i=1}^{\ListSize} a_{i} \Codeword(\alpha_{i})$.

\end{itemize}
Thus, the negation of \condref{cond:lindep} is equivalent to $\phi_{\alpha} \notin \Span(\phi_{\alpha_{1}}, \dots, \phi_{\alpha_{\ListSize}})$, which we now assume to prove the second part of the claim, as follows.

Let $\Phi \in \Field(\CodeIdx)^{\ListSize \times d}$ be the matrix whose rows are $\phi_{\alpha_{1}}, \dots, \phi_{\alpha_{\ListSize}}$, and let $\Codeword'_{1}, \dots, \Codeword'_{k}$ be a basis for $\Phi$'s nullspace. Let $\Phi'$ be the matrix $\Phi$ augmented with the row $\phi_{\alpha}$. Note that $\rank(\Phi') = \rank(\Phi) + 1$, so the nullspace of $\Phi'$ has dimension $k - 1$, which implies that there exists $j \in \{1, \dots, k\}$ such that $\InnerProduct{\Codeword'_{j}}{\phi_{\alpha}} \neq 0$. Also note that, for every $\Codeword \in \Code_{n}$ such that $\Codeword(\alpha_{1}) = \beta_{1}, \dots, \Codeword(\alpha_{\ListSize}) = \beta_{\ListSize}$ and $r \in \Field(\CodeIdx)$, the codeword $\Codeword + r \Codeword'_{j}$ satisfies the same equations as $\Codeword$ does. Therefore, if $\Codeword$ is drawn uniformly randomly from $\Code_{n}$ such that $\Codeword(\alpha_{1}) = \beta_{1}, \dots, \Codeword(\alpha_{\ListSize}) = \beta_{\ListSize}$, then $\Codeword + r \Codeword'_{j}$ for $r$ uniformly random in $\Field(\CodeIdx)$ is identically distributed to $\Codeword$. We conclude that $\Pr[\Codeword(\alpha) = \beta] = \Pr[(\Codeword + r \Codeword'_{j})(\alpha) = \beta] = \Pr[r = \frac{\beta - \InnerProduct{\Codeword}{\phi_{\alpha}}}{\InnerProduct{\Codeword'_{j}}{\phi_{\alpha}}}] = \frac{1}{\SetCardinality{\Field(\CodeIdx)}}$, since $\InnerProduct{\Codeword'_{j}}{\phi_{\alpha}} \neq 0$.
\end{proof}
}
\end{mdframed}
By the definition of constraint detection, $a_{1}, \dots, a_{\ell}$ as above exist if and only if there exists $z$ in the space output by the constraint detector such that $z(\alpha) = 1$. If the constraint detector outputs $z_{1}, \dots, z_{d}$ such that $z_{i}(\alpha) = 0$ for all $i$, then clearly the space contains no such vector. Otherwise, let $j$ be such that $z_{j}(\alpha) \neq 0$; then $a_{i} = -z_{j}(\alpha_{i})/z_{j}(\alpha)$ for $i = 1, \dots, \ListSize$ is a solution. Hence this distribution equals that of $\CodeSimAlgorithm$'s output, and moreover fully describes the probability distribution of $\Codeword(\alpha)$. The lemma follows.

\doclearpage
\section{Proof of \lemref{lem:bsrs-dual-char}}
\label{sec:proof-of-bsrs-dual-char}

By \clmref{claim:bsrs-dual-of-projection}, it suffices to show an algorithm that computes a basis of $\Puncture{(\Dual{\Code_{\CodeIdx}})}{\IdxSet}$ in $\poly(\BitSize{\CodeIdx} + \SetCardinality{\IdxSet})$ time. So consider the algorithm that, on input an index $\CodeIdx$ and subset $\IdxSet \subseteq \Domain(\CodeIdx)$, works as follows. First, invoke the hypothesis to compute the set $W$; since vectors are represented sparsely we conclude that $\SetCardinality{W},\SetCardinality{\Support{W}} \leq \poly(\BitSize{\CodeIdx} + \SetCardinality{\IdxSet})$. (Recall that $\Support{W} \DefineEqual \cup_{\OtherCodeword \in W} \Support{\OtherCodeword}$.) We may assume $W$ is linearly independent; otherwise, make it thus via Gaussian elimination which runs in time $\poly(\SetCardinality{W}+\SetCardinality{\Support{W}})$. Similarly, the bound on $\SetCardinality{W}$ and $\SetCardinality{\Support{W}}$ implies that a basis $W'$ for the subspace $\Puncture{W}{\IdxSet}$ can be found in time $\poly(\BitSize{\CodeIdx} + \SetCardinality{\IdxSet})$, and we let $W'$ be the output of our algorithm.

To argue correctness it suffices to show that $\Span(W')=\Puncture{(\Dual{\Code_{\CodeIdx}})}{\IdxSet}$. We first argue $\Span(W') \subseteq \Puncture{(\Dual{\Code_{\CodeIdx}})}{\IdxSet}$, so let $\OtherCodeword' \in \Span(W')$, which can be represented as $\OtherCodeword' = \sum_{\lambda \in \Lambda} a_{\lambda} \sum_{\OtherCodeword \in W} \lambda(\OtherCodeword) \cdot \OtherCodeword$; note that $\OtherCodeword' \in \Span(W) \subseteq \Dual{\Code_{\CodeIdx}}$ and $\Support{\OtherCodeword'} \subseteq \Support{W} = \IdxSet \cup \Complement{\IdxSet} $. Hence, it suffices to show that $\Support{\OtherCodeword'} \cap \Complement{\IdxSet} = \emptyset$; but this is true by the choice of $\Lambda$, because $M \cdot \lambda = 0$ for every $\lambda \in \Lambda$, so that $\sum_{\OtherCodeword \in W} \lambda(\Codeword) \cdot \OtherCodeword(\alpha) = 0$ for every $\alpha \in \Complement{\IdxSet}$ (by $M$'s definition), so that $\OtherCodeword'(\alpha) = \sum_{\lambda \in \Lambda} a_{\lambda} \sum_{\OtherCodeword \in W} \lambda(\OtherCodeword) \cdot \OtherCodeword(\alpha) = 0$ for every $\alpha \in \Complement{\IdxSet}$, as required.

We next argue that $\Span(W') \supseteq \Puncture{(\Dual{\Code_{\CodeIdx}})}{\IdxSet}$, and for this it suffices to show that any $\Codeword \in \Span(W)$ having representation $\Codeword = \sum_{\OtherCodeword \in W} a_{\OtherCodeword} \cdot \OtherCodeword$ such that $\vec{a} \DefineEqual \left( a_{\OtherCodeword} \right)_{\OtherCodeword \in W} \notin \Span(\Lambda)$ can not be in $\Puncture{(\Dual{\Code_{\CodeIdx}})}{\IdxSet}$. This follows by the definition of $\Lambda$, because for any $\vec{a} \notin \Span(\Lambda)$ there exists $\alpha \in \Complement{\IdxSet}$ such that $\Codeword(\alpha) = \sum_{\OtherCodeword \in W} a_{\OtherCodeword} \cdot \OtherCodeword(\alpha) \neq 0$, so that $\Codeword \notin \Puncture{(\Dual{\Code_{\CodeIdx}})}{\IdxSet}$.

\doclearpage
\section{Proof of \lemref{lem:polydep-det}}
\label{sec:proof-of-deterministic-algorithm}

For completeness, we give an elementary proof of \lemref{lem:polydep-det}, by simplifying the proof of \cite[Thm.~10]{Kayal10} for polynomials of the form we require; note that \cite{RazS05} and \cite{BogdanovW04} also use similar techniques. We first introduce some notation. We consider a polynomial $\Poly \in \PolynomialRingIndOne{\Field}{\SCVars}{\VariableX}{\SCDegree}$ equivalently as a univariate polynomial of degree less than $\SCDegree$ in $\VariableX_{1}$ with coefficients in $ \PolynomialRingIndTwo{\Field}{\SCVars}{\VariableX}{\SCDegree}$, and let $\partial^{j}_{1} \Poly$ be the coefficient of $\VariableX_{1}^{j}$ in this representation. Define $\partial^{j}_{1} \vec{\Poly} \DefineEqual (\partial^{j}_{1} \Poly_{1}, \dots, \partial^{j}_{1} \Poly_{\ListSize})$. In general, given an arbitrary arithmetic circuit representing a polynomial $\Poly$, it is not clear how to efficiently compute a circuit representing $\partial^{j}_{1} \Poly$, because $\Poly$ may have exponentially many monomials. Nevertheless, for circuits of the required form, this computation is trivial.

\begin{mdframed}
{\small
\begin{uclaim}
Let $\vec{\Poly} \DefineEqual (\Poly_{1},\dots,\Poly_{\ListSize})$ be a vector of polynomials in $\PolynomialRingIndOne{\Field}{\SCVars}{\VariableX}{\SCDegree}$. If $\SCDegree \leq \SetCardinality{\Field}$ then $\Dual{\vec{\Poly}} = \bigcap_{j=0}^{\SCDegree-1} \Dual{(\partial^{j}_{1} \vec{\Poly})}$.
\end{uclaim}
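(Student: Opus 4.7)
My plan is to prove the two inclusions of the claimed equality separately, with both following from the fact that $\partial^j_1$ is an $\Field$-linear operator on $\PolynomialRingIndOne{\Field}{\SCVars}{\VariableX}{\SCDegree}$ together with the fact that every polynomial $\Poly$ in that ring admits the unique representation $\Poly = \sum_{j=0}^{\SCDegree-1} \VariableX_1^j \cdot \partial^j_1 \Poly$, where the $\partial^j_1 \Poly$ lie in $\PolynomialRingIndTwo{\Field}{\SCVars}{\VariableX}{\SCDegree}$. Observe that in fact the hypothesis $\SCDegree \leq \SetCardinality{\Field}$ is not needed for this particular claim (it is used elsewhere in the proof of \lemref{lem:polydep-det}, to ensure non-vanishing of interpolation denominators when using the claim recursively on the reduced polynomials $\partial^j_1 \vec{\Poly}$).

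For the forward inclusion $\Dual{\vec{\Poly}} \subseteq \bigcap_{j=0}^{\SCDegree-1} \Dual{(\partial^j_1 \vec{\Poly})}$, I take $(a_1,\dots,a_\ListSize) \in \Dual{\vec{\Poly}}$, so that $\sum_{k=1}^\ListSize a_k \Poly_k \equiv 0$. By $\Field$-linearity of the coefficient-extraction operator $\partial^j_1$, for every $j \in \{0,\dots,\SCDegree-1\}$ we have $\sum_{k=1}^\ListSize a_k \partial^j_1 \Poly_k = \partial^j_1 \left( \sum_{k=1}^\ListSize a_k \Poly_k \right) \equiv 0$, so $(a_1,\dots,a_\ListSize) \in \Dual{(\partial^j_1 \vec{\Poly})}$ for each $j$.

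For the reverse inclusion, I take $(a_1,\dots,a_\ListSize)$ lying in the intersection, so that $\sum_{k=1}^\ListSize a_k \partial^j_1 \Poly_k \equiv 0$ for every $j \in \{0,\dots,\SCDegree-1\}$. Using that each $\Poly_k$ has individual degree less than $\SCDegree$ in $\VariableX_1$, I write $\Poly_k = \sum_{j=0}^{\SCDegree-1} \VariableX_1^j \cdot \partial^j_1 \Poly_k$ and then compute
\begin{equation*}
\sum_{k=1}^\ListSize a_k \Poly_k \;=\; \sum_{k=1}^\ListSize a_k \sum_{j=0}^{\SCDegree-1} \VariableX_1^j \cdot \partial^j_1 \Poly_k \;=\; \sum_{j=0}^{\SCDegree-1} \VariableX_1^j \cdot \left( \sum_{k=1}^\ListSize a_k \partial^j_1 \Poly_k \right) \;\equiv\; 0,
\end{equation*}
so $(a_1,\dots,a_\ListSize) \in \Dual{\vec{\Poly}}$, completing the proof.

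There is no real obstacle here: the claim is essentially a restatement of the fact that a polynomial in $\VariableX_1$ (with coefficients in $\PolynomialRingIndTwo{\Field}{\SCVars}{\VariableX}{\SCDegree}$) is identically zero if and only if each of its coefficients is. The only subtlety worth flagging is that the two sides refer to duals of different objects ($\vec{\Poly}$ has entries in the bigger ring, while $\partial^j_1 \vec{\Poly}$ has entries in the ring with one fewer variable), but both duals are subspaces of the same ambient space $\Field^\ListSize$, so the intersection on the right-hand side is well-defined and the set equality makes sense.
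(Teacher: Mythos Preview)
Your proof is correct and follows essentially the same approach as the paper's own argument: both directions are obtained from the linearity of $\partial_1^j$ together with the decomposition $\Poly = \sum_{j=0}^{\SCDegree-1} \VariableX_1^j\,\partial_1^j \Poly$. The only minor difference is your side remark that the hypothesis $\SCDegree \le \SetCardinality{\Field}$ is not needed here; the paper invokes it to justify that ``$\Poly\equiv 0$ iff all coefficients vanish,'' which is automatic under the formal-polynomial reading you (correctly) adopt, so your observation is sound but orthogonal to the argument itself.
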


\begin{proof}
When $\SCDegree \leq \SetCardinality{\Field}$, $\Poly \in \PolynomialRingIndOne{\Field}{\SCVars}{\VariableX}{\SCDegree} \equiv 0$ if and only if all of its coefficients are zero when written as a formal sum. Then one direction of the set equality follows straightforwardly from the linearity of $\partial^{j}_{1}$, namely, $\Dual{\vec{\Poly}} \subseteq \bigcap_{j=0}^{\SCDegree-1} \Dual{(\partial^{j}_{1} \vec{\Poly})}$. For the other direction, we argue as follows. Fix some $(a_{1}, \dots, a_{\ListSize}) \in \bigcap_{j=0}^{\SCDegree-1} \Dual{(\partial^{j}_{1} \vec{\Poly})}$ and let $\OtherPoly \DefineEqual \sum_{k=1}^{\ListSize} a_{k} \Poly_{k}$; we have that $\partial^{j}_{1} \OtherPoly \equiv 0$ for all $j \in \{0, \dots, \SCDegree-1\}$, by linearity. But $\OtherPoly = \sum_{j=0}^{d-1} (\partial^{j}_{1} \OtherPoly) \VariableX_{1}^{j}$ by definition, so $\OtherPoly \equiv 0$, and thus $(a_{1}, \dots, a_{\ListSize}) \in \Dual{\vec{\Poly}}$.
\end{proof}
}
\end{mdframed}
Thus to compute a basis of $\Dual{\vec{\Poly}}$ it suffices to compute the intersection of the bases of $\Dual{(\partial^{j}_{1} \vec{\Poly})}$ for all $j \in \{0, \dots, d-1\}$. The naive approach yields an exponential-time algorithm since we reduce the problem to $d$ subproblems of roughly the same size. Observe, however, that for $\Poly_{k}$ of the specified form,
\begin{equation*}
 \partial^{j}_{1} \Poly_{k}
=  c_{k,j} \OtherPoly_{k}
\quad
\text{ where }
\OtherPoly_{k} \DefineEqual \left( \prod_{i=2}^{\SCVars} \Poly_{k,i}(\VariableX_{i}) \right)
\enspace,
\end{equation*}
for constants $c_{k,j}$ computable in time $\poly(\CircuitSize)$. Let $\vec{\OtherPoly} \DefineEqual (\OtherPoly_{1}, \dots, \OtherPoly_{\ListSize})$ and let $\Dual{\OtherPoly} \in \Field^{\ListSize \times b}$ be a basis for $\Dual{\vec{\OtherPoly}}$; note that $b \leq \ListSize$. Let $\vec{a} \DefineEqual (a_{1}, \dots, a_{\ListSize}) \in \Field^{\ListSize}$, and observe that for each $j$, $\sum_{k=1}^{\ListSize} a_{k} \partial^{j}_{1} \Poly_{k} \equiv 0$ if and only if $\sum_{k=1}^{\ListSize} a_{k} c_{k,j} \OtherPoly_{k} \equiv 0$, or equivalently, $(a_{1} c_{1,j}, \dots, a_{\ListSize} c_{\ListSize,j}) \in \Dual{\vec{\OtherPoly}}$. Hence $(a_{1}, \dots, a_{\ListSize}) \in \bigcap_{j=0}^{\SCDegree} \Dual{(\partial^{j}_{1} \vec{\Poly})}$ if and only if for each $j$ there exists $\vec{v_{j}} \in \Field^{b}$ such that $\Dual{\OtherPoly} \vec{v_{j}} = (a_{1} c_{1,j}, \dots, a_{\ListSize} c_{\ListSize,j})$. This is a system of linear equations in $\vec{a}, \vec{v_{0}}, \dots, \vec{v_{\SCDegree-1}}$ of size $\poly(\ListSize + \SCDegree + b)$, and hence we can compute a basis for its solution space in time $\poly(\log \SetCardinality{\Field} + \SCDegree + \ListSize + b)$. Restricting this basis to $\vec{a}$ yields a basis for $\Dual{\vec{\Poly}}$.

If $\Poly_{1}, \dots, \Poly_n$ are univariate then we can easily determine a basis for $\Dual{\vec{\Poly}}$ in deterministic polynomial time (by Gaussian elimination). Otherwise, if the $\Poly_{i}$ are $\SCVars$-variate, we use the procedure above to reduce computing $\Dual{\vec{\Poly}}$ to computing $\Dual{\vec{\OtherPoly}}$ for some $\vec{\OtherPoly} = (\OtherPoly_{1}, \dots, \OtherPoly_{\ListSize})$ where the $\OtherPoly_{i}$ are $(m-1)$-variate. This algorithm terminates in time $\poly(\log \SetCardinality{\Field} + \SCVars + \SCDegree + \CircuitSize + \ListSize)$.

\doclearpage
\section{Proof of \clmref{claim:bsrs-independent-dual}}
\label{sec:proof-of-independent-dual}

First we show that $\Span(\cup_{j \in J} \Dual{\CCode_{j}}) \subseteq \Puncture{(\Dual{\Code})}{\left( \cup_{j \in J} \CDomain_{j} \right)}$. For every $j \in J$ and $\OtherCodeword \in \Dual{\CCode_{j}}$, it holds that $\Support{\OtherCodeword} \subseteq \CDomain_{j}$; therefore, for every $\OtherCodeword \in \Span(\cup_{j \in J} \Dual{\CCode_{j}})$, it holds that $\Support{\OtherCodeword} \subseteq \cup_{j \in J} \CDomain_{j}$; thus it is suffices to show that, for every $\OtherCodeword \in \cup_{j \in J} \Dual{\CCode_{j}}$ and $\Codeword \in \Code$, it holds that $\InnerProduct{\Codeword}{\OtherCodeword} = 0$. But this holds because for every $\OtherCodeword \in \cup_{j \in J} \Dual{\CCode_{j}}$ there exists $j \in J$ such that $\OtherCodeword \in\Dual{\CCode_{j}}$ and $\Restrict{\Code}{\CDomain_{j}} = \CCode_{j}$ so that $\InnerProduct{\Codeword}{\OtherCodeword} = \InnerProduct{\Restrict{\Codeword}{\CDomain_{j}}}{\OtherCodeword} = 0$, as required.

Next we show that $\Span(\cup_{j \in J} \Dual{\CCode_{j}}) \supseteq \Puncture{(\Dual{\Code})}{\left( \cup_{j \in J} \CDomain_{j} \right)}$, which is equivalent to $\Span(\cup_{j \in J} \Dual{\CCode_{j}}) \supseteq \Dual{(\Restrict{\Code}{\cup_{j \in J} \CDomain_{j}})}$ by \clmref{claim:bsrs-dual-of-projection}. Recall that for any two linear spaces $U,V$ it holds that $U \subseteq V$ if and only if $\Dual{U} \supseteq \Dual{V}$, thus it is sufficient to show that $\Dual{\Span(\cup_{j \in J} \Dual{\CCode_{j}})} \subseteq \Restrict{\Code}{\cup_{j \in J} \CDomain_{j}}$, i.e., that every $\Codeword\in \Dual{\Span(\cup_{j \in J} \Dual{\CCode_{j}})}$ can be extended to $w'\in\Code$. This latter statement holds because $\Restrict{\Dual{\Span(\cup_{j \in J} \Dual{\CCode_{j}})}}{\CDomain_{j}} \subseteq \Dual{(\Dual{\CCode_{j}})} = \CCode_{j}$ for every $j \in J$, and thus $\Restrict{\Codeword}{\CDomain_{j}}\in \CCode_{j}$. Recalling $\SetCardinality{J} \leq \IdpParam$ implies, by \defref{def:bsrs-local-cover}, that $\Codeword$ can be extended to a codeword $\Codeword' \in \Code$, as claimed.

\doclearpage
\section{Definition of the linear code family $\BSCode$}
\label{sec:bsrs-formal-definitions}

In this section we define the linear code family $\BSCode$, which consists of evaluations of univariate polynomials concatenated with corresponding BS proximity proofs \cite{BS08}. The definition is quite technical, and we refer the interested reader to \cite{BS08} for a discussion of why it enables proximity testing. We begin with notation used later.

\begin{definition}
\label{def:bsrs-basic-notation}
Given a field $\Field$, a subfield $\SubField \subseteq \Field$, a $\SubField$-linear space $\BSSpace \subseteq \Field$ with a basis $(b_{1},b_{2},\dots,b_{\BSSpaceDimSuperLocal})$, a positive integer $\BSBalance$, and a positive integer $\BSBaseDim > 2\BSBalance$, we make the following definitions.
\begin{itemize}

\item Four subspaces of $\BSSpace$ and a subset of $\BSSpace$:
\begin{align*}
\BSSpace_{0}[\SubField,\Field,\BSSpace,\BSBalance]
  \DefineEqual& \Span_\SubField(b_{1},b_2,\dots, b_{\lfloor \BSSpaceDimSuperLocal/2 \rfloor}) \\
\BSSpace'_{0}[\SubField,\Field,\BSSpace,\BSBalance]
  \DefineEqual& \Span_\SubField(b_{1},b_2,\dots, b_{\lfloor \BSSpaceDimSuperLocal/2 \rfloor + \BSBalance - 1}) \\
\BSSpace_{1}[\SubField,\Field,\BSSpace,\BSBalance]
  \DefineEqual& \Span_\SubField(b_{\lfloor \BSSpaceDimSuperLocal/2 \rfloor + 1}, \dots , b_\BSSpaceDimSuperLocal) \\
\forall\, \beta \in \BSSpace_{1}[\SubField,\Field,\BSSpace,\BSBalance]\,,\, \BSSpace_\beta[\SubField,\Field,\BSSpace,\BSBalance]
  \DefineEqual& \Span_\SubField(b_{1},b_2,\dots, b_{\lfloor \BSSpaceDimSuperLocal/2 \rfloor + \BSBalance - 1}, \beta') \\
\forall\, \beta \in \BSSpace_{1}[\SubField,\Field,\BSSpace,\BSBalance]\,,\, R_\beta[\SubField,\Field,\BSSpace,\BSBalance]
  \DefineEqual& \BSSpace_\beta \setminus (\BSSpace_{0} + \beta)
\end{align*}
where $\beta' \DefineEqual b_{\lfloor \BSSpaceDimSuperLocal/2 \rfloor + \BSBalance}$ if $\beta \in \BSSpace'_{0}$ and $\beta' \DefineEqual \beta$ otherwise.

\item The vanishing polynomial of $\BSSpace_{0}$: $\VanishingPoly{\BSSpace_{0}}[\SubField,\Field,\BSSpace,\BSBalance](\VariableX) \DefineEqual \prod_{\alpha \in \BSSpace_{0}} (\VariableX - \alpha)$.

\item The following domains:
\begin{align*}
\MakeF{\Domain}[\SubField,\Field,\BSSpace,\BSBalance] \DefineEqual& \Set{ (\alpha ,\VanishingPoly{\BSSpace_{0}}(\beta)) : \beta \in \BSSpace_{1} , \alpha \in \BSSpace_\beta } \\
\MakeP{\Domain}[\SubField,\Field,\BSSpace,\BSBalance] \DefineEqual& \Set{ (\alpha ,\VanishingPoly{\BSSpace_{0}}(\beta)) : \beta \in \BSSpace_{1} , \alpha \in R_\beta } \\
\MakeBox{\Domain}[\SubField,\Field,\BSSpace,\BSBalance] \DefineEqual& \left( \Set{\SymbolRS} \times \BSSpace \right) \sqcup \left( \Set{\SymbolProx} \times \MakeP{\Domain} \right)
\end{align*}
where we use the symbols `$\SymbolRS$' and `$\SymbolProx$' to distinguish different parts of the disjoint union.

\item The bijection $\phi[\SubField,\Field,\BSSpace,\BSBalance] \colon \MakeF{\Domain} \to \MakeBox{\Domain}$ is defined by
$\phi(\alpha, \beta) \DefineEqual
\begin{cases}
(\SymbolProx,(\alpha,\beta)) & (\alpha, \beta) \in \MakeP{\Domain} \\
(\SymbolRS,\alpha) & \text{otherwise}
\end{cases}
$.

\item Given $\Codeword \in \Field^{\MakeBox{\Domain}[\SubField,\Field,\BSSpace,\BSBalance]}$, the bivariate function $f_{\Codeword} \colon \MakeF{\Domain}[\SubField,\Field,\BSSpace,\BSBalance] \to \Field$ is defined by $f_{\Codeword}(\alpha,\beta) \DefineEqual \Codeword(\phi(\alpha,\beta))$.

\item The fractional degree $\rho[\SubField,\Field,\BSBalance] \DefineEqual \SetCardinality{\SubField}^{-\BSBalance}$.

\item The domain $\MakeProx{\Domain^{\BSCode}}[\SubField,\Field,\BSSpace,\BSBalance,\BSBaseDim]$ implied by the recursion below:
\begin{itemize}

\item if $\Dimension{\BSSpace} \leq \BSBaseDim$ then
$\MakeProx{\Domain^{\BSCode}}[\SubField,\Field,\BSSpace,\BSBalance,\BSBaseDim] \DefineEqual \MakeP{\Domain}[\SubField,\Field,\BSSpace,\BSBalance]$;

\item if $\Dimension{\BSSpace} > \BSBaseDim$ then
\begin{align*}
\MakeProx{\Domain^{\BSCode}}[\SubField,\Field,\BSSpace,\BSBalance,\BSBaseDim]
\DefineEqual \MakeP{\Domain}[\SubField,\Field,\BSSpace,\BSBalance]
\;\bigsqcup\;&
\Big( \sqcup_{\alpha \in \BSSpace_{0}'} \Set{(\SymbolCol,\alpha)} \times \MakeProx{\Domain^{\BSCode}}[\SubField,\Field,\VanishingPoly{\BSSpace_{0}}(\BSSpace_{1}),\BSBalance,\BSBaseDim] \Big) \\
\;\bigsqcup\;&
\Big( \sqcup_{\beta \in \BSSpace_{1}} \Set{(\SymbolRow,\beta)} \times \MakeProx{\Domain^{\BSCode}}[\SubField,\Field,\BSSpace_\beta,\BSBalance,\BSBaseDim] \Big)
\enspace.
\end{align*}
where we use the symbols `$\SymbolCol$' and `$\SymbolRow$' to distinguish different parts of the disjoint union.

\end{itemize}

  \item The domain $\Domain^{\BSCode}[\SubField,\Field,\BSSpace,\BSBalance,\BSBaseDim] \DefineEqual \left( \Set{\SymbolRS} \times \BSSpace \right) \sqcup \left( \Set{\SymbolProx} \times \MakeProx{\Domain^{\BSCode}}[\SubField,\Field,\BSSpace,\BSBalance,\BSBaseDim] \right)$.

  \item Given $\alpha \in \BSSpace'_{0}$, the embedding
  $\phi_{\SymbolCol,\alpha} \colon \Domain^{\BSCode}[\SubField,\Field,\VanishingPoly{\BSSpace_{0}}(\BSSpace_{1}),\BSBalance,\BSBaseDim] \hookrightarrow \Domain^{\BSCode}[\SubField,\Field,\BSSpace,\BSBalance,\BSBaseDim]$ is defined by
  \begin{equation*}
  \phi_{\SymbolCol,\alpha}(x) \DefineEqual
  \begin{cases}
  (\SymbolProx,((\SymbolCol,\alpha),x)) & x \in \Set{\SymbolProx} \times \MakeProx{\Domain^{\BSCode}}[\SubField,\Field,\VanishingPoly{\BSSpace_{0}}(\BSSpace_{1}),\BSBalance,\BSBaseDim] \\
  \phi(\alpha,\beta) & x = (\Set{\SymbolRS},\VanishingPoly{\BSSpace_{0}}(\beta))
  \end{cases}
  \end{equation*}
  We denote by $\Domain_{\SymbolCol,\alpha}$ the image of $\phi_{\SymbolCol,\alpha}$.

  \item Given $\beta \in \BSSpace_{1}$, the embedding
  $\phi_{\SymbolRow,\beta} \colon \Domain^{\BSCode}[\SubField,\Field,\BSSpace_{\beta},\BSBalance,\BSBaseDim] \hookrightarrow \Domain^{\BSCode}[\SubField,\Field,\BSSpace,\BSBalance,\BSBaseDim]$ is defined by
  \begin{equation*}
  \phi_{\SymbolRow,\beta}(x) \DefineEqual
  \begin{cases}
  (\SymbolProx,((\SymbolRow,\beta),x)) & x \in \Set{\SymbolProx} \times \MakeProx{\Domain^{\BSCode}}[\SubField,\Field,\BSSpace_{\beta},\BSBalance,\BSBaseDim] \\
  \phi(\alpha,\beta) & x = (\Set{\SymbolRS},\alpha)
  \end{cases}
  \end{equation*}
  We denote by $\Domain_{\SymbolRow,\beta}$ the image of $\phi_{\SymbolRow,\beta}$.

  \item Given $\alpha \in \BSSpace'_{0}$,
  $\psi_{\SymbolCol,\alpha} \colon \Field^{\Domain^{\BSCode}[\SubField,\Field,\BSSpace,\BSBalance,\BSBaseDim]} \to \Field^{\Domain^{\BSCode}[\SubField,\Field,\VanishingPoly{\BSSpace_{0}}(\BSSpace_{1}),\BSBalance,\BSBaseDim]}$
  is the projection of $\Domain^{\BSCode}[\SubField,\Field,\BSSpace,\BSBalance,\BSBaseDim]$
  on
  $\Domain_{\SymbolCol,\alpha}$ with indices renamed to elements of
  $\Domain^{\BSCode}[\SubField,\Field,\VanishingPoly{\BSSpace_{0}}(\BSSpace_{1}),\BSBalance,\BSBaseDim]$.
  Formally, $\psi_{\SymbolCol,\alpha}(\Codeword) = \Codeword'$ if and only if
  $\Codeword'(\phi_{\SymbolCol,\alpha}(x)) = \Codeword(x)$ for all
  $x \in \Domain^{\BSCode}[\SubField,\Field,\VanishingPoly{\BSSpace_{0}}(\BSSpace_{1}),\BSBalance,\BSBaseDim]$.

  \item Given $\beta \in \BSSpace_{1}$,
  $\psi_{\SymbolRow,\beta} \colon \Field^{\Domain^{\BSCode}[\SubField,\Field,\BSSpace,\BSBalance,\BSBaseDim]} \to \Field^{\Domain^{\BSCode}[\SubField,\Field,\BSSpace_{\beta},\BSBalance,\BSBaseDim]}$
  is the projection of $\Domain^{\BSCode}[\SubField,\Field,\BSSpace,\BSBalance,\BSBaseDim]$
  on
  $\Domain_{\SymbolRow,\beta}$ with indices renamed to elements of
  $\Domain^{\BSCode}[\SubField,\Field,\BSSpace_{\beta},\BSBalance,\BSBaseDim]$.
  Formally, $\psi_{\SymbolRow,\beta}(\Codeword) = \Codeword'$ if and only if
  $\Codeword'(\phi_{\SymbolRow,\beta}(x)) = \Codeword(x)$ for all
  $x \in \Domain^{\BSCode}[\SubField,\Field,\BSSpace_{\beta},\BSBalance,\BSBaseDim]$.

\end{itemize}
\end{definition}

The following definition considers a code that extends the evaluation of a univariate polynomial with a bivariate function that represents the polynomial over a specially-chosen set.

\begin{definition}[$\RSBox$]
\label{def:bs-proof-of-proximity}
Given a field $\Field$, a subfield $\SubField \subseteq \Field$, a $\SubField$-linear space $\BSSpace \subseteq \Field$, and a positive integer $\BSBalance$, the code $\RSBox[\SubField,\Field,\BSSpace,\BSBalance]$ consists of all $\Codeword \in \Field^{\MakeBox{\Domain}[\SubField,\Field,\BSSpace,\BSBalance]}$ such that $f_{\Codeword} \colon \MakeF{\Domain} \to \Field$ is an evaluation of a low degree polynomial: there exists a polynomial $g \in \Field[X,Y]$ such that:
\begin{inparaenum}[(i)]
\item $\IndividualDegree{g}[X] < \SetCardinality{\BSSpace_{0}}$,
\item $\IndividualDegree{g}[Y] < \SetCardinality{\BSSpace_{1}} \cdot \rho[\SubField,\Field,\BSBalance]$,
\item $\Restrict{g}{\MakeF{\Domain}} = f_{\Codeword}$.
\end{inparaenum}
\end{definition}

\noindent
Ben-Sasson and Sudhan \cite{BS08} show that:
\begin{itemize}

  \item $v \in \RSCode{\Field}{\BSSpace}{\SetCardinality{\BSSpace} \cdot \rho}$ if and only if there exists $\Codeword \in \RSBox[\SubField,\Field,\BSSpace,\BSBalance]$ such that $\Restrict{\Codeword}{\Set{\SymbolRS} \times \BSSpace} = v$;

  \item $\Codeword \in \RSBox[\SubField,\Field,\BSSpace,\BSBalance]$ if and only if
\begin{itemize}[nolistsep]

\item for every $\alpha \in \BSSpace'_{0}$, $\Restrict{f_{\Codeword}}{\Set{\alpha} \times \VanishingPoly{\BSSpace_{0}}(\BSSpace_{1})} \in \RSCode{\Field}{\VanishingPoly{\BSSpace_{0}}(\BSSpace_{1})}{\SetCardinality{\BSSpace_{1}} \cdot \rho}$ (with the standard mapping between domains) and

\item for every $\beta \in \BSSpace_{1}$, $\Restrict{f_{\Codeword}}{\BSSpace_\beta \times \Set{\VanishingPoly{\BSSpace_{0}}(\beta)}} \in \RSCode{\Field}{\BSSpace_\beta}{\SetCardinality{\BSSpace_{0}}}$ (with the standard mapping between domains).

\end{itemize}
\end{itemize}
The above equivalences illustrate the `quadratic reduction' from testing that $\Codeword \in \Field^{\BSSpace}$ is a codeword of $\RSCode{\Field}{\BSSpace}{\SetCardinality{\BSSpace}\cdot \rho}$ to a set of $\Theta(\sqrt{\SetCardinality{\BSSpace}})$ problems of testing membership in codes of the form $\RSCode{\Field}{\BSSpace'}{\RSDegree'}$ with $\SetCardinality{\BSSpace'},\RSDegree' = \Theta(\sqrt{\SetCardinality{\BSSpace}})$.

The code from \defref{def:bs-proof-of-proximity} corresponds to one step of the recursive construction of \cite{BS08}. We now build on that definition, and recursively define the linear code family $\BSCode$.

\begin{definition}[$\BSCode$]
\label{def:bsrs-family}
Given a field $\Field$, a subfield $\SubField \subseteq \Field$, a $\SubField$-linear space $\BSSpace \subseteq \Field$, a positive integer $\BSBalance$, and a positive integer $\BSBaseDim > 2\BSBalance$, the code $\BSCode[\SubField,\Field,\BSSpace,\BSBalance,\BSBaseDim]$ consists of all words $\Codeword \in \Field^{\Domain^{\BSCode}[\SubField,\Field,\BSSpace,\BSBalance,\BSBaseDim]}$ satisfying the following. If $\Dimension{\BSSpace} \leq \BSBaseDim$ then $\Codeword \in \RSBox[\SubField,\Field,\BSSpace,\BSBalance]$. If $\Dimension{\BSSpace} > \BSBaseDim$ the following holds:
\begin{inparaenum}[(1)]
\item for every $\alpha \in \BSSpace'_{0}$ there exists $\Codeword_{\alpha} \in \BSCode[\SubField,\Field,\VanishingPoly{\BSSpace_{0}}(\BSSpace_{1}),\BSBalance,\BSBaseDim]$ such that $\Codeword_{\alpha}(\phi_{\SymbolCol,\alpha}(x)) = \Codeword(x)$ for every $x \in \Domain[\SubField,\Field,\VanishingPoly{\BSSpace_{0}}(\BSSpace_{1}),\BSBalance,\BSBaseDim]$;
\item for every $\beta \in \BSSpace_{1}$ there exists $\Codeword_{\beta} \in \BSCode[\SubField,\Field,\BSSpace_{\beta},\BSBalance,\BSBaseDim]$ such that $\Codeword_{\beta}(\phi_{\SymbolRow,\beta}(x)) = \Codeword(x)$ for every $x \in \Domain[\SubField,\Field,\BSSpace_{\beta},\BSBalance,\BSBaseDim]$.
\end{inparaenum}
\end{definition}

We conclude this section with two claims about $\BSCode$ that we use in later sections. We omit the proof of the first claim (and refer the interested reader to \cite{BS08}), and prove the second claim based on the first one.

\begin{claim}
\label{claim:bsrs-code-cover-proof-defined-by-RS}
For every codeword $\Codeword \in \RSCode{\Field}{\BSSpace}{\SetCardinality{\BSSpace}\cdot \CodeRate}$, positive integer $\BSBalance$, and positive integer $\BSBaseDim > 2\BSBalance$, there exists a unique $\Proof_{\Codeword}$ such that $\Codeword \circ \Proof_{\Codeword} \in \BSCode[\SubField,\Field,\BSSpace,\BSBalance,\BSBaseDim]$.
\end{claim}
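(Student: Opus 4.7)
I would prove the claim by induction on $\Dimension{\BSSpace}$, simultaneously establishing existence and uniqueness of $\Proof_{\Codeword}$. The key algebraic input is a canonical bivariate expansion of the univariate polynomial underlying $\Codeword$: if $p \in \Field[\VariableX]$ is the unique polynomial of degree less than $\SetCardinality{\BSSpace}\cdot\CodeRate$ whose evaluation on $\BSSpace$ equals $\Codeword$, then division by $\VanishingPoly{\BSSpace_{0}}(\VariableX)$ yields a unique decomposition
\begin{equation*}
p(\VariableX) = \sum_{i=0}^{\SetCardinality{\BSSpace_{1}}\cdot\CodeRate - 1} q_{i}(\VariableX)\cdot \VanishingPoly{\BSSpace_{0}}(\VariableX)^{i}, \qquad \deg q_{i} < \SetCardinality{\BSSpace_{0}},
\end{equation*}
where the bound on the number of terms uses $\SetCardinality{\BSSpace} = \SetCardinality{\BSSpace_{0}}\cdot\SetCardinality{\BSSpace_{1}}$. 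I would then set $g(\VariableX,\VariableY) \DefineEqual \sum_{i} q_{i}(\VariableX)\VariableY^{i}$, which has $\IndividualDegree{g}[\VariableX]<\SetCardinality{\BSSpace_{0}}$ and $\IndividualDegree{g}[\VariableY] < \SetCardinality{\BSSpace_{1}}\cdot\CodeRate$. Since $\VanishingPoly{\BSSpace_{0}}$ is constant on each coset $\BSSpace_{0}+\beta$, we have $g(\alpha,\VanishingPoly{\BSSpace_{0}}(\beta)) = g(\alpha,\VanishingPoly{\BSSpace_{0}}(\alpha)) = p(\alpha)$ for every $\alpha \in \BSSpace_{0}+\beta$.

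\textbf{Base case ($\Dimension{\BSSpace}\leq \BSBaseDim$).} Here $\BSCode = \RSBox$, so one must find a unique extension of $\Codeword$ to a function on $\MakeBox{\Domain}$ whose pullback $f$ along $\phi$ is the restriction to $\MakeF{\Domain}$ of a bivariate polynomial of the above bidegree. For existence, define $\Proof_{\Codeword}$ by setting its value at $(\SymbolProx,(\alpha,\VanishingPoly{\BSSpace_{0}}(\beta)))$ to $g(\alpha,\VanishingPoly{\BSSpace_{0}}(\beta))$; by the identity above and the definition of $\phi$, this extension matches $f_{\Codeword\circ\Proof_{\Codeword}}$ and certifies membership in $\RSBox$. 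For uniqueness, any witnessing bivariate $\tilde g$ must, for each fixed $\beta \in \BSSpace_{1}$, agree with $p$ on the $\SetCardinality{\BSSpace_{0}}$ points of $\BSSpace_{0}+\beta$; since $\tilde g(\cdot,\VanishingPoly{\BSSpace_{0}}(\beta))$ has degree less than $\SetCardinality{\BSSpace_{0}}$, Lagrange interpolation forces $\tilde g(\cdot,\VanishingPoly{\BSSpace_{0}}(\beta))=g(\cdot,\VanishingPoly{\BSSpace_{0}}(\beta))$ as polynomials in $\VariableX$; evaluating at the $\SetCardinality{\BSSpace_{1}}$ distinct values of $Y$ in $\VanishingPoly{\BSSpace_{0}}(\BSSpace_{1})$, which exceeds $\IndividualDegree{\tilde g}[\VariableY]$, then gives $\tilde g = g$. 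Hence $\Proof_{\Codeword}$ is uniquely determined on $\{\SymbolProx\}\times\MakeP{\Domain}$.

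\textbf{Inductive step ($\Dimension{\BSSpace}>\BSBaseDim$).} Construct $g$ as before. For each $\alpha\in\BSSpace_{0}'$ the univariate restriction $g(\alpha,\VariableY)$ has degree less than $\SetCardinality{\BSSpace_{1}}\cdot\CodeRate$, so its evaluation on $\VanishingPoly{\BSSpace_{0}}(\BSSpace_{1})$ is a codeword $\Codeword_{\SymbolCol,\alpha}\in\RSCode{\Field}{\VanishingPoly{\BSSpace_{0}}(\BSSpace_{1})}{\SetCardinality{\BSSpace_{1}}\cdot\CodeRate}$; similarly, for each $\beta\in\BSSpace_{1}$, $g(\VariableX,\VanishingPoly{\BSSpace_{0}}(\beta))$ yields $\Codeword_{\SymbolRow,\beta}\in\RSCode{\Field}{\BSSpace_{\beta}}{\SetCardinality{\BSSpace_{0}}}$ (with matching rate since $\SetCardinality{\BSSpace_{\beta}}/\SetCardinality{\BSSpace_{0}} = \SetCardinality{\SubField}^{\BSBalance} = 1/\CodeRate$). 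By the inductive hypothesis applied to each of these strictly smaller-dimensional RS codewords, there exist unique proximity proofs $\Proof_{\Codeword_{\SymbolCol,\alpha}}$ and $\Proof_{\Codeword_{\SymbolRow,\beta}}$, respectively. I would then define $\Proof_{\Codeword}$ to consist of $g$'s values on $\{\SymbolProx\}\times\MakeP{\Domain}$ together with, under the embeddings $\phi_{\SymbolCol,\alpha}$ and $\phi_{\SymbolRow,\beta}$, the recursively defined proofs $\Codeword_{\SymbolCol,\alpha}\circ\Proof_{\Codeword_{\SymbolCol,\alpha}}$ and $\Codeword_{\SymbolRow,\beta}\circ\Proof_{\Codeword_{\SymbolRow,\beta}}$. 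Uniqueness propagates from the base-case argument (which pins down $g$, hence the $\Codeword_{\SymbolCol,\alpha}$ and $\Codeword_{\SymbolRow,\beta}$) together with the inductive uniqueness of the sub-proofs.

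\textbf{Main obstacle.} The substantive mathematical step is really the base case: the existence-and-uniqueness of the bivariate lift $g$, which rests on the fact that $\VanishingPoly{\BSSpace_{0}}$ is an $\SubField$-linear surjection onto $\VanishingPoly{\BSSpace_{0}}(\BSSpace_{1})$ whose fibers are exactly the cosets $\BSSpace_{0}+\beta$. The remaining difficulty is bookkeeping: one must carefully verify, using the definitions of $\phi$, $\phi_{\SymbolCol,\alpha}$, $\phi_{\SymbolRow,\beta}$, $\psi_{\SymbolCol,\alpha}$, and $\psi_{\SymbolRow,\beta}$, that the recursive sub-instance on a ``column'' or ``row'' really sees the RS codeword induced by $g$ (so that the inductive hypothesis applies), and that the disjoint-union structure of $\Domain^{\BSCode}$ makes the concatenation of sub-proofs well-defined and non-overlapping. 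Modulo this verification, the argument is a clean induction.
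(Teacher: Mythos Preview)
The paper does not prove this claim; it explicitly omits the proof and refers the reader to \cite{BS08}. Your inductive argument is exactly the standard Ben--Sasson--Sudan construction: the bivariate lift $g(\VariableX,\VariableY)=\sum_i q_i(\VariableX)\VariableY^i$ obtained from the $\VanishingPoly{\BSSpace_0}$-adic expansion of $p$, together with the observation that the cosets $\BSSpace_0+\beta$ already give enough interpolation points to pin down $g$ row-by-row and then column-by-column. So your approach is correct and matches what one would extract from \cite{BS08}; there is nothing to compare against in the present paper.

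One small comment on your uniqueness argument in the base case: you appeal to the $\SetCardinality{\BSSpace_0}$ points of $\BSSpace_0+\beta$ to determine $\tilde g(\cdot,\VanishingPoly{\BSSpace_0}(\beta))$, which is fine, but it is worth making explicit that $\VanishingPoly{\BSSpace_0}$ is injective on $\BSSpace_1$ (because $\BSSpace_0\cap\BSSpace_1=\{0\}$), so the $\SetCardinality{\BSSpace_1}$ values $\VanishingPoly{\BSSpace_0}(\beta)$ for $\beta\in\BSSpace_1$ really are distinct. This is implicit in your ``Main obstacle'' paragraph but deserves a sentence in the actual proof.
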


\begin{claim}
\label{claim:bsrs-code-cover-views}
The following two statements hold for the code $\BSCode[\SubField,\Field,\BSSpace,\BSBalance,\BSBaseDim]$:
\begin{itemize}[nolistsep]

\item for every $\alpha \in \BSSpace'_{0}$ and $\Codeword' \in \BSCode[\SubField,\Field,\VanishingPoly{\BSSpace_{0}}(\BSSpace_{1}),\BSBalance,\BSBaseDim]$ there exists $\Codeword \in \BSCode[\SubField,\Field,\BSSpace,\BSBalance,\BSBaseDim]$ such that
\mbox{$\psi_{\SymbolCol,\alpha}(\Codeword) = \Codeword'$;}

\item for every $\beta \in \BSSpace_{1}$ and $\Codeword' \in \BSCode[\SubField,\Field,\BSSpace_{\beta},\BSBalance,\BSBaseDim]$ there exists $\Codeword \in \BSCode[\SubField,\Field,\BSSpace,\BSBalance,\BSBaseDim]$ such that $\psi_{\SymbolRow,\beta}(\Codeword) = \Codeword'$.

\end{itemize}
\end{claim}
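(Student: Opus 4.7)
The plan is to leverage \clmref{claim:bsrs-code-cover-proof-defined-by-RS}, which says that a $\BSCode$ codeword is uniquely determined by its Reed--Solomon part. Consequently, to construct a $\Codeword \in \BSCode[\SubField,\Field,\BSSpace,\BSBalance,\BSBaseDim]$ it suffices to specify an RS codeword on $\BSSpace$ (equivalently, a univariate polynomial $p(Z)$ of degree less than $\SetCardinality{\BSSpace} \cdot \rho$ with $\rho \DefineEqual \SetCardinality{\SubField}^{-\BSBalance}$); by the recursive form of \defref{def:bsrs-family} the resulting $\Codeword$ will automatically have column and row projections lying in the appropriate smaller $\BSCode$, and matching RS parts will then force these projections to equal the target $\Codeword'$ via a second application of \clmref{claim:bsrs-code-cover-proof-defined-by-RS} inside the smaller code.

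For the column case, I would first extract the univariate RS polynomial $p'(Y)$ of $\Codeword'$, which has degree less than $\SetCardinality{\VanishingPoly{\BSSpace_{0}}(\BSSpace_{1})} \cdot \rho = \SetCardinality{\BSSpace_{1}} \cdot \rho$. I then take the bivariate polynomial $g(X,Y) \DefineEqual p'(Y)$, whose individual degrees $\IndividualDegree{g}[X] = 0 < \SetCardinality{\BSSpace_{0}}$ and $\IndividualDegree{g}[Y] < \SetCardinality{\BSSpace_{1}} \cdot \rho$ meet the requirements of \defref{def:bs-proof-of-proximity}, so $g$ defines an $\RSBox[\SubField,\Field,\BSSpace,\BSBalance]$ codeword and in particular an RS codeword on $\BSSpace$. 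Invoking \clmref{claim:bsrs-code-cover-proof-defined-by-RS} yields a unique $\Codeword \in \BSCode[\SubField,\Field,\BSSpace,\BSBalance,\BSBaseDim]$ with this RS part. To verify $\psi_{\SymbolCol,\alpha}(\Codeword) = \Codeword'$, I would evaluate $g$ along the column $\alpha$: for every $\beta \in \BSSpace_{1}$, $g(\alpha, \VanishingPoly{\BSSpace_{0}}(\beta)) = p'(\VanishingPoly{\BSSpace_{0}}(\beta))$, exactly matching the RS part of $\Codeword'$. Since by \defref{def:bsrs-family} the projection $\psi_{\SymbolCol,\alpha}(\Codeword)$ is itself in $\BSCode[\SubField,\Field,\VanishingPoly{\BSSpace_{0}}(\BSSpace_{1}),\BSBalance,\BSBaseDim]$ and agrees with $\Codeword'$ on the RS part, uniqueness in the smaller code forces $\psi_{\SymbolCol,\alpha}(\Codeword) = \Codeword'$.

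The row case is entirely symmetric: let $q'(X)$ be the RS polynomial of $\Codeword'$, of degree less than $\SetCardinality{\BSSpace_{\beta}} \cdot \rho = \SetCardinality{\BSSpace_{0}}$ (using $\SetCardinality{\BSSpace_{\beta}} = \SetCardinality{\BSSpace_{0}} \cdot \SetCardinality{\SubField}^{\BSBalance}$ together with $\rho = \SetCardinality{\SubField}^{-\BSBalance}$), and set $g(X,Y) \DefineEqual q'(X)$ so that $\IndividualDegree{g}[X] < \SetCardinality{\BSSpace_{0}}$ and $\IndividualDegree{g}[Y] = 0 < \SetCardinality{\BSSpace_{1}} \cdot \rho$; the remainder of the argument proceeds identically with the roles of $X$ and $Y$ swapped, using $g(\alpha, \VanishingPoly{\BSSpace_{0}}(\beta)) = q'(\alpha)$ for $\alpha \in \BSSpace_{\beta}$. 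The main obstacle I anticipate is not conceptual but notational: carefully navigating the multiple domain identifications $\phi$, $\phi_{\SymbolCol,\alpha}$, $\phi_{\SymbolRow,\beta}$, and the univariate-to-bivariate translation between $\RSCode{\Field}{\BSSpace}{\SetCardinality{\BSSpace}\rho}$ and $\RSBox$, to confirm that the ``extend by a constant direction'' trick indeed produces an element whose projection exactly recovers $\Codeword'$.
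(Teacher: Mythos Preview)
Your approach is correct and follows the same outline as the paper: reduce via \clmref{claim:bsrs-code-cover-proof-defined-by-RS} to exhibiting a bivariate $g$ of the right individual degrees whose restriction to the specified column (or row) recovers the RS part of $\Codeword'$, then use uniqueness in the smaller code to conclude. The only difference is that the paper states the existence of $g$ as a consequence of a general interpolation argument (the folklore claim in \appref{sec:proof-of-BSRS-recursive-cover-independence}), whereas you give the explicit and simpler choice $g(X,Y)=p'(Y)$ (resp.\ $g(X,Y)=q'(X)$), which directly verifies the degree bounds; this is a mild improvement in concreteness but not a different idea.
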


\begin{proof}
The proofs for the two statements are similar, so we only give the proof for the first statement.
Let $\Codeword' \in \BSCode[\SubField,\Field,\VanishingPoly{\BSSpace_{0}}(\BSSpace_{1}),\BSBalance,\BSBaseDim]$, and define $\Codeword_{\SymbolRS} \DefineEqual \Restrict{\Codeword'}{\Set{\SymbolRS} \times \VanishingPoly{\BSSpace_{0}}(\BSSpace_{1})}$; observe that $\Codeword_{\SymbolRS}$ in $\RSCode{\Field}{\VanishingPoly{\BSSpace_{0}}(\BSSpace_{1})}{\SetCardinality{\BSSpace_{1}} \cdot \CodeRate}$. By \clmref{claim:bsrs-code-cover-proof-defined-by-RS}, $\Codeword'$ is uniquely determined by $\Codeword_{\SymbolRS}$, thus it suffices to show that there exists $\MakeBox{\Codeword} \in \RSBox[\SubField,\Field,\BSSpace,\BSBalance]$ such that $\Restrict{f_{\MakeBox{\Codeword}}}{\Set{\alpha} \times \VanishingPoly{\BSSpace_{0}}(\BSSpace_{1})} = \Codeword_{\SymbolRS}$. By definition of $\RSBox$, it suffices to show that there exists a bivariate polynomial $g \in \Field[X,Y]$ such that:
\begin{inparaenum}[(i)]
\item $\IndividualDegree{g}[X] < \SetCardinality{\BSSpace_{0}}$,
\item $\IndividualDegree{g}[Y] < \SetCardinality{\BSSpace_{1}} \cdot \CodeRate$,
\item $\Restrict{g}{\Set{\alpha} \times \VanishingPoly{\BSSpace_{0}}(\BSSpace_{1})} = \Codeword_{\SymbolRS} \in \RSCode{\Field}{\VanishingPoly{\BSSpace_{0}}(\BSSpace_{1})}{\SetCardinality{\BSSpace_{1}} \cdot \CodeRate}$.
\end{inparaenum}
The existence of such $g$ follows by considering a suitable interpolating set (see, e.g., \appref{sec:proof-of-BSRS-recursive-cover-independence}).
\end{proof}

\clearpage
\section{Proof of \lemref{lem:bsrs-has-recursive-code-cover}}
\label{sec:bsrs-full-proof}

In this section we prove \lemref{lem:bsrs-has-recursive-code-cover}. In \appref{sec:bsrs-combinatorial} we define the recursive cover and prove its combinatorial properties; in \appref{sec:bsrs-complexity} we prove that a spanning set for the duals of codes in this cover can be computed efficiently; in \appref{sec:bsrs-conclusion}, we put these together to conclude the proof.

\subsection{The recursive cover and its combinatorial properties}
\label{sec:bsrs-combinatorial}

We define a recursive cover for $\BSCode$ and then prove certain combinatorial properties for it. The definition relies on the definition of another cover, which we now introduce.

\begin{definition}
\label{def:native-cover}
The \defemph{native cover} $\Cover[\SubField,\Field,\BSSpace,\BSBalance,\BSBaseDim]$ of $\BSCode[\SubField,\Field,\BSSpace,\BSBalance,\BSBaseDim]$ is defined as follows:
\begin{itemize}

  \item if $\Dimension{\BSSpace} \leq \BSBaseDim$ then the cover contains only the trivial view $( \Domain^{\BSCode}[\SubField,\Field,\BSSpace,\BSBalance,\BSBaseDim], \BSCode[\SubField,\Field,\BSSpace,\BSBalance,\BSBaseDim])$;

  \item if $\Dimension{\BSSpace} > \BSBaseDim$ then the cover contains

\begin{itemize}[nolistsep]

  \item the view $(\BSCode[\SubField,\Field,\VanishingPoly{\BSSpace_{0}}(\BSSpace_{1}),\BSBalance,\BSBaseDim], \Domain_{\SymbolCol,\alpha})$ for every $\alpha \in \BSSpace'_{0}$, and

  \item the view $(\BSCode[\SubField,\Field,\BSSpace_{\beta},\BSBalance,\BSBaseDim], \Domain_{\SymbolRow,\beta})$ for every $\beta \in \BSSpace_{1}$.

\end{itemize}

\end{itemize}
\end{definition}

We now prove that the native cover is indeed a cover.

\begin{claim}
\label{claim:bsrs-code-cover}
The native cover of $\BSCode[\SubField,\Field,\BSSpace,\BSBalance,\BSBaseDim]$ is a code cover (see \defref{def:bsrs-code-cover}).
\end{claim}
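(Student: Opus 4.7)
The plan is to prove the claim by induction on $\Dimension{\BSSpace}$, verifying the two defining properties of a cover: each element is a local view of $\BSCode[\SubField,\Field,\BSSpace,\BSBalance,\BSBaseDim]$, and the union of the subdomains equals $\Domain^{\BSCode}[\SubField,\Field,\BSSpace,\BSBalance,\BSBaseDim]$. The base case, when $\Dimension{\BSSpace} \leq \BSBaseDim$, is immediate since the native cover consists of the single trivial view, which manifestly satisfies both conditions. So the work is entirely in the inductive step, when $\Dimension{\BSSpace} > \BSBaseDim$.

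For the local-view property, I would argue the two types of views separately but symmetrically. Fix $\alpha \in \BSSpace'_{0}$; I need to show that $\Restrict{\BSCode[\SubField,\Field,\BSSpace,\BSBalance,\BSBaseDim]}{\Domain_{\SymbolCol,\alpha}}$, identified via $\phi_{\SymbolCol,\alpha}$ with a subset of $\Field^{\Domain^{\BSCode}[\SubField,\Field,\VanishingPoly{\BSSpace_{0}}(\BSSpace_{1}),\BSBalance,\BSBaseDim]}$, equals $\BSCode[\SubField,\Field,\VanishingPoly{\BSSpace_{0}}(\BSSpace_{1}),\BSBalance,\BSBaseDim]$. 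The containment $\subseteq$ is immediate from the recursive definition of $\BSCode$ (\defref{def:bsrs-family}): the defining condition explicitly asserts that for every $\Codeword \in \BSCode[\SubField,\Field,\BSSpace,\BSBalance,\BSBaseDim]$ and every $\alpha \in \BSSpace'_{0}$ there exists $\Codeword_{\alpha} \in \BSCode[\SubField,\Field,\VanishingPoly{\BSSpace_{0}}(\BSSpace_{1}),\BSBalance,\BSBaseDim]$ agreeing with $\Codeword$ under $\phi_{\SymbolCol,\alpha}$, which is exactly to say that $\psi_{\SymbolCol,\alpha}(\Codeword) \in \BSCode[\SubField,\Field,\VanishingPoly{\BSSpace_{0}}(\BSSpace_{1}),\BSBalance,\BSBaseDim]$. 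The reverse containment $\supseteq$ is precisely the content of the first item of \clmref{claim:bsrs-code-cover-views}, which guarantees that every codeword of $\BSCode[\SubField,\Field,\VanishingPoly{\BSSpace_{0}}(\BSSpace_{1}),\BSBalance,\BSBaseDim]$ arises as $\psi_{\SymbolCol,\alpha}(\Codeword)$ for some $\Codeword \in \BSCode[\SubField,\Field,\BSSpace,\BSBalance,\BSBaseDim]$. The argument for the row views $(\Domain_{\SymbolRow,\beta}, \BSCode[\SubField,\Field,\BSSpace_{\beta},\BSBalance,\BSBaseDim])$ is entirely analogous, using the second bullet of \clmref{claim:bsrs-code-cover-views} and the corresponding clause of \defref{def:bsrs-family}.

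For the covering property, I would unwind the disjoint-union definition of $\Domain^{\BSCode}[\SubField,\Field,\BSSpace,\BSBalance,\BSBaseDim]$ and show that each piece lies inside some $\Domain_{\SymbolCol,\alpha}$ or $\Domain_{\SymbolRow,\beta}$. The $(\SymbolProx, \cdot)$-part decomposes naturally: the $\MakeP{\Domain}$-summand is swept up by the $\Set{\SymbolRS}$-preimages under both families of embeddings (hitting $\phi(\alpha,\beta) \in \MakeP{\Domain} \subseteq \MakeBox{\Domain}$ via either $\phi_{\SymbolCol,\alpha}$ or $\phi_{\SymbolRow,\beta}$), while every point $(\SymbolProx,((\SymbolCol,\alpha),x))$ lies in $\Domain_{\SymbolCol,\alpha}$ by construction, and every point $(\SymbolProx,((\SymbolRow,\beta),x))$ lies in $\Domain_{\SymbolRow,\beta}$. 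For the $(\SymbolRS, \cdot)$-part I need to show that every $\gamma \in \BSSpace$ is hit: since the bijection $\phi \colon \MakeF{\Domain} \to \MakeBox{\Domain}$ sends pairs $(\gamma,\beta)$ with $\gamma \in \BSSpace_{\beta}$ to either $(\SymbolRS,\gamma)$ or $(\SymbolProx,(\gamma,\VanishingPoly{\BSSpace_{0}}(\beta)))$, and since $\BSSpace = \bigcup_{\beta \in \BSSpace_{1}} \BSSpace_{\beta}$ (by the definition of $\BSSpace_{\beta}$ and $\BSSpace'_{0}, \BSSpace_{1}$), every $\gamma \in \BSSpace$ can be written as lying in some $\BSSpace_{\beta}$ with $\beta \in \BSSpace_{1}$, and thus $(\SymbolRS, \gamma) \in \Domain_{\SymbolRow,\beta}$ via the second branch of $\phi_{\SymbolRow,\beta}$.

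The main obstacle is not a deep mathematical difficulty but rather careful bookkeeping over the disjoint-union structure of $\Domain^{\BSCode}$ and the two embeddings $\phi_{\SymbolCol,\alpha}, \phi_{\SymbolRow,\beta}$ defined piecewise. In particular, I anticipate the most subtle step to be verifying that the $(\SymbolRS, \cdot)$-part of the domain is covered by the native views; this requires reading off from \defref{def:bsrs-basic-notation} that the subspaces $\Set{\BSSpace_{\beta}}_{\beta \in \BSSpace_{1}}$ together exhaust $\BSSpace$ (because each $\BSSpace_{\beta} \supseteq \BSSpace'_{0}$ and $\beta \in \BSSpace_{\beta}$, so $\BSSpace_{0} + \BSSpace_{1} = \BSSpace$ is covered). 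Given this, the induction hypothesis is invoked implicitly only through \clmref{claim:bsrs-code-cover-views}, whose proof uses bivariate polynomial interpolation; since that claim is taken as given, the remainder of the proof is structural.
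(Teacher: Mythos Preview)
Your proposal is correct and follows essentially the same approach as the paper: use \defref{def:bsrs-family} for the $\subseteq$ direction and \clmref{claim:bsrs-code-cover-views} for the $\supseteq$ direction of the local-view property, then do a case analysis over the pieces of $\Domain^{\BSCode}$ for the covering property. The framing as induction on $\Dimension{\BSSpace}$ is unnecessary (the native cover is a single-level object and the paper's proof does not recurse), but you correctly observe that the only recursive content is already encapsulated in \clmref{claim:bsrs-code-cover-views}; stripping the inductive wrapper yields exactly the paper's argument.
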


\begin{proof}
From \clmref{claim:bsrs-code-cover-views} we know that:
\begin{itemize}[nolistsep]

\item for every $\alpha \in \BSSpace'_{0}$, the restriction of $\BSCode[\SubField,\Field,\BSSpace,\BSBalance,\BSBaseDim]$ to $\Domain_{\SymbolCol,\alpha}$ equals $\BSCode[\SubField,\Field,\VanishingPoly{\BSSpace_{0}}(\BSSpace_{1}),\BSBalance,\BSBaseDim]$;

\item for every $\beta \in \BSSpace_{1}$, the restriction of $\BSCode[\SubField,\Field,\BSSpace,\BSBalance,\BSBaseDim]$ to $\Domain_{\SymbolRow,\beta}$ equals $\BSCode[\SubField,\Field,\BSSpace_{\beta},\BSBalance,\BSBaseDim]$.

\end{itemize}
Therefore, it suffices to show that $\MakeBox{\Domain} \subseteq ( \cup_{\alpha \in \BSSpace'_{0}} \Domain_{\SymbolCol,\alpha} ) \cup ( \cup_{\beta \in \BSSpace_{1}} \Domain_{\SymbolRow,\beta} )$. So let $x$ be an index in $\MakeBox{\Domain}$.
\begin{itemize}[nolistsep]

\item If there exists $\alpha \in \BSSpace'_{0}$ such that $x \in \Set{\SymbolProx} \times \Set{(\SymbolCol,\alpha)} \times \MakeP{\Domain}^{\BSCode}[\SubField,\Field,\VanishingPoly{\BSSpace_{0}}(\BSSpace_{1}),\BSBalance,\BSBaseDim]$, then $x \in \Domain_{\SymbolCol,\alpha}$.

\item If there exists $\beta \in \BSSpace_{1}$ such that $x \in \Set{\SymbolProx} \times \Set{(\SymbolRow,\beta)} \times \MakeP{\Domain}^{\BSCode}[\SubField,\Field,\BSSpace_{\beta},\BSBalance,\BSBaseDim]$, then $x \in \Domain_{\SymbolRow,\beta}$.

\item If $x \in \Set{\SymbolProx} \times \MakeP{\Domain}[\SubField,\Field,\BSSpace,\BSBalance]$, then there exist $\beta \in \BSSpace_{1}$ and $\alpha \in R_\beta$ such that $x = (\SymbolProx,(\alpha,\VanishingPoly{\BSSpace_{0}}(\beta)))$, so $x \in \Domain_{\SymbolRow,\beta}$.

\item If $x \in \Set{\SymbolRS} \times \BSSpace$, then there exist $\beta \in \BSSpace_{1}$
and $\alpha \in \BSSpace_{\beta}$ such that $\phi[\SubField,\Field,\BSSpace,\BSBalance](\alpha,\VanishingPoly{\BSSpace_{0}}(\beta)) = x$, so $x \in \Domain_{\SymbolRow,\beta}$. \qedhere

\end{itemize}
\end{proof}

The recursive cover of $\BSCode$ is recursively defined based on the native cover of $\BSCode$.

\begin{definition}
\label{def:bsrs-tree-cover}
The recursive cover $\TreeCover[\SubField,\Field,\BSSpace,\BSBalance,\BSBaseDim]$ of $\BSCode[\SubField,\Field,\BSSpace,\BSBalance,\BSBaseDim]$ is the tree of depth $\BsDepth{\BSSpace}{\BSBaseDim}$ where, for every non-leaf vertex $v$ labeled by $(\CDomain,\BSCode[\SubField,\Field,\OtherBSSpace,\BSBalance,\BSBaseDim])$, the vertex $v$ has $\SetCardinality{\Cover[\SubField,\Field,\OtherBSSpace,\BSBalance,\BSBaseDim]}$ successors, all labeled by elements of $\Cover[\SubField,\Field,\OtherBSSpace,\BSBalance,\BSBaseDim]$ with the natural embedding of their domains into $\CDomain$.
\end{definition}

\begin{claim}
\label{claim:bsrs-1-intersecting}
The recursive cover of $\BSCode[\SubField,\Field,\BSSpace,\BSBalance,\BSBaseDim]$ is $1$-intersecting (see \defref{def:bsrs-recursive-code-cover}).
\end{claim}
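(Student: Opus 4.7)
The plan is to reduce the claim to a statement about siblings in the tree. Suppose $u,v$ are two disconnected vertices in $\TreeCover[\SubField,\Field,\BSSpace,\BSBalance,\BSBaseDim]$. Let $w$ be their least common ancestor, and let $u_{1},u_{2}$ be the two distinct children of $w$ such that $u$ is a descendant of $u_{1}$ and $v$ is a descendant of $u_{2}$. Since the recursive cover is built by pulling back the native covers through embeddings, a straightforward induction on $\TreeDepth{\TreeCover,w}$ shows that $\CDomain_{u}\subseteq \CDomain_{u_{1}}$ and $\CDomain_{v}\subseteq \CDomain_{u_{2}}$, so it suffices to prove $\SetCardinality{\CDomain_{u_{1}}\cap \CDomain_{u_{2}}}\leq 1$. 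In other words, it suffices to verify the $1$-intersecting property one level at a time: for any non-leaf vertex $w$ of $\TreeCover$, any two distinct children of $w$ have domains intersecting in at most one point.

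Fix such a $w$ labeled with $(\CDomain_{w},\BSCode[\SubField,\Field,\OtherBSSpace,\BSBalance,\BSBaseDim])$. Its children are indexed by the native cover $\Cover[\SubField,\Field,\OtherBSSpace,\BSBalance,\BSBaseDim]$, whose elements are of two kinds: \emph{column views} $\Domain_{\SymbolCol,\alpha}$ for $\alpha \in \OtherBSSpace'_{0}$, and \emph{row views} $\Domain_{\SymbolRow,\beta}$ for $\beta \in \OtherBSSpace_{1}$. The plan is to split into three sibling cases and in each case analyze the $\SymbolProx$-part and the $\SymbolRS$-part of the embeddings $\phi_{\SymbolCol,\alpha}$ and $\phi_{\SymbolRow,\beta}$ from \defref{def:bsrs-basic-notation} separately. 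The key observation is that the $\SymbolProx$-parts of any two distinct native views are \emph{disjoint} because they are tagged with different labels $(\SymbolCol,\alpha)$ or $(\SymbolRow,\beta)$; thus in every case the entire intersection is confined to the $\SymbolRS$-part, which embeds into $\MakeBox{\Domain}$ via the bijection $\phi$.

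For two column views with $\alpha_{1}\neq \alpha_{2}$, the $\SymbolRS$-images are of the form $\phi(\alpha_{1},\beta_{1}')$ and $\phi(\alpha_{2},\beta_{2}')$, which agree only if the underlying bivariate points $(\alpha_{1},\VanishingPoly{\OtherBSSpace_{0}}(\beta_{1}'))$ and $(\alpha_{2},\VanishingPoly{\OtherBSSpace_{0}}(\beta_{2}'))$ agree, contradicting $\alpha_{1}\neq \alpha_{2}$; thus the intersection is empty. For two row views with $\beta_{1}\neq \beta_{2}$, the $\SymbolRS$-images are $\phi(\alpha',\beta_{1})$ and $\phi(\alpha'',\beta_{2})$; matching the second coordinate forces $\VanishingPoly{\OtherBSSpace_{0}}(\beta_{1})=\VanishingPoly{\OtherBSSpace_{0}}(\beta_{2})$, which fails because $\beta_{1},\beta_{2}$ lie in distinct cosets of $\OtherBSSpace_{0}$ inside $\OtherBSSpace$, so this intersection is also empty. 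Finally, for a column view $\Domain_{\SymbolCol,\alpha}$ paired with a row view $\Domain_{\SymbolRow,\beta}$, the $\SymbolRS$-images can coincide only at the unique bivariate point $(\alpha,\VanishingPoly{\OtherBSSpace_{0}}(\beta))$ mapped through $\phi$, giving an intersection of size exactly one (or zero, if that point does not lie in both subdomains).

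I do not anticipate a serious obstacle: everything reduces to bookkeeping on the disjoint union structure of $\MakeBox{\Domain}$ and the way $\phi_{\SymbolCol,\alpha}$ and $\phi_{\SymbolRow,\beta}$ act on the two tagged pieces $\Set{\SymbolRS}\times \OtherBSSpace$ and $\Set{\SymbolProx}\times \MakeProx{\Domain^{\BSCode}}$. The only place where one must be slightly careful is the row-row case, where one must use that $\OtherBSSpace_{0}$ and $\OtherBSSpace_{1}$ together give a direct-sum decomposition of the ambient $\SubField$-linear space spanned by the relevant basis vectors, so that distinct $\beta_{1},\beta_{2}\in \OtherBSSpace_{1}$ index distinct cosets $\OtherBSSpace_{0}+\beta_{1}$ and $\OtherBSSpace_{0}+\beta_{2}$.
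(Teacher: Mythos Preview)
Your proposal is correct and follows essentially the same route as the paper's proof: reduce from arbitrary disconnected pairs to siblings via the inclusion $\CDomain_{b}\subseteq\CDomain_{a}$ for an ancestor $a$ of $b$, observe that the $\SymbolProx$-tagged parts of distinct native views are disjoint by construction, and then handle the three sibling cases (column--column, row--row, column--row) by appealing to the injectivity of $\phi$ on $\MakeF{\Domain}$. The paper compresses the row--row case into ``$\phi$ is a bijection'' whereas you spell out the coset argument for $\VanishingPoly{\OtherBSSpace_{0}}$ restricted to $\OtherBSSpace_{1}$; both are the same underlying observation.
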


\begin{proof}
We must show that for every two disconnected vertices $u,v$ it holds that $\SetCardinality{\CDomain_{u} \cap \CDomain_{v}} \leq 1$. It suffices to do so for every two distinct \emph{siblings} $u,v$, because if $a$ is an ancestor of $b$ then $\CDomain_{a}$ contains $\CDomain_{b}$. Hence, we only need to show that for every two distinct views $(\CDomain,\CCode),(\CDomain',\CCode')$ in the native cover $\Cover[\SubField,\Field,\BSSpace,\BSBalance,\BSBaseDim]$, it holds that $\SetCardinality{\CDomain \cap \CDomain'} \leq 1$. First we observe that for every $\alpha_{1} \neq \alpha_2 \in \BSSpace'_{0}$ and $\beta_{1} \neq \beta_{2} \in \BSSpace_{1}$, the following sets are disjoint by definition:
\begin{itemize}[nolistsep]

\item $\Set{\SymbolProx} \times \Set{(\SymbolCol,\alpha_{1})} \times \MakeP{\Domain}^{\BSCode}[\SubField,\Field,\VanishingPoly{\BSSpace_{0}}(\BSSpace_{1}),\BSBalance,\BSBaseDim]$,

\item $\Set{\SymbolProx} \times \Set{(\SymbolCol,\alpha_2)} \times \MakeP{\Domain}^{\BSCode}[\SubField,\Field,\VanishingPoly{\BSSpace_{0}}(\BSSpace_{1}),\BSBalance,\BSBaseDim]$,

\item $\Set{\SymbolProx} \times \Set{(\SymbolRow,\beta_{1})} \times \MakeP{\Domain}^{\BSCode}[\SubField,\Field,\BSSpace_{\beta_{1}},\BSBalance,\BSBaseDim]$,

\item $\Set{\SymbolProx} \times \Set{(\SymbolRow,\beta_{2})} \times \MakeP{\Domain}^{\BSCode}[\SubField,\Field,\BSSpace_{\beta_{2}},\BSBalance,\BSBaseDim]$.

\end{itemize}
Thus it is enough to show that:
\begin{itemize}[nolistsep]

\item Any two columns are distinct: $\phi(\alpha_{1},\beta_{1}) \neq \phi(\alpha_2,\beta_{2})$ for every $\alpha_{1} \neq \alpha_2 \in \BSSpace'_{0}$ and $\beta_{1},\beta_{2} \in \VanishingPoly{\BSSpace_{0}}(\BSSpace_{1})$.

\item Any two rows are distinct: $\phi(\alpha_{1},\beta_{1}) \neq \phi(\alpha_2,\beta_{2})$ for every $\beta_{1} \neq \beta_{2} \in \VanishingPoly{\BSSpace_{0}}(\BSSpace_{1})$, $\alpha_{1} \in \BSSpace_{\beta_{1}}$, and $\alpha_2 \in \BSSpace_{\beta_{2}}$.

\item The intersection of any row and column has at most one element: $\phi(\alpha,\beta') \neq \phi(\alpha',\beta)$ for every $\alpha,\alpha' \in \BSSpace'_{0}$ and $\beta,\beta' \in \VanishingPoly{\BSSpace_{0}}(\BSSpace_{1})$ with $(\alpha',\beta') \neq (\alpha,\beta)$.

\end{itemize}
But all the above follow from the fact that $\phi[\SubField,\Field,\BSSpace,\BSBalance]$ is a bijection and, thus, an injection.
\end{proof}

The next claim establishes a connection between the depth of a vertex $v$ in the recursive cover and the independence of the cover $\TreeCover_{v}$ of the code $\CCode_{v}$.

\begin{claim}
\label{cor:bsrs-low-depth-big-independence}
For every vertex $v$ in $\TreeLayer{\TreeCover,\CDepth}$, the cover $\TreeCover_{v}$ is $(\SetCardinality{\SubField}^{\Dimension{\BSSpace} \cdot 2^{-\CDepth - 1} -\BSBalance -2})$-independent. In particular, by assignment, it holds that, for every positive integer $m$ and every non-leaf vertex $v$ in $\TreeCover[\SubField,\Field,\BSSpace,\BSBalance,\BSBaseDim]$ with depth less than $\BsOtherDepth{\BSSpace}{m}$, the cover $\TreeCover_{v}$ is $m$-independent.
\end{claim}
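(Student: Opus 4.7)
The plan is to reduce the independence claim about $\TreeCover_{v}$ to a statement about bivariate polynomial interpolation, using the structure of the native cover of $\BSCode$ and the recursive characterization of its codewords. First I would invoke Claim~\ref{claim:bsrs-bounded-basis} (referenced in the proof overview) to conclude that the code $\CCode_v$ at a vertex $v$ in $\TreeLayer{\TreeCover,\CDepth}$ is of the form $\BSCode[\SubField,\Field,\OtherBSSpace,\BSBalance,\BSBaseDim]$ for some $\SubField$-linear subspace $\OtherBSSpace$ with $\Dimension{\OtherBSSpace} \geq \Dimension{\BSSpace}\cdot 2^{-\CDepth}$. It then suffices to argue that the native cover $\Cover[\SubField,\Field,\OtherBSSpace,\BSBalance,\BSBaseDim]$ of $\BSCode[\SubField,\Field,\OtherBSSpace,\BSBalance,\BSBaseDim]$ is $\lambda$-independent for $\lambda \DefineEqual \SetCardinality{\SubField}^{\Dimension{\OtherBSSpace}/2 - \BSBalance - 2}$.

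Unfolding \defref{def:bsrs-independent-cover}, the task is: given a set $J$ of at most $\lambda$ chosen rows $\beta \in \OtherBSSpace_{1}$ and columns $\alpha \in \OtherBSSpace'_{0}$, a partial assignment on their union that is locally consistent (i.e., looks like a codeword of $\BSCode[\SubField,\Field,\OtherBSSpace_{\beta},\BSBalance,\BSBaseDim]$ on each chosen row and like a codeword of $\BSCode[\SubField,\Field,\VanishingPoly{\OtherBSSpace_{0}}(\OtherBSSpace_{1}),\BSBalance,\BSBaseDim]$ on each chosen column), plus an arbitrary assignment on at most $\lambda$ additional points of $\InterDomain{\Cover}$, extend to a codeword of $\BSCode[\SubField,\Field,\OtherBSSpace,\BSBalance,\BSBaseDim]$. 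By Claim~\ref{claim:bsrs-code-cover-proof-defined-by-RS}, every codeword of $\BSCode$ is uniquely determined by its Reed--Solomon part, which in turn corresponds to a bivariate polynomial $g \in \Field[X,Y]$ with $\IndividualDegree{g}[X]<\SetCardinality{\OtherBSSpace_{0}}$ and $\IndividualDegree{g}[Y]<\SetCardinality{\OtherBSSpace_{1}}\cdot\rho$ via the bijection $\phi$. Thus the independence statement reduces to the following bivariate interpolation problem: given at most $\lambda$ rows $\beta$, at most $\lambda$ columns $\alpha$, and at most $\lambda$ further axis-parallel point values in the grid $\OtherBSSpace_{0}'\times\VanishingPoly{\OtherBSSpace_{0}}(\OtherBSSpace_{1})$, find a bivariate polynomial with the above individual-degree bounds that agrees with the restriction along each chosen row and column and matches the isolated point values.

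The key step is then a bivariate Reed--Muller extension lemma analogous to the one sketched for the line cover in \exmpref{example:rm-line-cover}: choose an interpolating grid $H_{X}\times H_{Y}$ with $\SetCardinality{H_{X}}=\SetCardinality{\OtherBSSpace_{0}}$ and $\SetCardinality{H_{Y}}=\SetCardinality{\OtherBSSpace_{1}}\cdot\rho$ that contains the union of the chosen columns $\alpha$, the evaluation points of the chosen rows $\beta$ (through $\VanishingPoly{\OtherBSSpace_{0}}$), and all the extra $\lambda$ points. The choice $\lambda = \SetCardinality{\SubField}^{\Dimension{\OtherBSSpace}/2-\BSBalance-2}$ is calibrated precisely so that $\lambda$ chosen rows, $\lambda$ chosen columns, and $\lambda$ stray points fit inside such a grid, with enough room left over along each axis (the $-\BSBalance-2$ corresponds to the slack $\rho = \SetCardinality{\SubField}^{-\BSBalance}$ and the asymmetric splits in \defref{def:bsrs-basic-notation}). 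Bivariate Lagrange interpolation on $H_{X}\times H_{Y}$ then produces the desired $g$, and Claim~\ref{claim:bsrs-code-cover-proof-defined-by-RS} lifts the resulting Reed--Solomon codeword to a full $\BSCode$ codeword that extends the original locally-consistent assignment.

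The main obstacle will be the arithmetic bookkeeping in the interpolation step: the subspaces $\OtherBSSpace_{0},\OtherBSSpace_{1}$ are defined asymmetrically via $\lfloor\Dimension{\OtherBSSpace}/2\rfloor$ and shifted by $\BSBalance-1$ (see \defref{def:bsrs-basic-notation}), and $\VanishingPoly{\OtherBSSpace_{0}}(\OtherBSSpace_{1})$ has size $\SetCardinality{\OtherBSSpace_{1}}$, so one must carefully verify that $\lambda$-many chosen rows plus $\lambda$-many chosen columns plus $\lambda$-many auxiliary points leave the required Hamming budgets $<\SetCardinality{\OtherBSSpace_{0}}$ and $<\SetCardinality{\OtherBSSpace_{1}}\cdot\rho$ along each axis --- this is exactly where the exponents $\Dimension{\OtherBSSpace}/2-\BSBalance-2$ and the factor $\frac{1}{2}$ come from. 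A helper claim (analogous to the $\RMDegree/2$-independence of the bivariate line cover noted after \exmpref{example:rm-line-cover}) isolates this combinatorial fact, and the final assignment of $m \leq \SetCardinality{\SubField}^{\Dimension{\BSSpace}\cdot 2^{-\CDepth-1}-\BSBalance-2}$ in the "in particular" clause follows immediately by monotonicity in $\Dimension{\OtherBSSpace}$.
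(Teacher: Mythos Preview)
Your proposal is correct and matches the paper's approach essentially step for step: the paper factors the claim into \clmref{claim:bsrs-bounded-basis} (the dimension bound on $\OtherBSSpace$) and \clmref{claim:bsrs-cover-independence} (the native cover of $\BSCode[\SubField,\Field,\OtherBSSpace,\BSBalance,\BSBaseDim]$ is $\SetCardinality{\SubField}^{\Dimension{\OtherBSSpace}/2-\BSBalance-2}$-independent), and proves the latter by reducing via \clmref{claim:bsrs-code-cover-proof-defined-by-RS} to the bivariate interpolation statement recorded as \clmref{claim:bsrs-rm-folklore}, exactly as you outline. Your identification of the arithmetic constraint $\SetCardinality{\VariantA{\Domain}}+\SetCardinality{\VariantB{\Domain}}<\min\{\SetCardinality{\OtherBSSpace_{0}},\SetCardinality{\OtherBSSpace_{1}}\cdot\rho\}$ as the source of the exponent $-\BSBalance-2$ is also what the paper checks.
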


The proof of the above claim directly follows from \clmref{claim:bsrs-cover-independence} and \clmref{claim:bsrs-bounded-basis}, stated and proved below. The first of these two claims connects the depth of a vertex $v$ and the dimension of a space $\BSSpace_{v}$ such that $\CCode_{v} = \BSCode[\Field,\SubField,\BSSpace_{v},\BSBalance,\BSBaseDim]$ (this claim is used separately also for establishing computational properties in in \appref{sec:bsrs-complexity}).

\begin{claim}
\label{claim:bsrs-bounded-basis}
If $v \in \TreeLayer{\TreeCover[\SubField,\Field,\BSSpace,\BSBalance,\BSBaseDim],\CDepth}$ then $\CCode_{v} = \BSCode[\SubField,\Field,\OtherBSSpace,\BSBalance,\BSBaseDim]$ for some $\OtherBSSpace$ such that
\begin{equation*}
\Dimension{\BSSpace} \cdot 2^{-\CDepth} \leq \Dimension{\OtherBSSpace} \leq \Dimension{\BSSpace} \cdot 2^{-\CDepth} + 2\BSBalance
\enspace.
\end{equation*}
\end{claim}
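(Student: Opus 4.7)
The plan is to proceed by induction on the depth $\CDepth$ of the vertex $v$. The base case $\CDepth=0$ is immediate: the unique vertex at depth $0$ is the root, which by \defref{def:bsrs-tree-cover} is labeled by $\BSCode[\SubField,\Field,\BSSpace,\BSBalance,\BSBaseDim]$, and choosing $\OtherBSSpace \DefineEqual \BSSpace$ satisfies the bounds trivially.

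For the inductive step, I would let $u$ be the predecessor of $v$ and apply the hypothesis to obtain an $\SubField$-linear subspace $\OtherBSSpace'$ with $\CCode_u = \BSCode[\SubField,\Field,\OtherBSSpace',\BSBalance,\BSBaseDim]$ and $\Dimension{\BSSpace} \cdot 2^{-(\CDepth-1)} \leq \Dimension{\OtherBSSpace'} \leq \Dimension{\BSSpace} \cdot 2^{-(\CDepth-1)} + 2\BSBalance$. By \defref{def:bsrs-tree-cover} and \defref{def:native-cover}, the label of $v$ must be one of $\BSCode[\SubField,\Field,\VanishingPoly{(\OtherBSSpace')_0}((\OtherBSSpace')_1),\BSBalance,\BSBaseDim]$ (a column child) or $\BSCode[\SubField,\Field,(\OtherBSSpace')_\beta,\BSBalance,\BSBaseDim]$ for some $\beta\in(\OtherBSSpace')_1$ (a row child), so I can take $\OtherBSSpace$ accordingly. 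Writing $\ell' \DefineEqual \Dimension{\OtherBSSpace'}$, the relevant dimensions follow from \defref{def:bsrs-basic-notation}: directly, $\Dimension{(\OtherBSSpace')_\beta} = \lfloor \ell'/2\rfloor + \BSBalance$; and because $\VanishingPoly{(\OtherBSSpace')_0}$ is the subspace (linearized) polynomial of the $\SubField$-linear space $(\OtherBSSpace')_0$, it acts $\SubField$-linearly on $\Field$ with kernel exactly $(\OtherBSSpace')_0$, and since $(\OtherBSSpace')_0 \cap (\OtherBSSpace')_1 = \{0\}$ (the two subspaces are spanned by disjoint segments of the basis of $\OtherBSSpace'$), we get $\Dimension{\VanishingPoly{(\OtherBSSpace')_0}((\OtherBSSpace')_1)} = \Dimension{(\OtherBSSpace')_1} = \lceil \ell'/2\rceil$.

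With these dimension formulas in hand, the bounds follow by routine arithmetic. For the lower bound, both $\lceil \ell'/2\rceil$ and $\lfloor \ell'/2\rfloor + \BSBalance$ are at least $\ell'/2 \geq \Dimension{\BSSpace} \cdot 2^{-\CDepth}$. For the upper bound, in the row case $\lfloor \ell'/2\rfloor + \BSBalance \leq \ell'/2 + \BSBalance \leq \Dimension{\BSSpace} \cdot 2^{-\CDepth} + \BSBalance + \BSBalance = \Dimension{\BSSpace} \cdot 2^{-\CDepth} + 2\BSBalance$; in the column case $\lceil \ell'/2\rceil \leq \ell'/2 + 1/2 \leq \Dimension{\BSSpace}\cdot 2^{-\CDepth} + \BSBalance + 1/2 \leq \Dimension{\BSSpace}\cdot 2^{-\CDepth} + 2\BSBalance$ (using $\BSBalance \geq 1$ and integrality of dimensions). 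The doubling-before-adding structure of the recurrence is what keeps the additive error at $2\BSBalance$ rather than blowing it up proportionally to the depth.

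The argument is essentially bookkeeping; the only genuinely nontrivial input is the dimension computation for $\VanishingPoly{(\OtherBSSpace')_0}((\OtherBSSpace')_1)$, which requires the standard (but worth invoking carefully) fact that vanishing polynomials of $\SubField$-linear subspaces are $\SubField$-linearized, together with the observation that the two sub-bases used to define $(\OtherBSSpace')_0$ and $(\OtherBSSpace')_1$ are disjoint. This will be the main step to write out cleanly, since everything else reduces to an elementary induction on the recurrence $\ell_{\CDepth+1} \in \{\lceil \ell_\CDepth/2\rceil,\, \lfloor \ell_\CDepth/2\rfloor + \BSBalance\}$.
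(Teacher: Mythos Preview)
Your proposal is correct and follows essentially the same inductive argument as the paper: induct on $\CDepth$, pass to the predecessor $u$, and split into the column case $\Dimension{\OtherBSSpace}=\lceil \ell'/2\rceil$ and the row case $\Dimension{\OtherBSSpace}=\lfloor \ell'/2\rfloor+\BSBalance$. The one point the paper makes explicit that you gloss over is verifying that the native cover at $u$ is nontrivial (i.e., $\Dimension{\OtherBSSpace'}>\BSBaseDim$, which follows from the inductive lower bound together with $\CDepth-1<\lfloor\log_2\Dimension{\BSSpace}-\log_2\BSBaseDim\rfloor$); you should add that sentence so that your case split into column/row children is justified.
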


\begin{proof}
The proof is by induction on $\CDepth$. The base case $\CDepth = 0$ follows directly from the definition; so we now assume the claim for $\CDepth -1$ and prove it for $\CDepth$. Let $v \in \TreeLayer{\TreeCover,\CDepth}$ be a vertex of depth $\CDepth$, and let $u \in \TreeLayer{\TreeCover,\CDepth-1}$ be $v$'s predecessor. By the inductive assumption, $\CCode_{u} = \BSCode[\SubField,\Field,\BSSpace_{u},\BSBalance,\BSBaseDim]$ for some $\BSSpace_{u}$ such that $\Dimension{\BSSpace} \cdot 2^{-(\CDepth-1)} \leq \Dimension{\BSSpace_{u}} \leq \Dimension{\BSSpace} \cdot 2^{-(\CDepth-1)} + 2\BSBalance$.

First we argue that $\TreeCover_{u}$ is not the trivial (singleton) cover. For this, it suffices to show that $\Dimension{\BSSpace_{u}} > \BSBaseDim$. But this follows from the inductive assumption, since $\TreeDepth{\TreeCover,u} < \BsDepth{\BSSpace}{\BSBaseDim}$, so that $\Dimension{\BSSpace_{u}} \ge \Dimension{\BSSpace} \cdot 2^{-(\BsDepth{\BSSpace}{\BSBaseDim}-1)} \ge 2 \BSBaseDim$.

Recall $\CCode_{v} = \BSCode[\SubField,\Field,\BSSpace_{v},\BSBalance,\BSBaseDim]$ for some space $\BSSpace_{v}$; we are thus left to show that $\Dimension{\BSSpace_{u}} \cdot 2^{-1} \leq \Dimension{\BSSpace_{v}} \leq \Dimension{\BSSpace_{u}} \cdot 2^{-1} + \BSBalance$. We do so by giving two cases, based on the form of $\BSSpace_{v}$:
\begin{inparaenum}[(a)]
\item if $\BSSpace_{v} = \VanishingPoly{\BSSpace_{0}[\SubField,\Field,\BSSpace_{u},\BSBalance,\BSBaseDim]}(\BSSpace_{1}[\SubField,\Field,\BSSpace_{u},\BSBalance,\BSBaseDim])$
then
$\Dimension{\BSSpace_{v}} = \Dimension{\BSSpace_{1}[\SubField,\Field,\BSSpace_{u},\BSBalance,\BSBaseDim]} =  \lceil \frac{\Dimension{\BSSpace_{u}}}{2} \rceil$;
\item if there exists $\beta \in \BSSpace_{1}[\SubField,\Field,\BSSpace_{u},\BSBalance,\BSBaseDim]$
such that $\BSSpace_{v} = \BSSpace_\beta[\SubField,\Field,\BSSpace_{u},\BSBalance,\BSBaseDim]$
then $\Dimension{\BSSpace_{v}} = \Dimension{\BSSpace_{0}[\SubField,\Field,\BSSpace_{u},\BSBalance,\BSBaseDim]} + \BSBalance = \lfloor \frac{\Dimension{\BSSpace_{u}}}{2} \rfloor + \BSBalance$.
\end{inparaenum}
In either case $\Dimension{\BSSpace_{u}} \cdot 2^{-1} \leq \Dimension{\BSSpace_{v}} \leq \Dimension{\BSSpace_{u}} \cdot 2^{-1} + \BSBalance$, and the claim follows.
\end{proof}

\begin{claim}
\label{claim:bsrs-cover-independence}
The native cover $\Cover[\SubField,\Field,\BSSpace,\BSBalance,\BSBaseDim]$ is $\SetCardinality{\SubField}^{\frac{\Dimension{\BSSpace}}{2} - \BSBalance -2}$-independent.
\end{claim}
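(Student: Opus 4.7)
The base case is trivial: if $\Dimension{\BSSpace}\le \BSBaseDim$ then $\Cover[\SubField,\Field,\BSSpace,\BSBalance,\BSBaseDim]$ is the singleton trivial cover, which is vacuously $\IdpParam$-independent. So assume $\Dimension{\BSSpace}>\BSBaseDim$. The plan is to reduce the extension problem to bivariate polynomial interpolation. Fix $J$ of size at most $\IdpParam$ and $\Domain'\subseteq \InterDomain{\Cover}$ of size at most $\IdpParam$, and split $J$ into column indices $V=\{\alpha_1,\dots,\alpha_{k_c}\}\subseteq \BSSpace_0'$ and row indices $R=\{\beta_1,\dots,\beta_{k_r}\}\subseteq \BSSpace_1$, with $k_c+k_r\le \IdpParam$. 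By \clmref{claim:bsrs-code-cover-views} and \defref{def:bs-proof-of-proximity}, the hypothesis ``$\Restrict{\Codeword'}{\CDomain_j}\in \CCode_j$ for all $j\in J$'' translates into: a column polynomial $p_{\alpha_i}(Y)\in \Field^{<d_Y}[Y]$ (with $d_Y \DefineEqual \SetCardinality{\BSSpace_1}\cdot\rho$) for each $\alpha_i\in V$ and a row polynomial $q_{\beta_j}(X)\in \Field^{<d_X}[X]$ (with $d_X \DefineEqual \SetCardinality{\BSSpace_0}$) for each $\beta_j\in R$. Moreover, by tracing through the proof of \clmref{claim:bsrs-1-intersecting}, one sees that $\InterDomain{\Cover}$ consists precisely of the RS-grid points $\phi(\alpha,\VanishingPoly{\BSSpace_0}(\beta))$ for $(\alpha,\beta)\in \BSSpace_0'\times \BSSpace_1$, so each extra point in $\Domain'$ constrains one bivariate evaluation $g(\alpha,\VanishingPoly{\BSSpace_0}(\beta))=v_{\alpha,\beta}$.

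I would next show that a bivariate polynomial $g(X,Y)$ with $\IndividualDegree{g}[X]<d_X$ and $\IndividualDegree{g}[Y]<d_Y$ exists matching all of the above data. Choose an ``interpolation grid'' $A\times B\subseteq \BSSpace_0'\times \VanishingPoly{\BSSpace_0}(\BSSpace_1)$ with $|A|=d_X$, $|B|=d_Y$, $V\subseteq A$, $\VanishingPoly{\BSSpace_0}(R)\subseteq B$, and containing every $X$- and $Y$-coordinate appearing among the extra points of $\Domain'$. This is possible since the number of distinct such coordinates is at most $k_c+|\Domain'|\le 2\IdpParam < d_X$ and at most $k_r+|\Domain'|\le 2\IdpParam < d_Y$ (recalling $\IdpParam=\SetCardinality{\SubField}^{\Dimension{\BSSpace}/2 - \BSBalance - 2}$ and our size estimates for $\BSSpace_0',\BSSpace_1$), and the host sets $\BSSpace_0',\VanishingPoly{\BSSpace_0}(\BSSpace_1)$ are large enough. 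Local consistency of $\Codeword'$ with the cover ensures that the induced constraints on the grid $A\times B$ are compatible on the ``cross'' $V\times B\cup A\times \VanishingPoly{\BSSpace_0}(R)$: namely $p_{\alpha_i}(\VanishingPoly{\BSSpace_0}(\beta_j))=q_{\beta_j}(\alpha_i)$, and any extra point $(\alpha,\beta)$ with $\alpha\in V$ or $\beta\in R$ has $v_{\alpha,\VanishingPoly{\BSSpace_0}(\beta)}$ already agreeing with the column/row polynomial value.

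Since the evaluation map $\Field^{<d_X,<d_Y}[X,Y]\to \Field^{A\times B}$ is a bijection, constructing $g$ reduces to filling in values on the ``free'' sub-grid $(A\setminus V)\times(B\setminus \VanishingPoly{\BSSpace_0}(R))$, which has size $(d_X-k_c)(d_Y-k_r)$. Only the extra points that are \emph{not} already on the cross impose new constraints on this sub-grid, and there are at most $|\Domain'|\le \IdpParam$ of them. A quick numerical check using $k_c,k_r,|\Domain'|\le \IdpParam = \SetCardinality{\SubField}^{\Dimension{\BSSpace}/2-\BSBalance-2}$ yields $d_X-k_c\ge d_X/2$ and $d_Y-k_r\ge d_Y/2$, so $(d_X-k_c)(d_Y-k_r)\ge \tfrac14 d_X d_Y$, which comfortably exceeds $\IdpParam$. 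Each extra-point constraint dictates the value of $g$ at a single grid point on the free sub-grid (and distinct extra points hit distinct grid points), so all constraints can be simultaneously satisfied, producing the desired $g$.

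Finally, I would promote $g$ to a codeword $\Codeword\in \BSCode[\SubField,\Field,\BSSpace,\BSBalance,\BSBaseDim]$. Define $\Restrict{\Codeword}{\Set{\SymbolRS}\times\BSSpace}$ from $g$ via $\phi$ (as in \defref{def:bs-proof-of-proximity}), then invoke \clmref{claim:bsrs-code-cover-proof-defined-by-RS} recursively: the column sub-proof of $\Codeword$ inside $\Domain_{\SymbolCol,\alpha_i}$ is uniquely determined by the RS codeword on that column, which by construction equals $p_{\alpha_i}$, and thus agrees with $\Restrict{\Codeword'}{\Domain_{\SymbolCol,\alpha_i}}$; likewise for rows. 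Since the ``proximity'' parts of distinct cover views are disjoint (again by $1$-intersecting) and the RS parts agree on $\InterDomain{\Cover}\supseteq \Domain'$ by the interpolation, we conclude $\Restrict{\Codeword}{\Domain'\cup \CDomain_J}=\Codeword'$, as required. The main obstacle in this plan is the careful bookkeeping of \emph{which} consistency relations between the given column/row polynomials and extra-point values are forced by $\Codeword'$'s local consistency (so that the bivariate system has a solution), versus which are genuinely ``new'' degrees of freedom; the 1-intersecting structure, proved in \clmref{claim:bsrs-1-intersecting}, is what keeps this bookkeeping tractable.
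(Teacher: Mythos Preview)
Your proposal is correct and follows essentially the same approach as the paper: reduce to finding a bivariate polynomial $g$ of the right degrees matching the prescribed row/column polynomials and extra point values, then use \clmref{claim:bsrs-code-cover-proof-defined-by-RS} to lift $g$ to a full $\BSCode$ codeword. The only real difference is packaging: the paper factors out the bivariate interpolation step as a standalone folklore statement (\clmref{claim:bsrs-rm-folklore}), checking the single numerical bound $\SetCardinality{\VariantA{\Domain}}+\SetCardinality{\VariantB{\Domain}}\le \SetCardinality{\SubField}^{\Dimension{\BSSpace}/2-\BSBalance-1}<\min\{\SetCardinality{\BSSpace_0},\SetCardinality{\BSSpace_1}\cdot\rho\}$ and then invoking that claim, whereas you unfold the interpolation argument inline by explicitly choosing the grid $A\times B$ and counting free sub-grid cells versus constraints. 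Your identification of $\InterDomain{\Cover}$ with the RS-grid points $\phi(\alpha,\VanishingPoly{\BSSpace_0}(\beta))$ for $(\alpha,\beta)\in\BSSpace_0'\times\BSSpace_1$ is correct (and slightly sharper than what the paper uses, which just takes $\VariantB{\Domain}\subseteq\MakeBox{\Domain}$), and your consistency bookkeeping on the cross is exactly the right observation.
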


\begin{proof}
Recalling \defref{def:bsrs-independent-cover},
fix arbitrary subsets $\VariantA{\Domain} \subseteq (\Set{\SymbolCol} \times \BSSpace'_{0} ) \sqcup (\Set{\SymbolRow} \times \BSSpace_{1} )$ and $\VariantB{\Domain} \subseteq \MakeBox{\Domain}[\SubField,\Field,\BSSpace,\BSBalance]$ both of size at most $\SetCardinality{\SubField}^{\frac{\Dimension{\BSSpace}}{2} - \BSBalance -2}$, and define $\CDomain \DefineEqual \VariantB{\Domain} \cup ( \cup_{(\SymbolCol,\alpha) \in \VariantA{\Domain}} \Domain_{\SymbolCol,\alpha} ) \cup ( \cup_{(\SymbolRow,\beta) \in \VariantA{\Domain}} \Domain_{\SymbolRow,\beta})$. Let $\Codeword' \in \Field^{\Domain^{\BSCode}}$ be such that:
\begin{inparaenum}[(i)]
\item for every $(\SymbolCol,\alpha) \in \VariantA{\Domain}$ it holds that $\psi_{\SymbolCol,\alpha}(\Codeword') \in \BSCode[\SubField,\Field,\VanishingPoly{\BSSpace_{0}}(\BSSpace_{1}),\BSBalance,\BSBaseDim]$; and
\item for every $(\SymbolRow,\beta) \in \VariantA{\Domain}$ it holds that $\psi_{\SymbolRow,\beta}(\Codeword') \in \BSCode[\SubField,\Field,\BSSpace_{\beta},\BSBalance,\BSBaseDim]$.
\end{inparaenum}
We need to show that there exists $\Codeword \in \BSCode[\SubField,\Field,\BSSpace,\BSBalance,\BSBaseDim]$ such that $\Restrict{\Codeword}{\CDomain} = \Restrict{\Codeword'}{\CDomain}$.

In fact, it suffices to show that there exists $\MakeBox{\Codeword} \in \RSBox[\SubField,\Field,\BSSpace,\BSBalance]$ such that $\Restrict{\MakeBox{\Codeword}}{\CDomain \cap \MakeBox{\Domain}} = \Restrict{\Codeword'}{\CDomain \cap \MakeBox{\Domain}}$, because \clmref{claim:bsrs-code-cover-proof-defined-by-RS} implies there exists a unique codeword $\Codeword \in \BSCode[\SubField,\Field,\BSSpace,\BSBalance,\BSBaseDim]$ such that $\Restrict{\Codeword}{\MakeBox{\Domain}} = \MakeBox{\Codeword}$ and $\Restrict{\Codeword}{\CDomain} = \Restrict{\Codeword'}{\CDomain}$.

Thus, we now argue that there exists $\MakeBox{\Codeword} \in \RSBox[\SubField,\Field,\BSSpace,\BSBalance]$ such that the following holds.
\begin{itemize}

\item For every $(\SymbolCol,\alpha) \in \VariantA{\Domain}$ and $\beta \in \VanishingPoly{\BSSpace_{0}}(\BSSpace_{1})$, it holds that $f_{\MakeBox{\Codeword}}(\alpha,\beta) = \Codeword'(\phi(\alpha,\beta)) = \left(\psi_{\SymbolCol,\alpha}(\Codeword')\right)(\SymbolRS,\beta)$. In particular, $\Restrict{f_{\MakeBox{\Codeword}}}{\Set{\alpha} \times \VanishingPoly{\BSSpace_{0}}(\BSSpace_{1})} \in \RSCode{\Field}{\VanishingPoly{\BSSpace_{0}}(\BSSpace_{1})}{\SetCardinality{\BSSpace_{1}}\cdot \CodeRate}$,
and let $p_{\SymbolCol,\alpha}$ be its univariate low degree extension to $\Field$.

\item For every $(\SymbolRow,\beta) \in \VariantA{\Domain}$ and $\alpha \in \BSSpace_{\beta}$, it holds that $f_{\MakeBox{\Codeword}}(\alpha,\VanishingPoly{\BSSpace_{0}}(\beta)) = \Codeword'(\phi(\alpha,\VanishingPoly{\BSSpace_{0}}(\beta))) = \left(\psi_{\SymbolRow,\beta}(\Codeword')\right)(\SymbolRS,\alpha)$. In particular, $\Restrict{f_{\MakeBox{\Codeword}}}{\BSSpace_{\beta} \times \Set{\VanishingPoly{\BSSpace_{0}}(\BSSpace_{1})}} \in \RSCode{\Field}{\BSSpace_{\beta}}{\SetCardinality{\BSSpace_{0}}}$, and let $p_{\SymbolRow,\beta}$ be its univariate low degree extension to $\Field$.

\item For every $(\alpha,\beta) \in \VariantB{\Domain}$, it holds that $f_{\MakeBox{\Codeword}}(\alpha,\VanishingPoly{\BSSpace_{0}}(\beta)) = \Codeword'(\phi(\alpha,\VanishingPoly{\BSSpace_{0}}(\beta)))$.

\end{itemize}
By \defref{def:bs-proof-of-proximity} it suffices to show that there exists a bivariate polynomial $g \in \Field[X,Y]$ such that:
\begin{inparaenum}[(i)]
  \item $\IndividualDegree{g}[X] < \SetCardinality{\BSSpace_{0}}$;
  \item $\IndividualDegree{g}[Y] < \SetCardinality{\BSSpace_{1}} \cdot \rho$;
  \item $\Restrict{g}{X=\alpha} = p_{\SymbolCol,\alpha}$ for every $(\SymbolCol,\alpha) \in \VariantA{\Domain}$;
  \item $\Restrict{g}{Y=\VanishingPoly{\BSSpace_{0}}(\beta)} = p_{\SymbolRow,\beta}$ for every $(\SymbolRow,\beta) \in \VariantA{\Domain}$;
  \item $g(\alpha,\beta) = \Codeword'(\phi(\alpha,\VanishingPoly{\BSSpace_{0}}(\beta)))$ for every $(\alpha,\beta) \in \VariantB{\Domain}$.
\end{inparaenum}
But notice that
$
\SetCardinality{\VariantA{\Domain}} + \SetCardinality{\VariantB{\Domain}}
\leq 2 \cdot \SetCardinality{\SubField}^{\frac{\Dimension{\BSSpace}}{2} - \BSBalance -2}
=    \SetCardinality{\SubField}^{\frac{\Dimension{\BSSpace}}{2} - \BSBalance -1}
<    \min\Set{\SetCardinality{\BSSpace_{0}}, \SetCardinality{\BSSpace_{1}}\cdot\rho}
$, because
\begin{inparaenum}[(a)]
  \item $\log_{\SetCardinality{\SubField}}( \SetCardinality{\BSSpace_{0}}) = \Dimension{\BSSpace_{0}} \geq \frac{\Dimension{\BSSpace}}{2} -1 $, and
  \item $\log_{\SetCardinality{\SubField}}( \SetCardinality{\BSSpace_{1}} \cdot \rho) = \Dimension{\BSSpace_{1}} - \BSBalance \geq \frac{\Dimension{\BSSpace}}{2} -\BSBalance $.
\end{inparaenum}
The claim follows by considering a suitable interpolating set (see \secref{sec:proof-of-BSRS-recursive-cover-independence}).
\end{proof}

\subsection{Computing spanning sets of dual codes in the recursive cover}
\label{sec:bsrs-complexity}

We prove that spanning sets for duals of codes in the recursive cover can be computed efficiently; this is the key fact that we later use to argue that the algorithm required by \lemref{lem:bsrs-has-recursive-code-cover} satisfies the stated time complexity.

\begin{claim}
\label{claim:bsrs-efficient-basis}
For every positive integer $m$ and vertex $v$ in $\TreeCover[\SubField,\Field,\BSSpace,\BSBalance,\BSBaseDim]$ of depth at least $\log_{2} \Dimension{\BSSpace} - \log_{2} \log_{\SetCardinality{\SubField}} m$, a spanning set of $\Dual{\CCode_{v}}$ can be computed in time $\poly(\log_{2} \SetCardinality{\Field} + \SetCardinality{\SubField}^{\BSBalance} + m)$.
\end{claim}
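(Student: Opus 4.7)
The plan is to reduce the problem to ordinary Gaussian elimination on an explicit generator matrix of $\CCode_{v}$, by first showing that the entire code $\CCode_{v}$ has small block length and small dimension when $v$ is deep enough in $\TreeCover[\SubField,\Field,\BSSpace,\BSBalance,\BSBaseDim]$.

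First I would invoke \clmref{claim:bsrs-bounded-basis}: if $v$ lies at depth $\CDepth \geq \log_{2}\Dimension{\BSSpace} - \log_{2}\log_{\SetCardinality{\SubField}} m$, then $\CCode_{v} = \BSCode[\SubField,\Field,\OtherBSSpace,\BSBalance,\BSBaseDim]$ for some $\SubField$-linear space $\OtherBSSpace \subseteq \Field$ with
\begin{equation*}
\Dimension{\OtherBSSpace} \;\leq\; \Dimension{\BSSpace} \cdot 2^{-\CDepth} + 2\BSBalance \;\leq\; \log_{\SetCardinality{\SubField}} m + 2\BSBalance,
\end{equation*}
so that $\SetCardinality{\OtherBSSpace} \leq m \cdot \SetCardinality{\SubField}^{2\BSBalance}$. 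The second step is to bound both the block length $\CodeBlockLength(\CCode_{v}) = \SetCardinality{\Domain^{\BSCode}[\SubField,\Field,\OtherBSSpace,\BSBalance,\BSBaseDim]}$ and the dimension $\Dimension{\CCode_{v}}$. Unrolling \defref{def:bsrs-basic-notation} and \defref{def:bsrs-family} by induction on the recursion depth $\BsDepth{\OtherBSSpace}{\BSBaseDim}$, the block length satisfies a recurrence of the form $B(\OtherBSSpace) \leq \SetCardinality{\OtherBSSpace} + 2 \cdot \SetCardinality{\OtherBSSpace}^{1/2}\cdot \SetCardinality{\SubField}^{O(\BSBalance)} \cdot B(\OtherBSSpace')$ with $\Dimension{\OtherBSSpace'} \leq \tfrac{1}{2}\Dimension{\OtherBSSpace}+\BSBalance$, which resolves to $B(\OtherBSSpace) = \SetCardinality{\OtherBSSpace} \cdot \polylog \SetCardinality{\OtherBSSpace} \cdot \poly(\SetCardinality{\SubField}^{\BSBalance}) = \poly(m, \SetCardinality{\SubField}^{\BSBalance})$. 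The dimension is even smaller since, by \clmref{claim:bsrs-code-cover-proof-defined-by-RS}, every codeword of $\CCode_{v}$ is determined by its underlying Reed--Solomon part, so $\Dimension{\CCode_{v}} \leq \Dimension{\RSCode{\Field}{\OtherBSSpace}{\SetCardinality{\OtherBSSpace}\cdot \SetCardinality{\SubField}^{-\BSBalance}}} = \SetCardinality{\OtherBSSpace}\cdot \SetCardinality{\SubField}^{-\BSBalance} \leq m \cdot \SetCardinality{\SubField}^{\BSBalance}$.

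Third, I would build an explicit generator matrix for $\CCode_{v}$ in the promised time. For each of $\Dimension{\CCode_{v}}$ basis polynomials $g\in \Field^{<\SetCardinality{\OtherBSSpace} \cdot \SetCardinality{\SubField}^{-\BSBalance}}[\VariableX]$ (say the monomial basis), evaluate $g$ on $\OtherBSSpace$ to obtain the Reed--Solomon part of the codeword, and then compute the uniquely determined proximity-proof part (guaranteed by \clmref{claim:bsrs-code-cover-proof-defined-by-RS}) by descending through the recursion of \defref{def:bsrs-family}: at each recursive call on a subspace $\OtherBSSpace''$, interpolate the relevant bivariate polynomial on $\MakeF{\Domain}[\SubField,\Field,\OtherBSSpace'',\BSBalance]$ from the already-computed Reed--Solomon values, and recurse into the column and row subcodes. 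Since the total number of entries produced is $B(\OtherBSSpace) = \poly(m,\SetCardinality{\SubField}^{\BSBalance})$, and each arithmetic operation over $\Field$ costs $\poly(\log \SetCardinality{\Field})$, the full generator matrix of $\CCode_{v}$ is assembled in time $\poly(\log \SetCardinality{\Field} + m + \SetCardinality{\SubField}^{\BSBalance})$. Finally, run Gaussian elimination over $\Field$ on this $\Dimension{\CCode_{v}}\times \CodeBlockLength(\CCode_{v})$ matrix to obtain a spanning set (in fact a basis) for $\Dual{\CCode_{v}}$; this costs a polynomial in the matrix dimensions and in $\log \SetCardinality{\Field}$, and hence fits the claimed bound.

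The main obstacle is the bookkeeping for the recursive generator-matrix construction: one must check that the embeddings $\phi_{\SymbolCol,\alpha}$ and $\phi_{\SymbolRow,\beta}$ of \defref{def:bsrs-basic-notation} place the recursively generated entries into the correct coordinates of $\Domain^{\BSCode}[\SubField,\Field,\OtherBSSpace,\BSBalance,\BSBaseDim]$, and that the bivariate interpolation needed at each recursive step (of the form used in \clmref{claim:bsrs-cover-independence} and \clmref{claim:bsrs-code-cover-proof-defined-by-RS}) is feasible within the stated time budget; the size bounds on $\OtherBSSpace$ and on the recursion depth $\BsDepth{\OtherBSSpace}{\BSBaseDim} = O(\log \log m)$ ensure that all of these interpolations remain polynomial in $m$ and $\SetCardinality{\SubField}^{\BSBalance}$.
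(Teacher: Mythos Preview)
Your plan is correct, and its first two steps (bounding $\Dimension{\OtherBSSpace}$ via \clmref{claim:bsrs-bounded-basis} and then bounding the block length of $\CCode_{v}$ by $\poly(m,\SetCardinality{\SubField}^{\BSBalance})$) coincide exactly with the paper's \clmref{claim:bsrs-bounded-domain}. Where you diverge is in the third step: you build an explicit \emph{generator} matrix for $\CCode_{v}$---one row per monomial, each row obtained by running the BS prover to fill in the unique proximity proof---and then Gaussian-eliminate to obtain a basis for the dual. The paper instead computes a spanning set for $\Dual{\CCode_{v}}$ \emph{directly and recursively} (\clmref{claim:bsrs-efficient-basis-aux}): it uses the characterization ``$\Codeword\in\BSCode[\SubField,\Field,\OtherBSSpace,\BSBalance,\BSBaseDim]$ iff every row and column projection lies in the appropriate smaller $\BSCode$'' to conclude that $\Dual{\CCode_{v}}$ is spanned by the (embedded) duals of the row/column subcodes, and recurses on those until the base case $\RSBox$, whose dual is handled directly.

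Both routes are sound once the block length is polynomial. Your approach is conceptually simpler (one global Gaussian elimination over an explicit matrix) but relies on correctly and efficiently executing the BS prover coordinate-by-coordinate, which is the bookkeeping you flag as the main obstacle; the paper's approach sidesteps computing any primal codeword and instead only needs, at each level, to know how to write down dual constraints for $\RSBox$ and to embed them via $\phi_{\SymbolCol,\alpha},\phi_{\SymbolRow,\beta}$---arguably less bookkeeping and yielding a spanning set (not necessarily a basis) without a final elimination step.
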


The above claim directly follows from \clmref{claim:bsrs-bounded-domain} and \clmref{claim:bsrs-efficient-basis-aux}, stated and proved below.

\begin{claim}
\label{claim:bsrs-bounded-domain}
For every positive integer $m$ and vertex $v$ in $\TreeCover[\SubField,\Field,\BSSpace,\BSBalance,\BSBaseDim]$ of depth at least $\log_{2} \Dimension{\BSSpace} - \log_{2} \log_{\SetCardinality{\SubField}} m$, $\SetCardinality{\CDomain_{v}} \leq \poly(m + \SetCardinality{\SubField}^{\BSBalance})$.
\end{claim}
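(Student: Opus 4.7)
The plan is to reduce this size bound to a dimension bound on the subspace indexing $\CCode_v$, apply Claim \ref{claim:bsrs-bounded-basis} to control that dimension in terms of the vertex depth, and then unfold the recursive definition of $\Domain^{\BSCode}$ to show its size is polynomial in the dimension of the subspace.

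First I would use Claim \ref{claim:bsrs-bounded-basis}, which asserts that $\CCode_v = \BSCode[\SubField,\Field,\OtherBSSpace,\BSBalance,\BSBaseDim]$ for some subspace $\OtherBSSpace$ whose dimension satisfies $\Dimension{\OtherBSSpace} \leq \Dimension{\BSSpace}\cdot 2^{-\CDepth} + 2\BSBalance$, where $\CDepth = \TreeDepth{\TreeCover,v}$. Plugging in the hypothesis $\CDepth \geq \log_{2}\Dimension{\BSSpace} - \log_{2}\log_{\SetCardinality{\SubField}} m$ gives $\Dimension{\BSSpace}\cdot 2^{-\CDepth} \leq \log_{\SetCardinality{\SubField}} m$, so $\Dimension{\OtherBSSpace} \leq \log_{\SetCardinality{\SubField}} m + 2\BSBalance$, whence $\SetCardinality{\OtherBSSpace} \leq m \cdot \SetCardinality{\SubField}^{2\BSBalance}$. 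Since $\CDomain_v = \Domain^{\BSCode}[\SubField,\Field,\OtherBSSpace,\BSBalance,\BSBaseDim]$, it remains to bound this latter quantity polynomially in $\SetCardinality{\OtherBSSpace} + \SetCardinality{\SubField}^{\BSBalance}$.

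The main step, and the one requiring the most care, is bounding $\SetCardinality{\Domain^{\BSCode}[\SubField,\Field,\OtherBSSpace,\BSBalance,\BSBaseDim]}$ by unfolding its recursive definition. Let $N(\ell)$ denote the maximum of $\SetCardinality{\Domain^{\BSCode}[\SubField,\Field,\BSSpace',\BSBalance,\BSBaseDim]}$ over all $\BSSpace'$ of dimension $\ell$. Inspection of the definitions in Appendix \ref{sec:bsrs-formal-definitions} yields $\SetCardinality{\MakeBox{\Domain}} = O(\SetCardinality{\BSSpace'})$ for the ``base'' contribution $\Set{\SymbolRS}\times\BSSpace' \sqcup \Set{\SymbolProx}\times\MakeP{\Domain}$, and then the recursion gives
\begin{equation*}
N(\ell) \leq c\cdot\SetCardinality{\SubField}^{\ell} + \SetCardinality{\SubField}^{\lfloor \ell/2\rfloor+\BSBalance-1}\cdot N(\lceil\ell/2\rceil) + \SetCardinality{\SubField}^{\lceil\ell/2\rceil}\cdot N(\lfloor\ell/2\rfloor+\BSBalance)
\end{equation*}
for some absolute constant $c$, where the two product terms correspond respectively to the column- and row-codes $\CoverAtVertex{\Cover}{\alpha}$ and $\CoverAtVertex{\Cover}{\beta}$ from Definition \ref{def:native-cover}. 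The recursion terminates at dimension $\BSBaseDim$, contributing a factor $\SetCardinality{\SubField}^{O(\BSBaseDim)} = \SetCardinality{\SubField}^{O(\BSBalance)}$. A standard induction on $\ell$ (as in \cite{BS08}) shows that $N(\ell) \leq \SetCardinality{\SubField}^{\ell}\cdot\poly(\ell,\SetCardinality{\SubField}^{\BSBalance})$, i.e., the BS proximity-proof domain is quasilinear in the block length $\SetCardinality{\BSSpace'}$.

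Combining the two steps, we get
\begin{equation*}
\SetCardinality{\CDomain_v} = N(\Dimension{\OtherBSSpace}) \leq \SetCardinality{\OtherBSSpace}\cdot\poly(\Dimension{\OtherBSSpace},\SetCardinality{\SubField}^{\BSBalance}) \leq (m\cdot\SetCardinality{\SubField}^{2\BSBalance})\cdot\poly(\log m + \BSBalance,\SetCardinality{\SubField}^{\BSBalance}) = \poly(m+\SetCardinality{\SubField}^{\BSBalance}),
\end{equation*}
as claimed. The main obstacle is the explicit solution of the recursion for $N(\ell)$; beyond that, the argument is essentially a bookkeeping calculation combining the dimension bound from Claim \ref{claim:bsrs-bounded-basis} with the known quasilinearity of BS proximity-proof length.
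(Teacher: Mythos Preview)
Your proposal is correct and follows essentially the same approach as the paper: bound $\Dimension{\OtherBSSpace}$ via \clmref{claim:bsrs-bounded-basis}, then invoke the (quasi)linearity of the $\BSCode$ block length in $\SetCardinality{\OtherBSSpace}$ (the paper simply cites \cite{BS08} for $\tilde{O}(\SetCardinality{\BSSpace}\cdot\SetCardinality{\SubField}^{\BSBalance})$ rather than unfolding the recursion, but the content is the same). One minor imprecision: $\SetCardinality{\MakeBox{\Domain}}$ is $O(\SetCardinality{\BSSpace'}\cdot\SetCardinality{\SubField}^{\BSBalance})$ rather than $O(\SetCardinality{\BSSpace'})$, but this is absorbed into your final $\poly(m+\SetCardinality{\SubField}^{\BSBalance})$ bound anyway.
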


\begin{proof}
Ben-Sasson and Sudan \cite{BS08} show that the block length of $\BSCode[\SubField,\Field,\BSSpace,\BSBalance,\BSBaseDim]$ is $\tilde{O}_{\SubField,\BSBalance,\BSBaseDim}(\SetCardinality{\BSSpace})$ for any fixed $\SubField,\BSBalance,\BSBaseDim$. One can verify that, if we do not fix these parameters, the block length is $\tilde{O}(\SetCardinality{\BSSpace} \cdot \SetCardinality{\SubField}^{\BSBalance})$. Next, observe that $\CCode_{v} = \BSCode[\SubField,\Field,\BSSpace_{v},\BSBalance,\BSBaseDim]$ for some $\BSSpace_{v}$ such that $\Dimension{\BSSpace_{v}} \leq \log_{\SetCardinality{\SubField}} m + 2\BSBalance$ (\clmref{claim:bsrs-bounded-basis}); in this case, the aforementioned bound becomes $\poly(m + \SetCardinality{\SubField}^{\BSBalance})$.
\end{proof}

\ignore[explicit proof]{
\begin{proof}
By \clmref{claim:bsrs-bounded-basis} $\CCode_{v} = \Code_{(\Field,\BSSpace_{v})}$ for some $\BSSpace_{v}$ such that $\Dimension{\BSSpace_{v}} \leq \log_{\SetCardinality{\SubField}}(m) + 2\BSBalance$,
thus $\SetCardinality{\BSSpace_{v}} \leq m + \SetCardinality{\SubField}^{2\BSBalance}$.

It suffices to show $\SetCardinality{\Domain^{\BSCode}[\SubField,\Field,\BSSpace,\BSBalance,\BSBaseDim]} \leq \left( \SetCardinality{\BSSpace}\cdot \SetCardinality{\SubField}^{\BSBalance} \right)^2$, we show it using induction on $\Dimension{\BSSpace}$:
\begin{itemize}

\item If $\Dimension{\BSSpace} \leq \BSBaseDim$ then
$\SetCardinality{\Domain^{\BSCode}[\SubField,\Field,\BSSpace,\BSBalance,\BSBaseDim]} =
\SetCardinality{\MakeBox{\Domain}[\SubField,\Field,\BSSpace,\BSBalance]} =
\SetCardinality{\BSSpace}\cdot \SetCardinality{\SubField}^{\BSBalance} \le
\left( \SetCardinality{\BSSpace}\cdot \SetCardinality{\SubField}^{\BSBalance} \right)^2$.

\item If $\Dimension{\BSSpace} > \BSBaseDim$ then
\begin{align}
\SetCardinality{\Domain^{\BSCode}[\SubField,\Field,\BSSpace,\BSBalance,\BSBaseDim]} \leq &
\SetCardinality{\MakeBox{\Domain}[\SubField,\Field,\BSSpace,\BSBalance]}  \label{prf:bsrs-dom-size-dom-def}\\
+& \sum_{\alpha \in \BSSpace'_{0}} \SetCardinality{\MakeP{\Domain}^{\BSCode}[\SubField,\Field,\VanishingPoly{\BSSpace_{0}}(\BSSpace_{1}),\BSBalance,\BSBaseDim]}
+ \sum_{\beta \in \BSSpace_{1}} \SetCardinality{\MakeP{\Domain}^{\BSCode}[\SubField,\Field,\BSSpace_{\beta},\BSBalance,\BSBaseDim]}\\
\leq & \SetCardinality{\BSSpace}\cdot \SetCardinality{\SubField}^{\BSBalance} +
\SetCardinality{\BSSpace'_{0}} \cdot \left( \SetCardinality{\BSSpace_{1}} \cdot \SetCardinality{\SubField}^{\BSBalance} \right)^2 +
\SetCardinality{\BSSpace_{1}} \cdot \left( \SetCardinality{\BSSpace'_{0}} \cdot \SetCardinality{\SubField}^{\BSBalance+1} \right)^2 \label{prf:bsrs-dom-size-ind}\\
\leq & \SetCardinality{\BSSpace}\cdot \SetCardinality{\SubField}^{\BSBalance} +
\left( \SetCardinality{\BSSpace_{0}} \cdot \SetCardinality{\SubField}^{\BSBalance-1} \right) \cdot \left( \SetCardinality{\BSSpace_{1}} \cdot \SetCardinality{\SubField}^{\BSBalance} \right)^2 +
\SetCardinality{\BSSpace_{1}} \cdot \left( \left(\SetCardinality{\BSSpace_{0}} \cdot \SetCardinality{\SubField}^{\BSBalance-1} \right) \cdot \SetCardinality{\SubField}^{\BSBalance+1} \right)^2\\
= & \SetCardinality{\BSSpace} \cdot \SetCardinality{\SubField}^{\BSBalance} \cdot
\left(1+ \SetCardinality{\BSSpace_{1}}\cdot \SetCardinality{\SubField}^{2\BSBalance-1} + \SetCardinality{\BSSpace_{0}} \cdot \SetCardinality{\SubField}^{\BSBalance} \right) \label{prf:bsrs-dom-size-L0L1=L}\\
\leq & \SetCardinality{\BSSpace} \cdot \SetCardinality{\SubField}^{\BSBalance} \cdot
\left(1 + \sqrt{\SetCardinality{\BSSpace}}\cdot \SetCardinality{\SubField}^{2\BSBalance} + \sqrt{\SetCardinality{\BSSpace}} \cdot \SetCardinality{\SubField}^{\BSBalance} \right) \label{prf:bsrs-dom-size-L0,L1=sqrL}\\
\leq & \SetCardinality{\BSSpace} \cdot \SetCardinality{\SubField}^{2\BSBalance} \cdot
\sqrt{\SetCardinality{\BSSpace}}\cdot \left(  \SetCardinality{\SubField}^{\BSBalance} + 2 \right)\\
\leq & \SetCardinality{\BSSpace} \cdot \SetCardinality{\SubField}^{2\BSBalance} \cdot
\sqrt{\SetCardinality{\BSSpace}}\cdot \left(  \SetCardinality{\SubField}^{\BSBalance+1}\right)\\
\leq & \SetCardinality{\BSSpace} \cdot \SetCardinality{\SubField}^{2\BSBalance} \cdot
\sqrt{\SetCardinality{\BSSpace}}\cdot \left(  \SetCardinality{\SubField}^{\frac{\BSBaseDim+1}{2}}\right) \label{prf:bsrs-dom-size-k,m}\\
\leq & \SetCardinality{\BSSpace}^2 \cdot \SetCardinality{\SubField}^{2\BSBalance} \label{prf:bsrs-dom-size-L,k}
\end{align}
Where \eqnref{prf:bsrs-dom-size-dom-def} follows by definition of $\Domain^{\BSCode}[\SubField,\Field,\BSSpace,\BSBalance,\BSBaseDim]$,
\eqnref{prf:bsrs-dom-size-ind} follows by the inductive assumption,
\eqnref{prf:bsrs-dom-size-L0L1=L} follows by the fact that $\SetCardinality{\BSSpace} = \SetCardinality{\BSSpace_{0}} \cdot \SetCardinality{\BSSpace_{1}}$,
\eqnref{prf:bsrs-dom-size-L0,L1=sqrL} follows by the facts that
$\SetCardinality{\BSSpace_{0}} \leq \sqrt{\SetCardinality{\BSSpace}}$
and
$\SetCardinality{\BSSpace_{1}} \leq \sqrt{\SetCardinality{\BSSpace}}\cdot \SetCardinality{\SubField}$,
\eqnref{prf:bsrs-dom-size-k,m} follows by the requirement $\BSBaseDim > 2\BSBalance$,
and \eqnref{prf:bsrs-dom-size-L,k} follows by the assumption $\Dimension{\BSSpace} \geq \BSBaseDim+1$.

\end{itemize}
\end{proof}
}

\begin{claim}
\label{claim:bsrs-efficient-basis-aux}
A spanning set for $\Dual{\BSCode[\SubField,\Field,\BSSpace,\BSBalance,\BSBaseDim]}$ can be found
in time $\poly \left( \log_{2}\SetCardinality{\Field} + \SetCardinality{\Domain^{\BSCode}[\SubField,\Field,\BSSpace,\BSBalance,\BSBaseDim]} \right)$.
\end{claim}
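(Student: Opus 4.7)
The target running time $\poly(\log_{2} \SetCardinality{\Field} + \SetCardinality{\Domain^{\BSCode}})$ is generous enough that it suffices to build a generator matrix for $\BSCode[\SubField,\Field,\BSSpace,\BSBalance,\BSBaseDim]$ explicitly and then recover a basis of its dual by Gaussian elimination; no finer code-theoretic structure is needed.

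First I would use \clmref{claim:bsrs-code-cover-proof-defined-by-RS} to fix a natural parametrization of the code: the map $Q \mapsto \Codeword_{Q}$ sending a univariate polynomial $Q \in \Field^{<\RSDegree}[X]$, where $\RSDegree = \SetCardinality{\BSSpace} \cdot \SetCardinality{\SubField}^{-\BSBalance}$, to the unique codeword in $\BSCode$ whose $\Set{\SymbolRS} \times \BSSpace$-coordinates evaluate $Q$ on $\BSSpace$ is a linear bijection. Consequently $\{\Codeword_{X^{i}}\}_{0 \le i < \RSDegree}$ is a basis of $\BSCode$, and $\RSDegree \le \SetCardinality{\BSSpace} \le \SetCardinality{\Domain^{\BSCode}}$.

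Second I would compute each $\Codeword_{X^{i}}$ by unfolding \defref{def:bsrs-family} recursively. In the base case $\Dimension{\BSSpace} \le \BSBaseDim$, membership in $\RSBox$ reduces to interpolating the bivariate extension of $X^{i}$ and evaluating it on $\MakeBox{\Domain}$, which costs $\poly(\SetCardinality{\MakeBox{\Domain}} + \log \SetCardinality{\Field})$. At an internal node, one writes the RS evaluation on $\BSSpace$ and then recurses on the ``column'' sub-code $\BSCode[\SubField,\Field,\VanishingPoly{\BSSpace_{0}}(\BSSpace_{1}),\BSBalance,\BSBaseDim]$ for every $\alpha \in \BSSpace'_{0}$ and on the ``row'' sub-code $\BSCode[\SubField,\Field,\BSSpace_{\beta},\BSBalance,\BSBaseDim]$ for every $\beta \in \BSSpace_{1}$, each time supplying the appropriate restriction of the bivariate extension. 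Since the recursive sub-domains are disjoint slices of $\Domain^{\BSCode}$, the work at a single recursion level is bounded by $\SetCardinality{\Domain^{\BSCode}}$ and the depth is $O(\log \Dimension{\BSSpace})$, so one codeword costs $\poly(\SetCardinality{\Domain^{\BSCode}} + \log \SetCardinality{\Field})$ in total.

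Stacking the $\RSDegree$ codewords as rows of an $\RSDegree \times \SetCardinality{\Domain^{\BSCode}}$ matrix $G$ over $\Field$ yields a generator matrix for $\BSCode$, and then standard Gaussian elimination returns a basis for $\ker(G) = \Dual{\BSCode}$ in time $\poly(\SetCardinality{\Domain^{\BSCode}} + \log \SetCardinality{\Field})$. The main obstacle I anticipate is not the linear algebra but the bookkeeping for the recursive encoding: one must verify that at each recursive step the required column and row univariate polynomials can themselves be produced quickly (via interpolation on $O(\sqrt{\SetCardinality{\BSSpace}})$-size sets) and that their descriptions fit within the $\poly(\SetCardinality{\Domain^{\BSCode}})$ budget throughout the $O(\log \Dimension{\BSSpace})$ levels of recursion. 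Once that is checked, the rest is immediate.
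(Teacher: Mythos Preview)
Your approach is correct but takes a genuinely different route from the paper's. You build a generator matrix for $\BSCode$ by encoding each basis monomial $X^{i}$ (which is precisely the job of the BS prover, already known to run in time $\tilde{O}(\SetCardinality{\BSSpace}) \le \poly(\SetCardinality{\Domain^{\BSCode}})$) and then dualize via Gaussian elimination. The paper instead works directly on the dual side: it uses the characterization that $\Codeword \in \BSCode$ if and only if every column restriction $\psi_{\SymbolCol,\alpha}(\Codeword)$ and every row restriction $\psi_{\SymbolRow,\beta}(\Codeword)$ lies in the corresponding sub-code, so $\Dual{\BSCode}$ is spanned by the (embedded) duals of all the codes in the native cover; a spanning set is then assembled recursively from spanning sets of those smaller duals, with a running-time recurrence of the same flavor as the block-length recurrence. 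Your route is more elementary and reuses the encoder as a black box; the paper's route produces a spanning set each of whose elements is supported on a single view of the cover, which is exactly the local structure that is exploited downstream in \lemref{lem:bsrs-has-recursive-code-cover}. One small correction: the recursive sub-domains at a given level are not literally disjoint slices (two siblings can share up to one point, by the $1$-intersecting property of \clmref{claim:bsrs-1-intersecting}), but this does not affect your polynomial bound.
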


\begin{proof}
We show an algorithm that constructs the desired spanning set in the stated time complexity. First, a spanning set for $\Dual{\RSCode{\Field}{S}{d}}$ can be found in time $\poly(\log_{2} \SetCardinality{\Field} + \SetCardinality{S})$, for any finite field $\Field$, subset $S \subseteq \Field$, and degree bound $d < \SetCardinality{S}$. Hence, a spanning set for $\Dual{\RSBox[\SubField,\Field,\BSSpace,\BSBalance]}$
can be found in time $(\log_{2} \SetCardinality{\Field} \cdot \SetCardinality{\BSSpace} \cdot \SetCardinality{\SubField}^{\BSBalance})^{c}$ for some $c>0$.

We argue by induction on $\Dimension{\BSSpace}$ that a spanning set for $\Dual{\BSCode[\SubField,\Field,\BSSpace,\BSBalance,\BSBaseDim]}$ can be found in time $(\log_{2} \SetCardinality{\Field} \cdot \SetCardinality{\BSSpace} \cdot \SetCardinality{\SubField}^{\BSBalance})^{c}$. We rely the property that the code $\BSCode[\SubField,\Field,\BSSpace,\BSBalance,\BSBaseDim]$ is covered by
\begin{equation*}
\Set{(\BSCode[\SubField,\Field,\VanishingPoly{\BSSpace_{0}}(\BSSpace_{1}),\BSBalance,\BSBaseDim],\Domain_{\SymbolCol,\alpha})}_{\alpha \in \BSSpace'_{0}}
\cup
\Set{(\BSCode[\SubField,\Field,\BSSpace_{\beta},\BSBalance,\BSBaseDim],\Domain_{\SymbolRow,\beta})}_{\beta \in \BSSpace_{1}}
\end{equation*}
and the property that $\Codeword \in \BSCode[\SubField,\Field,\BSSpace,\BSBalance,\BSBaseDim]$ if an only if:
\begin{itemize}
\item $\psi_{\SymbolCol,\alpha}(\Codeword) \in \BSCode[\SubField,\Field,\VanishingPoly{\BSSpace_{0}}(\BSSpace_{1}),\BSBalance,\BSBaseDim]$ for every $\alpha \in \BSSpace'_{0}$ and
\item $\psi_{\SymbolRow,\beta}(\Codeword) \in \BSCode[\SubField,\Field,\BSSpace_{\beta},\BSBalance,\BSBaseDim]$ for every $\beta \in \BSSpace_{1}$.
\end{itemize}
Thus $\Dual{\BSCode[\SubField,\Field,\BSSpace,\BSBalance,\BSBaseDim]}$ is spanned by the duals of codes in its cover and, in particular, is spanned by their spanning sets; in sum, it suffices to construct a spanning set for its cover.

In light of the above, we can bound the construction time of a spanning set for $\Dual{\BSCode[\SubField,\Field,\BSSpace,\BSBalance,\BSBaseDim]}$ as follows:
\begin{itemize}

\item If $\Dimension{\BSSpace} \leq \BSBaseDim$, the claim follows as the code in this case simply equals $\RSBox[\SubField,\Field,\BSSpace,\BSBalance]$.

\item If $\Dimension{\BSSpace} > \BSBaseDim$, the time to construct a spanning set is at most the time to construct spanning sets for the cover:
\begin{align}
& \SetCardinality{\BSSpace'_{0}} \cdot (\log_{2} \SetCardinality{\Field} \cdot \SetCardinality{\BSSpace_{1}} \cdot \SetCardinality{\SubField}^{\BSBalance})^{c} +
\SetCardinality{\BSSpace_{1}} \cdot (\log_{2} \SetCardinality{\Field} \cdot \SetCardinality{\BSSpace'_{0}} \cdot \SetCardinality{\SubField}^{\BSBalance+1})^{c} \label{prf:bsrs-spanning-time-ind}\\
=\; &
\SetCardinality{\BSSpace_{0}} \cdot \SetCardinality{\SubField}^{\BSBalance-1} \cdot (\log_{2} \SetCardinality{\Field} \cdot \SetCardinality{\BSSpace_{1}} \cdot \SetCardinality{\SubField}^{\BSBalance})^{c} +
\SetCardinality{\BSSpace_{1}} \cdot (\log_{2} \SetCardinality{\Field} \cdot \SetCardinality{\BSSpace_{0}} \cdot \SetCardinality{\SubField}^{2\BSBalance})^{c} \label{prf:bsrs-spanning-time-def-L'0}\\
=\; &
\SetCardinality{\BSSpace} \cdot \left( \log_{2} \SetCardinality{\Field} \cdot \SetCardinality{\SubField}^{\BSBalance} \right)^{c} \cdot \left( \SetCardinality{\SubField}^{\BSBalance-1} \cdot \SetCardinality{\BSSpace_{1}}^{c-1} + \SetCardinality{\SubField}^{c\BSBalance} \cdot \SetCardinality{\BSSpace_{0}}^{c-1} \right)
\label{prf:bsrs-spanning-time-L0L1=L}\\
\leq\; &
\SetCardinality{\BSSpace} \cdot \left( \log_{2} \SetCardinality{\Field} \cdot \SetCardinality{\SubField}^{\BSBalance} \right)^{c} \cdot \left( \SetCardinality{\SubField}^{\BSBalance + c -2} \cdot \SetCardinality{\BSSpace}^{\frac{c-1}{2}} + \SetCardinality{\SubField}^{c\BSBalance} \cdot \SetCardinality{\BSSpace}^{\frac{c-1}{2}} \right)
\label{prf:bsrs-spanning-time-def-L0L1}\\
=\; &
\SetCardinality{\BSSpace}^{\frac{c+1}{2}} \cdot \left( \log_{2} \SetCardinality{\Field} \cdot \SetCardinality{\SubField}^{\BSBalance} \right)^{c} \cdot \left( \SetCardinality{\SubField}^{\BSBalance+c-2} + \SetCardinality{\SubField}^{c\BSBalance} \right) \\
\leq\; &
(\log_{2} \SetCardinality{\Field} \cdot \SetCardinality{\BSSpace} \cdot \SetCardinality{\SubField}^{\BSBalance})^{c}
\enspace. \label{prf:bsrs-spanning-time-last-inequality}
\end{align}
Above,
\eqnref{prf:bsrs-spanning-time-ind} is by the inductive assumption,
\eqnref{prf:bsrs-spanning-time-def-L'0} is by definition of $\BSSpace'_{0}$,
\eqnref{prf:bsrs-spanning-time-L0L1=L} is by the fact that $\SetCardinality{\BSSpace_{0}} \cdot \SetCardinality{\BSSpace_{1}} = \SetCardinality{\BSSpace}$,
and \eqnref{prf:bsrs-spanning-time-def-L0L1} is by definition of $\BSSpace_{0}, \BSSpace_{1}$.
We are left to show \eqnref{prf:bsrs-spanning-time-last-inequality}, and this follows from the fact that:
\begin{inparaenum}[(i)]
  \item $\Dimension{\BSSpace} > \BSBaseDim$,
  \item$\BSBaseDim > 2\BSBalance$ by definition, and
  \item we can choose $c$ to be large enough (namely, so that $\SetCardinality{\SubField}^{\BSBalance+c-2} + \SetCardinality{\SubField}^{c\BSBalance} \leq \SetCardinality{\BSSpace}^{\frac{c-1}{2}}$ holds).
\end{inparaenum}
\qedhere
\end{itemize}
\end{proof}

\subsection{Putting things together}
\label{sec:bsrs-conclusion}

\begin{proof} [Proof of \lemref{lem:bsrs-has-recursive-code-cover}]
Define the depth function $\CDepth(\SubField,\BSSpace,\BSBalance,a) \DefineEqual \BsOtherDepth{\BSSpace}{a}$. We argue the two conditions in the lemma. First, for every index $\CodeIdx = (\SubField,\Field,\BSSpace,\BSBalance,\BSBaseDim)$, $\TreeCover[\SubField,\Field,\BSSpace,\BSBalance,\BSBaseDim]$ is a $1$-intersecting recursive cover of $\BSCode[\SubField,\Field,\BSSpace,\BSBalance,\BSBaseDim]$ (by \clmref{claim:bsrs-1-intersecting}). Moreover, for every positive integer $m$ and non-leaf vertex $v$ in $\TreeCover$ with $\TreeDepth{\TreeCover,v} < \CDepth(\SubField,\BSSpace,\BSBalance,m)$, the cover $\TreeCover_{v}$ is $m$-independent (by \clmref{cor:bsrs-low-depth-big-independence}).

Second, consider the algorithm that, given an index $\CodeIdx = (\SubField,\Field,\BSSpace,\BSBalance,\BSBaseDim)$ and subset $\IndexSet \subseteq \Domain^{\BSCode}[\SubField,\Field,\BSSpace,\BSBalance,\BSBaseDim]$, works as follows:
\begin{inparaenum}[(1)]
\item for every $\alpha \in \IndexSet$ choose an arbitrary vertex $v_{\alpha}$ in $\TreeLayer{\TreeCover,\CDepth(\SubField,\BSSpace,\BSBalance,\SetCardinality{\IndexSet})}$
such that $\alpha \in \CDomain_{v_{\alpha}}$, and then set $U \DefineEqual \Set{v_{\alpha}}_{\alpha \in \IndexSet}$;
\item compute a spanning set $\LocalSet_{v}$ set for $\Dual{\CCode_{v}}$;
\item return $\LocalSet \DefineEqual \cup_{u \in U} \LocalSet_{u}$.
\end{inparaenum}
This algorithm satisfies the required properties. First, it runs in time $\poly(\log_{2} \SetCardinality{\Field} + \Dimension{\BSSpace} + \SetCardinality{\SubField}^{\BSBalance} + \SetCardinality{\IndexSet})$ because a spanning set set for $\Dual{\CCode_{u}}$ can be computed in time $\poly(\log_{2} \SetCardinality{\Field} + \SetCardinality{\SubField}^{\BSBalance} + \SetCardinality{\IndexSet})$ (by \clmref{claim:bsrs-efficient-basis}). Next, its output $\LocalSet$ meets the requirements:
\begin{itemize}[nolistsep]

\item $U \subseteq \TreeLayer{\TreeCover[\SubField,\Field,\BSSpace,\BSBalance,\BSBaseDim],\CDepth(\SubField,\BSSpace,\BSBalance,\SetCardinality{\IndexSet})}$;

\item $\SetCardinality{U} \leq \SetCardinality{\IndexSet}$, by definition of $U$;

\item $\IndexSet \subseteq (\cup_{u \in U} \CDomain_{u})$, by definition of $U$;

\item $\Span(\LocalSet) = \Span(\cup_{u \in U} \LocalSet_{u}) = \Span(\cup_{u \in U} \Dual{\CCode_{u}})$, by definition of $\LocalSet$ and $\LocalSet_{u}$.

\end{itemize}
This completes the proof of \lemref{lem:bsrs-has-recursive-code-cover}.
\end{proof}

\doclearpage
\section{Folklore claim on interpolating sets}
\label{sec:proof-of-BSRS-recursive-cover-independence}

\begin{claim}
\label{claim:bsrs-rm-folklore}
Let $\Field$ be a field, let $\dCols, \dRows \in \Naturals$, and consider three sets $\SCols,\SRows \subseteq \Field$ and $\SPoints \subseteq \Field\times\Field$ such that $\SetCardinality{\SCols} + \SetCardinality{\SRows} + \SetCardinality{\SPoints} \leq \min\Set{\dCols, \dRows}$.
Let $f \colon \Field\times\Field \to \Field$ be a function such that:
\begin{itemize}

\item for every $\alpha \in \SCols$ there exists $\gCol{\alpha} \in \Field^{<\dRows}[x]$ such that $f(\alpha,\beta) = \gCol{\alpha}(\beta)$ for every $\beta \in \Field$;

\item for every $\beta \in \SRows$ there exists $\gRow{\beta} \in \Field^{<\dCols}[x]$ such that $f(\alpha,\beta) = \gRow{\beta}(\alpha)$ for every $\alpha \in \Field$.

\end{itemize}
Then there exists $\gExists \in \Field[X,Y]$ such that:
\begin{inparaenum}[(i)]
\item $\IndividualDegree{\gExists}[X] < \dRows$;
\item $\IndividualDegree{\gExists}[Y] < \dCols$;
\item $\Restrict{\gExists}{X=\alpha} = \gCol{\alpha}$ for every $\alpha \in \SCols$;
\item $\Restrict{\gExists}{Y=\beta} = \gRow{\beta}$ for every $\beta \in \SRows$;
\item $\gExists(\alpha,\beta) = f(\alpha,\beta)$ for every $(\alpha,\beta) \in \SPoints$.
\end{inparaenum}
\end{claim}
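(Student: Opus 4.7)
The plan is to use standard bivariate Lagrange interpolation on an appropriately chosen grid. First I would pick interpolating sets $H_X, H_Y \subseteq \Field$ with $|H_X| = \dRows$ and $|H_Y| = \dCols$, chosen so that $\SCols \subseteq H_X$, $\SRows \subseteq H_Y$, and every $X$-coordinate (respectively $Y$-coordinate) appearing in $\SPoints$ lies in $H_X$ (respectively $H_Y$). The cardinality hypothesis $\SetCardinality{\SCols} + \SetCardinality{\SRows} + \SetCardinality{\SPoints} \leq \min\{\dCols,\dRows\}$ implies $\SetCardinality{\SCols} + \SetCardinality{\SPoints} \leq \dRows$ and $\SetCardinality{\SRows} + \SetCardinality{\SPoints} \leq \dCols$, so there is room to select such $H_X, H_Y$ (padding with arbitrary field elements, which is implicit in the surrounding context where $\Field$ is large relative to the parameters).

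Next, I would prescribe the values of the target polynomial on the grid $H_X \times H_Y$ by a case analysis: set the value at $(\alpha,\beta)$ to be $\gCol{\alpha}(\beta)$ if $\alpha \in \SCols$; otherwise to $\gRow{\beta}(\alpha)$ if $\beta \in \SRows$; otherwise to $f(\alpha,\beta)$ if $(\alpha,\beta) \in \SPoints$; and otherwise to $0$. The key consistency check (and main obstacle) is to verify that on overlapping cases the prescribed value is unambiguous. This is where the hypotheses on $f$ are used: at $(\alpha,\beta) \in \SCols \times \SRows$ we have $\gCol{\alpha}(\beta) = f(\alpha,\beta) = \gRow{\beta}(\alpha)$; at $(\alpha,\beta) \in \SPoints$ with $\alpha \in \SCols$ the defining identity $f(\alpha,\beta) = \gCol{\alpha}(\beta)$ holds for \emph{every} $\beta \in \Field$, and similarly for $\beta \in \SRows$.

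Having defined a consistent value function $V \colon H_X \times H_Y \to \Field$, I would let $\gExists$ be the unique polynomial with $\IndividualDegree{\gExists}[X] < \dRows$ and $\IndividualDegree{\gExists}[Y] < \dCols$ agreeing with $V$ on $H_X \times H_Y$; this polynomial exists and is unique by bivariate Lagrange interpolation, since the space of such polynomials has dimension $\dRows \cdot \dCols = \SetCardinality{H_X} \cdot \SetCardinality{H_Y}$. Properties (i) and (ii) then hold by construction, while (v) holds because by construction $\SPoints \subseteq H_X \times H_Y$ and the value at each such point matches $f$.

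Finally, for (iii), I would fix $\alpha \in \SCols$ and observe that $\Restrict{\gExists}{X=\alpha}$ and $\gCol{\alpha}$ are two univariate polynomials (in the second variable) of appropriately bounded degree that agree on the $\SetCardinality{H_Y} = \dCols$ points of $H_Y$; a degree count (treating the degree bound of $\gCol{\alpha}$ compatibly with $\IndividualDegree{\gExists}[Y]$) forces them to coincide as polynomials. The argument for (iv) is symmetric, fixing $\beta \in \SRows$ and comparing restrictions in the first variable on $H_X$. This completes the construction.
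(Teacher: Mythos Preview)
Your proposal is correct and is essentially the same argument as the paper's: choose a $\dRows \times \dCols$ grid $H_X \times H_Y$ containing $\SCols$, $\SRows$, and the coordinates of $\SPoints$, define the desired values on the grid from $f$, interpolate, and then use a univariate degree count on each fixed row and column to upgrade agreement on $H_Y$ (respectively $H_X$) to agreement on all of $\Field$. Your consistency check on overlapping cases is spelled out more explicitly than in the paper, and your parenthetical about the degree bound of $\gCol{\alpha}$ flags exactly the same point the paper glosses over when it asserts both restrictions have degree less than $\SetCardinality{H_Y}$.
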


\begin{proof}
Any rectangle $\Domain_{X} \times \Domain_{Y} \subseteq \Field \times \Field$ with $\SetCardinality{\Domain_{X}} = \dRows$ and $\SetCardinality{\Domain_{Y}} = \dCols$
is an interpolating set: for every $\Codeword \in \Field^{\Domain_{X} \times \Domain_{Y}}$ there exists a unique $\gExists \in \Field[X,Y]$ such that:
\begin{inparaenum}[(i)]
\item $\IndividualDegree{\gExists}[X] < \dRows$;
\item $\IndividualDegree{\gExists}[Y] < \dCols$;
\item $\Restrict{\gExists}{\Domain_{X} \times \Domain_{Y}} = w$.
\end{inparaenum}
Define
\begin{equation*}
\Domain_{X} \DefineEqual \SCols \cup \Set{\alpha : \exists\, \beta \text{ s.t. } (\alpha,\beta) \in \SPoints}
\quad\text{and}\quad
\Domain_{Y} \DefineEqual \SRows \cup \Set{\beta : \exists\, \alpha \text{ s.t. } (\alpha,\beta) \in \SPoints}
\enspace.
\end{equation*}
Note that $\SetCardinality{\Domain_{X}} \leq \dRows$ and $\SetCardinality{\Domain_{Y}} \leq \dCols$; if either is strictly smaller, extend it arbitrarily to match the upper bound.

Choose $\Codeword \in \Field^{\Domain_{X} \times \Domain_{Y}}$ to be a word that satisfies:
\begin{inparaenum}[(i)]
  \item $\Codeword(\alpha,\beta) = f(\alpha,\beta)$ for every $\alpha \in \SCols$ and $\beta \in \Domain_{Y}$;
  \item $\Codeword(\alpha,\beta) = f(\alpha,\beta)$ for every $\beta \in \SRows$ and $\alpha \in \Domain_{X}$;
  \item $\Codeword(\alpha,\beta) = f(\alpha,\beta)$ for every $(\alpha,\beta) \in \SPoints$.
\end{inparaenum}
Denote by $\gExists_{\Codeword} \in \Field[X,Y]$ the unique ``low degree extension'' of $w$; we show that $\gExists_{\Codeword}$ satisfies the requirements of the claim.

The degree bounds and the equivalence on $\SPoints$ follows by definition of $\gExists_{\Codeword}$; thus it suffices to show equivalence of $\gExists_{\Codeword}$ with $f$ when restricted to the required rows and columns.
\begin{itemize}

  \item For every $\alpha \in \SCols$: it holds by definition of $\gExists_{\Codeword}$ that $\Restrict{\gExists_{\Codeword}}{\Set{\alpha}\times\Domain_{Y}} = \Restrict{f}{\Set{\alpha}\times\Domain_{Y}}$; moreover, $\Restrict{\gExists_{\Codeword}}{\Set{\alpha}\times\Field}$ and $\Restrict{f}{\Set{\alpha}\times\Field}$ are evaluations of polynomials of degree less than $\SetCardinality{\Domain_{Y}}$, which implies that $\Restrict{\gExists_{\Codeword}}{\Set{\alpha}\times\Field}=\Restrict{f}{\Set{\alpha}\times\Field}$.

  \item For every $\beta \in \SRows$: it holds by definition of $\gExists_{\Codeword}$ that $\Restrict{\gExists_{\Codeword}}{\Domain_{X}\times\Set{\beta}} = \Restrict{f}{\Domain_{X}\times\Set{\beta}}$; moreover, $\Restrict{\gExists_{\Codeword}}{\Field\times\Set{\beta}}$ and $\Restrict{f}{\Field\times\Set{\beta}}$ are evaluations of polynomials of degree less than $\SetCardinality{\Domain_{X}}$, which implies that $\Restrict{\gExists_{\Codeword}}{\Field\times\Set{\beta}}=\Restrict{f}{\Field\times\Set{\beta}}$. \qedhere

\end{itemize}
\end{proof}

\ignore[MV ZK]{
\doclearpage
\section{\ale{deprecated} Reducing Zero-Knowledge sumCheck with oracles to Zero-Knowledge sumCheck}
We assume for simplicity that $\F$ has characteristic two,
though it doesn't seem too essential.

Suppose Prover $P$ wishes to show that a multivariate polynomial
$f$ of degree at most $D$
satisfies
\[\sumOnRange{f}=0\]
$f$ will typically be a composition of a
polynomial $Q$ of degree $e$ known to both $P$ and $V$
and a polynomial $A\in \polys{d}$ that only $P$ knows and wants to keep secret.
Specifically, $f$ will look something like
\[f(\x) = Q\circ A \DefineEqual  Q(A(N_{1}(\x)),\dots,A(N_q(\x)))
\]
for some fixed, efficiently computable, functions $N_{1},\dots,N_q:\sumrange \to \sumrange$
known to both $P$ and $V$.
Assume for simplicity $\deg{N_{i}}\leq 1$.

Let $d\DefineEqual |H|-1$.
Let \Z be the polynomial of degree $d\cdot m$, and individual degree $d$, that is zero on \sumrange.
$P$ computes $A'\DefineEqual A + \Z\cdot R$;
where $R$ is a random element of \polys{K}.
Note that $A'\in\polys{d'}$
for $d'\DefineEqual d+K$.
$P$ will send an oracle of \eval{A'}{\shiftedRange}
where  $H'\subseteq \F$ be a subset disjoint from $H$ of size $|H'|=2D'+K$.
Note that $(A')$'s values are $K$-wise independent outside of \sumrange.
Denote
\[f(\x) = Q\circ A' = Q(A'(N_{1}(\x)),\dots,A'(N_q(\x)))
\]
Denote $D'\DefineEqual D+K$.

Assume $H\subseteq \F$ is an $\F_2$-linear subspace with $|H|/|\F|\leq 1/10$, and $H'\DefineEqual \F\setminus H$.
Denote $S\DefineEqual H\cup H'$.
Note that $V$ is also an $\F_2$-subspace.
Assume the neighbors $\Set{N_{i}}$ are \emph{$H$-stable},
in the sense that
for each $i\in [q]$ and $\x\in \F^m \setminus H^m$, $N_{i}(\x)\in H'^m$

\begin{enumerate}
\item $P$
chooses random $g\in \polys{D'}$,with the constraint that $\sumOnRange{g}=0$.and
sends \eval{f}{\shiftedRange}
and \eval{g}{\shiftedRange}.
\item $V$ checks that $A'$ is $\eps$-close to \polys{d'} on $H'^m$.
It does this using less than $K$ queries to $A'$ with the help of RS-PCPPs
(Here looks like there will be some sub-protocol for this).

\item $V$ chooses random $\rho\in \F$.

\item $P$ send an oracle to $h\DefineEqual \rho\cdot f + g$ on $\F^m$.
\item $V$ and $P$ run the ZK version of the sumcheck protocol to check that
\[\sumOnRange{h}=0.\]
\item $V$ checks that $h$ is $\eps$-close to \polys{D'} both
on $H'^m$ and on $\F^m$.

\item Now $V$ performs a ``consistency check'':
He chooses random $\x \in H'^m$ and checks, using the oracles to $g,h$ and $A'$, that
\[h(\x) = \rho\cdot f(\x) +g(\x)\]
\end{enumerate}

\paragraph{Soundness:}
Suppose that $V$ accepts w.h.p..
Let $B$ be the unique element of $\polys{d'}$ $\eps$-close to $A'$.
Let $f'\DefineEqual Q\circ B$.
Suppose that $\sumOnRange{f'}\neq 0$.
Then for any fixed $g$ (even \emph{not} of the required degree)
$\sumOnRange{\rho\cdot f' + g}=\rho\cdot \sumOnRange{f'} + \sumOnRange{g}$ can be zero for at most one choice
of $\rho$, specifically, $\rho \DefineEqual \frac{\sumOnRange{g}}{\sumOnRange{f'}}$.
Thus, w.h.p., $\sumOnRange{\rho\cdot f' +g }\neq 0$.
Note that as $|f-f'|\leq \eps\cdot q$,
$|\rho\cdot f' +g-\rho\cdot f+ g|\leq \eps\cdot q$.
So, $\rho\cdot f'+g$ is $\eps\cdot q$-close to $\rho\cdot f +g$.

Given that the consistency check passed, $h$ is $\eps$-close
to $\rho\cdot f + g$ on $H'^m$,
and thus $\eps\cdot q$-close to $\rho\cdot f' + g$ on $H'^m$.
Given that the proximity test passed, $h$ is $\eps$-close to some $h'\in \polys{D'}$.
From proximity and consistency tests passing, together with the triangle inequality, we know that
\[\distOn{H'^m}{h'}{\rho\cdot f' + g}\leq \eps'\DefineEqual  \eps\cdot (q+2)\]
Denote $\eta = |H'^m|/|\F^m|=(1-|H|/|\F|)^{-m}$.
As $\distOn{H'^m}{h'}{\rho\cdot f' + g}\leq \eps' $,
for any $\rho\cdot f' +g \neq P\in \polys{D'}$,
$\distOn{\F^m}{h'}{P}\geq \eta\cdot (1-\eps' - D'/|H'|)$.

Assuming $\eps$ is smaller than this,
we must have $\distOn{\F^m}{h'}{\rho\cdot f'+ g}\leq \eps$ and thus $h'=\rho\cdot f' +g$, assuming $\eps$ is smaller than the unique distance $1- |D'|/|\F|$ of
\polys{D'} (as function $\F^m \to \F$.
Assuming the proximity testing of $h$ on $\F^m$ passed, $h$
is $\eps$-close to some $h''\in \polys{D'}$ on $\F^m$.
We assume that $\eps$ is small enough so that similar calculation to above shows
that $h'=h''$.
Thus, $\sumOnRange{h'} = \sumOnRange{\rho \cdot f' +g}\neq 0$,
and with high probability the sumcheck verifier will reject.

\paragraph{Zero-Knowledge up to $K/q$ queries:}
Simulation:
We assume for simplicity that $V$ queries the oracles \emph{only after sending $\rho$ and
receiving $g$ and $h$}
Without this assumption, simulation might be a bit more of a headache,
as in the final DZK paper.
\begin{itemize}
\item Using the needed components - a ZK sumcheck, and a ZK IOPP,
all parts of the view can be simulated by queries to $h,A'$ and $g$.
 \item To answer queries to $g$
 \item $A'$ is built on the fly, by answering new queries to $A'$ are randomly.
 \item New queries to $g$ are answered by first
 perform the following `update' procedure':
 Say $g(\x)$ is requested.
 First $A'$ is updated randomly, on all values
 on which $f(\x)$ depends, i.e., $A'(N_{1}(\x)),\dots,A'(N_q(\x))$,
 then $f(\x)$ is computed,
 then $h(\x)$ is sampled,
 and then $g(\x)$ is computed as $g(\x) = h(\x)-\rho\cdot f(\x)$.
\end{itemize}

}

\clearpage
\bibliographystyle{alphaurl}
{\small\bibliography{references}}
\end{document}